\numberwithin{equation}{section}
\DeclareSymbolFontAlphabet{\mathbbm}{bbold}
\DeclareSymbolFontAlphabet{\mathbb}{AMSb}%
\newtheorem{thm}{Theorem}
\newtheorem{prp}[thm]{Proposition}
\newtheorem{cor}[thm]{Corollary}
\newtheorem{lma}[thm]{Lemma}
\newtheorem{dfn}[thm]{Definition}
\theoremstyle{remark}
\newtheorem{rmk}[thm]{Remark}
\newtheorem*{exmp*}{Example}
\newcommand{\Sk}{\operatorname{Sk}}
\newcommand{\SkAlg}{\operatorname{SkAlg}}
\newcommand{\bfU}{\boldsymbol{U}}
\newcommand{\CA}{\mathcal{A}}
\newcommand{\CB}{\mathcal{B}}
\newcommand{\CI}{\mathcal{I}}
\newcommand{\CL}{\mathcal{L}}
\newcommand{\CM}{\mathcal{M}}
\newcommand{\CN}{\mathcal{N}}
\newcommand{\CO}{\mathcal{O}}
\newcommand{\CP}{\mathcal{P}}
\newcommand{\CT}{\mathcal{T}}
\newcommand{\CW}{\mathcal{W}}
\newcommand{\IR}{\mathbb{R}}
\newcommand{\IC}{\mathbb{C}}
\newcommand{\IZ}{\mathbb{Z}}
\newcommand{\be}{\begin{equation}}
\newcommand{\ee}{\end{equation}}
\definecolor{amber}{rgb}{1.0, 0.49, 0.0}
\definecolor{Green}{rgb}{0.0, 0.5, 0.0}
\definecolor{purple}{rgb}{0.7,0,0.7}
\newcommand{\Qsym}[2]{{\mathfrak{S}}({#1},{#2})}
\title{Quivers and BPS states in 3d and 4d}
\author[1]{Piotr Kucharski \orcidlink{0000-0002-9599-5658}\thanks{piotr.kucharski@mimuw.edu.pl}}
\author[2,3,4]{Pietro Longhi \orcidlink{0000-0002-5558-0386}\thanks{pietro.longhi@physics.uu.se}}
\author[5]{Dmitry Noshchenko \orcidlink{0000-0002-9639-5603}\thanks{dsnoshchenko@stp.dias.ie}}
\author[6,7]{\break Sunghyuk Park \orcidlink{0000-0002-6132-0871}\thanks{sunghyukpark@math.harvard.edu}}
\author[8]{Piotr Su{\l}kowski \orcidlink{0000-0002-6176-6240}\thanks{psulkows@fuw.edu.pl}}
\affil[1]{Institute of Mathematics, University of Warsaw, ul. Banacha 2, 02-097 Warsaw, Poland}
\affil[2]{Department of Physics and Astronomy, Uppsala University, Box 516, 751 20 Uppsala, Sweden}
\affil[3]{Department of Mathematics, Uppsala University, Box 480, 751 06 Uppsala, Sweden}
\affil[4]{Centre for Geometry and Physics, Uppsala University, Box 516, 751 20 Uppsala, Sweden}
\affil[5]{School of Theoretical Physics, Dublin Institute for Advanced Studies,\break 10 Burlington Road, Dublin 4, D04 C932, Ireland}
\affil[6]{Department of Mathematics, Harvard University, Cambridge, MA 02138, USA}
\affil[7]{Center of Mathematical Sciences and Applications, Harvard University, Cambridge, MA 02138, USA}
\affil[8]{Faculty of Physics, University of Warsaw, ul. Pasteura 5, 02-093 Warsaw, Poland}
\date{}
\begin{document}

%\maketitle
\hfill
\begin{tabular}{r}
UUITP-22/25 \\
DIAS-STP-25-19
\end{tabular}
{\let\newpage\relax\maketitle}

\begin{abstract}
We propose a symmetrization relation between BPS quivers encoding 4d $\mathcal{N}=2$ theories and symmetric quivers associated to 3d $\mathcal{N}=2$ theories. 
We analyse in detail the symmetrization of BPS quivers for a series of $A_m$ Argyres-Douglas theories by engineering 3d-4d systems in geometric backgrounds involving appropriate 3-manifolds and Riemann surfaces. 
We discuss properties of these geometric backgrounds and derive the corresponding quiver partition functions from the perspective of skein modules, which forms the foundation of the symmetrization map for the minimal chamber.
We also prove that the structure of wall-crossing in 4d $A_m$ Argyres-Douglas theories is isomorphic to the structure of unlinking of symmetric quivers encoding their partner 3d theories, which allows for a proper definition of the symmetrization map outside the minimal chamber. 
Finally, we show that the Schur indices of 4d theories are captured by symmetric quivers that include symmetrization of 4d BPS quivers. 
\end{abstract}

\newpage

\tableofcontents

\newpage

\section{Introduction and summary}

The study of BPS states in field theories with extended supersymmetry has been an active area of research in the past decades. 
Apart from deep physical motivations, it makes contact with important developments in contemporary mathematics. 
One interesting phenomenon in this context is that of wall-crossing, which is the~statement that the stability of BPS states depends on certain moduli of the theory under consideration, so that these may form bound states and their spectrum might jump when values of the~moduli cross so-called walls of marginal stability. 
The pattern of these jumps is governed by wall-crossing formulae of various kinds \cite{Cecotti:1992rm, Joyce:2008pc,KS0811}, which state that a certain operator constructed from the data of stable BPS states is invariant upon crossing walls of marginal stability. 
In the context of 4d $\mathcal{N}=2$ theories, the relevant operator was introduced by Kontsevich and Soibelman \cite{KS0811, Gaiotto:2009hg, Gaiotto:2010be}, and its trace is known to encode other important observables, such as the Schur index of the theory and other specializations of the superconformal index \cite{Cecotti:2010fi, Cordova:2015nma, Cecotti:2015lab}. 

Properties of BPS states in theories in various dimensions can be encoded by various types of quivers. 
For 4d $\mathcal{N}=2$ theories, such quivers are referred to as BPS quivers and have been studied extensively since the seminal works \cite{Alim:2011ae,Alim:2011kw}. 
Quiver vertices correspond to a certain type of fundamental BPS states, and arrows encode interactions among them determined by the Dirac pairing. 
On the other hand, BPS states of 3d $\mathcal{N}=2$ theories have also been found to admit a quiver description, albeit captured by symmetric quivers (meaning that for each of their arrows there is an arrow in the opposite direction) \cite{KRSS1707short,KRSS1707long}. 
In this case, quiver vertices correspond to $U(1)$ factors of the gauge group, while arrows encode (mixed, effective) Chern-Simons couplings \cite{Ekholm:2018eee,Ekholm:2019lmb}. 

In this work, we establish a direct relationship between these two classes of theories and their corresponding BPS quivers. 
For a certain class of 4d $\mathcal{N}=2$ theories, we identify corresponding 3d $\mathcal{N}=2$ theories with a matching set of observables, and whose structure is encoded by symmetric quivers closely related to the BPS quiver of the corresponding 4d theory. 
In the simplest settings, the 3d $\mathcal{N}=2$ symmetric quiver arises as the symmetrization of the 4d $\mathcal{N}=2$ BPS quiver, i.e., by adding an arrow in the opposite direction to each arrow of the latter. 
More generally, the structure of a symmetric quiver $Q$ may be more involved and it depends also on a choice of stability conditions in the 4d theory. 
We thus define the \emph{symmetrization map} $\mathfrak{S}$ between 4d BPS quivers with stability data and 3d symmetric quivers. 
Moreover, we also introduce the CPT-doubled symmetrization map $\mathfrak{S}^{CPT}$ that involves both BPS and anti-BPS states of the 4d theory, and produces CPT-doubled symmetric quivers $Q^{CPT}$ that contain twice as many nodes. 
These maps act as
\begin{equation}
(Q_{4\text{d}},\text{stab.\,data}) \stackrel{\mathfrak{S}}{\longmapsto}Q\, , \qquad \qquad (Q_{4\text{d}},\text{stab.\,data}) \stackrel{\mathfrak{S}^{CPT}}{\longmapsto}Q^{CPT}\, , \label{Sigma-maps}
\end{equation}
and their construction is one of the main goals of this work. 
We show that appropriately engineered symmetric quivers, and thus also the 3d $\mathcal{N}=2$ theories they represent, encode information about BPS states in the partner 4d theories, their wall-crossing, as well as Schur indices.

The combinatorial definition of symmetrization map $\mathfrak{S}$  in  \eqref{Sigma-maps} that we introduce (see definition in Section \ref{sec-wall-crossing}) applies to BPS quivers $Q_{4d}$ of any 4d $\mathcal{N}=2$ Argyres-Douglas theory of type $A_m$ with $u$ belonging to any BPS chamber of the Coulomb branch. However the general idea behind our definition and its properties are expected to generalize, and we illustrate this in the case of the Argyres-Douglas $D_4$ theory.
In addition to the combinatorial definition we offer direct geometric and physical constructions of this map for $A_m$ Argyres-Douglas theories with $u$ in the minimal BPS chamber in Sections \ref{sec:3d4d} and \ref{sec:algebra}. 
The geometric construction hinges crucially on the realization of these QFTs as class $S$ theories of type $A_1$, since these admit a UV description based on ideally triangulated Riemann surfaces \cite{Gaiotto:2009hg}, and since their BPS spectrum defines a canonical 3-manifold constructed by attaching a tetrahedron for each stable state \cite{CCV}.
The physical derivation is tied to the geometric data, since the triangulation by tetrahedra defines the 3d-3d dual QFT for the 3-manifolds defined by BPS states of 4d $\mathcal{N}=2$ Argyres-Douglas theories of type $A_m$ in their minimal chambers. 
The 3d quiver $Q$ defines an IR description of the corresponding 3d $\mathcal{N}=2$ theory.
While in this work we mainly focus on $A_m$ Argyres-Douglas theories, we stress that the general ideas behind our definition of symmetrization map and behind its geometric and physical constructions have a broader scope. 
In particular we foresee a feasible generalization of this 4d-3d relation to any 4d QFT in class $S$ of type $A_1$, and even to certain theories of higher rank, see Section \ref{sec:future-work}. It would also be interesting to generalize geometric constructions discussed in this work to  analogous 4d-3d relation between Kronecker quivers (consiting of two nodes connected by an arbitrary number of arrows) and symmetric quivers, analysed in a parallel work \cite{Bryan:2025mwi} in the context of wild wall-crossing phenomena.

The relation between 4d $\mathcal{N}=2$ and 3d $\mathcal{N}=2$ QFTs arising in the context of 2d-4d and 3d-3d correspondences has a long history, starting with \cite{Terashima:2011qi, Dimofte:2011ju}.
Our work establishes a precise relationship between the above-mentioned classes of models and their BPS spectra. 
A related discussion of the relation between 4d BPS states and certain 3d QFTs was recently developed in \cite{Gaiotto:2024ioj}, whose work focuses on SCFTs that admit a supersymmetry enhancement in the infrared, see also \cite{Gang:2018huc, Gang:2021hrd,  Gang:2023rei, Gang:2024loa}. 
A connection to our work comes through the CPT-doubled symmetrization map $\mathfrak{S}^{CPT}$. Its output is a 3d quiver $Q^{CPT}$ whose motivic DT partition function computes the Schur index of the 4d QFT, which in turn is related to vacuum characters of vertex operator algebras. Related construction of these indices from 4d BPS states have been recently explored by \cite{Kim:2024dxu, Go:2025ixu} in a closely related setting. 
In recent years the relation between 4d SCFTs and 3d TQFTs (and therefore their boundary 2d VOAs) has been studied intensively by a number of works \cite{Dedushenko:2018bpp, Dedushenko:2023cvd, ArabiArdehali:2024ysy} which also consider the Argyres-Douglas theories that we consider in this paper.

\subsection*{Summary of main results}

\paragraph{Physics and topology of 3d-4d systems that lead to symmetrization of quivers.} 

Both 3d and 4d $\mathcal{N}=2$ theories that we study arise from a geometric construction, involving a~pair of M5-branes partially wrapped on a 3-manifold with boundary \cite{Terashima:2011qi, Dimofte:2011ju, Chung:2014qpa}. 
The 4d theories arise from compactification on the boundary Riemann surfaces, and their BPS quivers are encoded by ideal triangulations of the latter \cite{2013arXiv1302.7030B}. 
The 3d theories are encoded by the 3-manifolds, and admit effective Lagrangian descriptions encoded by decompositions of the latter into ideal tetrahedra \cite{Dimofte:2011ju}.
While these properties have been known for some time, we reformulate and present them in terms of skein modules, which provide a novel, interesting perspective. 
In particular, in Section \ref{Topology}, we show how skein modules -- for 3-manifolds built from a sequence of flips on an ideal triangulation of a~surface -- naturally encode Nahm sums and the pentagon relation (\ref{pentagon}); the latter corresponds to the 2–3 Pachner move and admits a reinterpretation in terms of an unlinking operation. 
Furthermore, we establish a~precise match between Lagrangian descriptions of 3d $\mathcal{N}=2$ theories defined, respectively, by tetrahedron decompositions \cite{Dimofte:2011ju} and by 3d symmetric quivers \cite{Ekholm:2018eee}. 
We identify the origin of this 4d-3d correspondence in the topological nature of 4d theta-terms in abelian low-energy descriptions of 4d $\mathcal{N}=2$ theories, which are directly related to Dirac pairings of 4d BPS particles via Witten's effect, and whose restriction to the boundary gives rise to Chern-Simons couplings encoded by the 3d symmetric quiver.

While we expect that relations between 4d and 3d theories and corresponding quivers hold in general, we analyze various aspects of these relations across different classes of theories. 
For $A_m$ Argyres-Douglas theories, we provide an explicit construction (generalizing those in \cite{CCV}) of a 3-manifold that relates the chamber with minimal number of BPS states to 3d theory, and we provide geometric and physical derivations of the corresponding symmetric quiver. 
For a much larger class of Argyres-Douglas theories, represented by BPS quivers whose nodes are connected by at most one arrow, we find a representation of their wall-crossing phenomena in terms of symmetric quivers, based on algebraic considerations. 
Continuing this analysis, we assign a~quantum torus algebra of rank $m$ to any symmetric quiver with $m$ nodes, in such a way that its motivic generating series can be reproduced as an expectation value of the appropriate operators. 
For another, yet larger class of 4d $\mathcal{N}=2$ theories, we find symmetric quivers whose generating series reproduce Schur indices upon appropriate specialization. 
On the other hand, related phenomena for 4d $\mathcal{N}=2$ theories whose BPS quivers include arbitrary Kronecker subquivers (consisting of two nodes connected by an arbitrary number of arrows, which gives rise to wild wall-crossing \cite{Galakhov:2013oja}) were independently analysed in a recent parallel work \cite{Bryan:2025mwi}.

\paragraph{Two constructions of the symmetrization map.}

The physical and topological considerations summarized above suggest the existence of a symmetrization map between 4d BPS quivers in the minimal chamber and symmetric quivers corresponding to 3d theories. 
One of the main novelties of our work is the generalization of this map beyond the minimal chamber. 
The key ingredient is the isomorphism between the wall crossing relations connecting different chambers of the 4d theory (excluding those with non-trivial superpotential) and unlinking relations connecting dual 3d~theories and corresponding symmetric quivers. 
In consequence, the symmetrization map can be understood in terms of two equivalent constructions (we prove it for $A_m$ Argyres-Douglas theories and their BPS quivers):
\begin{itemize}
    \item One starts from assigning $Q$ -- the symmetrization of $Q_{4\text{d}}$ -- to the minimal chamber and then directly uses the isomorphism between wall-crossing and unlinking (see Sections \ref{sec:pentagon to unlinking}--\ref{sec:wcf-connector theorem}).
    \item The other goes outside the minimal chamber using a path polytope, derived from the oriented exchange graph of $Q_{4\text{d}}$ (this approach is undertaken in Section \ref{sec:connectors and path polytopes}).
\end{itemize}

\paragraph{Quivers and DT invariants.}
Quiver representation theory of a quiver $Q$ assigns a vector space (of dimension $d_i$) to each node~$i$ and a~linear map to each arrow. 
Betti numbers (or their generalizations) of the moduli space of representations are then captured by the motivic Donaldson-Thomas (DT) invariants $\Omega_{\boldsymbol{d},s}$, which are indexed by the~dimension vector $\boldsymbol{d}=(d_1,\ldots,d_{m})\in \mathbb{N}^{m}$ and by the integer $s$ \cite{KS0811,KS1006,2011arXiv1103.2736E,MR1411,FR1512}. 
These invariants can be extracted from the motivic generating series, which for a symmetric quiver $Q$ with trivial potential takes the form
\begin{equation}\label{eq:motivic_generating_series}
    P_Q(\boldsymbol{x},q) =
    \sum_{\boldsymbol{d}}(-q)^{\boldsymbol{d} \cdot Q\cdot\boldsymbol{d}}\frac{\boldsymbol{x}^{\boldsymbol{d}}}{(q^{2};q^{2})_{\boldsymbol{d}}}
    = \sum_{d_1,\dots,d_{m}= 0}^\infty (-q)^{\sum_{i,j=1}^{m} d_i Q_{ij} d_j}\prod_{i=1}^{m}\frac{x_i^{d_i}}{(q^{2};q^{2})_{d_i}},
\end{equation}
where $Q_{ij}(= Q_{ji})$ are components of the adjacency matrix of $Q$ and denote the number of arrows between nodes $i$ and $j$. 
The DT invariants are determined by the~product decomposition of this series into quantum dilogarithms:
\begin{equation}\label{PQ-product}
    P_Q(\boldsymbol{x},q) = \prod_{\boldsymbol{d},s}(\boldsymbol{x}^{\boldsymbol{d}}q^s;q^2)_{\infty}^{\Omega_{\boldsymbol{d},s}} = \prod_{\boldsymbol{d}\in \mathbb{N}^{m}\setminus \boldsymbol{0}} \prod_{s\in\mathbb{Z}} \prod_{k\geq 0} \Big(1 - (x_1^{d_1}\cdots x_{m}^{d_{m}}) q^{2k+s} \Big)^{\Omega_{(d_1,\ldots,d_{m}),s}}.
\end{equation}
These invariants are proven to be integers \cite{2011arXiv1103.2736E}, which reflects their physical interpretation as counts of BPS states. 
Specifically, one role of this motivic generating series has been found in the context of the knots-quivers correspondence \cite{KRSS1707short,KRSS1707long}, where DT invariants determine LMOV invariants (thus implying their integrality), or equivalently count BPS states in a corresponding 3d $\mathcal{N}=2$ theory $T[Q]$~\cite{Ekholm:2018eee}. 
Sometimes we consider expressions of the form (\ref{eq:motivic_generating_series}) with some specialization of $x_i$ in terms of $q$ and in that case we call them Nahm sums. 

\paragraph{Isomorphism between wall-crossing and unlinking. }

An important property of the motivic generating series of $Q$ is its invariance under the operation of \emph{unlinking} $U(ij)$, which produces a quiver $U(ij)Q$ with one pair of arrows between nodes $i$ and $j$ removed and with one extra node whose pattern of arrows is given in (\ref{eq:arbitrary quiver}--\ref{eq:unlinking_definition}) \cite{Ekholm:2019lmb}. 
The extra node in $U(ij)Q$ comes with its own generating parameter, whose specialization to $q^{-1}x_{i}x_{j}$ yields the equality
\begin{equation} \label{unlinking}
P_{Q}(\boldsymbol{x},q)=P_{U(ij)Q}(\boldsymbol{x},q^{-1}x_{i}x_{j},q)\,.
\end{equation}

For a specific case of a quiver with two nodes connected by a pair of arrows, the unlinking operation produces a quiver with three nodes (we denote the node obtained from unlinking $i$ and $j$ as $(ij)$):
\be\label{eq:A_2 unlinking}
\begin{tikzcd}[column sep={0.5cm},row sep={-0.2cm}]
&&&&& \bullet \\
	\bullet && \bullet & {\xrightarrow{U(12)}} & \bullet & 12 & \bullet \\
	1 && 2 && 1 && 2
	\arrow[from=1-6, to=1-6, loop, in=55, out=125, distance=10mm]
	\arrow[curve={height=-6pt}, from=2-1, to=2-3]
	\arrow[curve={height=-6pt}, from=2-3, to=2-1]
\end{tikzcd}
\ee
In this case, (\ref{unlinking}) can be regarded as a reinterpretation of the pentagon relation for quantum dilogarithms $\Psi(x):= (qx;q^2)_\infty^{-1} $ (for details, see Section \ref{sec:pentagon to unlinking}):
\begin{equation}  \label{pentagon}
    \Psi(-X_{\alpha_1})\Psi(-X_{\alpha_2}) = \Psi(-X_{\alpha_2})\Psi(-X_{\alpha_1+\alpha_2})\Psi(-X_{\alpha_1}),
\end{equation}
where the operator arguments form a quantum torus algebra with commutation relations
\begin{equation} \label{Xrelation}
X_{\alpha_1} X_{\alpha_2} = q^{2\langle \alpha_1,\alpha_2\rangle} X_{\alpha_2} X_{\alpha_1}
\end{equation}
with $\langle \alpha_1,\alpha_2\rangle=1$.
On the other hand, such quantum torus algebras are associated to 4d $\mathcal{N}=2$ theories, with $\alpha_i$ representing charges of BPS states and $\langle \alpha_i,\alpha_j\rangle$ their Dirac pairing. 
From this perspective, the pentagon identity (\ref{pentagon}) is a prototype example of wall-crossing identity, in this case for the $A_2$ Argyres-Douglas theory, whose BPS quiver is
\be\label{eq:A_2 dynkin quiver}
\begin{tikzcd}[column sep={0.5cm},row sep={-0.2cm}]
\bullet && \bullet \\
{\alpha_1} && {\alpha_2}
\arrow[from=1-1, to=1-3]
\end{tikzcd}\,.
\ee
Both sides of \eqref{pentagon} correspond to the Kontsevich-Soibelman operator, constructed from a~number of quantum dilogarithms in one-to-one correspondence with stable BPS states in two chambers of moduli space, multiplied in the order of the phases of their BPS central charges. 
Specifically, \eqref{pentagon} can be read as reverting the phase order from $\arg \alpha_1 < \arg \alpha_2$ to $\arg \alpha_2 < \arg \alpha_1$ and creating a~new charge $\alpha_{1}+\alpha_2$.
In general (for other 4d $\mathcal{N}=2$ theories) Kontsevich-Soibelman operators take the form 
\begin{equation}  \label{KS-intro}
\prod^\curvearrowleft_{\alpha}
\prod_{s\in \mathbb{Z}} \Psi(-q^sX_{\alpha})^{\Omega^{4\text{d}}_{\alpha,s}},
\end{equation}
with $\alpha$ labelling charges of 4d BPS states with degeneracies $\Omega^{4\text{d}}_{\alpha,s}$\footnote{Note that for $A_m$ Argyres-Douglas theories, all BPS degeneracies $\Omega^{4\text{d}}_{\alpha,s}$ are equal to 1.} in a given chamber, and $s$ their spin. 
We relate pentagon identities such as (\ref{pentagon}) to the unlinking operation (\ref{unlinking}) by finding a \emph{3d-4d homomorphism} between the quantum torus algebra of $X_{\bullet}$ and a quantum torus algebra of twice the rank that is realized by operators $\hat{x}_{i}$ and $\hat{y}_{j}$ that satisfy the relations
\begin{equation}
\hat{y}_{i}\hat{x}_{j}=q^{2\delta_{ij}}\hat{x}_{j}\hat{y}_{i}.\label{eq:Big_q-torus_algebra-intro}
\end{equation}
The 3d-4d homomorphism expresses operators $X_{\bullet}$ as certain monomials in $\hat{x}_{i}$ and $\hat{y}_{j}$, so that evaluating both sides of (\ref{pentagon}) on appropriate ground states yields the $q$-series identity (\ref{unlinking}). 
It then also follows that the resulting symmetric quiver $Q$ is a symmetrization of the 4d BPS quiver encoding the appropriate Kontsevich-Soibelman operator \eqref{KS-intro}. 
In the specific case of $A_2$ quiver with two nodes connected by one arrow, corresponding to the $A_2$ Argyres-Douglas theory, symmetrization gives a quiver with two nodes and a pair of arrows with opposite orientations, thus relating the diagrams \eqref{eq:A_2 unlinking} and \eqref{eq:A_2 dynkin quiver}. 

\paragraph{Path polytopes.}

On the other hand, the construction above can be rephrased in pure combinatorial terms using the notion of \emph{oriented exchange graph} \cite{keller2011cluster,keller2013quiver, garver2017maximal,garver2019lattice,padrol2023associahedra} which encapsulates the structure of the Konstevich-Soibelman operator (see Section \ref{sec:connectors and path polytopes}). 
For $A_2$ quiver, such graph is equivalent to two-dimensional associahedron $K_4$ where the two paths correspond to the left- and right-hand side of \eqref{pentagon}, respectively: 
\[\begin{tikzcd}[column sep=0.1]
& {a((bc)d)} && {a(b(cd))} \\
	{(a(bc))d} & {} && {} & {(ab)(cd)} \\
	&& {((ab)c)d}
	\arrow[from=1-2, to=1-4]
	%\arrow[no head, dashed, from=1-4, to=3-3]
	\arrow[from=2-1, to=1-2]
	%\arrow["{U(12)}"', Rightarrow, from=2-4, to=2-2]
	\arrow[from=2-5, to=1-4]
	\arrow[from=3-3, to=2-1]
	\arrow[from=3-3, to=2-5]
\end{tikzcd}\]
Therefore, unlinking operator $U(12)$ can be seen as a homotopy between the two paths in the associahedron. 
In other words, the structure of symmetrization map for $A_2$ quiver can be described by the \emph{path polytope} of $K_4$. 
In this case, it is simply one unlinking and the path polytope is just $* \xrightarrow{U(12)} *$. 
We also find that the path polytope of the exchange graph of $A_3$ quiver (i.e., a 3d associahedron $K_5$) is a~hexagon, which can be directly interpreted as the unlinking hexagon found in \cite{KLNS}. 
This property allows us to generalize this example to any $A_m$ quiver and even go beyond that. 
For higher $m$ and other types of root systems, we also find the corresponding higher-dimensional polytopes -- see Figure \ref{fig:A4_polytope_example} for an example. 
On the other hand, the same structures can be derived solely from the properties of unlinking using the notion of a \emph{connector} (see Section \ref{sec:connectors}). 
\begin{figure}[h!]
    \centering
    \includegraphics[width=0.6\linewidth]{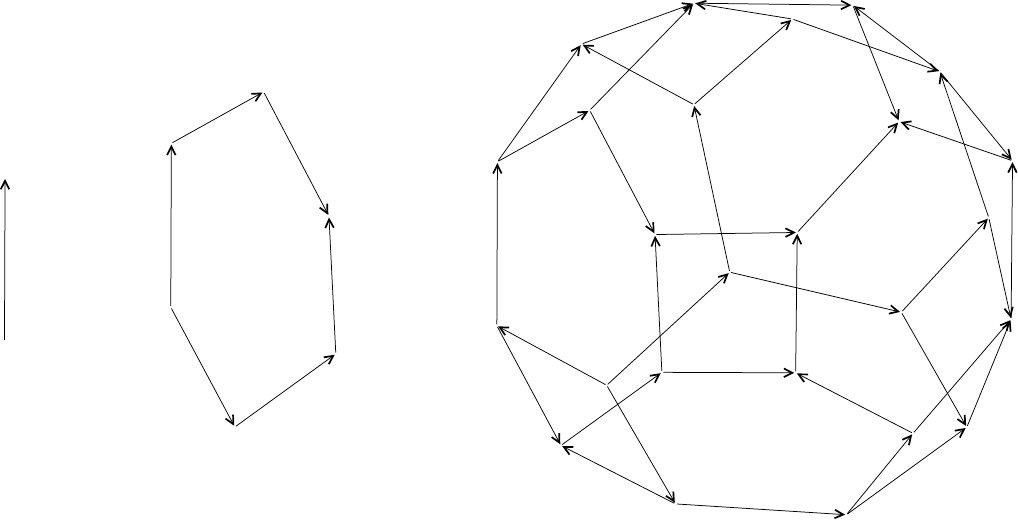}
    \caption{The path polytopes corresponding to $A_2$, $A_3$ and $A_4$ quivers with linear orientation. 
    On one hand, they can be viewed as a set of unlinking operators (where every directed edge is some unlinking $U(ij)$ and consecutive arrows define composition), and define the symmetrization maps for the respective 4d BPS quivers. 
    On the other hand, they are in a way dual to associahedra $K_4$, $K_5$ and $K_6$, respectively.}
    \label{fig:A4_polytope_example}
\end{figure}

\paragraph{Properties of the symmetrization map.}

This means that the symmetrization map is based on a homomorphism between quantum torus algebras that preserves the structure of wall-crossing, mapping it to symmetric quivers related by unlinking. 
Thanks to this property, we can define the symmetrization map $\mathfrak{S}$ between 4d BPS quivers with stability data and 3d symmetric quivers (\ref{Sigma-maps}) -- see Section \ref{sec:sym-map-def}. 
The full structure of the symmetrization map that includes all possible orientations of the 4d BPS quiver can be very involved, but there are several universal properties. 
First, since for the minimal chamber the quiver $Q=\mathfrak{S}(Q_{4\text{d}},\text{min})$ is just a symmetrization of $Q_{4\text{d}}$, it remains the same for any orientation of arrows in $Q_{4\text{d}}$. 
Second, the arrows connecting targets of different stability data always form two building blocks -- the \emph{hexagon} and the \emph{square} -- which reflect identities satisfied by unlinking (see Section \ref{sec:connectors}). 
Third, each $A_m$ quiver is a subquiver in a suitably oriented $A_{m+1}$. 
As a result, the symmetrization map respects this inclusion, and the corresponding polytopes are contained in each other, as we also see in Figure \ref{fig:A4_polytope_example}.

\paragraph{Schur index and symmetric quivers.}
Furthermore, the appearance of Kontsevich-Soibelman operators, generalizing those on both sides of the identity (\ref{pentagon}), enables us to express Schur indices in terms of quiver generating series (\ref{eq:motivic_generating_series}). 
Indeed, as shown in \cite{Cordova:2015nma}, Schur indices can be expressed as traces of a version of Kontsevich-Soibelman operators with twice as many quantum dilogarithms, representing both BPS and anti-BPS states in a given theory
\begin{equation}
\mathcal{I}=(q^2;q^2)_{\infty}^{2r} \, \textrm{Tr}[\mathcal{O}(q)], \qquad \mathcal{O}(q) =  \prod^\curvearrowleft_{\alpha:\ \textrm{BPS \& anti-BPS states}} \Psi(-X_{\alpha}), \label{Schur-intro}
\end{equation}
where $(q^2;q^2)_{\infty}^2$ is the contribution from a $U(1)$ vector multiplets, and $r$ is the rank of the Coulomb branch. 
Evaluating such traces also leads to expressions that can be written in the form (\ref{eq:motivic_generating_series}), from which we can read off the structure of underlying symmetric quivers $Q^{CPT}$. 
More precisely, these are CPT-doubled symmetric quivers, which arise from a CPT-doubled symmetrization map $\mathfrak{S}^{CPT}$  (\ref{Sigma-maps}) -- these quivers include a symmetrization of the BPS quiver of 4d theory, as well as extra nodes and arrows that arise as a consequence of including anti-BPS states into considerations.

\subsection*{Organization of the paper}
Section \ref{Topology} lays the topological foundation of the symmetrization map, reviewing how skein modules for sequences of signed flips give rise to symmetric quivers. 
Section \ref{sec:3d4d} then introduces the relevant physical 3d-4d systems, explains how their low-energy descriptions lead to a natural relation between 3d and 4d quivers, and constructs the symmetrization map in the minimal chamber for $A_m$ Argyres-Douglas theories. 
In Section~\ref{sec:algebra}, we discuss the relation between line operators in 3d and 4d, formulate it algebraically in terms of 3d-4d homomorphism, and work it out for the minimal chamber of $A_m$ quivers. 
Section \ref{sec-wall-crossing} contains our main proposal: the general construction of the symmetrization map, together with a~description of its chamber dependence in terms of unlinking. 
Section \ref{sec-schur} records a further connection, showing how Schur indices can be encoded in CPT-doubled quivers obtained from the 4d quiver via the symmetrization framework. 
Finally, Section \ref{sec:future-work}, discusses avenues for future work and Appendix \ref{sec:notations} summarizes the notations and conventions used in the paper.

%%%%%%%%%%%%%%%%%%%%%%%%%%%%%%%%%%%%%%%%%%%%%%%%%%%%%%%%%%%%%%%%%%%%

\section{Topology of 2d and 3d manifolds and associated quivers}\label{Topology}

This section provides a review the combinatorial and topological structures that are relevant for the remainder of the paper. Many are well-known (see, e.g., \cite{CCV, GMN_SpecNet}), but we present them from the perspective of skein modules, in the spirit of \cite{Ekholm:2018eee, Ekholm:2019lmb, EkholmShende}.
This allows us to explain how sequences of signed flips of ideal triangulations give rise to symmetric quivers, which encode skein-valued counts of holomorphic disks arising from the geometry of the associated 3d bordisms. This construction forms the topological foundation of the symmetrization map~$\mathfrak{S}$. 

\subsection{Sequence of signed flips}
In this section, we discuss signed flips of ideal triangulations and corresponding elementary bordisms.
\newline

Let $C$ be a surface with an ideal triangulation $\tau$. 
Then, there is an associated branched double cover $\Sigma_{\tau} \rightarrow C$, obtained by putting one branch point at the barycenter of each ideal triangle; see Figure \ref{fig:TriangleDoubleCover}. 
That is, each ideal triangle has a branched double cover given by an ideal hexagon, and $\Sigma_\tau$ is obtained by gluing those ideal hexagons according to the gluing pattern of $\tau$. 
\begin{figure}[H]
\centering
$\vcenter{\hbox{
\begin{tikzpicture}
\begin{scope}[scale = 1.5]
    \coordinate (o) at (0, 0);
    \coordinate (a) at (0, 1);
    \coordinate (a1) at (0, -1/2);
    \coordinate (b) at ({sqrt(3)/2}, -1/2);
    \coordinate (b1) at ({-sqrt(3)/4}, 1/4);
    \coordinate (c) at (-{sqrt(3)/2}, -1/2);
    \coordinate (c1) at ({sqrt(3)/4}, 1/4);

    \tikzset{decoration={snake, amplitude=.4mm, segment length=2mm, post length=0mm, pre length=0mm}}
    \draw[orange, decorate] (o) -- (a1);
    \draw[orange, decorate] (o) -- (b1);
    \draw[orange, decorate] (o) -- (c1);
    
    \draw[very thick] (a) -- (b);
    \draw[very thick] (a) -- (c);
    \draw[very thick] (b) -- (c);
    \filldraw[red] (o) circle (0.1em);
    \filldraw (a) circle (0.03em);
    \filldraw (b) circle (0.03em);
    \filldraw (c) circle (0.03em);
\end{scope}
\end{tikzpicture}
}}$
\caption{Branch cuts for an ideal triangle}
\label{fig:TriangleDoubleCover}
\end{figure}
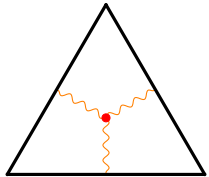
Suppose that $\tau'$ is an ideal triangulation of $C$ obtained from $\tau$ by a flip. 
Then, we can consider two bordisms
\[
T_{\pm} : \Sigma_{\tau} \rightarrow \Sigma_{\tau'}
\]
from $\Sigma_{\tau}$ to $\Sigma_{\tau'}$, giving branched double covers of $C \times I$; 
see Figure \ref{fig:signed_flip}, which specifies how the branch locus evolves along the $I$-direction in $C \times I$. 
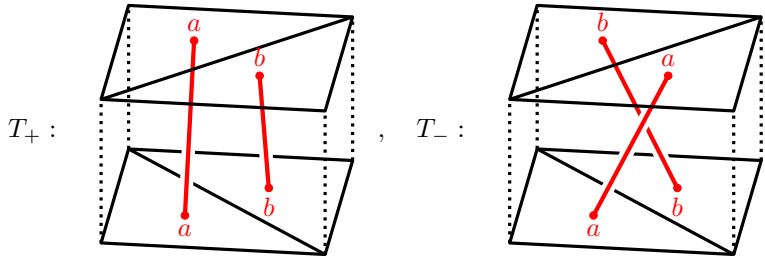
\begin{figure}[H]
\centering
\begin{math}
T_+ : 
\vcenter{\hbox{
\tdplotsetmaincoords{65}{52}
\begin{tikzpicture}[tdplot_main_coords]
\begin{scope}[scale = 0.7, tdplot_main_coords]
    \newcommand*{\defcoords}{
        \coordinate (o) at (0, 0, 0);
        \coordinate (a) at (3, 0, 0);
        \coordinate (b) at ({3*cos(90)}, {3*sin(90)}, 0);
        \coordinate (c) at ({3*cos(2*90)}, {3*sin(2*90)}, 0);
        \coordinate (d) at ({3*cos(3*90)}, {3*sin(3*90)}, 0);
        \coordinate (abc) at ($1/3*(a)+1/3*(b)+1/3*(c)$);
        \coordinate (abd) at ($1/3*(a)+1/3*(b)+1/3*(d)$);
        \coordinate (bcd) at ($1/3*(b)+1/3*(c)+1/3*(d)$);
        \coordinate (acd) at ($1/3*(a)+1/3*(c)+1/3*(d)$);
    }
    \defcoords
    \draw[very thick] (a) -- (b);
    \draw[very thick] (b) -- (c);
    \draw[very thick] (c) -- (d);
    \draw[very thick] (d) -- (a);
    \draw[very thick] (a) -- (c);
    \draw[very thick, dotted] (a) -- (3, 0, 3);
    \draw[very thick, dotted] (b) -- ({3*cos(90)}, {3*sin(90)}, 3);
    \draw[very thick, dotted] (c) -- ({3*cos(2*90)}, {3*sin(2*90)}, 3);
    \draw[very thick, dotted] (d) -- ({3*cos(3*90)}, {3*sin(3*90)}, 3);
    \filldraw (a) circle (0.05em);
    \filldraw (b) circle (0.05em);
    \filldraw (c) circle (0.05em);
    \filldraw (d) circle (0.05em);
    \draw[white, line width=5] (abc) -- ($(abd)+(0,0,3)$);
    \draw[ultra thick, red] (abc) -- ($(abd)+(0,0,3)$);
    \filldraw[red] (abc) circle (0.2em);
    \draw[white, line width=5] (acd) -- ($(bcd)+(0,0,3)$);
    \draw[ultra thick, red] (acd) -- ($(bcd)+(0,0,3)$);
    \filldraw[red] (acd) circle (0.2em);
    \node[red, anchor=north] at (acd){$a$};
    \node[red, anchor=north] at (abc){$b$};
    % \draw[blue, line width=3] ($1/2*(abc) + 1/2*(abd) + (0, 0, 3/2)$) -- ($1/2*(acd) + 1/2*(bcd) + (0, 0, 3/2)$);
    % \node[blue, anchor=north] at ($1/4*(abc) + 1/4*(abd) + 1/4*(acd) + 1/4*(bcd) + (0, 0, 3/2)$){$\Psi$};
    \begin{scope}[shift={(0, 0, 3)}]
        \defcoords
        \draw[very thick] (a) -- (b);
        \draw[very thick] (b) -- (c);
        \draw[very thick] (c) -- (d);
        \draw[very thick] (d) -- (a);
        \draw[very thick] (b) -- (d);
        \filldraw (a) circle (0.05em);
        \filldraw (b) circle (0.05em);
        \filldraw (c) circle (0.05em);
        \filldraw (d) circle (0.05em);
        \filldraw[red] (abd) circle (0.2em);
        \filldraw[red] (bcd) circle (0.2em);
        \node[red, anchor=south] at (bcd){$a$};
        \node[red, anchor=south] at (abd){$b$};
    \end{scope}
\end{scope}
\end{tikzpicture}
}}
\;\;,\quad
T_- : 
\vcenter{\hbox{
\tdplotsetmaincoords{65}{52}
\begin{tikzpicture}[tdplot_main_coords]
\begin{scope}[scale = 0.7, tdplot_main_coords]
    \newcommand*{\defcoords}{
        \coordinate (o) at (0, 0, 0);
        \coordinate (a) at (3, 0, 0);
        \coordinate (b) at ({3*cos(90)}, {3*sin(90)}, 0);
        \coordinate (c) at ({3*cos(2*90)}, {3*sin(2*90)}, 0);
        \coordinate (d) at ({3*cos(3*90)}, {3*sin(3*90)}, 0);
        \coordinate (abc) at ($1/3*(a)+1/3*(b)+1/3*(c)$);
        \coordinate (abd) at ($1/3*(a)+1/3*(b)+1/3*(d)$);
        \coordinate (bcd) at ($1/3*(b)+1/3*(c)+1/3*(d)$);
        \coordinate (acd) at ($1/3*(a)+1/3*(c)+1/3*(d)$);
    }
    \defcoords
    \draw[very thick] (a) -- (b);
    \draw[very thick] (b) -- (c);
    \draw[very thick] (c) -- (d);
    \draw[very thick] (d) -- (a);
    \draw[very thick] (a) -- (c);
    \draw[very thick, dotted] (a) -- (3, 0, 3);
    \draw[very thick, dotted] (b) -- ({3*cos(90)}, {3*sin(90)}, 3);
    \draw[very thick, dotted] (c) -- ({3*cos(2*90)}, {3*sin(2*90)}, 3);
    \draw[very thick, dotted] (d) -- ({3*cos(3*90)}, {3*sin(3*90)}, 3);
    \filldraw (a) circle (0.05em);
    \filldraw (b) circle (0.05em);
    \filldraw (c) circle (0.05em);
    \filldraw (d) circle (0.05em);
    \draw[white, line width=5] (abc) -- ($(bcd)+(0,0,3)$);
    \draw[ultra thick, red] (abc) -- ($(bcd)+(0,0,3)$);
    \filldraw[red] (abc) circle (0.2em);
    \node[red, anchor=north] at (abc){$b$};
    \draw[white, line width=5] (acd) -- ($(abd)+(0,0,3)$);
    % \draw[cyan, line width=3] ($1/2*(abc) + 1/2*(bcd) + (0, 0, 3/2)$) -- ($1/2*(acd) + 1/2*(abd) + (0, 0, 3/2)$);
    % \node[cyan, anchor=east] at ($1/4*(abc) + 1/4*(bcd) + 1/4*(acd) + 1/4*(abd) + (0, 0, 3/2)$){$\Psi^{-1}$};
    \draw[ultra thick, red] (acd) -- ($(abd)+(0,0,3)$);
    \filldraw[red] (acd) circle (0.2em);
    \node[red, anchor=north] at (acd){$a$};
    \begin{scope}[shift={(0, 0, 3)}]
        \defcoords
        \draw[very thick] (a) -- (b);
        \draw[very thick] (b) -- (c);
        \draw[very thick] (c) -- (d);
        \draw[very thick] (d) -- (a);
        \draw[very thick] (b) -- (d);
        \filldraw (a) circle (0.05em);
        \filldraw (b) circle (0.05em);
        \filldraw (c) circle (0.05em);
        \filldraw (d) circle (0.05em);
        \filldraw[red] (abd) circle (0.2em);
        \filldraw[red] (bcd) circle (0.2em);
        \node[red, anchor=south] at (bcd){$b$};
        \node[red, anchor=south] at (abd){$a$};
    \end{scope}
\end{scope}
\end{tikzpicture}
}}
\end{math}
\caption{Elementary bordisms corresponding to signed flips.}
\label{fig:signed_flip}
\end{figure}
These elementary bordisms can be thought of as branched double covers of the taut 
ideal tetrahedron (i.e., an ideal tetrahedron with dihedral angles $0, 0, \pi$) thought of as a bordism from $(C,\tau)$ to $(C,\tau')$. 
We call the pair of a flip and a choice of either $T_+$ or $T_-$ a \emph{signed flip}. 
Note that the branched double cover corresponding to a triangulation of a square is an annulus, so both $T_+$ and $T_-$, thought of as bordisms from the branched double cover for 
\tikz[scale=0.3]{\draw (0, 0) -- (1, 0) -- (1, 1) -- (0, 1) -- cycle; \draw (1, 0) -- (0, 1);}
to that of 
\tikz[scale=0.3]{\draw (0, 0) -- (1, 0) -- (1, 1) -- (0, 1) -- cycle; \draw (0, 0) -- (1, 1);}, 
are topologically the annulus times $I$ (i.e., the solid torus -- see Figure \ref{fig:solid_torus}) which double covers the square times $I$ (i.e., the 3-ball) with branch locus given by the red tangle shown in Figure \ref{fig:signed_flip}.\footnote{To make it clear whether we are drawing figures in the base 3-manifold or in the branched double cover, we draw the branch locus on the base 3-manifold in red, and draw the branch locus on the double cover in orange.} 
\begin{figure}[H]
\centering
\[
\vcenter{\hbox{
\begin{tikzpicture}
\begin{scope}[scale=0.5]
\draw[very thick, dotted] (0, 0) ellipse (1 and 0.5);
% \draw[very thick] (0, 0) ellipse (3 and 1.5);
\draw[very thick] (3, 0) arc (0:-180:3 and 1.5);
\draw[very thick, dotted] (3, 0) arc (0:180:3 and 1.5);
\draw[very thick] (3, 0) -- (3, 4);
\draw[very thick] (-3, 0) -- (-3, 4);
\draw[very thick, dotted] (1, 0) -- (1, 4);
\draw[very thick, dotted] (-1, 0) -- (-1, 4);
\filldraw[orange] (2, 0) circle (0.1);
\filldraw[orange] (2, 4) circle (0.1);
\draw[orange, very thick] (2, 0) -- (2, 4);
\filldraw[orange] (-2, 0) circle (0.1);
\filldraw[orange] (-2, 4) circle (0.1);
\draw[orange, very thick] (-2, 0) -- (-2, 4);
\draw[very thick] (0, 4) ellipse (1 and 0.5);
\draw[very thick] (0, 4) ellipse (3 and 1.5);
\end{scope}
\end{tikzpicture}
}}
\quad\overset{2:1}{\rightarrow}\;\;
\vcenter{\hbox{
\tdplotsetmaincoords{65}{52}
\begin{tikzpicture}[tdplot_main_coords]
\begin{scope}[scale = 0.7, tdplot_main_coords]
    \newcommand*{\defcoords}{
        \coordinate (o) at (0, 0, 0);
        \coordinate (a) at (3, 0, 0);
        \coordinate (b) at ({3*cos(90)}, {3*sin(90)}, 0);
        \coordinate (c) at ({3*cos(2*90)}, {3*sin(2*90)}, 0);
        \coordinate (d) at ({3*cos(3*90)}, {3*sin(3*90)}, 0);
        \coordinate (abc) at ($1/3*(a)+1/3*(b)+1/3*(c)$);
        \coordinate (abd) at ($1/3*(a)+1/3*(b)+1/3*(d)$);
        \coordinate (bcd) at ($1/3*(b)+1/3*(c)+1/3*(d)$);
        \coordinate (acd) at ($1/3*(a)+1/3*(c)+1/3*(d)$);
    }
    \defcoords
    \draw[very thick] (a) -- (b);
    \draw[very thick, dotted] (b) -- (c);
    \draw[very thick, dotted] (c) -- (d);
    \draw[very thick] (d) -- (a);
    \draw[very thick] (a) -- (3, 0, 3);
    \draw[very thick] (b) -- ({3*cos(90)}, {3*sin(90)}, 3);
    \draw[very thick, dotted] (c) -- ({3*cos(2*90)}, {3*sin(2*90)}, 3);
    \draw[very thick] (d) -- ({3*cos(3*90)}, {3*sin(3*90)}, 3);
    \filldraw (a) circle (0.05em);
    \filldraw (b) circle (0.05em);
    \filldraw (c) circle (0.05em);
    \filldraw (d) circle (0.05em);
    \draw[white, line width=5] (0.5,0.5,0) -- ($(0.5,0.5,0)+(0,0,3)$);
    \draw[ultra thick, red] (0.5,0.5,0) -- ($(0.5,0.5,0)+(0,0,3)$);
    \filldraw[red] (0.5,0.5,0) circle (0.2em);
    \draw[white, line width=5] (-0.5,-0.5,0) -- ($(-0.5,-0.5,0)+(0,0,3)$);
    \draw[ultra thick, red] (-0.5,-0.5,0) -- ($(-0.5,-0.5,0)+(0,0,3)$);
    \filldraw[red] (-0.5,-0.5,0) circle (0.2em);
    \begin{scope}[shift={(0, 0, 3)}]
        \defcoords
        \draw[very thick] (a) -- (b);
        \draw[very thick] (b) -- (c);
        \draw[very thick] (c) -- (d);
        \draw[very thick] (d) -- (a);
        \filldraw (a) circle (0.05em);
        \filldraw (b) circle (0.05em);
        \filldraw (c) circle (0.05em);
        \filldraw (d) circle (0.05em);
        \filldraw[red] (0.5,0.5,0) circle (0.2em);
        \filldraw[red] (-0.5,-0.5,0) circle (0.2em);
    \end{scope}
\end{scope}
\end{tikzpicture}
}}
\]
\caption{Solid torus double-covering the 3-ball.}
\label{fig:solid_torus}
\end{figure}
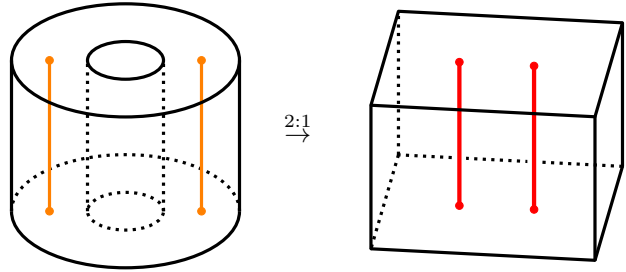

Now, given a sequence of signed flips
\[
\tau = \tau_0 \overset{\pm}{\rightarrow} \tau_1 \overset{\pm}{\rightarrow} \cdots \overset{\pm}{\rightarrow} \tau_m = \tau',
\]
we can glue the elementary bordisms together to get a bordism
\[
Y : \Sigma_{\tau} \rightarrow \Sigma_{\tau'},
\]
which is a branched double cover over $C \times I$. 

\subsection{Skein module}
%We will be interested in a certain distinguished element of the $\mathfrak{gl}_1$-skein module of $Y$, so we first recall the definition of the skein module: 
In this section we recall the definition of the skein module and focus on the case of solid torus; we also discuss the skein algebra of the boundary torus.
\newline

\begin{dfn}\label{defn:gl1_skein}
Let $Y$ be a 3-manifold decorated with a branch locus $B$, an embedded 1-manifold in $Y$. 
Fix a ring $R$ containing $\mathbb{Z}[q,q^{-1}]$ (e.g., $\mathbb{Z}[q,q^{-1}]$ or $\mathbb{C}(q)$). 
The \emph{$\mathfrak{gl}_1$-skein module} of $Y$ is defined as
\[
\Sk^{\mathfrak{gl}_1}_q(Y) := \frac{R\langle \text{isotopy classes of framed oriented links in }Y\setminus B\rangle}{\langle \mathfrak{gl}_1\text{-skein relations}\rangle}
\]
where the $\mathfrak{gl}_1$-skein relations are given by
\begin{gather}
q^{-1}\;\vcenter{\hbox{
\begin{tikzpicture}[scale=0.7]
\draw[dotted] (0,0) circle (1);
\draw[ultra thick, ->] ({sqrt(2)/2},{-sqrt(2)/2}) -- ({-sqrt(2)/2},{sqrt(2)/2});
\draw[white, line width=2.5mm] ({-sqrt(2)/2},{-sqrt(2)/2}) -- ({sqrt(2)/2},{sqrt(2)/2});
\draw[ultra thick, ->] ({-sqrt(2)/2},{-sqrt(2)/2}) -- ({sqrt(2)/2},{sqrt(2)/2});
\end{tikzpicture}
}}
\;\;=\;\;
\vcenter{\hbox{
\begin{tikzpicture}[scale=0.7]
\draw[dotted] (0,0) circle (1);
\draw[ultra thick, <-] ({sqrt(2)/2},{sqrt(2)/2}) arc (135:225:1);
\draw[ultra thick, ->] ({-sqrt(2)/2},{-sqrt(2)/2}) arc (-45:45:1);
\end{tikzpicture}
}}
\;\;=\;\;
q\;
\vcenter{\hbox{
\begin{tikzpicture}[scale=0.7]
\draw[dotted] (0,0) circle (1);
\draw[ultra thick, ->] ({-sqrt(2)/2},{-sqrt(2)/2}) -- ({sqrt(2)/2},{sqrt(2)/2});
\draw[white, line width=2.5mm] ({sqrt(2)/2},{-sqrt(2)/2}) -- ({-sqrt(2)/2},{sqrt(2)/2});
\draw[ultra thick, ->] ({sqrt(2)/2},{-sqrt(2)/2}) -- ({-sqrt(2)/2},{sqrt(2)/2});
\end{tikzpicture}
}}
\;, \label{eq:skeinrel1}
\\
\vcenter{\hbox{
\begin{tikzpicture}[scale=0.7]
\draw[dotted] (0,0) circle (1);
\draw[ultra thick, ->] (0.5,0) arc (0:370:0.5);
\end{tikzpicture}
}}
\;\;=\;\;
\vcenter{\hbox{
\begin{tikzpicture}[scale=0.7]
\draw[dotted] (0,0) circle (1);
\end{tikzpicture}
}}
\;, \label{eq:skeinrel2}
\\
\vcenter{\hbox{
\begin{tikzpicture}[scale=0.7]
\draw[dotted] (0,0) circle (1);
\draw[ultra thick, orange] (-1, 0) -- (1, 0);
\draw[white, line width=2.5mm] (0, -1) -- (0, 1);
\draw[ultra thick, ->] (0, -1) -- (0, 1);
\end{tikzpicture}
}}
\;\;=\;\;
(-1)\cdot
\vcenter{\hbox{
\begin{tikzpicture}[scale=0.7]
\draw[dotted] (0,0) circle (1);
\draw[ultra thick, ->] (0, -1) -- (0, 1);
\draw[white, line width=2.5mm] (-1, 0) -- (1, 0);
\draw[ultra thick, orange] (-1, 0) -- (1, 0);
\end{tikzpicture}
}}
\;, \label{eq:skeinrel3}
\end{gather}
where the orange line in the last skein relation is part of the branch locus $B$. 

For a surface $\Sigma$ (possibly with branch points), the skein module $\Sk^{\mathfrak{gl}_1}_q(\Sigma \times I)$ has a natural algebra structure given by
\[
[L_1] \cdot [L_2] := [L_1 \cdot L_2]
\]
for any framed oriented links $L_1, L_2 \subset \Sigma \times I$, where $L_1 \cdot L_2$ denotes the link obtained by stacking $L_1$ above $L_2$ along the $I$-direction. 
We will call the skein module $\Sk^{\mathfrak{gl}_1}_q(\Sigma \times I)$, equipped with this algebra structure, the \emph{$\mathfrak{gl}_1$-skein algebra} of $\Sigma$ and denote it by $\SkAlg^{\mathfrak{gl}_1}_q(\Sigma)$. 
\end{dfn}
It is clear from the definition that if $Y \hookrightarrow Y'$ is an embedding of 3-manifolds, then there is an induced $R$-linear map
\[
\Sk^{\mathfrak{gl}_1}_q(Y) \rightarrow \Sk^{\mathfrak{gl}_1}_q(Y'). 
\]

\paragraph{Solid torus.}
The structure of the skein module for the solid torus is well known: 
\[
\Sk_q^{\mathfrak{gl}_1}\qty(
\vcenter{\hbox{
\begin{tikzpicture}
\begin{scope}[scale=0.3]
\draw[very thick, dotted] (0, 0) ellipse (1 and 0.5);
% \draw[very thick] (0, 0) ellipse (3 and 1.5);
\draw[very thick] (3, 0) arc (0:-180:3 and 1.5);
\draw[very thick, dotted] (3, 0) arc (0:180:3 and 1.5);
\draw[very thick] (3, 0) -- (3, 4);
\draw[very thick] (-3, 0) -- (-3, 4);
\draw[very thick, dotted] (1, 0) -- (1, 4);
\draw[very thick, dotted] (-1, 0) -- (-1, 4);
\filldraw[orange] (2, 0) circle (0.1);
\filldraw[orange] (2, 4) circle (0.1);
\draw[orange, very thick] (2, 0) -- (2, 4);
\filldraw[orange] (-2, 0) circle (0.1);
\filldraw[orange] (-2, 4) circle (0.1);
\draw[orange, very thick] (-2, 0) -- (-2, 4);
\draw[very thick] (0, 4) ellipse (1 and 0.5);
\draw[very thick] (0, 4) ellipse (3 and 1.5);
\end{scope}
\end{tikzpicture}
}}
)
\;\cong\;
R[x,x^{-1}],
\]
where $x^d \in R[x,x^{-1}]$ represents the following skein:
\[
x^d \leftrightarrow
\vcenter{\hbox{
\begin{tikzpicture}
\begin{scope}[scale=0.5]
\draw[very thick, dotted] (0, 0) ellipse (1 and 0.5);
% \draw[very thick] (0, 0) ellipse (3 and 1.5);
\draw[very thick] (3, 0) arc (0:-180:3 and 1.5);
\draw[very thick, dotted] (3, 0) arc (0:180:3 and 1.5);
\draw[very thick] (3, 0) -- (3, 4);
\draw[very thick] (-3, 0) -- (-3, 4);
\draw[very thick, dotted] (1, 0) -- (1, 4);
\draw[very thick, dotted] (-1, 0) -- (-1, 4);
\filldraw[orange] (2, 0) circle (0.1);
\filldraw[orange] (2, 4) circle (0.1);
\draw[orange, very thick] (2, 0) -- (2, 4);
\filldraw[orange] (-2, 0) circle (0.1);
\filldraw[orange] (-2, 4) circle (0.1);
\draw[orange, very thick] (-2, 0) -- (-2, 4);
\draw[very thick] (0, 4) ellipse (1 and 0.5);
\draw[very thick] (0, 4) ellipse (3 and 1.5);
\draw[ultra thick, <-, blue] (0, 0.8) arc (-90:270:1.5 and 0.75);
\node[anchor=south, blue] at (0, 0.8){$d$};
\end{scope}
\end{tikzpicture}
}}
\;\;\in\;\;
\Sk_q^{\mathfrak{gl}_1}\qty(
\vcenter{\hbox{
\begin{tikzpicture}
\begin{scope}[scale=0.3]
\draw[very thick, dotted] (0, 0) ellipse (1 and 0.5);
% \draw[very thick] (0, 0) ellipse (3 and 1.5);
\draw[very thick] (3, 0) arc (0:-180:3 and 1.5);
\draw[very thick, dotted] (3, 0) arc (0:180:3 and 1.5);
\draw[very thick] (3, 0) -- (3, 4);
\draw[very thick] (-3, 0) -- (-3, 4);
\draw[very thick, dotted] (1, 0) -- (1, 4);
\draw[very thick, dotted] (-1, 0) -- (-1, 4);
\filldraw[orange] (2, 0) circle (0.1);
\filldraw[orange] (2, 4) circle (0.1);
\draw[orange, very thick] (2, 0) -- (2, 4);
\filldraw[orange] (-2, 0) circle (0.1);
\filldraw[orange] (-2, 4) circle (0.1);
\draw[orange, very thick] (-2, 0) -- (-2, 4);
\draw[very thick] (0, 4) ellipse (1 and 0.5);
\draw[very thick] (0, 4) ellipse (3 and 1.5);
\end{scope}
\end{tikzpicture}
}}
),
\]
where $d$ denotes $d$ parallel strands (and $|d|$ parallel strands in the opposite orientation if $d < 0$). 
Vertical stacking gives an algebra structure on the skein module of the annulus times $I$, and it is easy to see that the above isomorphism is an isomorphism of $R$-algebras:
\[
\SkAlg^{\mathfrak{gl}_1}_q\qty(
\vcenter{\hbox{
\begin{tikzpicture}
\begin{scope}[scale=0.3]
\draw[very thick, fill=lightgray!30] (0, 0) circle (3);
\draw[very thick, fill=white] (0, 0) circle (1);
\filldraw[orange] (2, 0) circle (0.1);
\filldraw[orange] (-2, 0) circle (0.1);
\end{scope}
\end{tikzpicture}
}}
)
\;\cong
R[x, x^{-1}]. 
\]

\paragraph{Boundary torus.}
The skein module of the solid torus is naturally a module over the skein algebra of the boundary torus. 
Let $\hat{x}$ and $\hat{y}$ be the following elements of the $\mathfrak{gl}_1$-skein algebra of the torus (with four branch points):
\begin{equation}\label{eq:xhat-yhat-A1}
\hat{x} := 
\vcenter{\hbox{
\begin{tikzpicture}
\begin{scope}[scale=0.5]
\draw[very thick, dotted] (0, 0) ellipse (1 and 0.5);
% \draw[very thick] (0, 0) ellipse (3 and 1.5);
\draw[very thick] (3, 0) arc (0:-180:3 and 1.5);
\draw[very thick, dotted] (3, 0) arc (0:180:3 and 1.5);
\draw[very thick] (3, 0) -- (3, 4);
\draw[very thick] (-3, 0) -- (-3, 4);
\draw[very thick, dotted] (1, 0) -- (1, 4);
\draw[very thick, dotted] (-1, 0) -- (-1, 4);
\filldraw[orange] (2, 0) circle (0.1);
\filldraw[orange] (2, 4) circle (0.1);
% \draw[orange, very thick] (2, 0) -- (2, 4);
\filldraw[orange] (-2, 0) circle (0.1);
\filldraw[orange] (-2, 4) circle (0.1);
% \draw[orange, very thick] (-2, 0) -- (-2, 4);
\draw[very thick] (0, 4) ellipse (1 and 0.5);
\draw[very thick] (0, 4) ellipse (3 and 1.5);
\draw[ultra thick, <-, blue] (0, 3.25) arc (-90:270:1.5 and 0.75);
\end{scope}
\end{tikzpicture}
}}
\;,\quad
\hat{y} := 
\vcenter{\hbox{
\begin{tikzpicture}
\begin{scope}[scale=0.5]
\draw[very thick, dotted] (0, 0) ellipse (1 and 0.5);
% \draw[very thick] (0, 0) ellipse (3 and 1.5);
\draw[very thick] (3, 0) arc (0:-180:3 and 1.5);
\draw[very thick, dotted] (3, 0) arc (0:180:3 and 1.5);
\draw[very thick] (3, 0) -- (3, 4);
\draw[very thick] (-3, 0) -- (-3, 4);
\draw[very thick, dotted] (1, 0) -- (1, 4);
\draw[very thick, dotted] (-1, 0) -- (-1, 4);
\filldraw[orange] (2, 0) circle (0.1);
\filldraw[orange] (2, 4) circle (0.1);
% \draw[orange, very thick] (2, 0) -- (2, 4);
\filldraw[orange] (-2, 0) circle (0.1);
\filldraw[orange] (-2, 4) circle (0.1);
% \draw[orange, very thick] (-2, 0) -- (-2, 4);
\draw[very thick] (0, 4) ellipse (1 and 0.5);
\draw[very thick] (0, 4) ellipse (3 and 1.5);
\draw[ultra thick, blue] ($(0, 4) - ({1/sqrt(2)}, {0.5/sqrt(2)})$) -- ($(0, 4) - ({3/sqrt(2)}, {1.5/sqrt(2)})$) -- ($(0, 0) - ({3/sqrt(2)}, {1.5/sqrt(2)})$) -- ($(0, 0) - ({1/sqrt(2)}, {0.5/sqrt(2)})$) -- cycle;
\draw[ultra thick, ->, blue] ($(0, 0) - ({3/sqrt(2)}, {1.5/sqrt(2)})$) -- ($(0, 2) - ({3/sqrt(2)}, {1.5/sqrt(2)})$);
\end{scope}
\end{tikzpicture}
}}
\;\;\in\;\;
\SkAlg^{\mathfrak{gl}_1}_q \qty(
\vcenter{\hbox{
\begin{tikzpicture}
\begin{scope}[scale=0.3]
\draw[very thick, dotted] (0, 0) ellipse (1 and 0.5);
% \draw[very thick] (0, 0) ellipse (3 and 1.5);
\draw[very thick] (3, 0) arc (0:-180:3 and 1.5);
\draw[very thick, dotted] (3, 0) arc (0:180:3 and 1.5);
\draw[very thick] (3, 0) -- (3, 4);
\draw[very thick] (-3, 0) -- (-3, 4);
\draw[very thick, dotted] (1, 0) -- (1, 4);
\draw[very thick, dotted] (-1, 0) -- (-1, 4);
\filldraw[orange] (2, 0) circle (0.1);
\filldraw[orange] (2, 4) circle (0.1);
% \draw[orange, very thick] (2, 0) -- (2, 4);
\filldraw[orange] (-2, 0) circle (0.1);
\filldraw[orange] (-2, 4) circle (0.1);
% \draw[orange, very thick] (-2, 0) -- (-2, 4);
\draw[very thick] (0, 4) ellipse (1 and 0.5);
\draw[very thick] (0, 4) ellipse (3 and 1.5);
\end{scope}
\end{tikzpicture}
}}
)
\;.
\end{equation}
Then the $\mathfrak{gl}_1$-skein algebra of the torus (with four branch points) is a quantum torus generated by these two elements: 
\begin{equation}
\SkAlg^{\mathfrak{gl}_1}_q \qty(
\vcenter{\hbox{
\begin{tikzpicture}
\begin{scope}[scale=0.3]
\draw[very thick, dotted] (0, 0) ellipse (1 and 0.5);
% \draw[very thick] (0, 0) ellipse (3 and 1.5);
\draw[very thick] (3, 0) arc (0:-180:3 and 1.5);
\draw[very thick, dotted] (3, 0) arc (0:180:3 and 1.5);
\draw[very thick] (3, 0) -- (3, 4);
\draw[very thick] (-3, 0) -- (-3, 4);
\draw[very thick, dotted] (1, 0) -- (1, 4);
\draw[very thick, dotted] (-1, 0) -- (-1, 4);
\filldraw[orange] (2, 0) circle (0.1);
\filldraw[orange] (2, 4) circle (0.1);
% \draw[orange, very thick] (2, 0) -- (2, 4);
\filldraw[orange] (-2, 0) circle (0.1);
\filldraw[orange] (-2, 4) circle (0.1);
% \draw[orange, very thick] (-2, 0) -- (-2, 4);
\draw[very thick] (0, 4) ellipse (1 and 0.5);
\draw[very thick] (0, 4) ellipse (3 and 1.5);
\end{scope}
\end{tikzpicture}
}}
)
\cong 
\frac{R\langle \hat{x}^{\pm 1}, \hat{y}^{\pm 1}\rangle}{(\hat{y}\hat{x} = q^2\hat{x}\hat{y})}.
\end{equation}
The skein algebra of the torus acts on the skein module of the solid torus in the following way: 
\begin{equation}\label{eq:quantum_torus_algebra}
    \hat{x}\cdot x^d = x^{d+1},\quad \hat{y} \cdot x^d = q^{2d}x^d.
\end{equation}

\subsection{Quantum dilogarithm and holomorphic disks}
In this section, we recall the definition of the quantum dilogarithm -- which is a crucial object in this work -- and discuss some of its properties and interpretations.
\newline

$\Psi = \Psi(x)$ is a distinguished element of (a completion of) the skein module of the solid torus. 
It is uniquely characterized by the following $q$-difference equation: 
\begin{equation}
    (1-q \hat{x}-\hat{y})\Psi = 0,
\end{equation}
with the initial condition that $\Psi = 1 + \cdots$, where $\cdots \in R[[x]]x$, and called the \emph{quantum dilogarithm}. 
Explicitly, 
\begin{equation}\label{eqn:qdilog}
\Psi = \frac{1}{(qx;q^2)_\infty} = \sum_{d\geq 0} \frac{q^d x^d}{(q^2;q^2)_d} \in R[[x]],
\end{equation}
where
\[
(z;q^2)_d := \prod_{k=0}^{d-1}(1-q^{2k}z)\quad\text{and}\quad (z;q^2)_{\infty} := \prod_{k\geq 0}(1-q^{2k}z). 
\]
Thinking of the skein module of the solid torus as the skein algebra of the annulus, $\Psi$ is invertible, with inverse
\begin{equation}
\Psi^{-1} = (q x;q^2)_{\infty} = \sum_{d\geq 0} \frac{(-1)^d q^{d^2}x^d}{(q^2;q^2)_d} \in R[[x]],
\end{equation}
satisfying the $q$-difference equation: 
\begin{equation}
    (1-q^{-1}\hat{x} - \hat{y}^{-1}) \Psi^{-1} = 0.
\end{equation}

Now, back to the 3-dimensional bordisms $Y : \Sigma_{\tau} \rightarrow \Sigma_{\tau'}$ constructed out of a sequence of signed flips, we consider the distinguished element $Z$ of (a completion of) the $\mathfrak{gl}_1$-skein module $\widehat{\Sk}_{q}^{\mathfrak{gl}_1}(Y)$ of $Y$, obtained by inserting the quantum dilogarithm $\Psi$ for each positive elementary bordism $T_+$ and the inverse quantum dilogarithm $\Psi^{-1}$ for each negative elementary bordism $T_-$. 
This element, $Z$, is secretly a (skein-valued) count of holomorphic curves ending on a certain Lagrangian diffeomorphic to $Y$ inside $T^*(C\times I)$, and $\Psi$ (resp. $\Psi^{-1}$) is the count of a holomorphic disk (resp. a holomorphic anti-disk) and its multiple covers. 
Pictorially, we will draw these skeins as in Figure \ref{fig:holomorphic_disks}. 
\begin{figure}[H]
\centering
\begin{math}
T_+ : 
\vcenter{\hbox{
\tdplotsetmaincoords{65}{52}
\begin{tikzpicture}[tdplot_main_coords]
\begin{scope}[scale = 0.7, tdplot_main_coords]
    \newcommand*{\defcoords}{
        \coordinate (o) at (0, 0, 0);
        \coordinate (a) at (3, 0, 0);
        \coordinate (b) at ({3*cos(90)}, {3*sin(90)}, 0);
        \coordinate (c) at ({3*cos(2*90)}, {3*sin(2*90)}, 0);
        \coordinate (d) at ({3*cos(3*90)}, {3*sin(3*90)}, 0);
        \coordinate (abc) at ($1/3*(a)+1/3*(b)+1/3*(c)$);
        \coordinate (abd) at ($1/3*(a)+1/3*(b)+1/3*(d)$);
        \coordinate (bcd) at ($1/3*(b)+1/3*(c)+1/3*(d)$);
        \coordinate (acd) at ($1/3*(a)+1/3*(c)+1/3*(d)$);
    }
    \defcoords
    \draw[very thick] (a) -- (b);
    \draw[very thick] (b) -- (c);
    \draw[very thick] (c) -- (d);
    \draw[very thick] (d) -- (a);
    \draw[very thick] (a) -- (c);
    \draw[very thick, dotted] (a) -- (3, 0, 3);
    \draw[very thick, dotted] (b) -- ({3*cos(90)}, {3*sin(90)}, 3);
    \draw[very thick, dotted] (c) -- ({3*cos(2*90)}, {3*sin(2*90)}, 3);
    \draw[very thick, dotted] (d) -- ({3*cos(3*90)}, {3*sin(3*90)}, 3);
    \filldraw (a) circle (0.05em);
    \filldraw (b) circle (0.05em);
    \filldraw (c) circle (0.05em);
    \filldraw (d) circle (0.05em);
    \draw[white, line width=5] (abc) -- ($(abd)+(0,0,3)$);
    \draw[ultra thick, red] (abc) -- ($(abd)+(0,0,3)$);
    \filldraw[red] (abc) circle (0.2em);
    \draw[white, line width=5] (acd) -- ($(bcd)+(0,0,3)$);
    \draw[ultra thick, red] (acd) -- ($(bcd)+(0,0,3)$);
    \filldraw[red] (acd) circle (0.2em);
    \node[red, anchor=north] at (acd){$a$};
    \node[red, anchor=north] at (abc){$b$};
    \draw[blue, line width=3] ($1/2*(abc) + 1/2*(abd) + (0, 0, 3/2)$) -- ($1/2*(acd) + 1/2*(bcd) + (0, 0, 3/2)$);
    \node[blue, anchor=north] at ($1/4*(abc) + 1/4*(abd) + 1/4*(acd) + 1/4*(bcd) + (0, 0, 3/2)$){$\Psi$};
    \begin{scope}[shift={(0, 0, 3)}]
        \defcoords
        \draw[very thick] (a) -- (b);
        \draw[very thick] (b) -- (c);
        \draw[very thick] (c) -- (d);
        \draw[very thick] (d) -- (a);
        \draw[very thick] (b) -- (d);
        \filldraw (a) circle (0.05em);
        \filldraw (b) circle (0.05em);
        \filldraw (c) circle (0.05em);
        \filldraw (d) circle (0.05em);
        \filldraw[red] (abd) circle (0.2em);
        \filldraw[red] (bcd) circle (0.2em);
        \node[red, anchor=south] at (bcd){$a$};
        \node[red, anchor=south] at (abd){$b$};
    \end{scope}
\end{scope}
\end{tikzpicture}
}}
\;\;,\quad
T_- : 
\vcenter{\hbox{
\tdplotsetmaincoords{65}{52}
\begin{tikzpicture}[tdplot_main_coords]
\begin{scope}[scale = 0.7, tdplot_main_coords]
    \newcommand*{\defcoords}{
        \coordinate (o) at (0, 0, 0);
        \coordinate (a) at (3, 0, 0);
        \coordinate (b) at ({3*cos(90)}, {3*sin(90)}, 0);
        \coordinate (c) at ({3*cos(2*90)}, {3*sin(2*90)}, 0);
        \coordinate (d) at ({3*cos(3*90)}, {3*sin(3*90)}, 0);
        \coordinate (abc) at ($1/3*(a)+1/3*(b)+1/3*(c)$);
        \coordinate (abd) at ($1/3*(a)+1/3*(b)+1/3*(d)$);
        \coordinate (bcd) at ($1/3*(b)+1/3*(c)+1/3*(d)$);
        \coordinate (acd) at ($1/3*(a)+1/3*(c)+1/3*(d)$);
    }
    \defcoords
    \draw[very thick] (a) -- (b);
    \draw[very thick] (b) -- (c);
    \draw[very thick] (c) -- (d);
    \draw[very thick] (d) -- (a);
    \draw[very thick] (a) -- (c);
    \draw[very thick, dotted] (a) -- (3, 0, 3);
    \draw[very thick, dotted] (b) -- ({3*cos(90)}, {3*sin(90)}, 3);
    \draw[very thick, dotted] (c) -- ({3*cos(2*90)}, {3*sin(2*90)}, 3);
    \draw[very thick, dotted] (d) -- ({3*cos(3*90)}, {3*sin(3*90)}, 3);
    \filldraw (a) circle (0.05em);
    \filldraw (b) circle (0.05em);
    \filldraw (c) circle (0.05em);
    \filldraw (d) circle (0.05em);
    \draw[white, line width=5] (abc) -- ($(bcd)+(0,0,3)$);
    \draw[ultra thick, red] (abc) -- ($(bcd)+(0,0,3)$);
    \filldraw[red] (abc) circle (0.2em);
    \node[red, anchor=north] at (abc){$b$};
    \draw[white, line width=5] (acd) -- ($(abd)+(0,0,3)$);
    \draw[cyan, line width=3] ($1/2*(abc) + 1/2*(bcd) + (0, 0, 3/2)$) -- ($1/2*(acd) + 1/2*(abd) + (0, 0, 3/2)$);
    \node[cyan, anchor=east] at ($1/4*(abc) + 1/4*(bcd) + 1/4*(acd) + 1/4*(abd) + (0, 0, 3/2)$){$\Psi^{-1}$};
    \draw[ultra thick, red] (acd) -- ($(abd)+(0,0,3)$);
    \filldraw[red] (acd) circle (0.2em);
    \node[red, anchor=north] at (acd){$a$};
    \begin{scope}[shift={(0, 0, 3)}]
        \defcoords
        \draw[very thick] (a) -- (b);
        \draw[very thick] (b) -- (c);
        \draw[very thick] (c) -- (d);
        \draw[very thick] (d) -- (a);
        \draw[very thick] (b) -- (d);
        \filldraw (a) circle (0.05em);
        \filldraw (b) circle (0.05em);
        \filldraw (c) circle (0.05em);
        \filldraw (d) circle (0.05em);
        \filldraw[red] (abd) circle (0.2em);
        \filldraw[red] (bcd) circle (0.2em);
        \node[red, anchor=south] at (bcd){$b$};
        \node[red, anchor=south] at (abd){$a$};
    \end{scope}
\end{scope}
\end{tikzpicture}
}}
\end{math}
\caption{Elementary bordisms and holomorphic (anti) disks.}
\label{fig:holomorphic_disks}
\end{figure}
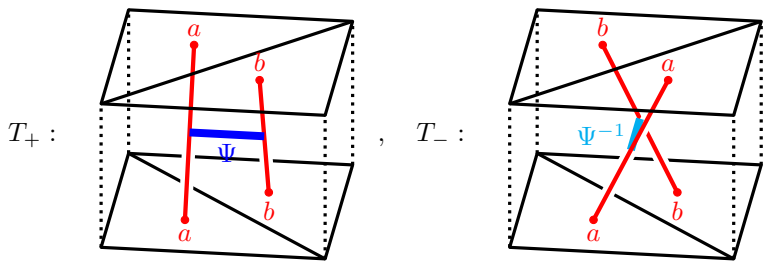

\paragraph{Quantum cluster transformation.}

There is a rich story behind this construction: 
Given a surface $C$ and a branched double cover $\Sigma_{\tau}$ induced by an ideal triangulation, there is an algebra homomorphism called the \emph{quantum UV--IR map} \cite{NeitzkeYan}
\[
F_{\tau} : \SkAlg^{\mathfrak{gl}_2}_q(C) \rightarrow \SkAlg^{\mathfrak{gl}_1}_q(\Sigma_\tau)
\]
from the $\mathfrak{gl}_2$-skein algebra of $C$ to the $\mathfrak{gl}_1$-skein algebra of $\Sigma_{\tau}$; 
and if $\tau'$ is an ideal triangulation obtained from $\tau$ by a flip, then
\begin{equation}
    F_{\tau} \Psi = \Psi F_{\tau'},
\end{equation}
i.e., the two maps $F_\tau$ and $F_{\tau'}$ are related by conjugation by a quantum dilogarithm, which is also known as the \emph{quantum cluster transformation}. 
See \cite{Ekholm:2025anq} for its generalization to HOMFLYPT skeins and the interpretation in terms of holomorphic curves. 

\paragraph{Disk-antidisk cancellation and pentagon relation.}
It is important to point out two properties of the distinguished element $Z \in \widehat{\Sk}_{q}^{\mathfrak{gl}_1}(Y)$ that we constructed above. 
First, a flip composed with a flip back with opposite signs gives the trivial bordism $Y = \Sigma_{\tau} \times I$, and in this case $Z = \Psi \Psi^{-1} = [\emptyset] \in \Sk_{q}^{\mathfrak{gl}_1}(Y)$; see Figure \ref{fig:disk_antidisk}. 
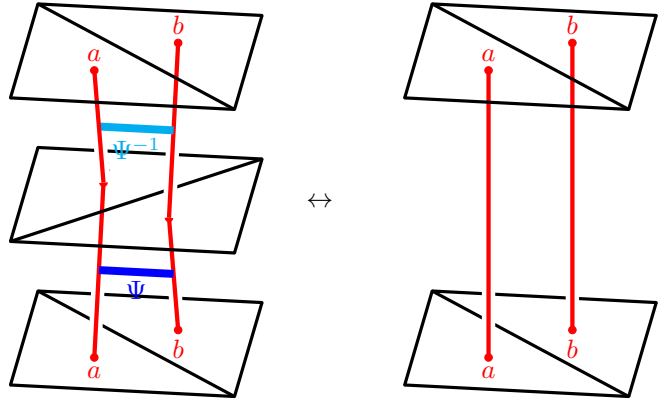
\begin{figure}[H]
\centering
\begin{math}
\vcenter{\hbox{
\tdplotsetmaincoords{65}{52}
\begin{tikzpicture}[tdplot_main_coords]
\begin{scope}[scale = 0.7, tdplot_main_coords]
    \newcommand*{\defcoords}{
        \coordinate (o) at (0, 0, 0);
        \coordinate (a) at (3, 0, 0);
        \coordinate (b) at ({3*cos(90)}, {3*sin(90)}, 0);
        \coordinate (c) at ({3*cos(2*90)}, {3*sin(2*90)}, 0);
        \coordinate (d) at ({3*cos(3*90)}, {3*sin(3*90)}, 0);
        \coordinate (abc) at ($1/3*(a)+1/3*(b)+1/3*(c)$);
        \coordinate (abd) at ($1/3*(a)+1/3*(b)+1/3*(d)$);
        \coordinate (bcd) at ($1/3*(b)+1/3*(c)+1/3*(d)$);
        \coordinate (acd) at ($1/3*(a)+1/3*(c)+1/3*(d)$);
    }
    \defcoords
    \draw[very thick] (a) -- (b);
    \draw[very thick] (b) -- (c);
    \draw[very thick] (c) -- (d);
    \draw[very thick] (d) -- (a);
    \draw[very thick] (a) -- (c);
    \filldraw (a) circle (0.05em);
    \filldraw (b) circle (0.05em);
    \filldraw (c) circle (0.05em);
    \filldraw (d) circle (0.05em);
    \draw[white, line width=5] (abc) -- ($(abd)+(0,0,3)$);
    \draw[ultra thick, red] (abc) -- ($(abd)+(0,0,3)$);
    \filldraw[red] (abc) circle (0.2em);
    \draw[white, line width=5] (acd) -- ($(bcd)+(0,0,3)$);
    \draw[ultra thick, red] (acd) -- ($(bcd)+(0,0,3)$);
    \filldraw[red] (acd) circle (0.2em);
    \node[red, anchor=north] at (acd){$a$};
    \node[red, anchor=north] at (abc){$b$};
    \draw[blue, line width=3] ($1/2*(abc) + 1/2*(abd) + (0, 0, 3/2)$) -- ($1/2*(acd) + 1/2*(bcd) + (0, 0, 3/2)$);
    \node[blue, anchor=north] at ($1/4*(abc) + 1/4*(abd) + 1/4*(acd) + 1/4*(bcd) + (0, 0, 3/2)$){$\Psi$};
    \begin{scope}[shift={(0, 0, 3)}]
        \defcoords
        \draw[very thick] (a) -- (b);
        \draw[very thick] (b) -- (c);
        \draw[very thick] (c) -- (d);
        \draw[very thick] (d) -- (a);
        \draw[very thick] (b) -- (d);
        \filldraw (a) circle (0.05em);
        \filldraw (b) circle (0.05em);
        \filldraw (c) circle (0.05em);
        \filldraw (d) circle (0.05em);
        \filldraw[red] (abd) circle (0.2em);
        \filldraw[red] (bcd) circle (0.2em);
        \node[red, anchor=south] at (bcd){$a$};
        \node[red, anchor=south] at (abd){$b$};
        \draw[white, line width=5] (abd) -- ($(abc)+(0,0,3)$);
        \draw[ultra thick, red] (abd) -- ($(abc)+(0,0,3)$);
        \draw[white, line width=5] (bcd) -- ($(acd)+(0,0,3)$);
        \draw[cyan, line width=3] ($1/2*(abc) + 1/2*(abd) + (0, 0, 3/2)$) -- ($1/2*(acd) + 1/2*(bcd) + (0, 0, 3/2)$);
        \node[cyan, anchor=north] at ($1/4*(abc) + 1/4*(abd) + 1/4*(acd) + 1/4*(bcd) + (0, 0, 3/2)$){$\Psi^{-1}$};
        \draw[ultra thick, red] (bcd) -- ($(acd)+(0,0,3)$);
        \begin{scope}[shift={(0, 0, 3)}]
            \defcoords
            \draw[very thick] (a) -- (b);
            \draw[very thick] (b) -- (c);
            \draw[very thick] (c) -- (d);
            \draw[very thick] (d) -- (a);
            \draw[very thick] (a) -- (c);
            \filldraw (a) circle (0.05em);
            \filldraw (b) circle (0.05em);
            \filldraw (c) circle (0.05em);
            \filldraw (d) circle (0.05em);
            \filldraw[red] (acd) circle (0.2em);
            \filldraw[red] (abc) circle (0.2em);
            \node[red, anchor=south] at (acd){$a$};
            \node[red, anchor=south] at (abc){$b$};
        \end{scope}
    \end{scope}
\end{scope}
\end{tikzpicture}
}}
\quad\leftrightarrow \quad 
\vcenter{\hbox{
\tdplotsetmaincoords{65}{52}
\begin{tikzpicture}[tdplot_main_coords]
\begin{scope}[scale = 0.7, tdplot_main_coords]
    \newcommand*{\defcoords}{
        \coordinate (o) at (0, 0, 0);
        \coordinate (a) at (3, 0, 0);
        \coordinate (b) at ({3*cos(90)}, {3*sin(90)}, 0);
        \coordinate (c) at ({3*cos(2*90)}, {3*sin(2*90)}, 0);
        \coordinate (d) at ({3*cos(3*90)}, {3*sin(3*90)}, 0);
        \coordinate (abc) at ($1/3*(a)+1/3*(b)+1/3*(c)$);
        \coordinate (abd) at ($1/3*(a)+1/3*(b)+1/3*(d)$);
        \coordinate (bcd) at ($1/3*(b)+1/3*(c)+1/3*(d)$);
        \coordinate (acd) at ($1/3*(a)+1/3*(c)+1/3*(d)$);
    }
    \defcoords
    \draw[very thick] (a) -- (b);
    \draw[very thick] (b) -- (c);
    \draw[very thick] (c) -- (d);
    \draw[very thick] (d) -- (a);
    \draw[very thick] (a) -- (c);
    \filldraw (a) circle (0.05em);
    \filldraw (b) circle (0.05em);
    \filldraw (c) circle (0.05em);
    \filldraw (d) circle (0.05em);
    \draw[white, line width=5] (abc) -- ($(abc)+(0,0,6)$);
    \draw[ultra thick, red] (abc) -- ($(abc)+(0,0,6)$);
    \filldraw[red] (abc) circle (0.2em);
    \draw[white, line width=5] (acd) -- ($(acd)+(0,0,6)$);
    \draw[ultra thick, red] (acd) -- ($(acd)+(0,0,6)$);
    \filldraw[red] (acd) circle (0.2em);
    \node[red, anchor=north] at (acd){$a$};
    \node[red, anchor=north] at (abc){$b$};
    \begin{scope}[shift={(0, 0, 6)}]
        \defcoords
        \draw[very thick] (a) -- (b);
        \draw[very thick] (b) -- (c);
        \draw[very thick] (c) -- (d);
        \draw[very thick] (d) -- (a);
        \draw[very thick] (a) -- (c);
        \filldraw (a) circle (0.05em);
        \filldraw (b) circle (0.05em);
        \filldraw (c) circle (0.05em);
        \filldraw (d) circle (0.05em);
        \filldraw[red] (acd) circle (0.2em);
        \filldraw[red] (abc) circle (0.2em);
        \node[red, anchor=south] at (acd){$a$};
        \node[red, anchor=south] at (abc){$b$};
    \end{scope}
\end{scope}
\end{tikzpicture}
}}
\end{math}
\caption{Cancelling a pair of a disk and an anti-disk}
\label{fig:disk_antidisk}
\end{figure}
Second, but more importantly, the quantum dilogarithm satisfies the \emph{pentagon relation}:
\begin{equation}\label{eq:pentagon_relation}
    \Psi_{ab}\Psi_{bc} = \Psi_{bc}\Psi_{ac}\Psi_{ab},
\end{equation}
which in this context becomes the invariance of $Z$ under the 2-3 Pachner move changing the taut ideal triangulation of $C \times I$, as illustrated in Figure \ref{fig:Pachner_unlinking}. 
In the context of symmetric quivers and their geometric interpretation in terms of holomorphic disks, this is closely related to the operation called \emph{unlinking} \cite{Ekholm:2019lmb}.\footnote{In the language of \emph{skein categories}, one can summarize this by saying that we have an embedding 
\begin{align*}
\mathbf{Tri}(C) &\rightarrow \mathbf{Bimod} \\
\tau &\mapsto \mathrm{SkCat}(\Sigma_{\tau})\\
(T_{\pm} : \tau \rightarrow \tau') &\mapsto (\Psi^{\pm 1} : \mathrm{SkCat}(\Sigma_{\tau}) \otimes \mathrm{SkCat}(\Sigma_{\tau'})^{op} \rightarrow \mathbf{Vect})
\end{align*}
of the 2-category $\mathbf{Tri}(C)$ of ideal triangulations of $C$, signed flips between them, and 2-3 Pachner moves into the 2-category $\mathbf{Bimod}$ of categories (in our case, $\mathfrak{gl}_1$-skein categories of branched double covers), bimodules between them, and bimodule homomorphisms.}  
\begin{figure}[htbp] 
\centering
\begin{math} 
\vcenter{\hbox{
\tdplotsetmaincoords{65}{52}
\begin{tikzpicture}[tdplot_main_coords]
\begin{scope}[scale = 0.7, tdplot_main_coords]
    \newcommand*{\defcoords}{
        \coordinate (o) at (0, 0, 0);
        \coordinate (a) at (3, 0, 0);
        \coordinate (b) at ({3*cos(72)}, {3*sin(72)}, 0);
        \coordinate (c) at ({3*cos(2*72)}, {3*sin(2*72)}, 0);
        \coordinate (d) at ({3*cos(3*72)}, {3*sin(3*72)}, 0);
        \coordinate (e) at ({3*cos(4*72)}, {3*sin(4*72)}, 0);
        \coordinate (abc) at ($1/3*(a)+1/3*(b)+1/3*(c)$);
        \coordinate (abe) at ($1/3*(a)+1/3*(b)+1/3*(e)$);
        \coordinate (ace) at ($1/3*(a)+1/3*(c)+1/3*(e)$);
        \coordinate (bcd) at ($1/3*(b)+1/3*(c)+1/3*(d)$);
        \coordinate (bce) at ($1/3*(b)+1/3*(c)+1/3*(e)$);
        \coordinate (bde) at ($1/3*(b)+1/3*(d)+1/3*(e)$);
        \coordinate (cde) at ($1/3*(c)+1/3*(d)+1/3*(e)$);
    }
    \defcoords
    \draw[very thick] (a) -- (b);
    \draw[very thick] (b) -- (c);
    \draw[very thick] (c) -- (d);
    \draw[very thick] (d) -- (e);
    \draw[very thick] (e) -- (a);
    \draw[very thick] (a) -- (c);
    \draw[very thick] (c) -- (e);
    \filldraw (a) circle (0.05em);
    \filldraw (b) circle (0.05em);
    \filldraw (c) circle (0.05em);
    \filldraw (d) circle (0.05em);
    \filldraw (e) circle (0.05em);
    \draw[white, line width=5] (cde) -- ($(cde)+(0,0,3)$);
    \draw[ultra thick, red] (cde) -- ($(cde)+(0,0,3)$);
    \draw[white, line width=5] (abc) -- ($(abe)+(0,0,3)$);
    \draw[ultra thick, red] (abc) -- ($(abe)+(0,0,3)$);
    \draw[white, line width=5] (ace) -- ($(bce)+(0,0,3)$);
    \draw[ultra thick, red] (ace) -- ($(bce)+(0,0,3)$);
    \filldraw[red] (abc) circle (0.2em);
    \filldraw[red] (ace) circle (0.2em);
    \filldraw[red] (cde) circle (0.2em);
    \node[red, anchor=north] at (cde){$a$};
    \node[red, anchor=north] at (ace){$b$};
    \node[red, anchor=north] at (abc){$c$};
    \draw[blue, line width=3] ($1/2*(abc) + 1/2*(abe) + (0, 0, 3/2)$) -- ($1/2*(ace) + 1/2*(bce) + (0, 0, 3/2)$);
    \node[blue, anchor=north] at ($1/4*(abc) + 1/4*(abe) + 1/4*(ace) + 1/4*(bce) + (0, 0, 3/2)$){$\Psi_{bc}$};
    \begin{scope}[shift={(0, 0, 3)}]
        \defcoords
        \draw[very thick] (a) -- (b);
        \draw[very thick] (b) -- (c);
        \draw[very thick] (c) -- (d);
        \draw[very thick] (d) -- (e);
        \draw[very thick] (e) -- (a);
        \draw[very thick] (b) -- (e);
        \draw[very thick] (c) -- (e);
        \filldraw (a) circle (0.05em);
        \filldraw (b) circle (0.05em);
        \filldraw (c) circle (0.05em);
        \filldraw (d) circle (0.05em);
        \filldraw (e) circle (0.05em);
        \draw[white, line width=5] (abe) -- ($(abe)+(0,0,3)$);
        \draw[ultra thick, red] (abe) -- ($(abe)+(0,0,3)$);
        \draw[white, line width=5] (cde) -- ($(bcd)+(0,0,3)$);
        \draw[ultra thick, red] (cde) -- ($(bcd)+(0,0,3)$);
        \draw[white, line width=5] (bce) -- ($(bde)+(0,0,3)$);
        \draw[ultra thick, red] (bce) -- ($(bde)+(0,0,3)$);
        \filldraw[red] (abe) circle (0.2em);
        \filldraw[red] (bce) circle (0.2em);
        \filldraw[red] (cde) circle (0.2em);
        \draw[blue, line width=3] ($1/2*(cde) + 1/2*(bcd) + (0, 0, 3/2)$) -- ($1/2*(bce) + 1/2*(bde) + (0, 0, 3/2)$);
        \node[blue, anchor=north] at ($1/4*(cde) + 1/4*(bcd) + 1/4*(bce) + 1/4*(bde) + (0, 0, 3/2)$){$\Psi_{ab}$};
    \end{scope}
    \begin{scope}[shift={(0, 0, 6)}]
        \defcoords
        \draw[very thick] (a) -- (b);
        \draw[very thick] (b) -- (c);
        \draw[very thick] (c) -- (d);
        \draw[very thick] (d) -- (e);
        \draw[very thick] (e) -- (a);
        \draw[very thick] (b) -- (e);
        \draw[very thick] (b) -- (d);
        \filldraw (a) circle (0.05em);
        \filldraw (b) circle (0.05em);
        \filldraw (c) circle (0.05em);
        \filldraw (d) circle (0.05em);
        \filldraw (e) circle (0.05em);
        \filldraw[red] (abe) circle (0.2em);
        \filldraw[red] (bcd) circle (0.2em);
        \filldraw[red] (bde) circle (0.2em);
        \node[red, anchor=south] at (bcd){$a$};
        \node[red, anchor=south] at (bde){$b$};
        \node[red, anchor=south] at (abe){$c$};
    \end{scope}
\end{scope}
\end{tikzpicture}
}}
\quad
\;\;\leftrightarrow\;\;
\vcenter{\hbox{
\tdplotsetmaincoords{65}{52}
\begin{tikzpicture}[tdplot_main_coords]
\begin{scope}[scale = 0.8, tdplot_main_coords]
    \newcommand*{\defcoords}{
        \coordinate (o) at (0, 0, 0);
        \coordinate (a) at (3, 0, 0);
        \coordinate (b) at ({3*cos(72)}, {3*sin(72)}, 0);
        \coordinate (c) at ({3*cos(2*72)}, {3*sin(2*72)}, 0);
        \coordinate (d) at ({3*cos(3*72)}, {3*sin(3*72)}, 0);
        \coordinate (e) at ({3*cos(4*72)}, {3*sin(4*72)}, 0);
        \coordinate (abc) at ($1/3*(a)+1/3*(b)+1/3*(c)$);
        \coordinate (abd) at ($1/3*(a)+1/3*(b)+1/3*(d)$);
        \coordinate (abe) at ($1/3*(a)+1/3*(b)+1/3*(e)$);
        \coordinate (acd) at ($1/3*(a)+1/3*(c)+1/3*(d)$);
        \coordinate (ace) at ($1/3*(a)+1/3*(c)+1/3*(e)$);
        \coordinate (ade) at ($1/3*(a)+1/3*(d)+1/3*(e)$);
        \coordinate (bcd) at ($1/3*(b)+1/3*(c)+1/3*(d)$);
        \coordinate (bce) at ($1/3*(b)+1/3*(c)+1/3*(e)$);
        \coordinate (bde) at ($1/3*(b)+1/3*(d)+1/3*(e)$);
        \coordinate (cde) at ($1/3*(c)+1/3*(d)+1/3*(e)$);
    }
    \defcoords
    \draw[very thick] (a) -- (b);
    \draw[very thick] (b) -- (c);
    \draw[very thick] (c) -- (d);
    \draw[very thick] (d) -- (e);
    \draw[very thick] (e) -- (a);
    \draw[very thick] (a) -- (c);
    \draw[very thick] (c) -- (e);
    \filldraw (a) circle (0.05em);
    \filldraw (b) circle (0.05em);
    \filldraw (c) circle (0.05em);
    \filldraw (d) circle (0.05em);
    \filldraw (e) circle (0.05em);
    \draw[white, line width=5] (abc) -- ($(abc)+(0,0,3)$);
    \draw[ultra thick, red] (abc) -- ($(abc)+(0,0,3)$);
    \draw[white, line width=5] (cde) -- ($(acd)+(0,0,3)$);
    \draw[ultra thick, red] (cde) -- ($(acd)+(0,0,3)$);
    \draw[white, line width=5] (ace) -- ($(ade)+(0,0,3)$);
    \draw[ultra thick, red] (ace) -- ($(ade)+(0,0,3)$);
    \filldraw[red] (abc) circle (0.2em);
    \filldraw[red] (ace) circle (0.2em);
    \filldraw[red] (cde) circle (0.2em);
    \node[red, anchor=north] at (cde){$a$};
    \node[red, anchor=north] at (ace){$b$};
    \node[red, anchor=north] at (abc){$c$};
    \draw[blue, line width=3] ($1/2*(cde) + 1/2*(acd) + (0, 0, 3/2)$) -- ($1/2*(ace) + 1/2*(ade) + (0, 0, 3/2)$);
    \node[blue, anchor=north] at ($1/4*(cde) + 1/4*(acd) + 1/4*(ace) + 1/4*(ade) + (0, 0, 3/2)$){$\Psi_{ab}$};
    \begin{scope}[shift={(0, 0, 3)}]
        \defcoords
        \draw[very thick] (a) -- (b);
        \draw[very thick] (b) -- (c);
        \draw[very thick] (c) -- (d);
        \draw[very thick] (d) -- (e);
        \draw[very thick] (e) -- (a);
        \draw[very thick] (a) -- (c);
        \draw[very thick] (a) -- (d);
        \filldraw (a) circle (0.05em);
        \filldraw (b) circle (0.05em);
        \filldraw (c) circle (0.05em);
        \filldraw (d) circle (0.05em);
        \filldraw (e) circle (0.05em);
        \draw[white, line width=5] (abc) -- ($(abd)+(0,0,3)$);
        \draw[ultra thick, red] (abc) -- ($(abd)+(0,0,3)$);
        \draw[white, line width=5] (acd) -- ($(bcd)+(0,0,3)$);
        \draw[ultra thick, red] (acd) -- ($(bcd)+(0,0,3)$);
        \draw[white, line width=5] (ade) -- ($(ade)+(0,0,3)$);
        \draw[ultra thick, red] (ade) -- ($(ade)+(0,0,3)$);
        \filldraw[red] (abc) circle (0.2em);
        \filldraw[red] (acd) circle (0.2em);
        \filldraw[red] (ade) circle (0.2em);
        \draw[blue, line width=3] ($1/2*(abc) + 1/2*(abd) + (0, 0, 3/2)$) -- ($1/2*(acd) + 1/2*(bcd) + (0, 0, 3/2)$);
        \node[blue, anchor=north] at ($1/4*(abc) + 1/4*(abd) + 1/4*(acd) + 1/4*(bcd) + (0, 0, 3/2)$){$\Psi_{ac}$};
    \end{scope}
    \begin{scope}[shift={(0, 0, 6)}]
        \defcoords
        \draw[very thick] (a) -- (b);
        \draw[very thick] (b) -- (c);
        \draw[very thick] (c) -- (d);
        \draw[very thick] (d) -- (e);
        \draw[very thick] (e) -- (a);
        \draw[very thick] (b) -- (d);
        \draw[very thick] (a) -- (d);
        \filldraw (a) circle (0.05em);
        \filldraw (b) circle (0.05em);
        \filldraw (c) circle (0.05em);
        \filldraw (d) circle (0.05em);
        \filldraw (e) circle (0.05em);
        \draw[white, line width=5] (bcd) -- ($(bcd)+(0,0,3)$);
        \draw[ultra thick, red] (bcd) -- ($(bcd)+(0,0,3)$);
        \draw[white, line width=5] (abd) -- ($(abe)+(0,0,3)$);
        \draw[ultra thick, red] (abd) -- ($(abe)+(0,0,3)$);
        \draw[white, line width=5] (ade) -- ($(bde)+(0,0,3)$);
        \draw[ultra thick, red] (ade) -- ($(bde)+(0,0,3)$);
        \filldraw[red] (abd) circle (0.2em);
        \filldraw[red] (bcd) circle (0.2em);
        \filldraw[red] (ade) circle (0.2em);
        \draw[blue, line width=3] ($1/2*(abd) + 1/2*(abe) + (0, 0, 3/2)$) -- ($1/2*(ade) + 1/2*(bde) + (0, 0, 3/2)$);
        \node[blue, anchor=north] at ($1/4*(abd) + 1/4*(abe) + 1/4*(ade) + 1/4*(bde) + (0, 0, 3/2)$){$\Psi_{bc}$};
    \end{scope}
    \begin{scope}[shift={(0, 0, 9)}]
        \defcoords
        \draw[very thick] (a) -- (b);
        \draw[very thick] (b) -- (c);
        \draw[very thick] (c) -- (d);
        \draw[very thick] (d) -- (e);
        \draw[very thick] (e) -- (a);
        \draw[very thick] (b) -- (e);
        \draw[very thick] (b) -- (d);
        \filldraw (a) circle (0.05em);
        \filldraw (b) circle (0.05em);
        \filldraw (c) circle (0.05em);
        \filldraw (d) circle (0.05em);
        \filldraw (e) circle (0.05em);
        \filldraw[red] (abe) circle (0.2em);
        \filldraw[red] (bcd) circle (0.2em);
        \filldraw[red] (bde) circle (0.2em);
        \node[red, anchor=south] at (bcd){$a$};
        \node[red, anchor=south] at (bde){$b$};
        \node[red, anchor=south] at (abe){$c$};
    \end{scope}
\end{scope}
\end{tikzpicture}
}}
\end{math}
\caption{Invariance under 2-3 Pachner move from pentagon (unlinking) relation} 
\label{fig:Pachner_unlinking} 
\end{figure}
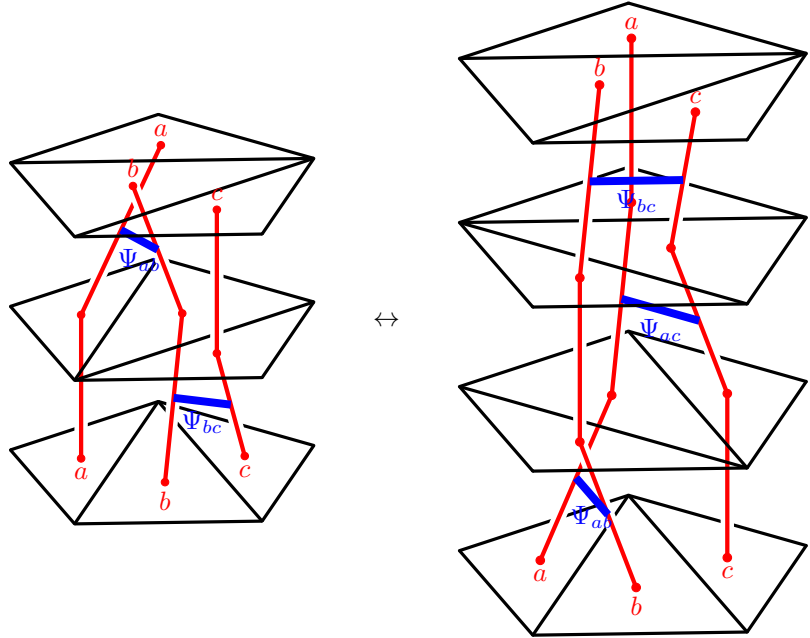

\subsection{Extracting Nahm sums from skeins}\label{sec:Z-from-skein}
This section shows how to obtain Nahm sums from the topological construction we introduced earlier. Importantly for later considerations, Nahm sums are specializations of the motivic generating series of symmetric quivers.
\newline

In order to get a number out of the skein $Z \in \widehat{\Sk}_q^{\mathfrak{gl}_1}(Y)$, we will choose an embedding of $Y$ into a bigger 3-manifold $\overline{Y}$ where all the boundaries of the holomorphic disks appearing in $Z$ become homologically trivial. 
Then, the embedding $Y \overset{\iota}{\hookrightarrow} \overline{Y}$ will induce a linear map
\begin{align*}
\widehat{\Sk}_{q}^{\mathfrak{gl}_1}(Y) &\overset{\iota_*}{\rightarrow} \widehat{\Sk}_{q}^{\mathfrak{gl}_1}(\overline{Y})\\
Z &\mapsto \iota_*(Z),
\end{align*}
where the image $\iota_*(Z)$ is now a number (in some completion $\widehat{\mathbb{C}(q)}$ of our base field $\mathbb{C}$) times the empty skein $[\emptyset] \in \widehat{\Sk}_{q}^{\mathfrak{gl}_1}(\overline{Y})$. 
This is because homologically trivial skeins can be made into linear combinations of unlinks (and, in turn, empty links) using the $\mathfrak{gl}_1$-skein relations. 
In this way, we get a ``partition function'' $\iota_*(Z) \in \widehat{\mathbb{C}(q)}$. 
We will be most interested in the case when $Y$ is constructed entirely out of positive flips $T_+$, in which case $\iota_*(Z)$ can be easily computed to be
\begin{equation}\label{eq:Q-q-series}
\iota_*(Z) =  \sum_{d_1,\dots,d_{m} = 0}^{\infty} q^{\sum_{i,j=1}^{m}d_i Q_{ij} d_j} \prod_{i=1}^{m}\frac{((-1)^{l_i}q)^{d_i}}{(q^2; q^2)_{d_i}}\,,
\end{equation}
where $m$ is the number of holomorphic disks (i.e., the number of flips), $Q$ is the $m \times m$ linking matrix computed in $\overline{Y}$ using the embedding $Y \overset{\iota}{\hookrightarrow} \overline{Y}$, 
and $l_i \in \mathbb{Z}/2$ is the mod 2 linking number between the boundary of the $i$-th holomorphic disk and the branch locus. 
Note that this $q$-series is a Nahm sum, which is a specialization of the \emph{motivic generating series of the symmetric quiver} \eqref{eq:motivic_generating_series} associated to the linking matrix $Q$. 

Let us describe this procedure more explicitly. 
One general method to create such $\overline{Y}$ is by capping off the in- and out-boundaries of $Y$ appropriately to create an integer homology sphere $\overline{Y}$; then the desired condition that $Z$ is homologically trivial will be automatically satisfied. 
The two boundaries of $Y$, $\Sigma_{\tau}$ and $\Sigma_{\tau'}$ are surfaces of the same topological type, as the bordism $Y : \Sigma_{\tau} \rightarrow \Sigma_{\tau'}$ simply braids the branch points around. 
Hence, $Y$ is topologically a mapping cylinder of the homeomorphism $\Sigma_{\tau} \overset{\sim}{\rightarrow} \Sigma_{\tau'}$ induced by this braiding of branch points. 
For simplicity, let us assume that $C$ is a punctured surface without any boundary, so that $\Sigma_{\tau}$ and $\Sigma_{\tau'}$ are topologically both surfaces of genus $g$ with $n$ punctures, for some $g$ and $n$; 
if $C$ had any boundary intervals, we can always attach some more ideal triangles to close up the boundary. 
In this setup, we choose a cup bordism
\[
Y_{in} : \emptyset \rightarrow \Sigma_{\tau}
\]
which, if we reverse time, can be thought of as a bordism that first fills in all the punctures to get a~surface of genus $g$ and no punctures, and then attaches a~genus $g$ handlebody; 
that is, it contracts $g$ number of ``$\mu$-curves''. 
Likewise, we can choose a set of ``$\nu$-curves'' on $\Sigma_{\tau'}$ so that $\{\mu_i, \nu_i\}_{i = 1, \cdots, g}$ form a~symplectic basis, 
and use them to get a cap bordism
\[
Y_{out} : \Sigma_{\tau'} \rightarrow \emptyset. 
\]
In this way, the composite bordism 
\[
\emptyset \overset{Y_{in}}{\rightarrow}\Sigma_{\tau} \overset{Y}{\rightarrow} \Sigma_{\tau'}\overset{Y_{out}}{\rightarrow} \emptyset
\]
becomes an integral homology sphere $\overline{Y}$ after filling in all the punctures. 
In this way, we get the desired embedding
\[
Y \overset{\iota}{\hookrightarrow} \overline{Y}. 
\]

To get an induced map of $\mathfrak{gl}_1$-skein modules, however, we have to make sure that the $\overline{Y}$ carries the branch locus that is compatible with that of $Y$. 
Since we are assuming that $C$ has no boundary, it has an even number of branch points, 
so we can choose the $\mu$- and $\nu$-curves in such a way that contraction of each $\mu$- and $\nu$-curve corresponds to colliding two branch points, annihilating the pair. 
Then, the branch locus extends naturally to both the cap and cup bordisms, and we get the desired embedding $Y \overset{\iota}{\hookrightarrow} \overline{Y}$ that is compatible with the branch locus. 
In this way, we get the desired map
\[
\widehat{\Sk}_{q}^{\mathfrak{gl}_1}(Y) \overset{\iota_*}{\rightarrow} \widehat{\Sk}_{q}^{\mathfrak{gl}_1}(\overline{Y}),
\]
and the associated Nahm sum $\iota_*(Z) \in \widehat{\mathbb{C}(q)}$. 

For some special classes of $Y$, there are simpler ways to construct $\overline{Y}$ making $Z$ homologically trivial, without constructing an integer homology sphere. 
One such example is given in Figure \ref{fig:capping_example}, where $Y$ is a~branched double cover of a pentagon times $I$ given by a sequence of two flips (so $Y$ is topologically the torus with one boundary component times $I$), and $\overline{Y}$ is a bordism from a disk to a disk (each with one branch point) obtained by connecting some pairs of branch points on the bottom and the top of $Y$. 
It is easy to see that both $\Psi_{ab}$ and $\Psi_{bc}$ are homologically trivial in $\overline{Y}$: we can pull $\Psi_{ab}$ down and pull $\Psi_{bc}$ up. 
\begin{figure}[H]
\centering
\begin{math}
\vcenter{\hbox{
\tdplotsetmaincoords{65}{52}
\begin{tikzpicture}[tdplot_main_coords]
\begin{scope}[scale = 0.7, tdplot_main_coords]
    \newcommand*{\defcoords}{
        \coordinate (o) at (0, 0, 0);
        \coordinate (a) at (3, 0, 0);
        \coordinate (b) at ({3*cos(72)}, {3*sin(72)}, 0);
        \coordinate (c) at ({3*cos(2*72)}, {3*sin(2*72)}, 0);
        \coordinate (d) at ({3*cos(3*72)}, {3*sin(3*72)}, 0);
        \coordinate (e) at ({3*cos(4*72)}, {3*sin(4*72)}, 0);
        \coordinate (abc) at ($1/3*(a)+1/3*(b)+1/3*(c)$);
        \coordinate (abe) at ($1/3*(a)+1/3*(b)+1/3*(e)$);
        \coordinate (ace) at ($1/3*(a)+1/3*(c)+1/3*(e)$);
        \coordinate (bcd) at ($1/3*(b)+1/3*(c)+1/3*(d)$);
        \coordinate (bce) at ($1/3*(b)+1/3*(c)+1/3*(e)$);
        \coordinate (bde) at ($1/3*(b)+1/3*(d)+1/3*(e)$);
        \coordinate (cde) at ($1/3*(c)+1/3*(d)+1/3*(e)$);
    }
    \defcoords
    \draw[ultra thick, red] (cde) .. controls ($(cde) + (0, 0, -1)$) and ($(ace) + (0, 0, -1)$) .. (ace);
    \draw[very thick] (a) -- (b);
    \draw[very thick] (b) -- (c);
    \draw[very thick] (c) -- (d);
    \draw[very thick] (d) -- (e);
    \draw[very thick] (e) -- (a);
    \draw[very thick] (a) -- (c);
    \draw[very thick] (c) -- (e);
    \filldraw (a) circle (0.05em);
    \filldraw (b) circle (0.05em);
    \filldraw (c) circle (0.05em);
    \filldraw (d) circle (0.05em);
    \filldraw (e) circle (0.05em);
    \draw[white, line width=5] (cde) -- ($(cde)+(0,0,3)$);
    \draw[ultra thick, red] (cde) -- ($(cde)+(0,0,3)$);
    \draw[white, line width=5] (abc) -- ($(abe)+(0,0,3)$);
    \draw[ultra thick, red] (abc) -- ($(abe)+(0,0,3)$);
    \draw[white, line width=5] (ace) -- ($(bce)+(0,0,3)$);
    \draw[ultra thick, red] (ace) -- ($(bce)+(0,0,3)$);
    \filldraw[red] (abc) circle (0.2em);
    \filldraw[red] (ace) circle (0.2em);
    \filldraw[red] (cde) circle (0.2em);
    \node[red, anchor=east] at (cde){$a$};
    \node[red, anchor=west] at (ace){$b$};
    \node[red, anchor=north] at (abc){$c$};
    \draw[blue, line width=3] ($1/2*(abc) + 1/2*(abe) + (0, 0, 3/2)$) -- ($1/2*(ace) + 1/2*(bce) + (0, 0, 3/2)$);
    \node[blue, anchor=north] at ($1/4*(abc) + 1/4*(abe) + 1/4*(ace) + 1/4*(bce) + (0, 0, 3/2)$){$\Psi_{bc}$};
    \begin{scope}[shift={(0, 0, 3)}]
        \defcoords
        \draw[very thick] (a) -- (b);
        \draw[very thick] (b) -- (c);
        \draw[very thick] (c) -- (d);
        \draw[very thick] (d) -- (e);
        \draw[very thick] (e) -- (a);
        \draw[very thick] (b) -- (e);
        \draw[very thick] (c) -- (e);
        \filldraw (a) circle (0.05em);
        \filldraw (b) circle (0.05em);
        \filldraw (c) circle (0.05em);
        \filldraw (d) circle (0.05em);
        \filldraw (e) circle (0.05em);
        \draw[white, line width=5] (abe) -- ($(abe)+(0,0,3)$);
        \draw[ultra thick, red] (abe) -- ($(abe)+(0,0,3)$);
        \draw[white, line width=5] (cde) -- ($(bcd)+(0,0,3)$);
        \draw[ultra thick, red] (cde) -- ($(bcd)+(0,0,3)$);
        \draw[white, line width=5] (bce) -- ($(bde)+(0,0,3)$);
        \draw[ultra thick, red] (bce) -- ($(bde)+(0,0,3)$);
        \filldraw[red] (abe) circle (0.2em);
        \filldraw[red] (bce) circle (0.2em);
        \filldraw[red] (cde) circle (0.2em);
        \draw[blue, line width=3] ($1/2*(cde) + 1/2*(bcd) + (0, 0, 3/2)$) -- ($1/2*(bce) + 1/2*(bde) + (0, 0, 3/2)$);
        \node[blue, anchor=north] at ($1/4*(cde) + 1/4*(bcd) + 1/4*(bce) + 1/4*(bde) + (0, 0, 3/2)$){$\Psi_{ab}$};
    \end{scope}
    \begin{scope}[shift={(0, 0, 6)}]
        \defcoords
        \draw[very thick] (a) -- (b);
        \draw[very thick] (b) -- (c);
        \draw[very thick] (c) -- (d);
        \draw[very thick] (d) -- (e);
        \draw[very thick] (e) -- (a);
        \draw[very thick] (b) -- (e);
        \draw[very thick] (b) -- (d);
        \draw[white, line width=5] (bde) .. controls ($(bde) + (0, 0, +1)$) and ($(abe) + (0, 0, +1)$) .. (abe);
        \draw[ultra thick, red] (bde) .. controls ($(bde) + (0, 0, +1)$) and ($(abe) + (0, 0, +1)$) .. (abe);
        \filldraw (a) circle (0.05em);
        \filldraw (b) circle (0.05em);
        \filldraw (c) circle (0.05em);
        \filldraw (d) circle (0.05em);
        \filldraw (e) circle (0.05em);
        \filldraw[red] (abe) circle (0.2em);
        \filldraw[red] (bcd) circle (0.2em);
        \filldraw[red] (bde) circle (0.2em);
        \node[red, anchor=south] at (bcd){$a$};
        \node[red, anchor=east] at (bde){$b$};
        \node[red, anchor=west] at (abe){$c$};
    \end{scope}
\end{scope}
\end{tikzpicture}
}}
\end{math}
\caption{The two holomorphic disks $\Psi_{ab}$ and $\Psi_{bc}$ in this figure have a well-defined linking number. }
\label{fig:capping_example}
\end{figure}
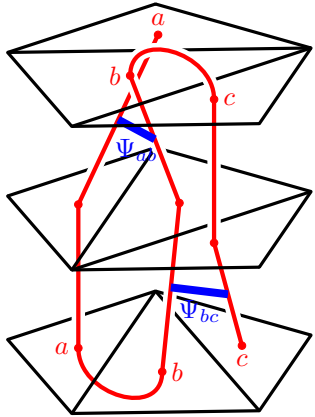
This example generalizes to a class of examples called the minimal chamber of $A_{m}$ Argyres-Douglas theory (Section \ref{sec:A2n-theory}), where a choice of $\overline{Y}$ can be made by pairing the branch locus in a similar zig-zag manner as in Figure \ref{fig:capping_pattern_A2n}. 
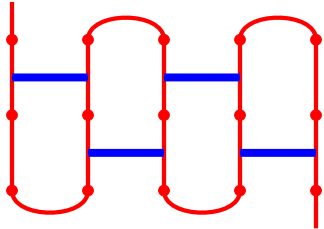
\begin{figure}[H]
\centering
\begin{math}
\vcenter{\hbox{
\begin{tikzpicture}
\filldraw[red] (0, 0) circle (0.2em);
\filldraw[red] (0, 1) circle (0.2em);
\filldraw[red] (0, 2) circle (0.2em);
\draw[ultra thick, red] (0, 2.5) -- (0, 0) to[out=-90, in=-90] (1, 0) -- (1, 2) to[out=90, in=90] (2, 2) -- (2, 0) to[out=-90, in=-90] (3, 0) -- (3, 2) to[out=90, in=90] (4, 2) -- (4, -0.5);
\filldraw[red] (1, 0) circle (0.2em);
\filldraw[red] (1, 1) circle (0.2em);
\filldraw[red] (1, 2) circle (0.2em);
\filldraw[red] (2, 0) circle (0.2em);
\filldraw[red] (2, 1) circle (0.2em);
\filldraw[red] (2, 2) circle (0.2em);
\filldraw[red] (3, 0) circle (0.2em);
\filldraw[red] (3, 1) circle (0.2em);
\filldraw[red] (3, 2) circle (0.2em);
\filldraw[red] (4, 0) circle (0.2em);
\filldraw[red] (4, 1) circle (0.2em);
\filldraw[red] (4, 2) circle (0.2em);
\draw[blue, line width=3] (0, 1.5) -- (1, 1.5);
\draw[blue, line width=3] (1, 0.5) -- (2, 0.5);
\draw[blue, line width=3] (2, 1.5) -- (3, 1.5);
\draw[blue, line width=3] (3, 0.5) -- (4, 0.5);
\end{tikzpicture}
}}
\end{math}
\caption{Branch locus pattern for a choice of $\overline{Y}$ in the minimal chamber of $A_{m}$ Argyres-Douglas theory}
\label{fig:capping_pattern_A2n}
\end{figure}

\subsection{Left and right modules of the quantum torus algebra}
This section provides a topological perspective on the left and right modules of the quantum torus algebra; broader discussion is presented in Section \ref{sec:embedding}.
\newline

It is worth noting that the skein module of a 3-manifold obtained by gluing two handlebodies can be computed by taking the relative tensor product:\footnote{For a general skein module, we need to take the invariant part of the relative tensor product of \emph{internal skein modules} \cite[Thm. 2]{GJS}, but for $\mathfrak{gl}_1$-skein modules, the internal skein module coincides with the usual skein module (i.e., it is already invariant), and we get the simple formula.} 
\begin{equation}
    \Sk_q^{\mathfrak{gl}_1}(\overline{Y}) 
\cong 
\Sk_{q}^{\mathfrak{gl}_1}(Y_{out}) \underset{\SkAlg_{q}^{\mathfrak{gl}_1}(\Sigma)}{\otimes} \Sk_{q}^{\mathfrak{gl}_1}(Y_{in})
. 
\end{equation}
Therefore, it is useful to understand $\Sk_{q}^{\mathfrak{gl}_1}(Y_{out})$ and $\Sk_{q}^{\mathfrak{gl}_1}(Y_{in})$ as right and left modules over the quantum torus algebra $\SkAlg_{q}^{\mathfrak{gl}_1}(\Sigma)$. 

Suppose that the handlebody $Y_{in}$ is obtained by contracting $\mu$-curves $\mu_1, \cdots, \mu_g$, each surrounding exactly two branch points which get annihilated in pairs in $Y_{in}$.
That would mean that, under the action of corresponding mutually commutative operators $L_{\mu_1}, \cdots, L_{\mu_g} \in \SkAlg_{q}^{\mathfrak{gl}_1}(\Sigma)$, 
the empty skein $[\emptyset] \in \Sk_{q}^{\mathfrak{gl}_1}(Y_{in})$ remains invariant. 
Hence, we can view $\Sk_{q}^{\mathfrak{gl}_1}(Y_{in})$ as a vector space with the ``ground state''
\begin{equation}
    \ket{0, \cdots, 0} := [\emptyset] \in \Sk_{q}^{\mathfrak{gl}_1}(Y_{in})
\end{equation}
and a basis
\begin{equation}
    \ket{d_1, \cdots, d_g} := L_{\nu_1}^{d_1} \cdots L_{\nu_g}^{d_g}[\emptyset]\in \Sk_{q}^{\mathfrak{gl}_1}(Y_{in}),\quad d_1, \cdots, d_g \in \mathbb{Z}.
\end{equation}
Hence, when $\overline{Y}$ is an integer homology sphere so that $\Sk_q^{\mathfrak{gl}_1}(\overline{Y}) \cong \widehat{\mathbb{C}(q)}$, 
we can express our partition function $\iota_*(Z)$ in the following form:
\begin{equation}\label{eq:matix_element_from_topology}
    \bra{0, \cdots, 0} \Psi_m \cdots \Psi_1 \ket{0, \cdots, 0} \in \widehat{\mathbb{C}(q)},
\end{equation}
where $\Psi_i$ denotes the element of $\widehat{\Sk}^{\mathfrak{gl}_1}_q(Y)$ corresponding to the $i$-th flip, 
and $\bra{0, \cdots, 0} := [\emptyset] \in \Sk_{q}^{\mathfrak{gl}_1}(Y_{out})$.

%%%%%%%%%%%%%%%%%%%%%%%%%%%%%%%%%%%%%%%%%%%%%%%%%%%%%%%%%%%%%%%%%%%%

\section{Physics of 3d and 4d BPS states and associated quivers}\label{sec:3d4d}

In this section, we describe in detail a class of 3d-4d systems engineered by a pair of M5-branes wrapping a 3-manifold with boundary.
We discuss the relevant geometries and derive the low-energy descriptions for both the 3d and 4d theories involved, explaining how each is related to the respective quivers (BPS quivers for 4d $\mathcal{N}=2$ theories, and symmetric quivers for 3d $\mathcal{N}=2$ theories). 
Our analysis leads to a~natural relation between 3d and 4d quivers, whose underlying physical origin lies in the Witten effect for Abelian (effective) gauge theories on a manifold with boundary.
One of the main results of this section is the first instance of the \emph{symmetrization map} $\mathfrak{S}$ that relates 4d and 3d quivers, which will be further extended in later sections.

%%%%%%%%%%%%%%%%%%%
\subsection{3d-4d systems from M-theory}
In this section we describe the construction of 3d-4d system that is the environment in which the symmetrization map works.
\newline

Triangulated 3-manifolds describe BPS sectors of M-theory in the background of $T^*M\times S^1\times \mathbb{R}^4$ for some smooth 3-manifold $M$. 
The worldvolume dynamics of a stack of $N=2$ M5 branes 
wrapping $M\times S^1\times \IR^2$ is described by a twisted compactification of the 6d $(2,0)$ theory of type $\mathfrak{g}=A_1$, which gives rise to a 3d $\CN=2$ QFT denoted $T[M]$, on $S^1\times \IR^2$ \cite{Dimofte:2011ju, Terashima:2011qi}. 
We consider a deformation of this setup defined by a splitting of $M$
\begin{equation}
\label{eq:M-split}
	M  = M_+ \cup M_0 \cup M_-\,,
\end{equation}
where $M_0\simeq C\times I$ for some punctured Riemann surface $C$, and $M_\pm$ are bordisms to/from the empty set and $C$. 
More precisely, we will consider a certain degeneration of the mapping cylinder $M_0 \simeq C\times I$, defined by shrinking $I\to \{\text{pt}\}$ along the boundary $\partial C$. 
As a result, the boundary components of 3-manifolds that we consider are
\be
	\partial M_- = C\,,
	\quad
	\partial M_+ = \overline{C}\,,
	\quad
	\partial M_0 = C\cup \overline{C}\,,
\ee

Given any component $M_\theta\in\{M_0, M_+, M_-\}$ of the decomposition, we consider a pair of M5-branes wrapping a manifold with corners 
\be
	\left(M_\theta \times S^1\times \IR^2 \right) 
	\cup
	\left(\partial M_\theta \times S^1\times \IR^2 \times \IR_{> 0} \right) \,.
\ee
M5-branes compactified on this background are described at low energies by a 3d-4d system involving a 4d $\CN=2$ QFT $T[C]$ of class $S$ for each component of $\partial M_\theta$, coupled to a 3d $\CN=2$ QFT $T[M_\theta]$.
We obtain, in this way, the following systems: 
\begin{itemize}
\item A copy of $T[C]$ on the half-space $S^1\times \IR^2\times \IR_{> 0}$, coupled to either of the boundary QFTs $T[M_\pm]$
\item A domain wall QFT $T[M_0]$ interpolating between two copies of $T[C]$ on half spaces glued to either side of the wall
\end{itemize}
We also consider gluing these pieces along the semi-infinite direction of $T[C]$, giving the 3d-4d system defined by twisted compactification of the 6d $A_1$ theory on
\be
	\left[M_- \cup \left(C \times I_-\right)  \cup M_0 \cup \left(C \times I_+\right) \cup M_+\right] \times \IR^2 \times S^1\,.
\ee
The worldvolume description is given by the 3d-4d system summarized in Figure \ref{fig:M-theory-setup}, schematically denoted
\be
	T[M_-] \star_{T[C]}  T[M_0] \star_{T[C]}   T[M_+]
\ee
where $\star_{T[C]}$ denotes a coupling between two 3d $\CN=2$ QFTs mediated by $T[C]$ on $S^1\times \IR^2\times I_\pm$. The two 4d QFTs interact through the domain wall theory $T[M_0]$ along $S^1\times \IR^2$, and are bounded at the opposite ends of respective intervals $I_\pm$ by $T[M_\pm]$ on $S^1\times \IR^2$. 

Note that, since $\partial M_- = C\times \{0\}$ and $\partial M_0 = C\times \{1\}$ are located at \emph{opposite} ends of the interval $I_- = [0,1]$ in a direction transverse to that of the filling of $C$, the different pieces of the 3-manifold $M$ do not match up, but instead feature corners.
The 3d theory $T[M]$ corresponds to a limit of this 3d-4d system obtained by shrinking both $I_\pm$ so that the 4d theories $T[C]$ disappear, and the three manifolds glue up as in \eqref{eq:M-split}: 
\be
	T[M] \simeq \lim_{I_\pm \to \mathrm{pt}} T[M_-] \star_{T[C]}  T[M_0] \star_{T[C]}   T[M_+]\,.
\ee

\begin{figure}[h!]
\begin{center}
\includegraphics[width=0.6\textwidth]{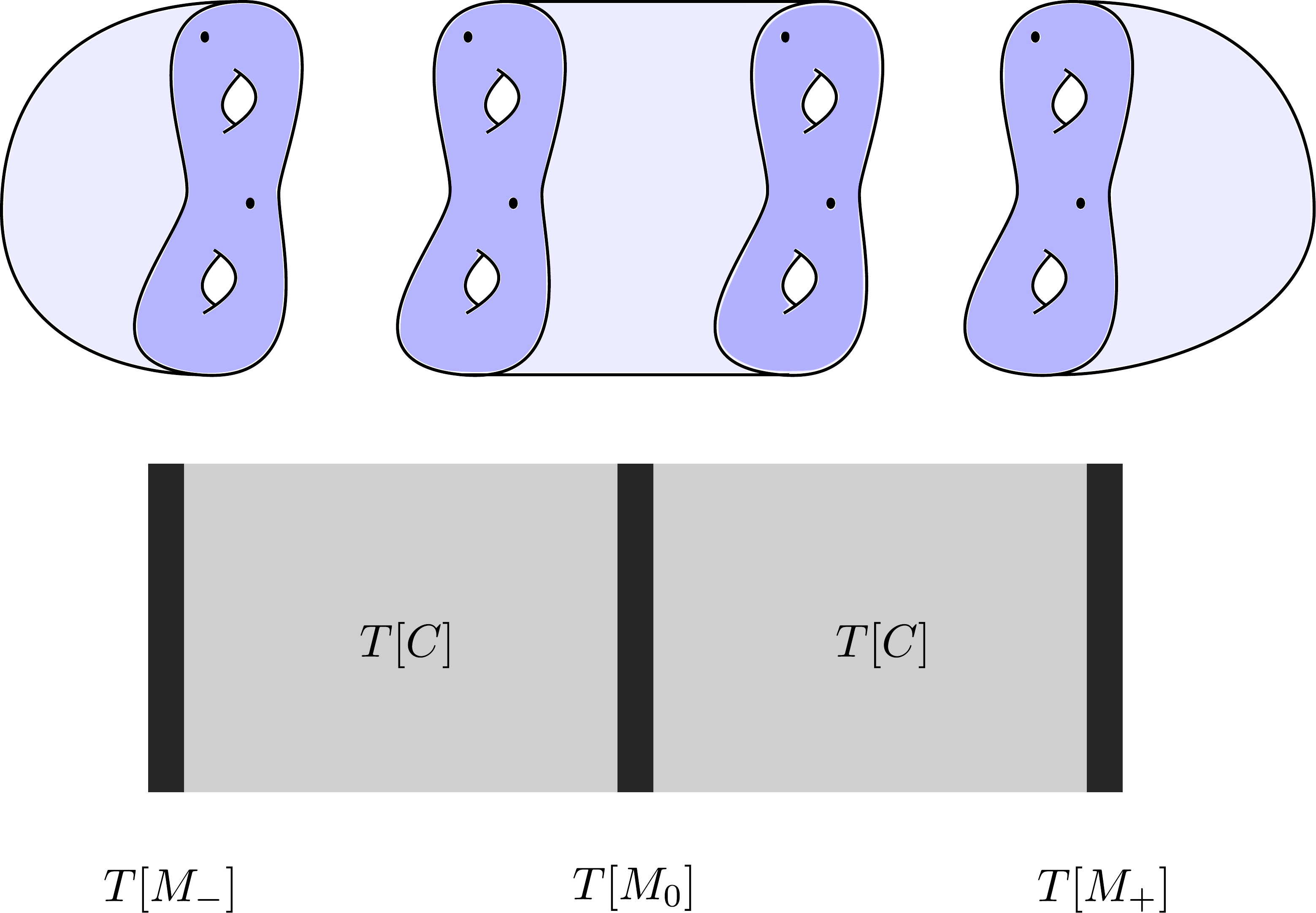}
\caption{The internal manifold wrapped by M5 branes (top) and the dual QFTs (bottom).}
\label{fig:M-theory-setup}
\end{center}
\end{figure}

%%%%%%%%%%
\subsection{Lagrangian descriptions}

In this section we analyse the low-energy behaviour of the 3d-4d system that we introduced above, which will provide the basis for two quiver descriptions connected by the symmetrization map $\mathfrak{S}$.

%%%%%%%%
\paragraph{Triangulation and polarization data}

Following \cite{Dimofte:2011ju}, the 3d $\CN=2$ QFT associated to a 3-manifold $M$ with boundary is uniquely defined by the topology and geometry of $M$.
However, a specific Lagrangian description $T[M_0; \tau_{M_0}, \Pi]$ is labeled by additional data: a choice of ideal triangulation $\tau_M$ by tetrahedra, and a choice of polarization $\Pi$ for the $SL(2,\IC)$ character variety of its boundary 
$S=\partial M$. 

Since $\tau_M$ induces a boundary triangulation $\tau_S$, there is a natural local parametrization of the character variety $\CP_{S} :=\CM_{\mathrm{flat}}(S,SL(2,\IC))$ in terms of complexified shear coordinates
 \cite{NEUMANN1985307, fock1997dual, thurston1998minimal, fock2006moduli}.
Each edge $E$ of $\tau_S$ corresponds to a coordinate $\Xi_E \in \IC / 2\pi i \IZ$, and the Weil-Petersson Poisson structure on $\CP_S$ is encoded by incidence relations among edges: 
\be\label{eq:surface-FG-Poisson-structure}
	\{\Xi_E,\Xi_{E'}\} = \sum_{\text{shared faces}} \pm 1 \qquad  \in \{0, \pm1, \pm 2\}
\ee
where the sign is positive if $E$ is counterclockwise from $E'$ in a shared face, and negative otherwise.\footnote{Our conventions agree with those of \cite{Gaiotto:2009hg, Dimofte:2011ju}.} 
A~polarization $\Pi$ of $\CP_S$ consists of a choice of Darboux basis, namely a set of coordinates with canonical Poisson brackets
\be\label{eq:Darboux}
	\{Y_i, X_j\} = \delta_{ij} ,\qquad 
	\{X_i, X_j\} = \{Y_i, Y_j\} = 0\,, \qquad i,j\in\{1,\dots, d\}\,.
\ee

%The triangulation and polarization data defines a Chern-Simons-Matter Lagrangian description $T[M;\tau_M, \Pi]$. We will next recall the general definition, which will be later applied to the class of examples relevant to our work.

%%%%%%%%%%%%%%%%%%%%%%%%
\paragraph{The tetrahedron theory}
The building block of Lagrangian descriptions of 3d $\CN=2$ QFTs associated with ideally triangulated 3-manifolds is the tetrahedron theory
\be\label{eq:tetrahedron-theory}
	\begin{split}
	\CT_\Delta := 
	\end{split}
	\ \ 
	\left\{\begin{split}
	& \text{3d $\CN=2$ chiral multiplet coupled to a background $U(1)$ symmetry}
	\\[-4pt]
	&\text{with Chern-Simons level $-\frac{1}{2}$}
	\end{split}
	\right\}
\ee
The negative half-integer level for the Chern-Simons coupling is offset by one-loop corrections at the quantum level \cite{Aharony:1997bx}, resulting in an effective zero Chern-Simons level. 
This theory has a duality group $Sp(2,\IZ)\ltimes (i\pi \IZ)^2$, where the $T$-generator acts by shifts of the background Chern-Simons coupling, while the $S$-generator acts by gauging the background $U(1)$ symmetry. 
An important feature of $\CT_\Delta$ is that it enjoys a self-triality, in the sense that
\be
	\widetilde{ST} \circ \CT_\Delta \simeq \CT_\Delta
\ee
retains an identical Lagrangian description, but with a different identification of fundamental degrees of freedom. 
Here, $\widetilde{ST}$ is defined as the affine symplectic transformation corresponding to $ST$ on $(X, Y)^t$ followed by a shift of the position coordinate $X\to X-i\pi$. 
The geometric origin of triality is the existence of three equivalent choices of polarization. 
Let $\Xi, \Xi', \Xi''$ denote shear coordinates of boundary edges, as in Figure \ref{fig:tetrahedron} (opposite edges have identical coordinates).
These are linearly dependent through the following relation:
\be
	\Xi + \Xi' +\Xi'' = i\pi\,.
\ee
The three natural polarizations of the tetrahedron are then
\be\label{eq:delta-pol}
\begin{array}{c|cc}
	& X & Y \\
	\hline
	\Pi & \Xi & \Xi'\\
	\Pi' & \Xi' & \Xi''\\
	\Pi'' & \Xi'' & \Xi\\
\end{array}
\qquad \text{with} \qquad \{\Xi,\Xi''\} = \{\Xi',\Xi\} = \{\Xi'',\Xi'\} = 1\,.
\ee
On these, $\widetilde{ST}$ acts by a cyclic permutation: 
\be
	\widetilde{ST} : \ \Pi\to\Pi''\to\Pi'\to\Pi\,.
\ee
For example,
\be
	\widetilde{ST}
	\left(\begin{array}{c}
		\Xi\\
		\Xi'
	\end{array}\right)
	=
	\left(\begin{array}{cc}
		0 & -1\\
		1 & 0
	\end{array}\right)
	\left(\begin{array}{cc}
		1 & 0 \\
		1 & 1
	\end{array}\right)
	\left(\begin{array}{c}
		\Xi\\
		\Xi'
	\end{array}\right)
	+
	\left(\begin{array}{c}
		-\pi i\\
		0
	\end{array}\right)
%	=
%	\left(\begin{array}{cc}
%		-1 & -1\\
%		1 & 0
%	\end{array}\right)
%	\left(\begin{array}{c}
%		\Xi\\
%		\Xi'
%	\end{array}\right)
%	=
%	\left(\begin{array}{c}
%		-\Xi -\Xi'\\
%		\Xi
%	\end{array}\right)
	=
	\left(\begin{array}{c}
		\Xi'' \\
		\Xi
	\end{array}\right).
\ee

\begin{figure}[h!]
\begin{center}
\includegraphics[width=0.25\textwidth]{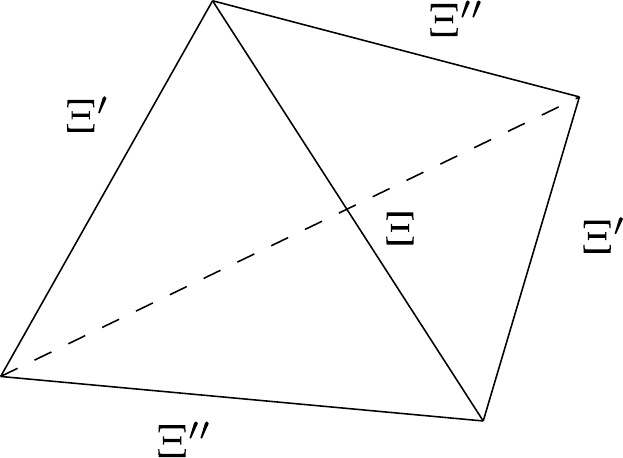}
\caption{Shear coordinates on edges of a tetrahedron.}
\label{fig:tetrahedron}
\end{center}
\end{figure}

%%%%%%%%%%%%%%%%%%%%%%%%
\paragraph{Gluing tetrahedra}\label{sec:gluing}

To build $T[M;\tau_M,\Pi]$, we begin by fixing a choice of polarization for each of the individual tetrahedra in $\tau_M$, denoted $\{\Pi_1,\dots, \Pi_N\}$ from one of the three types in \eqref{eq:delta-pol}. 
Thanks to the triality property, the choice among $\Pi, \Pi',\Pi''$ for each tetrahedron is inessential at this stage. 
The product theory $\bigotimes_i \CT_{\Delta_i}$ is defined as a collection of $N$ free hypermultiplets with a global $U(1)^N$ symmetry. 
This has a duality group $Sp(2N,\IZ)\ltimes (i\pi \IZ)^{2N}$ which acts on Darboux coordinates $(X_1, \dots , X_N; Y_1,\dots Y_N)$ generated by three types of transformations
\begin{itemize}
\item
shifts of Chern-Simons couplings enacted by ``$T$-type'' transformations
\be\label{eq:TB-def}
	T_B:= \left(\begin{array}{cc}
	I_N & 0 \\
	B & I_N
	\end{array}\right) \qquad B = B^t
\ee
where $I_N$ is the $N\times N$ identity matrix
\item
gauging of symmetries
enacted by ``$S$-type'' transformations
\be\label{eq:SJ-def}
	S_{J}:=\left(\begin{array}{cc}
	I_N - {J} & -{J} \\
	{J} & I_N - {J}
	\end{array}\right) \qquad {J} = \text{diag}(j_1,\dots, j_N)
\ee
with $j_a\in\{0,1\}$ equal to 1 if $a$ labels $U(1)_a \subset U(1)^N$ to be gauged, and zero otherwise
\item
general linear transformations acting by field redefinitions
\be
	\left(\begin{array}{cc}
	H  & 0 \\
	0 & H 
	\end{array}\right) \qquad H \in GL(N,\IC)\,.
\ee
\end{itemize}
We then define $g\in Sp(2N,\IZ)\ltimes (i\pi \IZ)^{2N}$ as the unique duality transformation that relates the polarization of the product theory $\bigotimes_i \CT_{\Delta_i}$ to the polarization $\Pi$ of the desired Lagrangian description
\be\label{eq:change-of-polarization}
	\Pi = g\circ \{\Pi_i\}\,.
\ee
A decomposition of $g$ into generators of the three types described above defines a sequence of operations on $\bigotimes_i\CT_{\Delta_i}$ consisting of gaugings, shifts of Chern-Simons couplings, and field redefinitions, which eventually produces $T[M; \tau_{M}, \Pi]$.
We will illustrate this shortly with the example of main relevance to our work.
Last, but not least, if the triangulation contains any internal edges, these contribute terms to the superpotential of the theory
\be
	\CW = \sum_{I} \CO_I
\ee
where $\CO_I = \prod e^{X_{E_i}}$ involves a product of edges glued to $I$ from all incident tetrahedra. 
The role of $\CW$ is to enforce a breaking of $U(1)^N$ to an appropriate subgroup, completing the definition of the Lagrangian description associated to $\Pi$.

%%%%%%%%%%%%%%%%%%%%%%%%
\subsection{Domain walls for \texorpdfstring{$A_m$}{Am} Argyres-Douglas theories and symmetric quivers}\label{sec:A2n-theory}

We now restrict attention to 4d $\mathcal{N}=2$ Argyres-Douglas theories of type $A_m$, also known as $(A_1, A_m)$ theories \cite{Gaiotto:2009hg, Cecotti:2010fi}.\footnote{The index $m$ labels the 4d theory under discussion, but in fact also coincides with the number of vertices of the corresponding 4d BPS quiver, as will be recalled below.}
We construct BPS domain walls associated with BPS states in the minimal chamber of their Coulomb branch, and show that the worldvolume theory on the domain wall admits a~low-energy Lagrangian description captured by a symmetric quiver $Q$. This relation between the 4d BPS quiver $Q_{4\text{d}}$ and the symmetric quiver $Q$ provides the first example of the symmetrization map $\mathfrak{S}$. 

%%%%%%
\paragraph{The minimal chamber and 4d BPS quiver}

We now consider the specific case of $M_0$ with $C\simeq \IC$ given by a triangulated $(m+3)$-gon, as shown in Figure \ref{fig:A2n-M0} for $m=6$. 
The corresponding 4d $\CN=2$ theory $T[C]$ is (a deformation of) the $A_{m}$ Argyres-Douglas theory, and this specific type of triangulation arises in a chamber of its Coulomb branch with the minimal BPS spectrum \cite{Gaiotto:2009hg}.

In a generic Coulomb vacuum $u$, the theory $T[C]$ flows to a $U(1)^n$ effective gauge theory with $n = \lfloor \frac{m}{2} \rfloor$. 
Edges of the triangulation $\tau_C$ are labeled by IR electromagnetic charges $\alpha_1, \dots, \alpha_{m}$, corresponding to generators of the electromagnetic and flavour charge lattice. 
More precisely, if $m=2n$ the flavour symmetry is trivial, but when $m=2n+1$ there is a $U(1)$ flavour symmetry with a corresponding rank-one flavour sublattice generated by $\alpha_f = \sum_{i=1}^{2n+1} \alpha_i$.
The Dirac pairing on charges is given by
\be\label{eq:4d-Dirac-pairing}
    \langle\alpha_{2i},\alpha_{2i+1}\rangle=\langle\alpha_{2i},\alpha_{2i-1}\rangle = 1
\ee
and zero otherwise.
BPS states at a point $u$ in the Coulomb branch are encoded by a BPS quiver $Q_{4\text{d}}\equiv Q_{\tau_C}$ shown in Figure \ref{fig:A2n-quiver}.
It is well known \cite{2013arXiv1302.7030B, Alim:2011kw} that $Q_{\tau_C}$ is dual to the \emph{initial} triangulation of~$C$, i.e., the one shown at the bottom of Figure \ref{fig:A2n-M0}.

\begin{figure}[h!]
\begin{center}
\includegraphics[width=0.5\textwidth]{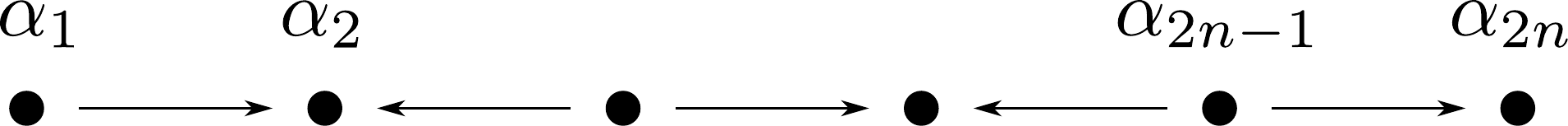}\\[15pt]
\includegraphics[width=0.63\textwidth]{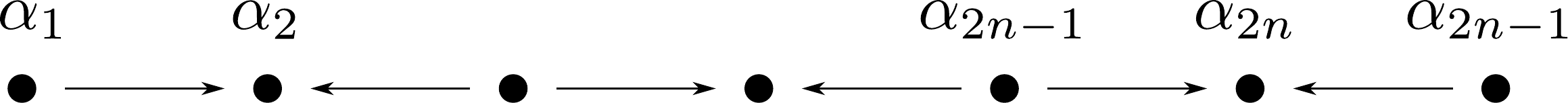}
\caption{4d BPS quivers $Q_{4\text{d}}\equiv Q_{\tau_C}$ corresponding to minimal chambers of $A_{2n}$ and $A_{2n+1}$ theories.}
\label{fig:A2n-quiver}
\end{center}
\end{figure} 

We will choose $u$ within the minimal chamber, defined as the region with BPS central charges ordered as follows: 
\be\label{eq:minimal-chamber-def}
	\arg Z^{4\text{d}}_{\alpha_{\text{even}}} < \arg Z^{4\text{d}}_{\alpha_{\text{odd}}}\,.
\ee
The relative ordering among odd (respectively, even) central charges does not matter, since they are mutually local, and since in this chamber the only stable BPS states are hypermultiplets of charge $\pm \alpha_i$ for each $i\in \{1,\dots, m\}$. 
The mapping cylinder $M_0$ is defined by the BPS spectrum of the minimal chamber as in Figure \ref{fig:A2n-M0}: by first including tetrahedra associated with BPS states of charge $\alpha_{\text{even}}$, and later tetrahedra associated with BPS states of charge $\alpha_{\text{odd}}$. 

\begin{figure}[h!]
\begin{center}
\includegraphics[width=0.7\textwidth]{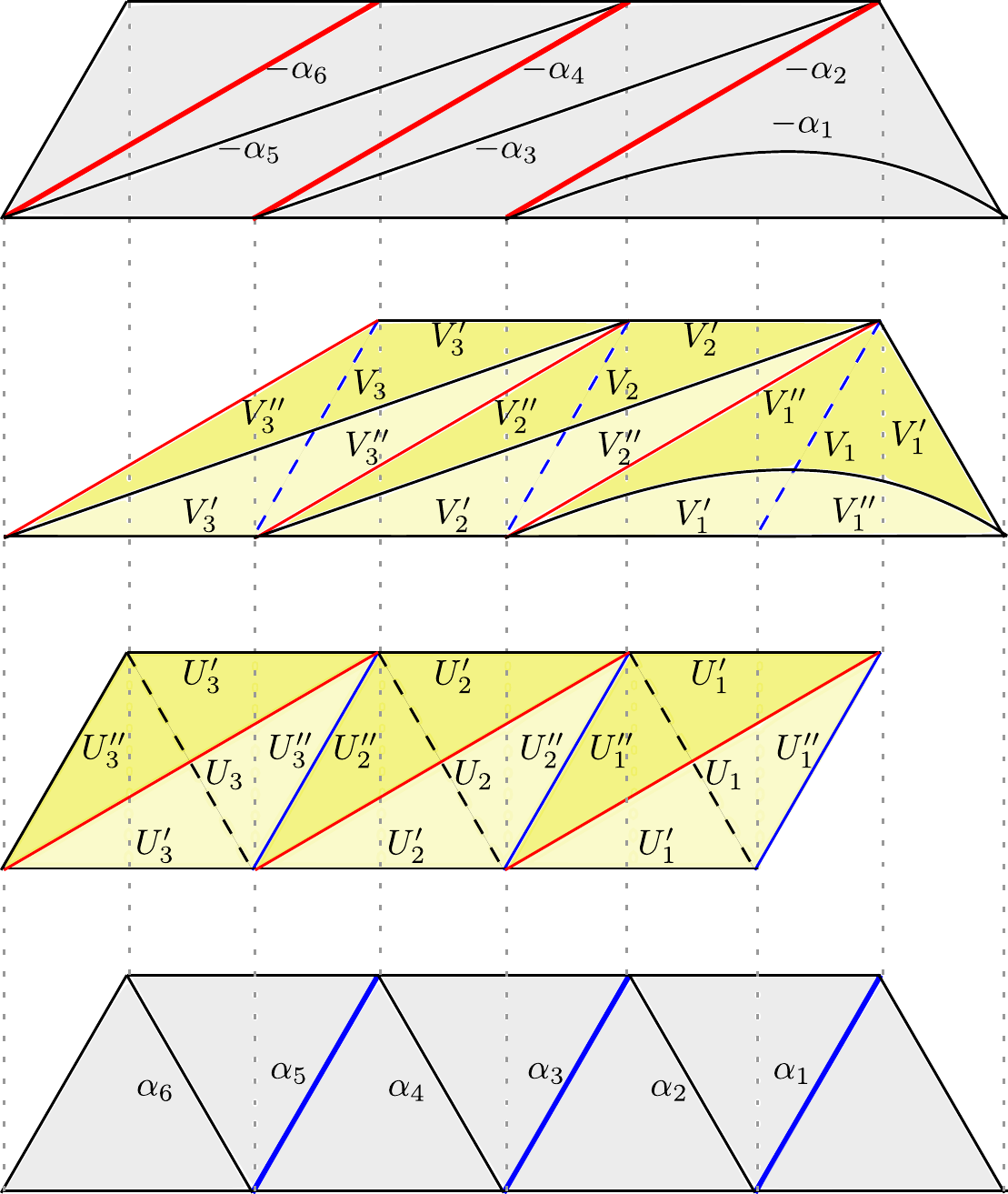}
\caption{The ideal triangulation of $M_0$. Position coordinates for the polarization $\Pi$ corresponding to edges of the boundary triangulation are highlighted in blue ($X_{2i-1}$) and in red $(X_{2i})$.}
\label{fig:A2n-M0}
\end{center}
\end{figure}

%%%%%%
\paragraph{Lagrangian description and 3d symmetric quiver}

Let us denote by $U_i, U_i', U_i''$ the edge coordinates on the tetrahedron associated with a BPS state of charge $\alpha_{2i}$, and by  $V_i, V_i', V_i''$ those of the tetrahedron associated with a BPS state of charge $\alpha_{2i-1}$. 
We start with the theory $\bigotimes_{i=1}^{m} \CT_{\Delta_i}$ with polarizations of type $\Pi''$ from \eqref{eq:delta-pol}, given by positions
and momenta: 
\be
\{\Pi_i\}:\qquad\left\{
\begin{aligned}
	&(V''_1, U''_1 ,\dots, V''_n, U''_n; V_1, U_1 ,\dots, V_n, U_n )  \qquad\qquad (m=2n) \\
	&(V''_1, U''_1 ,\dots, V''_{n+1}, U''_{n+1}; V_1, U_1 ,\dots, V_n, U_n )  \quad (m=2n+1) \\
\end{aligned}\right. 
.
\ee
We then define a new polarization $\Pi$ whose positions and momenta
\be
	\Pi: \qquad (X_1, X_2, \dots, X_{m-1}, X_{m}; Y_1, Y_2, \dots, Y_{m-1}, Y_{m}) 
\ee
are related to the former by
\be\label{eq:tetrahedra-to-quiver}
\begin{split}
	X_{2i-1} &= -V_i - U_i'' - U_{i-1}''\\
	X_{2i} &= -U_i - V_i'' - V_{i+1}''
\end{split}
\qquad
\begin{split}
	Y_{2i-1} &= V_i''\\
	Y_{2i} &= U_i''
\end{split}
\ee
with the understanding that indices run from $1$ to $m$, and the respective coordinates are set to zero outside of this range. 
For later convenience, we also record the inverse relations: 
\be\label{eq:quiver-to-tetrahedra}
\begin{split}
	V_i &= -X_{2i-1} - Y_{2i} - Y_{2i-2}\\
	U_i &= -X_{2i} - Y_{2i-1} - Y_{2i+1}
\end{split}
\qquad
\begin{split}
	V_i'' & = Y_{2i-1}\\
	U_i'' & = Y_{2i}
\end{split}
\ee
as well as the relation to boundary shear coordinates for the bottom ($C_-$) and top $(C_+)$ surfaces: 
\be\label{eq:boundary-shear-XY}
\begin{split}
\CP_{C_-}:& \left\{\begin{split}
	X_{\alpha_{2i-1}} & = -V_i - U_i'' - U_{i-1}'' = X_{2i-1}\\
	X_{\alpha_{2i}} & = - U_i = X_{2i} + Y_{2i-1} + Y_{2i+1}\\
\end{split}\right.
\\
\CP_{C_+}:& \left\{\begin{split}
	X_{-\alpha_{2i-1}} & = -V_i = X_{2i-1} + Y_{2i} + Y_{2i-2}\\
	X_{-\alpha_{2i}} & = -U_i - V_{i}''- V_{i+1}'' = X_{2i}\\
\end{split}\right.
.
\end{split}
\ee
Here, a small difference between the cases $m = 2n$ and $m = 2n+1$ is worth mentioning. 
On the one hand, $X_i, Y_i$ with $i = 1,\dots, m$ always come in pairs as Darboux coordinates associated with tetrahedra (after a change of polarization). 
On the other hand, ${\mathrm{dim}}_{\IC} \CP_{C_\pm} = m$. 
If $m = 2n$ the map between bulk and boundary coordinates \eqref{eq:boundary-shear-XY} is invertible, but if $m=2n+1$, it is not. 
The reason is the existence of a flavour sublattice, since the corresponding coordinate is invariant under flips of the triangulation. 
Let $X_{\pm \alpha_{f}}:= X_{\pm(\alpha_1+\dots+\alpha_m)}$ denote the flavour coordinate on $\CP(C_\pm)$, then\footnote{In this description we are omitting the value of the coordinate along the interval, but it should be clear from context that the left hand side is evaluated at one end (the bottom) and the right-hand side at the other end (top).}
\be
	X_{\alpha_{f}} = X_{-\alpha_{f}}\,.
\ee
For illustration, consider the case $m = 1$ (corresponding to the $A_1$ theory). 
Both $C_\pm$ are triangulated 4-gons, related by a flip. 
In fact $C_\pm$ are half-boundaries of a single tetrahedron with coordinates $X_1 = V_1, Y_1 = V_1''$ on the character variety of its boundary. 
However, in the identification with shear coordinates of \emph{internal edges} of $C_\pm$, we only have $X_{\alpha_1} = X_1 = X_{-\alpha_1}$, while $Y_1$ does not appear at all. 
The reason is that $Y_1 = V_1''$ is associated with a \emph{boundary} edge, i.e., an edge that separates $C_+$ from $C_-$. 
The double-covering of $M_0$ in this case is a solid torus, and the cycles corresponding to $X_1$, $Y_1$ are shown in~\eqref{eq:xhat-yhat-A1}.

As a check, the Poisson structure \eqref{eq:Darboux} implies
\be\label{eq:Poisson-structure-alpha-I}
\begin{split}
	\{X_{\alpha_{2i}}, X_{\alpha_{2j-1}}\} &= \{Y_{2i-1} + Y_{2i+1}, X_{2j-1}\} = \delta_{i,j} + \delta_{i,j-1} \,,\\
	\{X_{\alpha_{2i}}, X_{\alpha_{2j}}\} &=  \{X_{\alpha_{2i-1}}, X_{\alpha_{2j-1}}\} = 0\,,\\
\end{split}
\ee
in agreement with \eqref{eq:4d-Dirac-pairing} for $C = C_-$. Similarly, we recover the Poisson structure on shear coordinates on $C_+$ (which has the opposite sign due to orientation reversal): 
\be\label{eq:Poisson-structure-alpha-II}
\begin{split}
	\{X_{-\alpha_{2i}}, X_{-\alpha_{2j-1}}\} &= \{X_{2i}, Y_{2j} + Y_{2j-2}\} = -\delta_{i,j} - \delta_{i,j-1} \,,\\
	\{X_{-\alpha_{2i}}, X_{-\alpha_{2j}}\} &=  \{X_{-\alpha_{2i-1}}, X_{-\alpha_{2j-1}}\} = 0\,,\\
\end{split}
\ee
and we have that shear coordinates on $C_-$ commute with those on $C_+$: 
\be\label{eq:Poisson-structure-alpha-III}
\begin{split}
	\{X_{\alpha_{2i}}, X_{-\alpha_{2j-1}}\} =\{X_{-\alpha_{2i}}, X_{\alpha_{2j-1}}\} = \{X_{\alpha_{2i}}, X_{-\alpha_{2j}}\} &=  \{X_{\alpha_{2i-1}}, X_{-\alpha_{2j-1}}\} = 0 \,.
\end{split}
\ee

For example, consider the case $m = 6$ shown in Figure \ref{fig:A2n-M0}. 
The change of polarization is given in this case by:
\be
\left(\begin{array}{c}
	X_1 \\
	X_2 \\
	X_3 \\
	X_4 \\
	X_5 \\
	X_6 \\
	\hline
	Y_1 \\
	Y_2 \\
	Y_3 \\
	Y_4 \\
	Y_5 \\
	Y_6 \\
\end{array}\right)
=
\left(\begin{array}{cccccc|cccccc}
	0&-1&&&&&-1&&&&& \\
	-1&0&-1&&&&&-1&&&& \\
	&-1&0&-1&&&&&-1&&& \\
	&&-1&0&-1&&&&&-1&& \\
	&&&-1&0&-1&&&&&-1& \\
	&&&&-1&0&&&&&&-1 \\
	\hline
	1&&&&&&&&&&& \\
	&1&&&&&&&&&& \\
	&&1&&&&&&&&& \\
	&&&1&&&&&&&& \\
	&&&&1&&&&&&& \\
	&&&&&1&&&&&& \\
\end{array}\right)
\left(\begin{array}{c}
	V_1'' \\
	U_1'' \\
	V_2'' \\
	U_2'' \\
	V_3'' \\
	U_3'' \\
	\hline
	V_1 \\
	U_1 \\
	V_2 \\
	U_2 \\
	V_3 \\
	U_3 \\
\end{array}\right).
\ee
Note that the positions $X_1,\dots, X_{m}$ correspond to a maximal collection of edges of the boundary triangulation $\tau_S$ that do not share any faces, as highlighted in Figure \ref{fig:A2n-M0}.
In fact, as already noted in \eqref{eq:boundary-shear-XY}, odd coordinates correspond to alternate edges of the \emph{initial} triangulation: 
\be\label{eq:surface-to-quiver-odd}
	X_{2i-1} \equiv X_{\alpha_{2i-1}} \,,
\ee
while even coordinates correspond to alternate edges of the \emph{final} triangulation: 
\be\label{eq:surface-to-quiver-even}
	X_{2i} \equiv X_{-\alpha_{2i}} \,.
\ee

It is straightforward to check that these are indeed Darboux coordinates, i.e., that \eqref{eq:Darboux} is satisfied, using \eqref{eq:delta-pol} for each tetrahedron.
The change of polarization therefore corresponds to an element of $Sp(2m,\IZ)$ given by
\be\label{eq:g-S-TB}
	g = S\cdot T_B 
\ee
where $S = S_J$, as given in \eqref{eq:SJ-def} with $J = I_{m}$ (all symmetries gauged), and $T_B$ is given by \eqref{eq:TB-def}, with
\be\label{eq:3d-quiver-adjacency-matrix}
	Q_{ij} = \delta_{i,j+1} + \delta_{i,j-1}. 
\ee

The Chern-Simons-Matter gauge theory description that emerges is as follows. 
Starting with $\bigotimes_i \CT_{\Delta_i}$, which is a theory of $k$ free chiral multiplets, we first act with $T_B$ by introducing (effective) mixed Chern-Simons couplings for the $U(1)^{m}$ flavour symmetry rotating the free chirals: 
\be\label{eq:CS-quiver}
	\kappa^{\text{effective}}_{ij} = Q_{ij}\,.
\ee
Following with the action by $S$, we maximally gauge the global symmetry, thereby passing to a $U(1)^{m}$ gauge theory with $m$ chiral multiplets, each charged under only one of the $U(1)$ gauge factors. 

The theory we just described belongs to the class of theories $T[Q]$ encoded by symmetric quivers $Q$, introduced in \cite{Ekholm:2018eee}. 
In this case, the quiver adjacency matrix is given by $Q_{ij}$, implying that $Q$ is a~symmetrization of the $A_{m}$ Dynkin quiver (see Figure \ref{fig:A2n-sym-quiver}).
\be
    \boxed{
    T[M_0; \tau_{M_0}, \Pi] \equiv T[Q]
    }
\ee

\begin{figure}[h!]
\begin{center}
\includegraphics[width=0.5\textwidth]{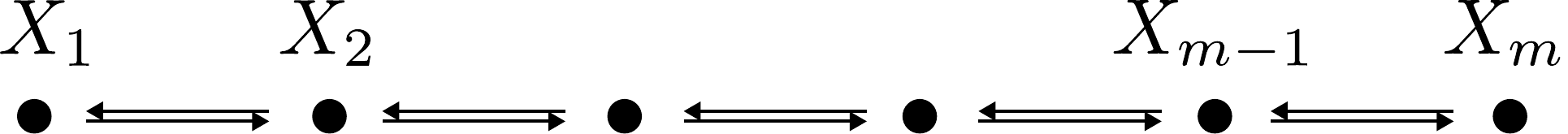}
\caption{The 3d symmetric quiver $Q$ associated with $T[Q]$.}
\label{fig:A2n-sym-quiver}
\end{center}
\end{figure}

%%%%%%%%%%%%%%%%%%%%%%%%%%%%%
\subsection{Symmetrization map in the case of minimal chamber}

There is a tantalizing similarity between the BPS quiver $Q_{4\text{d}}\equiv Q_{\tau_C}$, shown in Figure \ref{fig:A2n-quiver}, and the symmetric quiver $Q$ encoding mixed Chern-Simons couplings of the 3d theory $T[M_0]$, shown in Figure~\ref{fig:A2n-sym-quiver}. 
The quiver $Q$ obtained in this way is related to the 4d BPS quiver $Q_{4\text{d}}$ and to the choice of stability data, which is denoted by ``min'' (minimal chamber) and understood to encode both a choice of Coulomb moduli for the 4d theory as well as a choice of half-plane in the complex plane of central charges. 
The quiver $Q$ is obtained simply by doubling all arrows of $Q_{4\text{d}}$ -- this is our first example of the symmetrization map (\ref{Sigma-maps}): 
\be
	\Qsym{Q_{4\text{d}}}{\text{min}} = Q\,.
\ee 
More general definition of this map will be discussed in later sections. 
The simple relation between $Q$ and $Q_{4\text{d}}$ encountered here is a special feature of this choice of stability condition. 
This relation can be understood both from geometric and physical viewpoints, as follows. 

%%%%%%%%%%%%%%%%%%%%%%%%%%%%%%
\paragraph{Quivers from triangulations}

The 4d quiver is dual to the triangulation of $C$ at either end of the interval. 
Indeed, $M_0$ is defined by gluing a tetrahedron for each BPS particle of $T[C]$, and each induces a~flip of the triangulation of $C$, which corresponds to a mutation of $Q_{4\text{d}}$. 
It follows that the entire sequence of mutations induced by $M_0$ takes $Q_{4\text{d}}$ back to itself \cite{Alim:2011kw}. 
Vertices of $Q_{4\text{d}}$ are internal edges of $\tau_C$ and arrows count shared triangles with orientation as in \eqref{eq:surface-FG-Poisson-structure}. 

The 3d quiver $Q$, on the other hand, depends on the data of $(\tau_M,\Pi)$. 
Each vertex corresponds to an external edge of $\tau_M$ labelling a position coordinate $X_i$ defined by the polarization $\Pi$. 
Links among vertices are in one-to-one correspondence with internal triangles of $\tau_M$ shared by these edges.\footnote{Since $X_i$ are in mutual involution, it follows immediately that they do not share any external faces.} 
For example, consider the link between vertices labeled by $X_1 = -V_1 - U_1''$ and $X_2=-U_1 -V_1''-V_2''$ in Figure \ref{fig:A2n-sym-quiver}. 
The link between them corresponds to the face $(U_1,U_1',U_1'')\simeq (V_1, V_1', V_1'')$ shared by the external edges labeled by $\alpha_1$ and $-\alpha_2$ in Figure \ref{fig:A2n-M0}. 

The geometric data that defines both 3d and 4d quivers is summarized by the following table:
\be\label{eq:geometric-quivers}
	\begin{array}{c|c|c}
	& Q_{4\text{d}} & Q \\
	\hline
	\text{vertices} & \text{internal edges of $\tau_C$} & \text{$X_i$-edges of $(\tau_{M_0},\Pi)$} \\
	\text{arrows} & \text{common triangles in $\tau_C$} & \text{common triangles in $\tau_{M_0}$}\\
	\end{array}.
\ee

%%%%%%%%%%%%%%%%%%%%%%%%%%%%%%
\paragraph{Bulk and boundary Witten effects: from Dirac pairings to Chern-Simons couplings}

The striking similarity between 4d and 3d quivers can also be understood from a physical perspective. 
First, recall that vertices of $Q_{4\text{d}}$ are basic BPS states of $T[C]$, i.e., dyons whose boundstates generate the entire 4d BPS spectrum through interactions governed by the Dirac pairing \eqref{eq:4d-Dirac-pairing}, which is captured by the arrows between two vertices \cite{Denef:2002ru}. 
Conversely, vertices of the symmetric quiver $Q$ correspond to fundamental BPS vortices of $T[Q]$, whose own interactions are governed by mixed Chern-Simons couplings $\kappa_{ij}^{\text{eff}}$, which are encoded by the linking matrix of $Q$ through~\eqref{eq:CS-quiver}: 
\be\label{eq:physical-quiver-relations}
	\begin{array}{c|c|c}
	& Q_{4\text{d}} & Q \\
	\hline
	\text{vertices} & \text{basic BPS dyons of $T[C]$} & \text{fundamental vortices of $T[Q]$}\\
	\text{arrows} & \text{Dirac pairing}  & \text{Chern-Simons pairing}\\
	\end{array}.
\ee
A common feature is that in both cases the number of arrows between two vertices of the quiver is directly related to the \emph{orbital  spin} of their boundstate (respectively in 3 and in 2 space dimensions; see \cite{Coleman:1982cx} and \cite[Appendix A]{Gupta:2024wos}).
To understand why the two should be related, let us recall a few facts about the respective QFTs. 

On the Coulomb branch, $T[C]$ flows to an effective $U(1)^{n}$ gauge theory, and BPS dyons arise as massive particles charged under the IR abelian gauge group.
Consider a duality frame in which a stable BPS dyon of charge $\alpha$ is purely electric (has unit electric charge and is magnetically neutral).
In this duality frame, the dyon can be described by a local field coupled to the low-energy photons $F_{\mu\nu}$ as in standard QED. 
On the other hand, another dyon with charge $\alpha'$, such that $\langle\alpha,\alpha'\rangle\neq 0$, does not admit a~description in terms of local degrees of freedom in the same duality frame. 
However, there exists a dual frame in which $\alpha'$ is an elementary electric charge, and in which the new dyon is described by local fields coupling to a dual photon $F'_{\mu\nu}$. 

We thus have two duality frames $\Pi_\alpha, \Pi_{\alpha'}$ in which, respectively, $\alpha$ and $\alpha'$ correspond to purely electric charges. 
However, when viewing $\alpha$ in frame $\Pi_{\alpha'}$, it acquires $m$ units of magnetic charge, where $m = \langle\alpha,\alpha'\rangle$ is measured by the duality-invariant Dirac pairing. 
Conversely, the magnetic charge of $\alpha'$ in $\Pi_{\alpha}$ will be $-m$. 
The duality transformation between the two frames must therefore involve a field redefinition mixing electric and magnetic field strengths: 
\be
	F_{\mu\nu}' \sim \langle\alpha,\alpha'\rangle\, \tilde F_{\mu\nu} +\dots \,.
\ee
At the level of the Lagrangian, such a duality transformation can be implemented by a Legendre transform combined with a shift of the theta angle by $\langle\alpha,\alpha'\rangle$ units: 
\be
	\delta \mathcal{L} \sim \langle\alpha,\alpha'\rangle F\wedge F\,.
\ee
Such a shift adds a total derivative to the Lagrangian, leaving the local equations of motion invariant. 
But if the abelian gauge theory is considered on a manifold with boundary, as in our case, a shift of the theta angle induces a change in the boundary action. 
In particular, a shift of the theta angle corresponds to a shift of the appropriate Chern-Simons level for the boundary theory: 
\be\label{eq:top-theta}
	\frac{\langle\alpha,\alpha'\rangle}{8\pi} \int_{S^1\times \IR^2\times \IR_{\geq 0}} F\wedge F 
	= 
	\frac{\langle\alpha,\alpha'\rangle}{4\pi} \int_{S^1\times \IR^2\times \{0\}} A\wedge dA \,.
\ee
which leads to the identification 
\be
	\langle\alpha,\alpha'\rangle =\kappa\,.
\ee
This explains the relation between $Q_{4\text{d}}$ and $Q$ with structures given by \eqref{eq:physical-quiver-relations}. 

The relation between Dirac pairing and Chern-Simons coupling can also be understood as a relation between two well-known effects. 
On the one hand, in 4d abelian gauge theory, the Witten effect \cite{Witten:1979ey} is the observation that a shift of theta angles induces a shift of electric charges by magnetic ones. 
On the other hand, it is also well known that the inclusion of a Chern-Simons term in 3d abelian gauge theory has precisely the same effect: it induces a mixing of electric and magnetic charges. 
These observations are related by the topological nature \eqref{eq:top-theta} of the 4d theta term, i.e., the fact that it is a total derivative of the Chern-Simons term.

%%%%%%%%%%%%%%%%%%%%%%%%%%%%%%
\subsection{Boundary conditions and duality frames for \texorpdfstring{$A_{m}$}{Am} Argyres-Douglas theories}\label{sec:4d-duality-frames}

We next explain how the observations above are realized in the case of main interest to us, i.e., the domain wall theory $T[M_0]$ defined by the minimal BPS spectrum of the $A_{m}$ Argyres-Douglas theory.
The description is very similar for both even and odd values of $m$. 
For illustration, we give details on the case $m = 2n$. 
\newline

We start with the Lagrangian description of the low-energy 4d $U(1)^n$ gauge theory associated with the initial surface $C$, corresponding to the triangulated $(2n+3)$-gon in Figure \ref{fig:A2n-M0}. 
In the minimal chamber of the Coulomb branch, the BPS spectrum includes hypermultiplets with charge $\alpha_i$ with $i=1,\dots, 2n$. 
Charges with odd (respectively, even) labels are mutually local: 
\be
	\langle \alpha_{2i-1},\alpha_{2j-1}\rangle = 0 = \langle \alpha_{2i},\alpha_{2j}\rangle \qquad i,j = 1\dots, n.
\ee
A Lagrangian description of the IR 4d theory involves a choice of duality frame, which we fix by demanding that a maximal collection of mutually local dyons are purely electric. 
Concretely, we may choose BPS dyons with charge $\alpha_{2i}$ for $i=1,\dots, n$ 
(with corresponding boundary shear coordinates $X_{\alpha_{2i}} = U_{i}$ with $i=1,\dots, n$),
and these would be described by local degrees of freedom of electrically charged 4d hypermultiplet fields. 

The 4d theory $T[C]$ is formulated on $S^1\times \IR^2\times \IR_{\geq 0}$, which has a boundary component at finite distance with topology $S^1\times \IR^2$. 
The boundary conditions for the 4d theory are described, in the dual geometry, by a triangulated 3-manifold with a boundary component the triangulated Riemann surface $C$. 
The $n$ tetrahedra glued to $C$ correspond to a maximal collection of mutually local charges. 
For example, for the collection of even charges $\alpha_{2i}$, the relevant tetrahedra are those in the bottom row (in yellow) in Figure \ref{fig:A2n-M0}. 
These tetrahedra define a boundary condition for the corresponding 4d theory $T[C]$ as follows. 

The boundary condition for the bottom (respectively top) copy of $T[C]$ involves a breaking of 4d $\CN=2$ supersymmetry down to 3d $\CN=2$, which is parameterized by a phase $\vartheta_e$ (respectively, $\vartheta_o$). 
The angle defines a splitting of 4d $\CN=2$ hypermultiplets into two 4d $\CN=1$ chiral multiplets $(\Phi,\Phi')$ with opposite flavour and gauge charges, and 
boundary conditions can be of Dirichlet type for $\Phi$ and of Neumann type for $\Phi'$, or vice versa; see \cite{Dimofte:2011ju}. 
The boundary degrees of freedom of the chiral field with Neumann conditions are described by a 3d $\CN=2$ chiral multiplet in $T[M_0]$. 

In our setup, the choice of phases $\vartheta_e, \vartheta_o$ is defined by the minimal chamber of the 4d Coulomb branch, if we require that they correspond to phases of BPS central charges: 
\be\label{eq:Z-condition}
	\arg Z_{\alpha_{2i}} = \vartheta_{e} < \arg Z_{\alpha_{2i-1}} = \vartheta_{o}\,.
\ee
This means, in particular, that tetrahedra glued to the lower boundary of $M_0$ all correspond to the same phase $\alpha_e$, while all upper tetrahedra have the phase $\alpha_o$. 
The abelian IR theory corresponding to the bottom copy of $T[C]$ features one hypermultiplet with charge $\alpha_{2i}$ for each $i=1,\dots, n$. 
The boundary condition defined by $\vartheta_e$ (for the bottom boundary) defines a splitting of these hypermultiplets into $\CN=1$ chirals. 
Of these, it is the field with Neumann boundary conditions which gives rise to the 3d chiral multiplet of the corresponding tetrahedron theory \eqref{eq:tetrahedron-theory} with shear coordinates $U_i$; see Figure \ref{fig:A2n-M0}: 
\be
	\text{Boundary theory for $T[C_{\text{bottom}}]$: } \ \ \CB_{\text{bottom}}:=\bigotimes_{i \text{ even}} \CT_{\Delta_i}.
\ee

Similarly, at the top boundary of $M_0$, namely the upper copy of the triangulated $(2n+3)$-gon in Figure \ref{fig:A2n-M0}, 
we have the same low-energy abelian $U(1)^n$ theory, but in a different duality frame where electrically charged hypermultiplets have charge $\alpha_{2i-1}$ for each $i=1,\dots, n$. 
Once again, each of these includes a $\CN=1$ chiral with Neumann boundary conditions, whose boundary degrees of freedom give rise to the 3d chiral multiplet of the tetrahedron theory \eqref{eq:tetrahedron-theory} with shear coordinates $V_i$; see Figure \ref{fig:A2n-M0}: 
\be
	\text{Boundary theory for $T[C_{\text{top}}]$: } \ \ \CB_{\text{top}}:=\bigotimes_{i \text{ odd}} \CT_{\Delta_i}. 
\ee

The full domain wall theory $T[M_0]$ is obtained by coupling $\CB_{\text{bottom}}$ to $\CB_{\text{top}}$ according to the rules outlined  in Section \ref{sec:gluing}. 
From the viewpoint of the 3d theory, this involves activating 3d Chern-Simons couplings, as we showed in Section \ref{sec:A2n-theory}. 
On the geometric side, the coupling between boundary theories is engineered by gluing along \emph{internal triangles}, as shown in Figure \ref{fig:A2n-M0}, where we glue the two rows of yellow tetrahedra. 
Moreover, each internal triangle connects edges associated with vertices of the 3d quiver $Q$, respectively on the bottom and top copies of $C$ (highlighted in blue and in red in Figure \ref{fig:A2n-M0}). 
This is consistent with the identification of arrows in $Q$ with internal faces shared by edges dual to vertices of~$Q$ -- see \eqref{eq:geometric-quivers}. 

The change of polarization defined in \eqref{eq:g-S-TB} encodes precisely the change of duality frame that relates the bottom and top copies of $T[C]$. The action of $T_B$ implements a shift (in fact, a mixing) of theta angles by $Q_{ij}$ units, while the $S$ operation implements the Legendre transform that switches between electric and magnetic frames.
These operations combined take us from the bottom 4d theory, which corresponds to the electric frame for $\alpha_{2i}$, to the top 4d theory, which corresponds to the electric frame for $\alpha_{2i-1}$. 

To summarize, we define the duality frame of the bottom (respectively, top) copy of $T[C]$ by requiring that the boundary condition determined by the bottom (respectively, top) row of tetrahedra corresponds to Dirichlet/Neumann conditions for elementary electric hypermultiplets. 
The duality transformation relating these two frames is encoded by gluing relations between the two sets of tetrahedra that define the respective boundary conditions. 
The transformation switching between the two frames involves an appropriate shift of the theta angles, followed by electric-magnetic duality, which are encoded by the Dirac pairing between $\alpha_{2i}$ and $\alpha_{2i-1}$. 
The topological nature of the 4d theta term implies that it can be equivalently recast as a 3d Chern-Simons term on the domain wall theory $T[M]$, thus explaining why the adjacency matrices of 4d and 3d quivers are directly related: recalling that matrix elements of the (antisymmetric) adjacency matrix of the 4d quiver are $(Q_{4\text{d}})_{ij} = \langle \alpha_{i},\alpha_{j}\rangle$, we have that
\be\label{eq:Dirac-pairing-symmetrization}
	|\langle \alpha_{i},\alpha_{j}\rangle| \equiv |(Q_{4\text{d}})_{ij}| = \Qsym{Q_{4\text{d}}}{\text{min}}_{ij} \equiv Q_{ij}\,.
\ee

%%%%%%%%%
\subsection{A geometric realization}

We now provide a geometric construction of the domain walls described above. 
For illustration, we focus on the case $m = 2n$ even, since the case of $m = 2n+1$ follows with minor adjustments.

\paragraph{Spectral networks of Argyres-Douglas theories}

To build a geometric model for $M_0$, we consider a family of quadratic differentials from the minimal chamber of the Argyres-Douglas theory $A_{2n}$, defined as follows: 
\be
	\phi_2(z) = \prod_{i=1}^{2n+1}(z-z_i)\, dz^2
\ee
where $z_i$ are chosen on the real axis. 
For convenience, we fix
\be
	z_i = i-n-1\,.
\ee
The quadratic differential then has zeroes at $2n+1$ integer points between $z=-n$ and $z=+n$ (inclusive). 
We define the spectral network $\CW(\vartheta)$ as the collection of the horizontal foliation: 
\be
	\mathrm{Im} \sqrt{e^{2i\vartheta}\phi_2(e^{-\frac{2i\vartheta}{2n+3}}z)} = 0 \,. 
\ee
Our definition differs slightly from the original one \cite{Gaiotto:2009hg} due to a rotation of the $z$-plane by $e^{-\frac{2i\vartheta}{2n+3}}$, which is included to keep the asymptotic behaviour of Stokes lines fixed at infinity. This is necessary for our construction of $M_0$ since we wish to glue the boundaries of the bottom and top copies of $C$. 
In our picture, it is the branch points that move instead, as $\vartheta$ changes. 

The evolution of the network for $\vartheta\in [\pi/4,5\pi/4]$ is shown in Figure \ref{fig:A2n-networks} for the case $n=3$. 
We observe the presence of saddles at phases 
\be
	\vartheta_e = \frac{\pi}{2}\,,
	\qquad
	\vartheta_o = \pi\,,
\ee
corresponding to periods of central charges $Z_{\alpha_{2i}}$ and $Z_{\alpha_{2i+1}}$, respectively. 
This gives a geometric realization of the stability condition \eqref{eq:Z-condition}. 

\begin{figure}[h!]
\begin{center}
\includegraphics[width=0.3\textwidth]{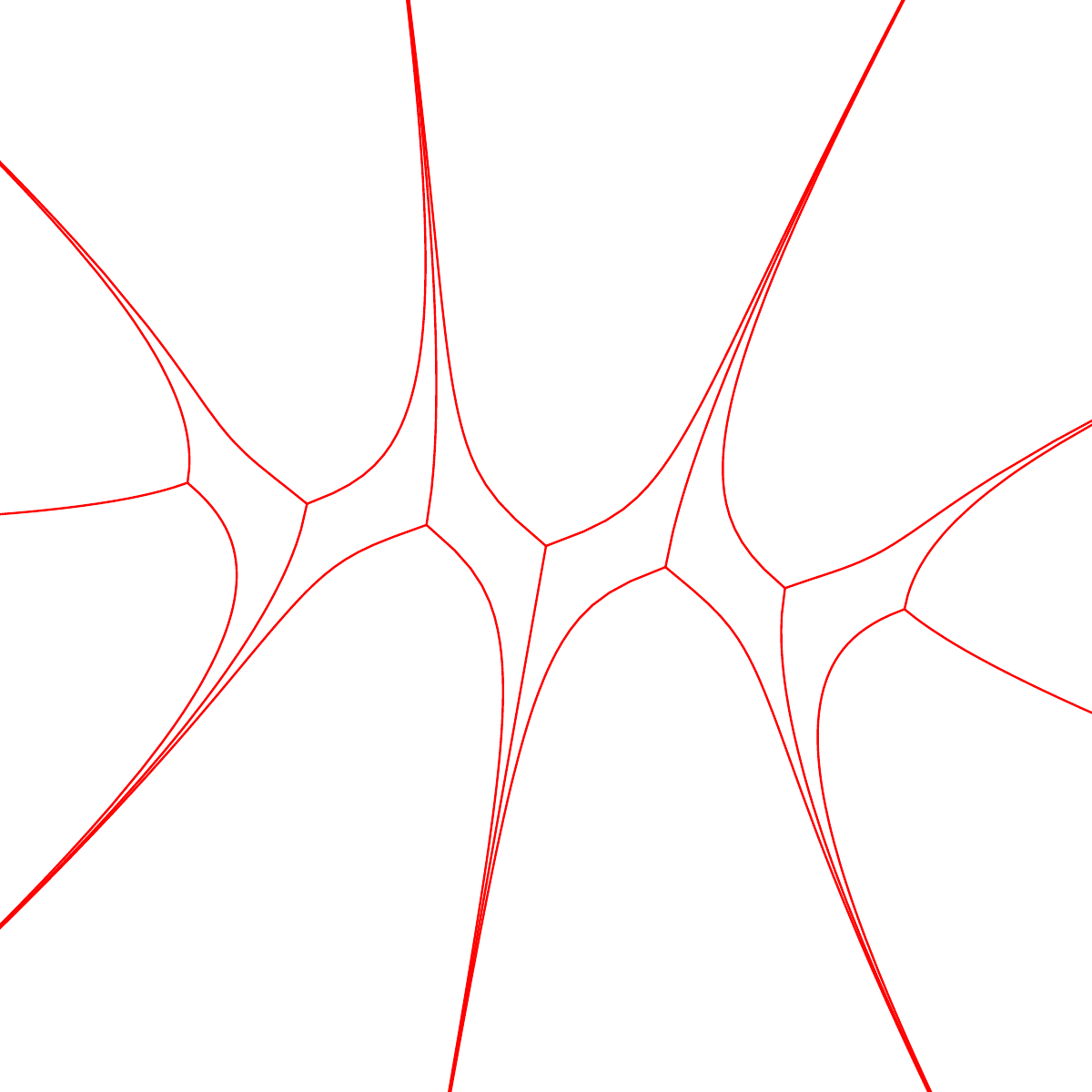}
\includegraphics[width=0.3\textwidth]{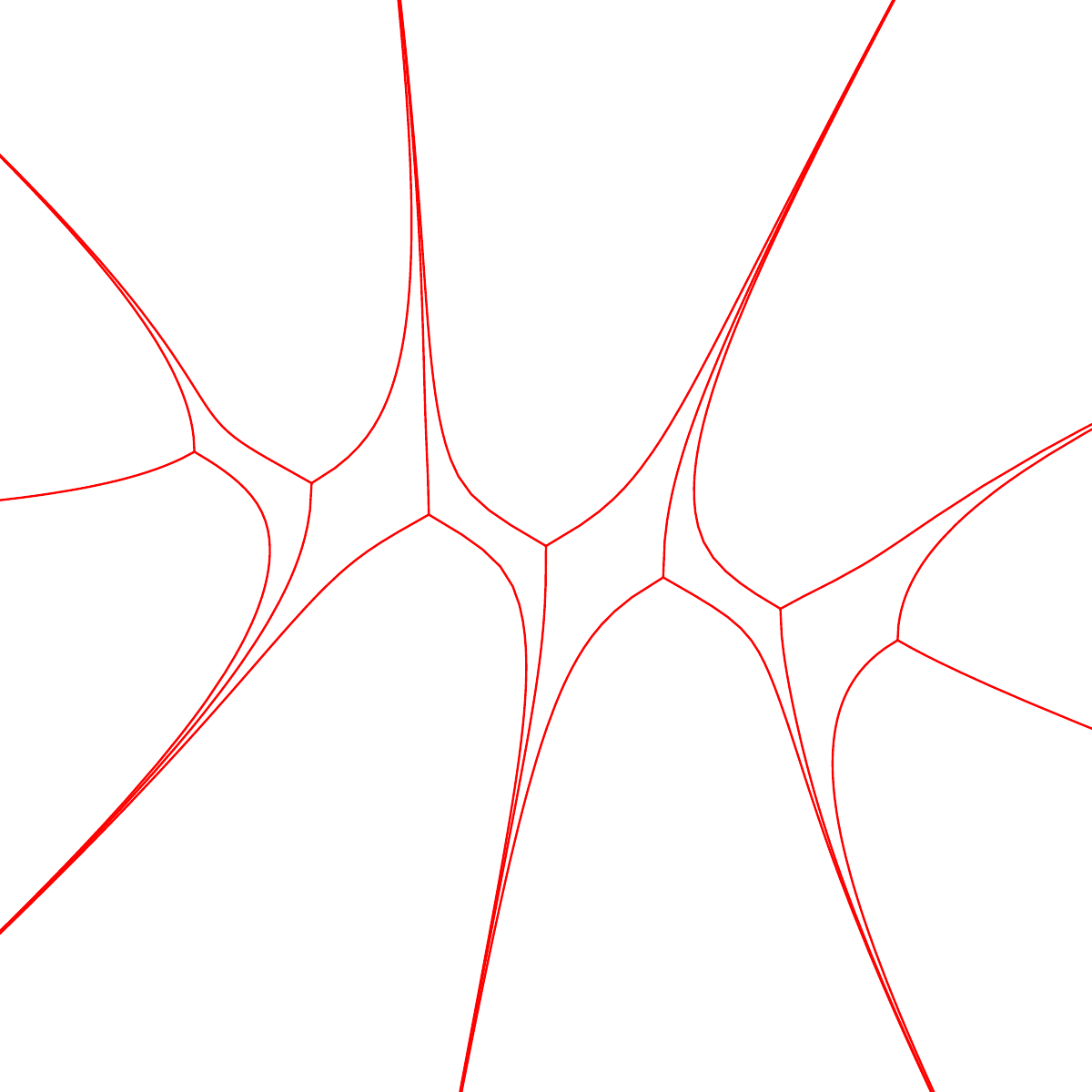}
\includegraphics[width=0.3\textwidth]{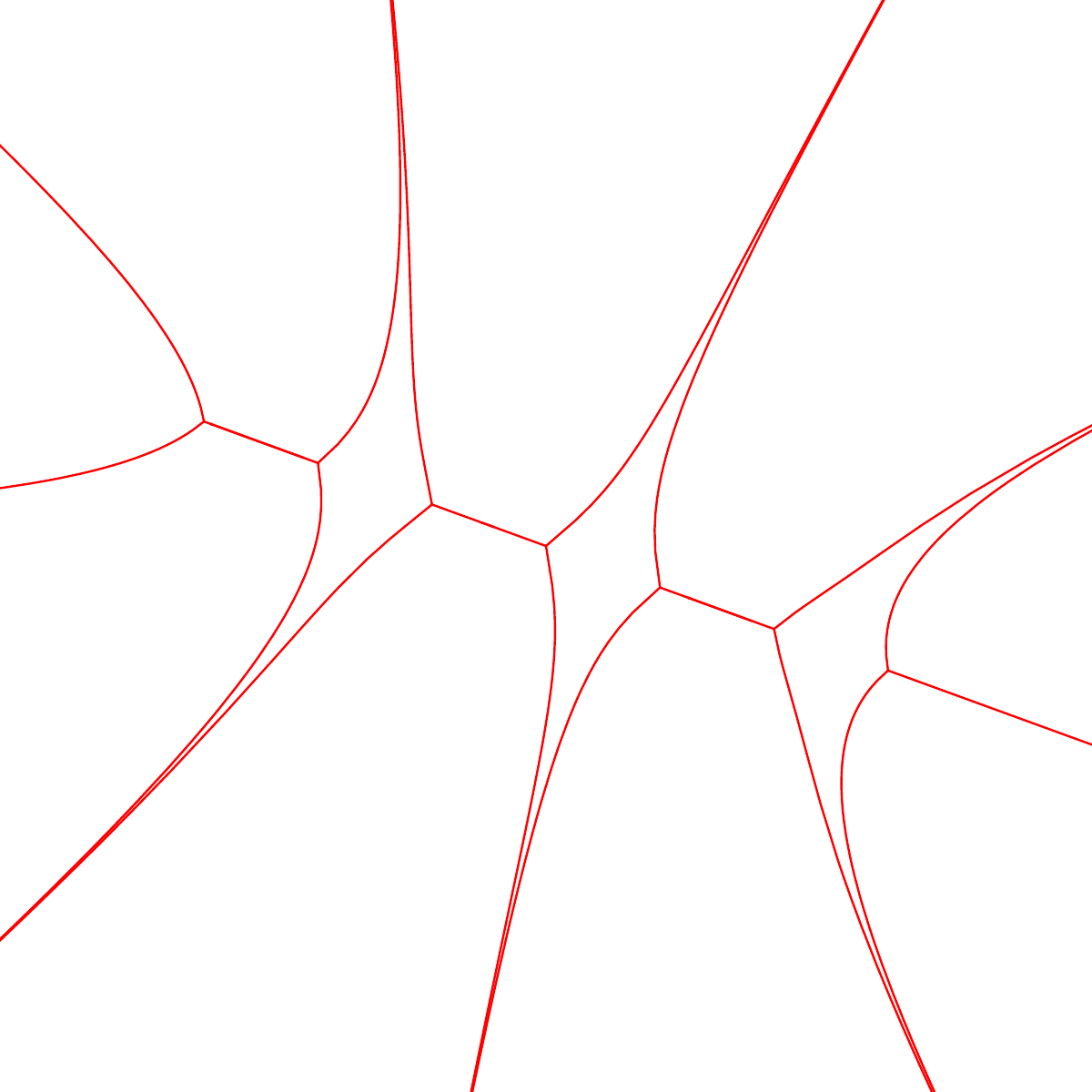}
\includegraphics[width=0.3\textwidth]{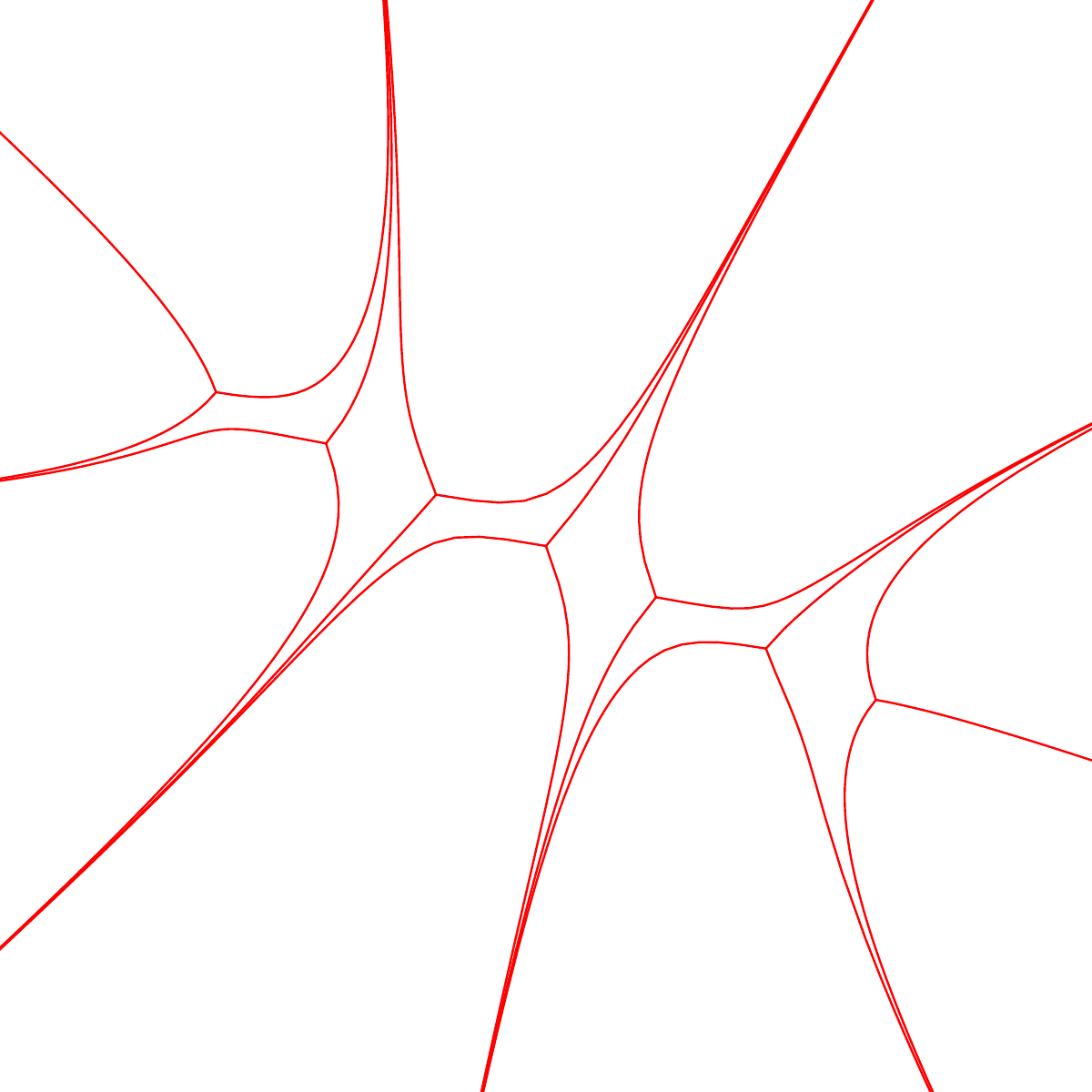}
\includegraphics[width=0.3\textwidth]{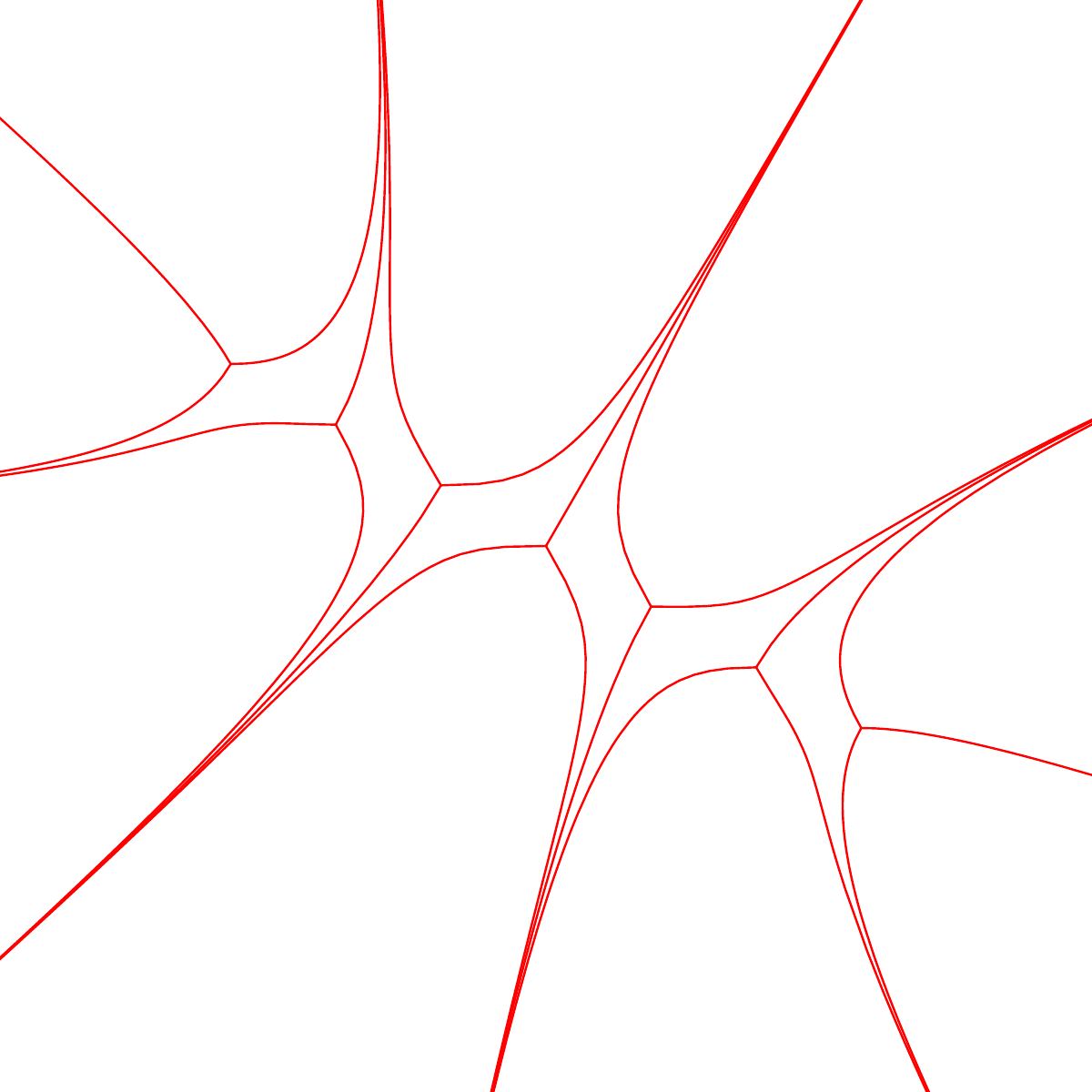}
\includegraphics[width=0.3\textwidth]{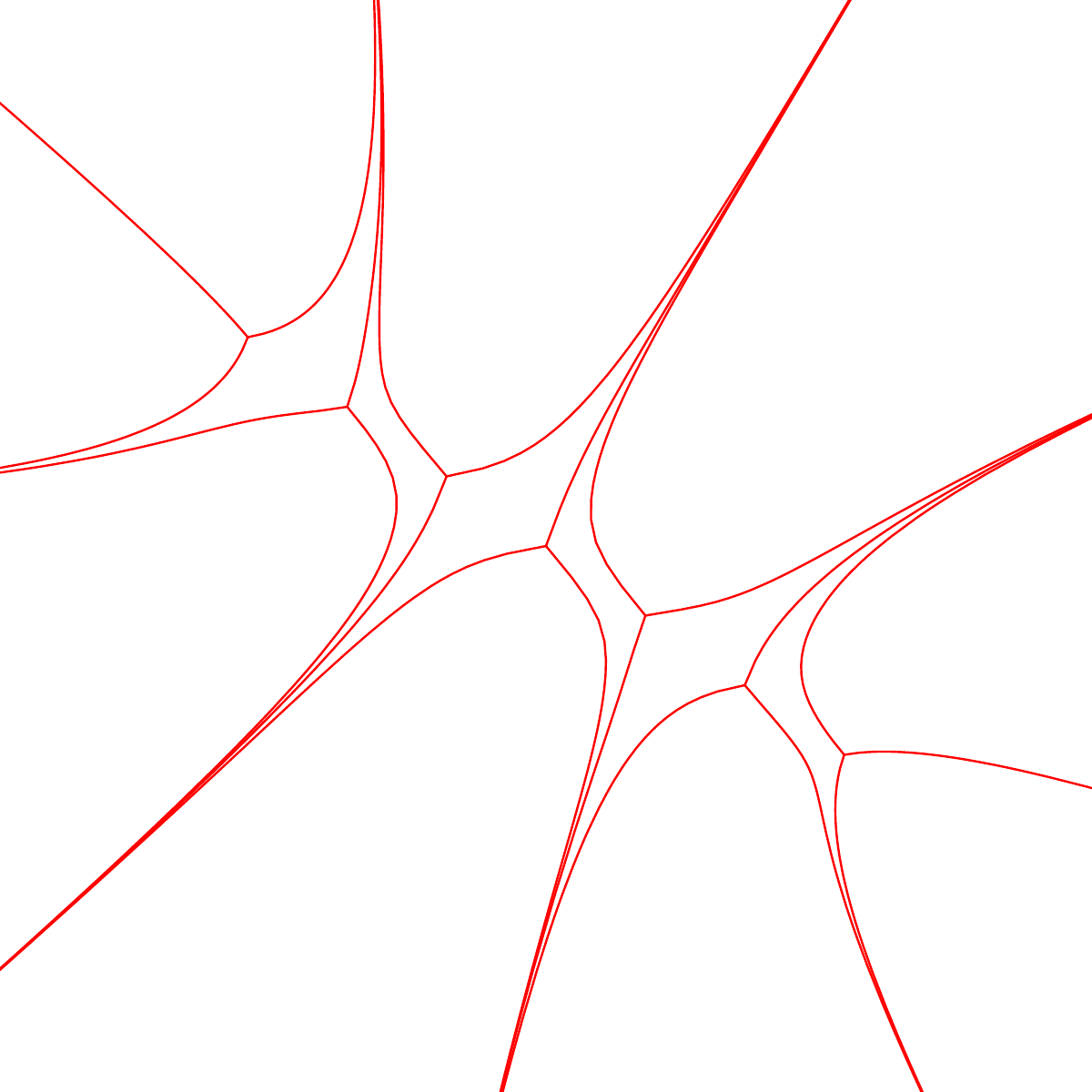}
\includegraphics[width=0.3\textwidth]{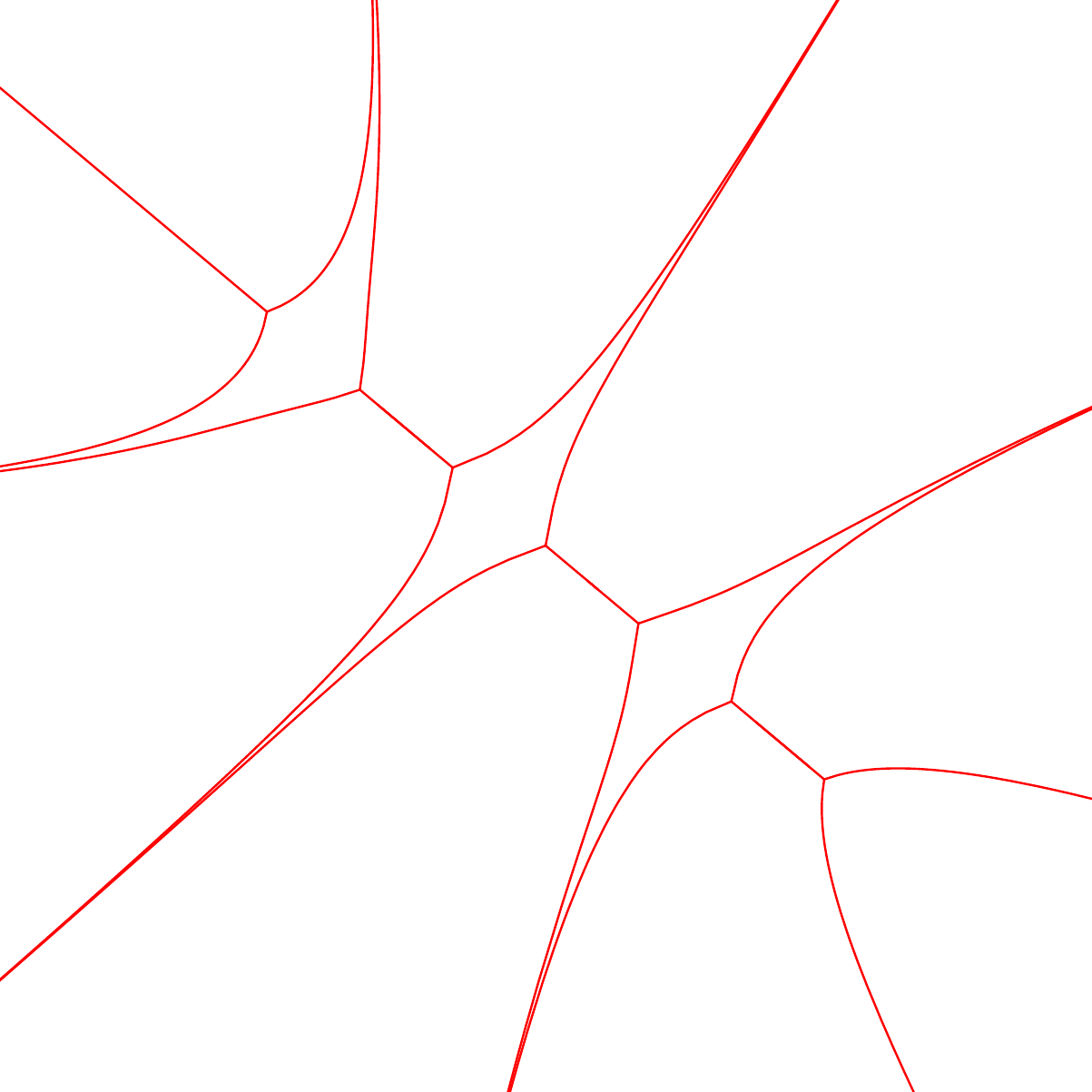}
\includegraphics[width=0.3\textwidth]{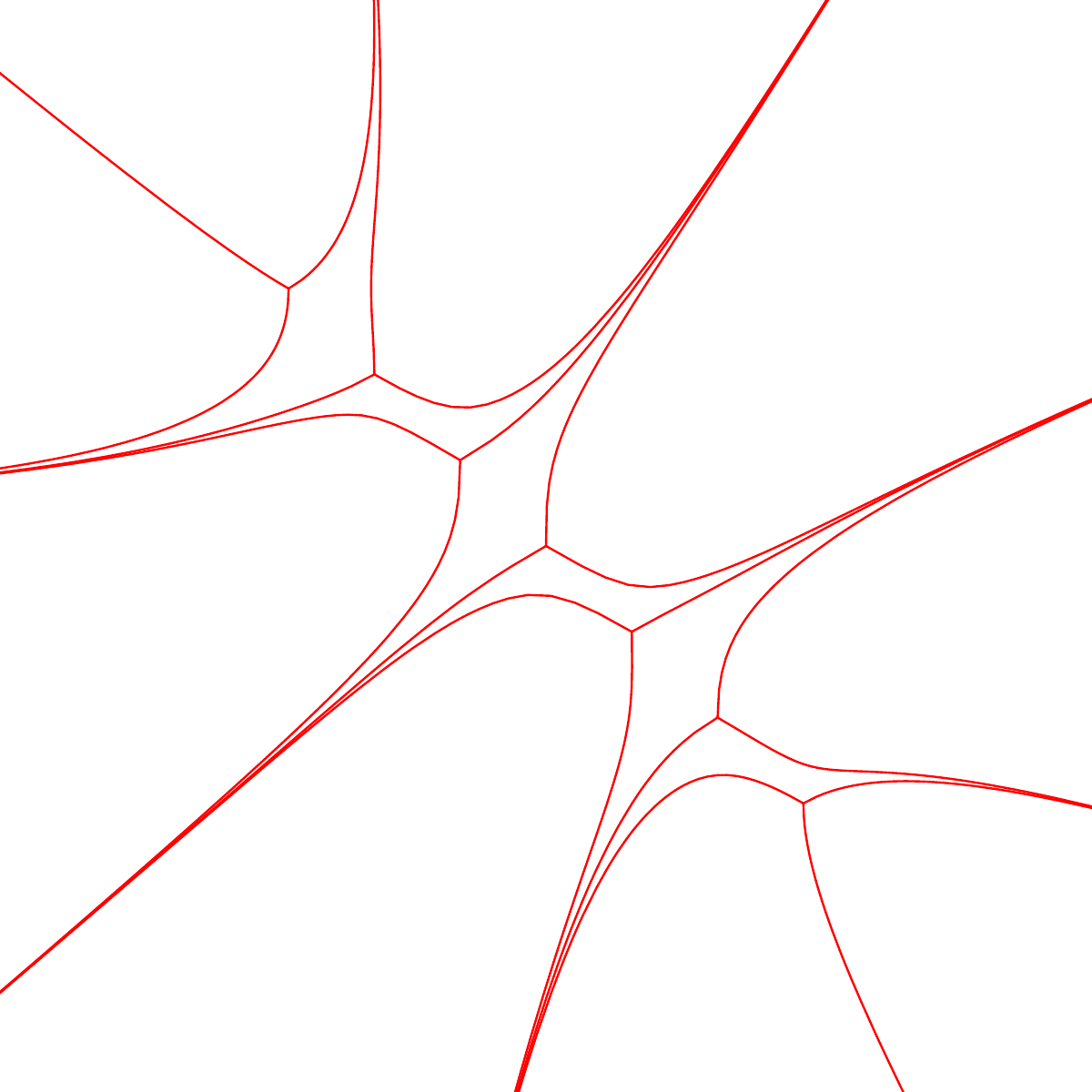}
\includegraphics[width=0.3\textwidth]{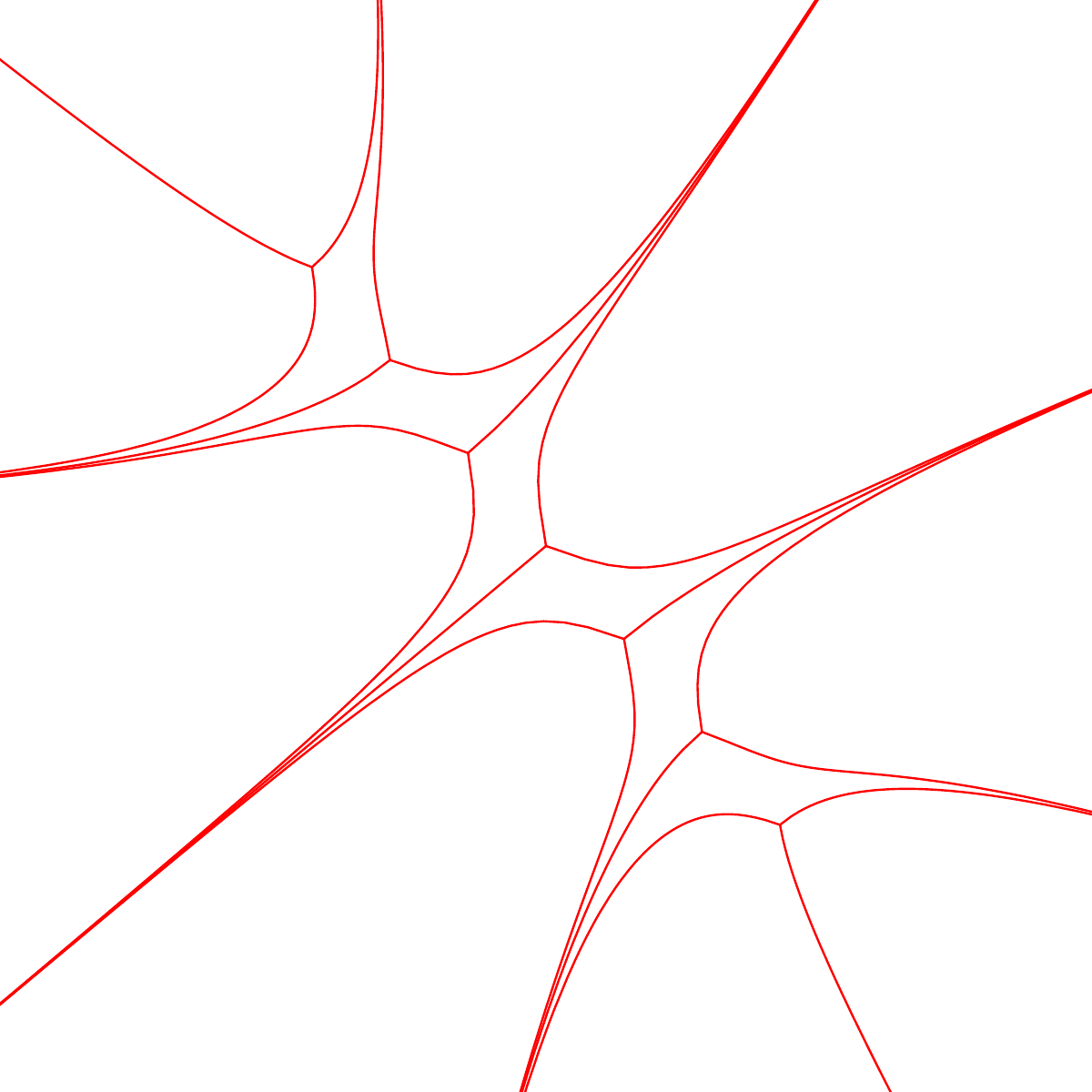}
\caption{Spectral networks for a family of quadratic differentials, corresponding to a family of IR 4d $\CN=2$ QFTs. 
There are two critical phases at which flips of the network occur (top-right and bottom-left frames). 
The family of 4d QFTs corresponds to the domain-wall theory described by the 3d-3d dual QFT to the 3-manifold in Figure \ref{fig:A2n-M0}, whose two rows of tetrahedra correspond to the two collections of flips featured here.}
\label{fig:A2n-networks}
\end{center}
\end{figure}

By a construction of \cite{Gaiotto:2009hg}, each spectral network induces an ideal WKB triangulation of $C$, which jumps by flips when saddles of the network appear.
In our case, the WKB triangulations correspond precisely to the initial, middle, and final triangulations of $C$ shown in Figure \ref{fig:A2n-M0}.

%%%%
\paragraph{M2-M5 boundstates and linking}

A quadratic differential defines a ramified double covering of $\Sigma_\vartheta \to C$ inside $T^*C$. 
Considering a family over $\vartheta$ gives a double covering $L_0\to M_0$, defined in $T^*M_0$. 

As explained in Section \ref{sec:A2n-theory}, flips of the triangulation correspond to tetrahedra in a 3d triangulation of $M_0$. 
The double covering of each tetrahedron is a solid torus $S^1\times \IR^2$, and $L_0$ consists of $2n$ solid tori glued together along their boundaries, at the (double covers of) faces of the tetrahedra. 

Saddles of the spectral network correspond to the presence of holomorphic disks $D_i$ in $T^*C$, at phases $\vartheta_e$ and $\vartheta_o$, whose boundaries have homology classes $\alpha_{i} \in H_1(\Sigma,\IZ)$.
In the M-theory setup $L_0$ supports a~single M5-brane arising from recombination of the stacked pair of M5-branes on $M_0$, and each holomorphic disk supports an M2-brane with boundary on $L_0$.

The double covering $L_0$ is a manifold with boundary given by two copies of a double covering of $C$: 
\be
	\partial L_0 = \Sigma_+ \cup \Sigma_-\,.
\ee
It will be interesting, with later applications in mind, to consider a compactification of $L_0$ defined by gluing manifolds $L_\pm$ at the two ends, as described in Section \ref{sec:Z-from-skein}. 
To define $L_\pm$, let $\{\mu_i, \nu_i\}_{i=1}^{n}$ denote a~Darboux basis for $H_1(\Sigma,\IZ)$. 
Then, $L_\pm$ are handlebody fillings of $\Sigma_\pm$ obtained by filling in, respectively, the $\mu_i$ cycles for $\Sigma_+$ and the $\nu_i$ cycles for $\Sigma_-$. 
The resulting manifold has the topology of a three-sphere
\be
	L_+ \cup_{\Sigma_+} L_0 \cup_{\Sigma_-} L_- \approx S^3\,. 
\ee
Using this fact, we can consider the Gauss linking number of boundaries of holomorphic disks in $L_0$ in homology classes $\alpha_i$, and we observe that these coincide precisely with the linking matrix of the symmetric quiver $Q$: 
\be
	\mathrm{lk}(\partial D_i,\partial D_j) = Q_{ij}\,.
\ee

\section{Algebras of line operators and symmetric quivers}
\label{sec:algebra}
%In this section we generalize the connection between quantum torus algebra and motivic generating series (discussed in previous section) to any symmetric quiver.

We discuss a relation between line operators of the low-energy effective 4d and 3d theories discussed in the previous section. 
We show that their geometric realization naturally gives rise to a bimodule for the quantum torus algebra associated to the symmetric quiver of the 3d $\mathcal{N}=2$ theory, that coincides with the $\mathfrak{gl}_1$ skein module of the (double-covering $L_0$ of the) mapping cylinder defined by the Kontsevich-Soibelman invariant. 
We also present how algebraic formulation of this relation, which we call 3d-4d homomorphism, can be generalized to any symmetric quiver of the 3d $\mathcal{N}=2$ theory. 
Finally, we discuss the 3d-4d homomorphism and the symmetrization map $\mathfrak{S}$ for the minimal chamber of $A_2$ and general $A_m$ quivers. 

%%%%%%%%%%%%%%%%%%%%%%%%%
\subsection{Line operators in 3d and 4d: boundary and domain-wall ideals from motivic DT series of symmetric quivers}\label{sec:line_operators}

In this section, we discuss the physical origins of the line operators in 3d and 4d theories, which forms the foundation of the algebraic perspective on the symmetrization map $\mathfrak{S}$.
\newline

The twisted compactification of the 6d $(2,0)$ theory of type $A_1$ on $C\times S^1\times \IR^3$ gives a 3d $\CN=4$ sigma model with target space $\CP_C$ with complex dimension $2n$, a hyperk\"ahler manifold isomorphic to the moduli space of $SL(2,\IC)$ flat connections on $C$ \cite{Seiberg:1996nz, Gaiotto:2009hg}.\footnote{In our case, the compactification involves the presence of boundaries for the 4d theory, further breaking supersymmetry to four supercharges.} 
Recall that $n = \lfloor \frac{m}{2} \rfloor$ here: when $m=2n+1$ is odd there is a $U(1)$ flavour symmetry that restricts the target from a space of complex dimension $2n+1$ to a complex-codimension-one subspace.

The emergence of this space can be understood directly from the six-dimensional viewpoint by noting that flat connections on $C$ arise after reducing the 2-form gauge potential along $S^1$.
Conversely, from the dual viewpoint of the 4d QFT, reducing the 2-form along cycles on $C$ gives rise to gauge fields of $T[C]$. 
This leads to the identification between holonomies of flat connections on $C$ and expectation values of line operators in $T[C]$ wrapping $S^1$ \cite{Drukker:2009tz, Drukker:2009id, Alday:2009fs}. 

Local functions on $\CP_C$ are provided by the shear coordinates $U_i, V_i$ and their counterparts defined earlier. 
The identification with line operators arises by observing that shear coordinates are labeled by electromagnetic charges $\alpha$ of the IR effective $U(1)^r$ gauge theory description. 
It was argued in \cite{Gaiotto:2009hg, Gaiotto:2010be} that shear coordinates correspond to vacuum expectation values of IR line operators with corresponding electromagnetic charges\footnote{The definition of BPS line operators involves a choice of preserved supercharges parameterized by a phase, which we suppress here.}
\be\label{eq:exp-x-alpha}
	x_\alpha = e^{X_{\alpha}} = \langle L_{\alpha}^{(IR)}\rangle\,.
\ee 
Introducing a half omega-background for the 4d theory on $S^1\times \IR^2\times \IR_{\geq 0}$, defined by twisting $\IR^2$ by a rotation along $S^1$ \cite{Moore:1997dj, Moore:1998et, Nekrasov:2002qd}, results in a quantization of the ring of line operators given by a quantum torus algebra
\be\label{eq:4d-q-torus}
	\CA_C^{IR}: \quad \hat x_\alpha \hat x_{\alpha'} = q^{\langle\alpha,\alpha'\rangle} \hat x_{\alpha+\alpha'}\,.
\ee

When $T[C]$ is formulated on a manifold with boundary, the boundary conditions for bulk fields induce relations among the line operators, encoded by a certain ideal $\CI$.
As a result, the algebra of line operators for the boundary 3d QFT takes the structure of a module over the algebra of 4d line operators \cite{Dimofte:2011ju}. 

In our case, we have two copies of the 4d theory $T[C]$, taken in distinct duality frames, as discussed in Section \ref{sec:4d-duality-frames}, and glued together by a domain wall theory $T[M_0]$. 
We can view shear coordinates on $\partial M$ as coordinates on $\CP_C\times \overline{\CP}_C$,\footnote{
$\overline{\mathcal{P}}_C$ has the opposite symplectic structure compared to $\mathcal{P}_C$, since the two components of the boundary come with opposite orientations.} whose quantization gives \emph{two} copies of the algebra of line operators of $T[C]$
\be\label{eq:quantization-CxC}
	\CP_{C}\times \overline{\CP}_C\quad\rightsquigarrow\quad \CA_C^{IR}\otimes \CA_C^{IR,\mathrm{op}}\,.
\ee
The manifold $M_0$ defines a bimodule for this algebra as follows. 

\paragraph{Before quantization.}
At $q=1$, each of the boundary shear coordinates corresponds to a certain linear combination of tetrahedron edge coordinates. 
Recall that the latter obey a 3-term relation, such as
\be
1 - e^{\Xi} - e^{\Xi'} = 0
\ee
or, equivalently, an analogue of this equation obtained by a $\IZ_3$ permutation $\Xi\to\Xi'\to \Xi''\to\Xi$ rotating the three choices of polarization in \eqref{eq:delta-pol}. 
With $2n$ tetrahedra without internal edges in $M_0$, this leaves exactly $2n$ independent coordinates. 
Through the relation between bulk and edge coordinates, this defines a Lagrangian submanifold of the boundary character variety: 
\be\label{eq:M0-Lag}
	\CL_{M_0} : = \{1 - e^{-U_i} - e^{U_i''} = 0\}_{i=1}^{n} \ \cap\ \{1 - e^{-V_j} - e^{V_j''} = 0\}_{j=1}^{n} \subset \CP_{C}\times \overline{\CP}_{C}.
\ee
Geometrically, this construction is known as a Lagrangian correspondence, and in our case it describes flat connections that extend from one boundary component to the other throughout the bulk of $M_0$. 
The change of polarization \eqref{eq:tetrahedra-to-quiver} allows us to rewrite \eqref{eq:M0-Lag} in terms of variables 
\be\label{eq:xi-yi-def}
	x_i := e^{X_i}\qquad
	y_i:= e^{Y_i}
\ee
adapted to the quiver description $T[Q]\simeq T[M_0]$, which reads
\be
	\CL_{M_0} : = \{1 - y_i - x_i y_{i-1} y_{i+1} = 0\}_{i=1}^{2n}\,.
\ee
The polynomials that define the Lagrangian correspondence coincide exactly with the quiver $A$-polynomials introduced in \cite{Ekholm:2018eee, Panfil:2018faz}
\be\label{eq:classical-Ai}
	A_i = 1- y_i - x_i(-y_i)^{Q_{ii}} \prod_{j\neq i} y_j^{Q_{ij}}\,,\qquad i=1,\dots, 2n
\ee
with the adjacency matrix given by \eqref{eq:3d-quiver-adjacency-matrix} in our case. 
From the viewpoint of $T[Q]$, the $A_i$ determine the twisted chiral ring of the 3d theory on $S^1$ (viewed as a Kaluza-Klein $(2,2)$ QFT). 

\paragraph{Deformation quantization.}
Next, we discuss \eqref{eq:quantization-CxC} with $q=e^\hbar\neq 1$.
Recall from \eqref{eq:exp-x-alpha} and \eqref{eq:xi-yi-def} that $x_\alpha$ and $x_i, y_i$ are related to $X_\alpha$ and $X_{i}, Y_{i}$, and that the latter are related to each other via the bulk-boundary relations \eqref{eq:boundary-shear-XY} at $C_\pm$, preserving the Poisson (symplectic when $m=2n$) structure. 
Deformation quantization promotes these two sets of coordinates to quantum torus algebra operators. 

On the one hand, we have the standard quantization of Darboux coordinates: 
\be\label{eq:canonical-quantization}
	\{Y_i,  X_j\} = \delta_{ij} \quad\mapsto\quad [\hat Y_i,\hat X_j] = \hbar\delta_{ij}
\ee
which leads to a quantum torus algebra with generators 
\be
	\hat x_i := e^{\hat X_i}\,,\qquad
	\hat y_i := e^{\hat Y_i}
\ee
obeying relations 
\be\label{eq:xi-hat-yi-hat-def}
	\hat y_i \hat x_j = q^{2 \delta_{ij}} \hat x_j\hat y_i\,,
	\qquad 
	[\hat x_i, \hat x_j] = [\hat y_i, \hat y_j] =0\,.
\ee
This is the quantum torus algebra associated with a 3d $\mathcal{N}=2$ symmetric quiver introduced in \cite{Ekholm:2019lmb}.
Note that it coincides with $2n$ mutually commuting copies of the quantum torus algebra \eqref{eq:quantum_torus_algebra}. 

On the other hand, the Poisson structure (\ref{eq:Poisson-structure-alpha-I}--\ref{eq:Poisson-structure-alpha-III}) leads to the quantum algebra of 4d IR line operators \eqref{eq:4d-q-torus}. 
The common origin and compatibility of these algebras play a central role in the 3d-4d correspondence between theories and related quivers.

In addition to a quantization of the algebra of functions over $\mathcal{P}_C\times \overline{\mathcal{P}}_C$, there is also a specific quantization of the Lagrangian submanifold defined by \eqref{eq:classical-Ai}, determined by the quiver motivic generating series (as will be explained shortly), given by \cite{Ekholm:2019lmb}
\be\label{eq:A-hat-quiver}
	\hat A_i = 1- \hat y_i - \hat x_i(-q \hat y_i)^{Q_{ii}} \prod_{j\neq i} \hat y_j^{Q_{ij}} \,,\qquad i=1,\dots, 2n\,. 
\ee
Taken together, these generate an ideal of $\CA^{IR}_{C}\otimes\CA^{IR,\mathrm{op}}_{C}$ that defines the $q$-difference $D$-module 
\be
	\CI^Q \ :=\  \CA^{IR}_{C}\otimes\CA^{IR,\mathrm{op}}_{C} / \{\hat A_i = 0\}\,.
\ee
The generator of this ideal is the motivic generating series of the symmetric 3d quiver $Q$ \cite{2011arXiv1103.2736E}
\be\label{eq:Efimov-PQ}
	P_{Q}(x_1,\dots,x_{2n},q)=\sum_{d_1,\dots,d_{2n} = 0}^{\infty} (-q)^{\sum_{i,j=1}^{2n} d_i Q_{ij}  d_j} \prod_{i=1}^{2n}\frac{x_i^{d_i}}{(q^2; q^2)_{d_i}}\,.
\ee
This has a physical interpretation as the $S^1\times \IR^2$ vortex partition function of the domain-wall theory $T[Q]\simeq T[M_0]$ \cite{Ekholm:2018eee}:
\be
	P_{Q}(x_1,\dots,x_{2n},q) = Z_{\text{vortex}}[T[Q]]\,.
\ee

\subsection{\texorpdfstring{$\mathfrak{gl}_1$}{gl1} skein module of \texorpdfstring{$L_0$}{L0}, quantum torus algebras, and \texorpdfstring{$P_Q$}{PQ} as a wavefunction}\label{sec:embedding}

 In this section, we relate line operators in 3d and 4d theories, which forms the algebraic counterpart of the symmetrization map $\mathfrak{S}$.
\newline 

Viewing $L_0 = \Sigma \times I$ as a bordism from $\Sigma_+$ to $\Sigma_-$, a TQFT (functor) defines an operator represented mathematically by an element of the skein algebra $\SkAlg_q^{\mathfrak{gl}_1}(\Sigma)$ acting on a Hilbert space associated to~$\Sigma$.
At the same time, recall from Section \ref{Topology} 
that $\SkAlg_q^{\mathfrak{gl}_1}(\Sigma)$ is 
a $\SkAlg_q^{\mathfrak{gl}_1}(\Sigma_+)$-$\SkAlg_q^{\mathfrak{gl}_1}(\Sigma_-)$ bimodule in its own right. 
This perspective leads to a natural realization of the quantum torus algebra \eqref{eq:xi-hat-yi-hat-def} of a~symmetric quiver, as we describe next.

As before, we consider $A_m$ Argyres-Douglas theory with $m=2n$ or $m=2n+1$. Since there are minimal differences between the two cases we focus on $m=2n$ here. 
The Riemann surface $\Sigma$ corresponding to the Seiberg-Witten curve of $A_{2n}$ Argyres-Douglas theory defines a physical charge lattice of rank $b_1(\Sigma)=2n$. 
It follows that 
\be
	b_1(L_0) = b_1(\Sigma\times I) = 2n\,,\qquad
	b_1(\partial L_0)  = b_1(\Sigma_{+}\sqcup\Sigma_{-}) = 4n\,.
\ee
The $\mathfrak{gl_1}$ skein module of $L_0 \cong \Sigma \times I$ (see Definition \ref{defn:gl1_skein}) is, as an $R = \IZ[q,q^{-1}]$-module, isomorphic to $R[H_1(\Sigma,\IZ)]$:
\be
	\Sk_q^{\mathfrak{gl}_1}(L_0) \simeq R[x_{1}^{\pm 1},\dots x_{2n}^{\pm1}]\,,
\ee
where $x_1, \cdots, x_{2n}$ are a choice of basis of $H_1(\Sigma, \mathbb{Z})$. 
This carries a natural bimodule structure for the algebra
\be
    \SkAlg_q^{\mathfrak{gl}_1}(\partial L_0)\simeq \SkAlg_q^{\mathfrak{gl}_1}(\Sigma) \otimes \SkAlg_q^{\mathfrak{gl}_1}(\Sigma)^{\mathrm{op}}.
\ee
To describe this, we introduce a basis of operators defined by inserting oriented curves on $\Sigma_+$ and $\Sigma_-$. 
We shall denote, as in Figure \ref{fig:A2n-M0}, 
\be
\begin{split}
	\{\alpha_i\}_{i=1}^{2n} 
	&\qquad \text{basis of cycles on $\Sigma_-$ (the bottom boundary)},
	\\
	\{-\alpha_i\}_{i=1}^{2n} 
	&\qquad \text{basis of cycles on $\Sigma_+$ (the top boundary)}.
\end{split}
\ee
Each of these $4n$ cycles is dual to an internal edge of the triangulations on $C_\pm$, and we denote by $X_{\alpha}$ the corresponding shear coordinates on $\CP_{C}\times \overline{\CP}_{C}$. 
Inverting \eqref{eq:boundary-shear-XY}, we can express the Darboux basis 
% \eqref{eq:surface-to-quiver-odd}-\eqref{eq:surface-to-quiver-even} together with
%\be
%	X_{\alpha_{2i}} = U_i
%	\qquad 
%	X_{-\alpha_{2i-1}} = V_i
%\ee
%we can deduce the change of basis from the surface shear coordinates to the Darboux basis 
\eqref{eq:tetrahedra-to-quiver}
in terms of boundary shear coordinates $X_{\alpha_i}$ and $X_{-\alpha_i}$: 
%\be
%	X_{2i-1} \equiv X_{\alpha_{2i-1}} \,,
%	\qquad
%	X_{2i} \equiv X_{-\alpha_{2i}} \,,
%\ee
%their duals are 
%\be\label{eq:Y-i-def}
%\begin{split}
%	Y_{2i-1} & = \sum_{j=1}^{n-i+1} (-1)^j (X_{\alpha_{2j}} + X_{-\alpha_{2j}}) \\ 
%	Y_{2i} & = \sum_{k=1}^{i} (-1)^{k+i+1} (X_{\alpha_{2k-1}} + X_{-\alpha_{2k-1}}) \\ 
%\end{split}
%\ee
\be\label{eq:Y-i-def}
\begin{split}
	X_{2i-1}  = X_{\alpha_{2i-1}} \,,
	\qquad&
	Y_{2i-1}  = \sum_{j=1}^{n-i+1} (-1)^{j+1} (X_{\alpha_{2j}} - X_{-\alpha_{2j}}) \,,\\ 
	X_{2i} = X_{-\alpha_{2i}} \,,
	\qquad&
	Y_{2i} = \sum_{j=1}^{i} (-1)^{j+i+1} (X_{\alpha_{2j-1}} - X_{-\alpha_{2j-1}})\,. \\ 
\end{split}
\ee
%Taking the quantization  \eqref{eq:quantization-CxC} by 
%\be\label{eq:canonical-quantization}
%	\{Y_i,  X_j\} = \delta_{ij} \quad\mapsto\quad [\hat Y_i,\hat X_j] = \hbar\delta_{ij}
%\ee
%defines a quantum torus algebra with generators 
%\be
%	\hat x_i := e^{\hat X_i}\,,\qquad
%	\hat y_i := e^{\hat Y_i}
%\ee
%obeying relations \eqref{eq:xi-hat-yi-hat-def} with $q=e^{\hbar}$. 
Relations (\ref{eq:canonical-quantization}--\ref{eq:xi-hat-yi-hat-def}) give an explicit isomorphism
\be\label{eq:q-torus-vs-skein}
	\CA_C^{IR}\otimes \CA_C^{IR,\mathrm{op}} \simeq \SkAlg_q^{\mathfrak{gl}_1}(\Sigma) \otimes \SkAlg_q^{\mathfrak{gl}_1}(\Sigma)^{\mathrm{op}},
\ee
thanks to the identification of $X_i, Y_i$ as insertions along cycles of $\Sigma_\pm$ given by (\ref{eq:surface-to-quiver-odd}--\ref{eq:surface-to-quiver-even}) and \eqref{eq:Y-i-def}; see Figure \ref{eq:L0-bimodule} for an example. 

Next, we give a concrete description of the (bi-)module structure of $\Sk_{q}^{\mathfrak{gl}_1}(L_0)$. 
We define the state associated with the trivial bordism: 
\be
	|\underbrace{0,\dots, 0}_{2n}\rangle\,. 
\ee
The $\hat x_i$ act by inserting lines from below (for $i$ odd) or from above (for $i$ even); see Figure \ref{fig:L0-sk-mod}. 
We generate a basis for $\Sk_q^{\mathfrak{gl}_1}(L_0)$ by their action, defined as
\be\label{eq:x_i_action_on_vector}
	\hat{x}_{i}|d_{1},\dots,d_{i},\dots,d_{2n}\rangle=  |d_{1},\dots,d_{i}+1,\dots,d_{2n}\rangle\,.
\ee
The $\hat y_i$ operators correspond to simultaneous insertions \emph{both} at the top and at the bottom \eqref{eq:Y-i-def}. 
Using the skein algebra \eqref{eq:xi-hat-yi-hat-def}, they act on our basis diagonally, as follows:\footnote{Note that we are choosing a certain polarization for the presentation of the skein module. 
A dual choice could have been, for instance:
\begin{equation}\nonumber
\begin{aligned}\hat{x}_{i}|\tilde{d}_{1},\dots,\tilde{d}_{i},\dots,\tilde{d}_{2n}\rangle= & q^{-2d_{i}}|\tilde{d}_{1},\dots,\tilde{d}_{i},\dots,\tilde{d}_{2n}\rangle,\\
\hat{y}_{i}|\tilde{d}_{1},\dots,\tilde{d}_{i},\dots,\tilde{d}_{2n}\rangle= & |\tilde{d}_{1},\dots,\tilde{d}_{i}+1,\dots,\tilde{d}_{2n}\rangle.
\end{aligned}
\end{equation}
}
\be\label{eq:y_i_action_on_vector}
	\hat{y}_{i}|d_{1},\dots,d_{i},\dots,d_{2n}\rangle= q^{2d_{i}}|d_{1},\dots,d_{i},\dots,d_{2n}\rangle.
\ee
\begin{figure}[h!]
\begin{center}
	\begin{subfigure}[b]{0.45\textwidth}
     	\includegraphics[width=\textwidth]{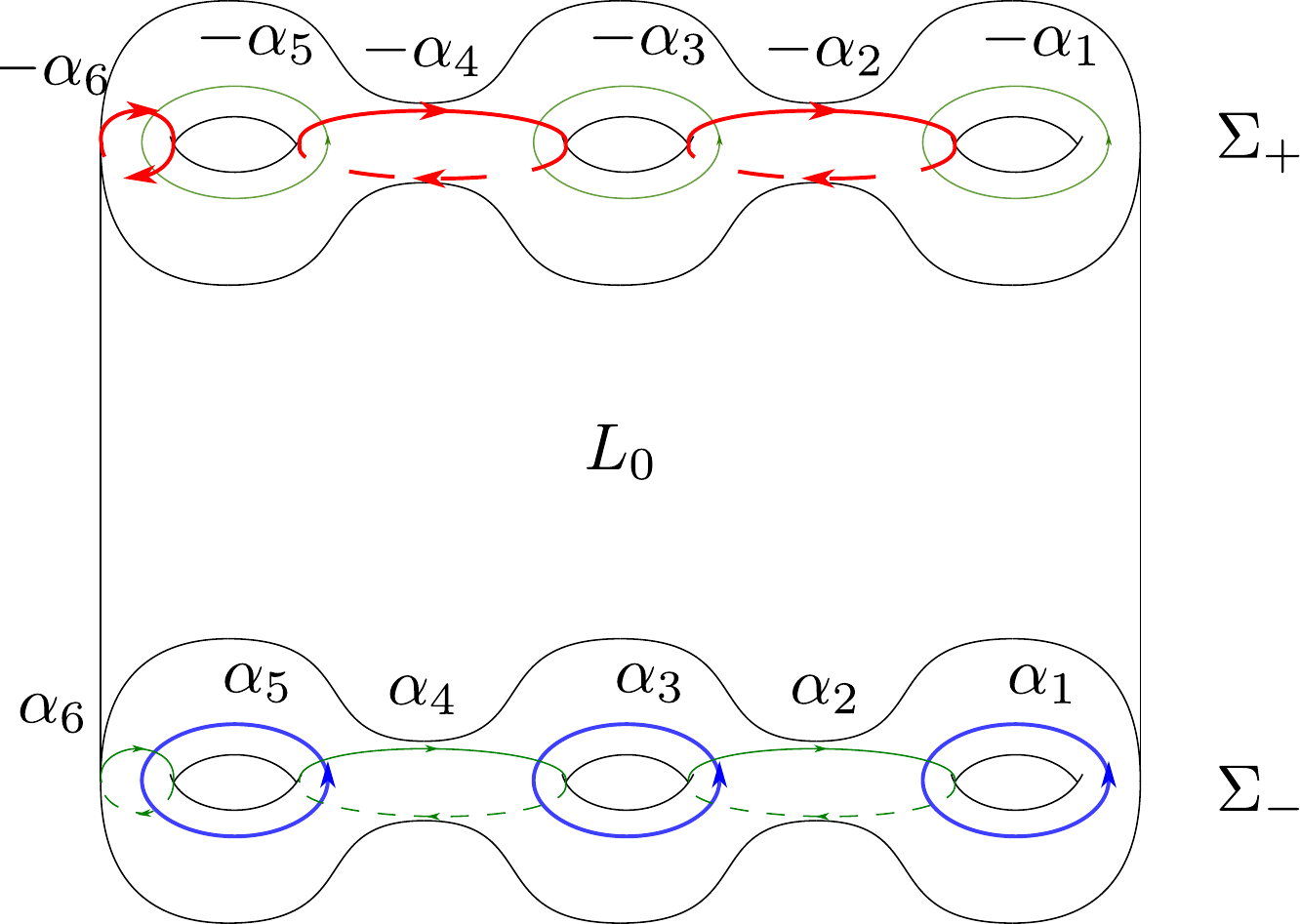}
      	\caption{Boundary curves on $L_0$. Highlighted in blue and red are the generators corresponding to $X_{1}\dots X_{2n}$.}
	\label{fig:L0-sk-mod}
    	\end{subfigure}
	\hfill
	\begin{subfigure}[b]{0.45\textwidth}
     	\includegraphics[width=\textwidth]{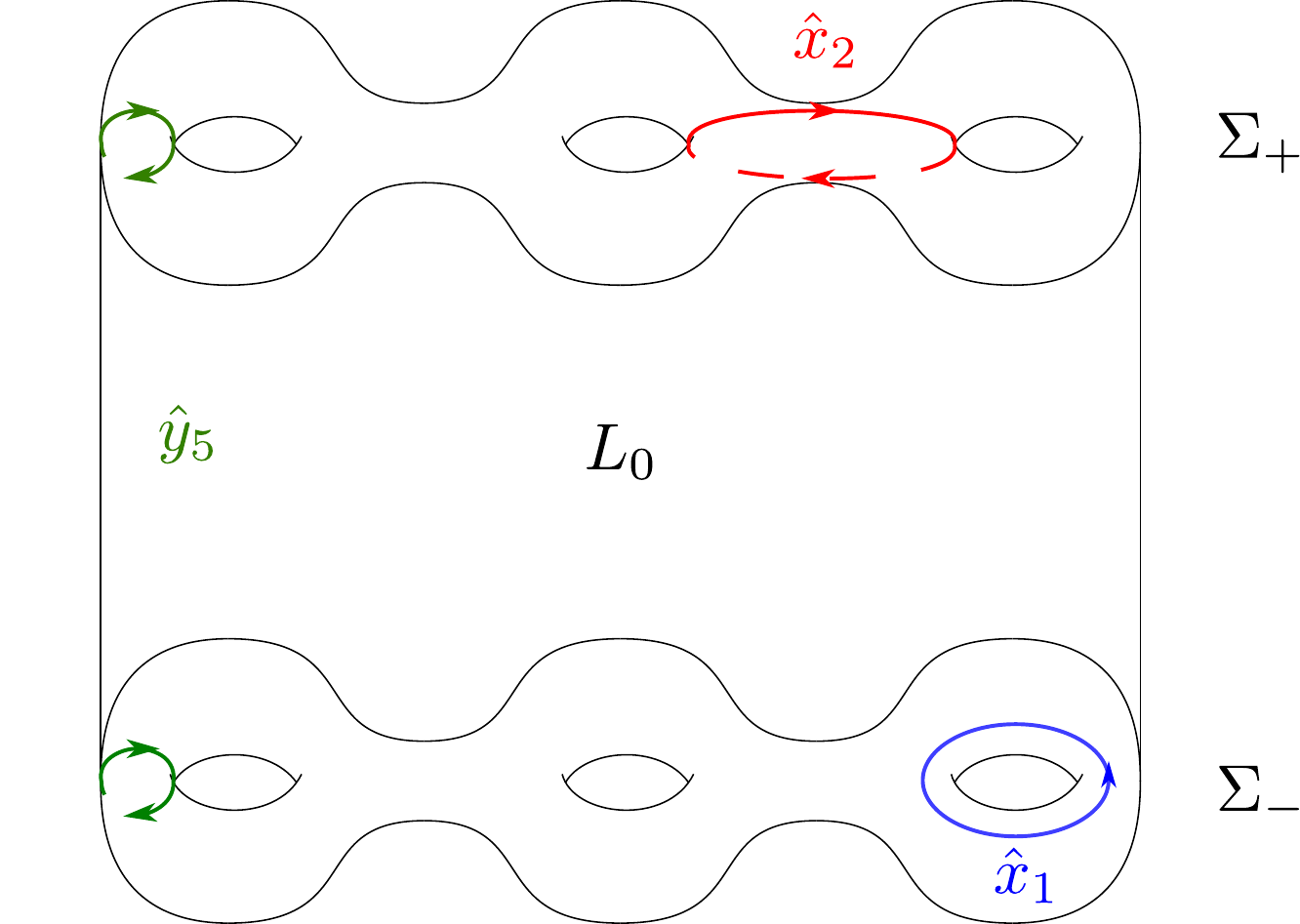}
     	\caption{Boundary curves corresponding to generators of the quiver quantum torus algebra $\hat x_i, \hat y_i$.}
	\label{fig:L0-sk-mod-quiver-variables}
    	\end{subfigure}
\end{center}
	\caption{}
	\label{eq:L0-bimodule}
\end{figure}
\paragraph{Relation to the motivic generating series of a symmetric quiver}

The collection of curves corresponding to boundaries of holomorphic disks in $L_0$ in the minimal chamber (Figure \ref{fig:A2n-M0}) defines the following element of the skein module: 
\be \label{eq:vector_for_symmetrization_of_A_2n}
\begin{split}
	|Q\rangle
	& := 
	\Psi\left(q^{-1}\hat{x}_{1}(-q\hat{y}_{1})^{Q_{11}}\hat{y}_{2}^{Q_{12}}\dots\hat{y}_{2n}^{Q_{1,2n}}\right)
	\times\dots\\
	&\dots\times 
	\Psi\left(q^{-1}\hat{x}_{i}(-q\hat{y}_{i})^{Q_{ii}}\hat{y}_{i+1}^{Q_{i,i+1}}\hat{y}_{i+2}^{Q_{i,i+2}}\dots\hat{y}_{2n}^{Q_{i,2n}}\right)\times\dots\\
	& \dots\times 
	\Psi\left(q^{-1}\hat{x}_{2n-1}(-q\hat{y}_{2n-1})^{Q_{2n-1,2n-1}}\hat{y}_{2n}^{Q_{2n-1,2n}}\right)
	\Psi\left(q^{-1}\hat{x}_{2n}(-q\hat{y}_{2n})^{Q_{2n,2n}}\right)
	\, |0,\dots,0\rangle\,.
\end{split}
\ee
We consider a representation of the $\mathfrak{gl}_1$ skein module over the field of formal series in $2n$ variables $x_1,\dots, x_{2n}$, defined in our basis by
\be\label{eq:wavefunction-rep}
	\langle x_1,\dots, x_{2n} | d_1, \dots, d_{2n}\rangle = x_1^{d_1}\dots x_{2n}^{d_{2n}}\,.
\ee
The use of normal ordering on the operators $\hat y_i$, as discussed in \cite{Ekholm:2019lmb}, shows that 
the skein element $|Q\rangle$ is mapped to the motivic Donaldson-Thomas partition function of the quiver $Q$ given by \eqref{eq:Efimov-PQ}
\be \label{eq:quiver_series_from_algebra}
	\langle x_1,\dots, x_{2n}|Q\rangle = P_Q(x_1,\dots, x_{2n},q) \,.
\ee

On the other hand, we may also consider capping off $L_0$ by gluing in handlebodies that fill in the~cycles
dual to $X_i$ -- namely, $\alpha_{2i-1}$ from the bottom surface and $\alpha_{2i}$ from the top one -- as discussed in Section~\ref{sec:Z-from-skein}. 
This defines an element in the dual vector space:
\be
	\langle \tilde 0,\dots,\tilde 0 |\,.
\ee
After gluing, we are left with an element of $\Sk^{\mathfrak{gl}_1}_q(S^3)$, which is the Nahm $q$-series \eqref{eq:Q-q-series}, and which coincides with \eqref{eq:Efimov-PQ} after setting $x_i=q$:\footnote{Note that for $A_{2n}$ quivers, $l_i$ and $Q_{ii}$ are even, so there are no minus signs in  \eqref{eq:Q-q-series} and \eqref{eq:Efimov-PQ}.} 
\be \label{eq:q_series_from_algebra}
	\langle \tilde 0,\dots,\tilde 0 | Q\rangle = \iota_*(Z)  = P_Q(q,\dots,q,q) \,.
\ee
 
%
%Recalling the definition
%of the quantum dilogarithm one can check that for any symmetric quiver
%$Q$ with adjacency matrix $C$, the motivic generating series can
%be reproduced by:
%\begin{equation}
%\begin{aligned}P_{Q}(\boldsymbol{x},q)=\langle\tilde{0},\dots,\tilde{0}| & 
%\Psi\left(q^{-1}\hat{x}_{1}(-q\hat{y}_{1})^{Q_{11}}\hat{y}_{2}^{Q_{12}}\dots\hat{y}_{m}^{Q_{1,m}}\right)\times\dots\\
%\dots\times & \Psi\left(q^{-1}\hat{x}_{i}(-q\hat{y}_{i})^{Q_{ii}}\hat{y}_{i+1}^{Q_{i,i+1}}\hat{y}_{i+2}^{Q_{i,i+2}}\dots\hat{y}_{m}^{Q_{i,m}}\right)\times\dots\\
%\dots\times & \Psi\left(q^{-1}\hat{x}_{m-1}(-q\hat{y}_{m-1})^{Q_{m-1,m-1}}\hat{y}_{m}^{Q_{m-1,m}}\right)\Psi\left(q^{-1}\hat{x}_{m}(-q\hat{y}_{m})^{Q_{mm}}\right)|0,\dots,0\rangle.
%\end{aligned}
%\label{eq:general_quiver}
%\end{equation}
%This matrix element is a realization of the normal ordering operation
%introduced in {[}EKL1910{]}.

\paragraph{Homomorphism between algebras of ranks \texorpdfstring{$2n$}{2n} and \texorpdfstring{$4n$}{4n}}

Relations \eqref{eq:boundary-shear-XY} define embeddings
\be
	\iota_\pm : \CP_{C_\pm} \hookrightarrow \CP_{\partial M_0} \simeq \CP_C\times \overline{\CP}_{C}
\ee
by expressing boundary shear coordinates $X_{\alpha_{i}}$ on $\CP_{C_-}$ (respectively, $X_{-\alpha_i}$ on $\CP_{C_+}$) as functions of $X_i, Y_i$.
The quantization defined by \eqref{eq:canonical-quantization} promotes these to embeddings of the surface skein algebras of $\Sigma_\pm$ (each of rank $2n$, with generators $\hat x_{\mp \alpha_i}$) in the quantum torus algebra of the quiver (of rank $4n$, with generators $\hat x_i, \hat y_i$): 
\be
	\hat\iota_\pm : \SkAlg^{\mathfrak{gl}_1}_{q}(\Sigma_\pm) \hookrightarrow \mathrm{QAlg} := 
	R\langle \hat{x}_1^{\pm 1},\dots, \hat x_{2n}^{\pm 1}, \hat{y}_1^{\pm 1},\dots, \hat{y}_{2n}^{\pm 1}\rangle / 
	(\hat{y_i}\hat{x_j} = q^{2\delta_{ij}}\hat{x_j}\hat{y_i}). 
\ee
Concretely, we define the quantum uplift of the embeddings as follows: 
\be\label{eq:embedding_iota}
\begin{split}
	\hat\iota_+(\hat x_{\alpha_{2i}}) &= -q^{-1} \hat x_{2i} \hat y_{2i-1} \hat y_{2i+1}, \\
	\hat\iota_+(\hat x_{\alpha_{2i-1}}) &= -q^{-1} \hat x_{2i-1}. \\
\end{split}
\ee
These can be summarized as follows for all $i=1,\dots, 2n$:
\be
	\hat\iota_+(\hat x_{\alpha_{i}}) = -q^{-1} \hat x_{2i} \prod_{j=1}^{2n} \hat y_{j}^{[\langle\alpha_i,\alpha_j\rangle]_+},\\
\ee
where
\be
    [z]_+ := \left\{
    \begin{array}{l}
        z \qquad \text{if }z>0\,, \\
        0 \qquad \text{if }z\leq 0 \,.\\
    \end{array}
    \right.
\ee
It follows that
\be
	\hat\iota_+(\hat x_{\alpha_i})
	\hat\iota_+(\hat x_{\alpha_j}) = q^{2\langle\alpha_i,\alpha_j\rangle} 
	\hat\iota_+(\hat x_{\alpha_j})
	\hat\iota_+(\hat x_{\alpha_i})
\ee
in agreement with the algebra defined by the quantum torus relations \eqref{eq:4d-q-torus}. 
Note that the intersection matrix \eqref{eq:4d-Dirac-pairing} 
\be\label{eq:antisymmetrization_embedding}
	(Q_{4\text{d}})_{ij} \equiv \langle\alpha_i,\alpha_j\rangle 
	= \Qsym{Q_{4\text{d}}}{\text{min}}_{ij} \equiv (-1)^{i}Q_{ij} = \left(\begin{array}{ccccc}
	0 & -1 & & & \\
	1 & 0 & 1 & & \\
	& -1 & 0 & -1 & \\
	& & & \ddots & \\
	\end{array}\right)
\ee
coincides with an antisymmetrization of the quiver adjacency matrix \eqref{eq:3d-quiver-adjacency-matrix}, as predicted on physical grounds in \eqref{eq:Dirac-pairing-symmetrization}.

\subsection{Generalization to arbitrary symmetric quivers}\label{sec:generalization}

In Section \ref{Topology}, we started from an ideally triangulated Riemann surface, discussed skein algebras together with their modules, and reached (\ref{eq:Q-q-series}--\ref{eq:matix_element_from_topology}), the $q$-series that arises as an evaluation of the motivic generating series of a symmetric quiver $Q$ with $x_i=(-1)^{l_i}q$. 
In Section \ref{sec:3d4d} we introduced an M-theory system that provides an interplay between 3d and 4d theories and focused on the case of $A_{m}$ Argyres-Douglas theories. 
We then followed a path from the 4d antisymmetric quiver to the 3d symmetric quiver, which was expressed in algebraic language in Sections \ref{sec:line_operators}--\ref{sec:embedding}. 
This allowed us to obtain the motivic generating series \eqref{eq:quiver_series_from_algebra} as well as its evaluation  \eqref{eq:q_series_from_algebra} from first principles of M-theoretic engineering of $A_{m}$ Argyres-Douglas theories. 
In this section, we will follow this path in reverse, starting from an arbitrary 3d symmetric quiver and trying to find the most general statements -- and to understand the limitations -- of the algebraic language introduced above. 
At this level of generality, the topological and geometric interpretations are still unknown, and we consider them to be exciting directions for future research.

\paragraph{Quantum torus algebras of rank $2m$ and their modules}
In Section \ref{sec:embedding}, we showed that the geometric M-theory construction of domain walls for $A_m$ Argyres-Douglas theories provides a quantum torus algebra \eqref{eq:q-torus-vs-skein} with an associated module (\ref{eq:x_i_action_on_vector}-\ref{eq:y_i_action_on_vector}), which contains an element $|Q\rangle$, such that $P_{Q}(\boldsymbol{x},q)$ can be reproduced as an appropriate representation of $|Q\rangle$, as in \eqref{eq:quiver_series_from_algebra}. 
Our goal here is to show that this algebraic construction generalizes in a straightforward way to arbitrary symmetric quivers, regardless of whether they arise from geometric data or not. 
The construction proceeds in exactly the same way, just in reverse, but we sketch the main steps for clarity. 

Consider the quantum torus algebra of rank $2m$ generated by operators $\{\hat{x}_{i}^{\pm 1},\hat{y}_{i}^{\pm 1}\}_{i=1,2,\dots,m}$ that satisfy the commutation relations \eqref{eq:xi-hat-yi-hat-def}. 
In analogy to (\ref{eq:x_i_action_on_vector}-\ref{eq:y_i_action_on_vector}), we also consider the module spanned by vectors $|d_{1},d_{2},\dots,d_{m}\rangle$, $d_{i}\in\mathbb{Z}$, for which
\begin{equation}\label{eq:right_module}
\begin{aligned}\hat{x}_{i}|d_{1},\dots,d_{i},\dots,d_{m}\rangle= & |d_{1},\dots,d_{i}+1,\dots,d_{m}\rangle,\\
\hat{y}_{i}|d_{1},\dots,d_{i},\dots,d_{m}\rangle= & q^{2d_{i}}|d_{1},\dots,d_{i},\dots,d_{m}\rangle.
\end{aligned}
\end{equation}

For an arbitrary symmetric quiver $Q$, we define an element of the module given by the adjacency matrix of the quiver: 
\begin{equation}
\begin{aligned}
|Q\rangle 
= & \Psi\left(q^{-1}\hat{x}_{1}(-q\hat{y}_{1})^{Q_{11}}\hat{y}_{2}^{Q_{12}}\dots\hat{y}_{m}^{Q_{1,m}}\right)\times\dots\\
\ldots\times & \Psi\left(q^{-1}\hat{x}_{i}(-q\hat{y}_{i})^{Q_{ii}}\hat{y}_{i+1}^{Q_{i,i+1}}\hat{y}_{i+2}^{Q_{i,i+2}}\dots\hat{y}_{m}^{Q_{i,m}}\right)\times\dots\\
\ldots\times & \Psi\left(q^{-1}\hat{x}_{m-1}(-q\hat{y}_{m-1})^{Q_{m-1,m-1}}\hat{y}_{m}^{Q_{m-1,m}}\right)\Psi\left(q^{-1}\hat{x}_{m}(-q\hat{y}_{m})^{Q_{mm}}\right)|0,\dots,0\rangle.
\end{aligned}
\label{eq:general_quiver}
\end{equation}
Introducing a wavefunction representation of the module, based on formal power series in variables $(x_1,\dots , x_{m})$ as in \eqref{eq:wavefunction-rep}, we obtain the motivic generating series as the wavefunction of $|Q\rangle$
\be
	\langle x_1,\dots, x_{m}|Q\rangle = P_Q(x_1,\dots, x_{m},q) \,.
\ee
This statement is formally identical to \eqref{eq:quiver_series_from_algebra}, but the construction of the algebra and of the vector $|Q\rangle$ now applies to arbitrary symmetric quivers.

\paragraph{3d-4d homomorphism}
In analogy to Section \ref{sec:embedding}, we can think of operators in \eqref{eq:general_quiver} as an algebra of rank $m$. 
More precisely, we define the \emph{3d-4d homomorphism} to be an embedding of quantum torus algebra of rank $m$, spanned by $X_{\alpha_{i}}$ that satisfy commutation relation \begin{equation}\label{eq:commutation_relation_quantum_torus_algebra}
X_{\alpha_{i}}X_{\alpha_{j}}=q^{2\langle\alpha_{i},\alpha_{j}\rangle}X_{\alpha_{j}}X_{\alpha_{i}}
\end{equation}
governed by pairing $\langle\alpha_{i},\alpha_{j}\rangle$, into the quantum torus algebra of rank $2m$, spanned by $\{\hat{x}_{i}^{\pm 1},\hat{y}_{i}^{\pm 1}\}_{i=1,2,\dots,m}$ that satisfy \eqref{eq:xi-hat-yi-hat-def}: 
\begin{equation}\label{eq:embedding_epsilon}
X_{\alpha_{i}}\overset{\epsilon}{\longmapsto}-q^{-1}\hat{x}_{i}(-q\hat{y}_{i})^{Q_{ii}}\hat{y}_{i+1}^{Q_{i,i+1}}\hat{y}_{i+2}^{Q_{i,i+2}}\dots\hat{y}_{m}^{Q_{i,m}}.
\end{equation}
The requirement of consistency between commutation rules \eqref{eq:xi-hat-yi-hat-def} and \eqref{eq:commutation_relation_quantum_torus_algebra} implies that the matrix $\langle\alpha_{i},\alpha_{j}\rangle$ is given by
\begin{equation}
\left[\begin{array}{ccccc}
0 & Q_{12} & Q_{13} & \dots & Q_{1m}\\
-Q_{12} & 0 & Q_{23} & \dots & Q_{2m}\\
-Q_{13} & -Q_{23} & \ddots & \ddots & \vdots\\
\vdots & \vdots & \ddots & 0 & Q_{m-1,m}\\
-Q_{1m} & -Q_{2m} & \dots & -Q_{m-1,m} & 0
\end{array}\right].\label{eq:antisymmetric_quiver}
\end{equation}
If we interpret $\langle\alpha_{i},\alpha_{j}\rangle$ as the intersection matrix of an antisymmetric quiver,\footnote{In this matrix $-Q_{ij}$ refers to the same arrows from $i$ to $j$ as $Q_{ij}$, but read as ``minus arrows from $j$ to $i$''.} then we can see that the direction of arrows matches the numbering of nodes: for $i < j$, each positive unit in $Q_{ij}$ corresponds to an arrow $i \longrightarrow j$. 
One can always define alternative embeddings, related to the numbering of nodes in $Q$ in a different way, but for clarity of presentation we will always use \eqref{eq:embedding_epsilon} and change the numbering of nodes if necessary. 

We can see that the 3d-4d homomorphism leads to the symmetrization of $\langle\alpha_{i},\alpha_{j}\rangle$ into $Q_{ij}$, with the numbers of loops being the parameters. 
On the other hand, the antisymmetric quiver can be understood as a description of how to build the symmetric quiver step by step, using the quantum dilogarithms. 
We can also combine $\epsilon$ with the knot-quiver correspondence \cite{KRSS1707short,KRSS1707long} and assign a quantum torus algebra of rank $m$ to a given knot.

\paragraph{$A_{2n}$ quivers} 
For $A_{2n}$ quivers, the construction discussed above with $m=2n$ and the one from Section \ref{sec:embedding} are equivalent. 
The only subtlety comes from the fact that the natural numbering of nodes in Figure \ref{fig:A2n-quiver} does not follow the direction of arrows. 
If we number nodes $\alpha_1,\alpha_3,\dots, \alpha_{2n-1}$ (sources) by $1,2,\dots,n$, and the nodes $\alpha_2,\alpha_4,\dots, \alpha_{2n}$ (sinks) by $n+1,n+2,\dots,2n$, we see that embeddings $\epsilon$ and $\hat{\iota}_+$ are the same, and that formulas (\ref{eq:embedding_epsilon}--\ref{eq:antisymmetric_quiver}) match perfectly with (\ref{eq:embedding_iota}--\ref{eq:antisymmetrization_embedding}).

\subsection{\texorpdfstring{$A_2$}{A2} quiver}\label{sec:A2_example}

In this section, we illustrate the construction presented in the previous one with the simplest nontrivial example of the symmetrization of the $A_2$ quiver.
\newline

Let us consider symmetric quiver $Q$ with two nodes and one pair of arrows (shown below), which is a~symmetrization of the 4d BPS quiver $Q_{4\text{d}}$ corresponding for the $A_{2}$ theory (see Figure \ref{fig:A2n-quiver}):
\begin{equation}\label{eq:sym_A2_quiver}
\Qsym{Q_{4\text{d}}}{\text{min}} = Q=\left[\begin{array}{cc}
0 & 1 \\
1 & 0 
\end{array}\right]\,
\qquad\qquad
\begin{tikzcd}
	\bullet & \bullet
	\arrow[curve={height=-6pt}, from=1-1, to=1-2]
	\arrow[curve={height=-6pt}, from=1-2, to=1-1]
\end{tikzcd}
\end{equation}
where the stability data for the minimal chamber are defined in Section \ref{sec:A2n-theory}. 
In this case, the motivic generating series is given by
\begin{equation}\label{eq:symA2_motivic_gen_series}
    P_Q(x_1,x_2,q)=\sum_{d_1,d_2\geq 0}(-q)^{2 d_1 d_2}\frac{x_1^{d_1}}{(q^{2};q^{2})_{d_1}}\frac{x_2^{d_2}}{(q^{2};q^{2})_{d_2}},
\end{equation}
and we would like to reproduce it as an expectation value of appropriate operators. 
In order to do so, we introduce a quantum torus algebra of rank $2m=4$ generated by $\{\hat{x}^{\pm1}_1,\hat{x}^{\pm1}_2,\hat{y}^{\pm1}_1,\hat{y}^{\pm1}_2\}$ that satisfy the following commutation relations:
\begin{equation}\label{eq:commutation_A2_rank_4}
    \hat{y}_1 \hat{x}_1 =q^2 \hat{x}_1 \hat{y}_1,\qquad \hat{y}_2 \hat{x}_2 =q^2 \hat{x}_2 \hat{y}_2 \,.
\end{equation}
The remaining commutation relations are trivial. 
We also consider the module spanned by vectors $|d_1,d_2\rangle$, $d_i\in \mathbb{Z}$, for which we have
\begin{equation}\label{eq:right_module_A2}
\begin{aligned}
\hat{x}_{1}|d_{1},d_{2}\rangle &= |d_{1}+1,d_{2}\rangle\,, & \hat{x}_{2}|d_{1},d_{2}\rangle &= |d_{1},d_{2}+1\rangle\,, \\
\hat{y}_{1}|d_{1},d_{2}\rangle &=  q^{2d_{1}}|d_{1},d_{2}\rangle\,, & \hat{y}_{2}|d_{1},d_{2}\rangle &=  q^{2d_{2}}|d_{1},d_{2}\rangle\,,
\end{aligned}
\end{equation}
%and the left module spanned by dual vectors\footnote{In this context we mean the left version of the dual basis $|\tilde{n}_1,\tilde{n}_2\rangle=\sum_{k_1,k_2\in\mathbb{Z}}q^{2d_1 k_1}q^{2n_2 k_2}|d_{1},d_{2}\rangle$  .} $\langle \tilde{n}_1,\tilde{n}_2|$, $\tilde{n}_i\in \mathbb{Z}$, for which we have
%\begin{equation}\label{eq:left_module_A2}
%\begin{aligned}
%\langle \tilde{n}_1,\tilde{n}_2|\hat{x}_{1} &= \langle \tilde{n}_1,\tilde{n}_2| q^{-2n_{1}} x_{1}\,, & \langle \tilde{n}_1,\tilde{n}_2|\hat{x}_{2} &= \langle \tilde{n}_1,\tilde{n}_2|q^{-2n_{2}} x_{2},, \\
%\langle \tilde{n}_1,\tilde{n}_2|\hat{y}_{1} &= \langle \tilde{n}_1-1,\tilde{n}_2|  \,, & \langle \tilde{n}_1,\tilde{n}_2| \hat{y}_{2} &=  \langle \tilde{n}_1,\tilde{n}_2-1| \,.
%\end{aligned}
%\end{equation}
%Using them in combination with \eqref{eqn:qdilog}, we can write
We define a distinguished element of this module: 
\be
	|Q\rangle = 
	\Psi\left(q^{-1}\hat{x}_{1}\hat{y}_{2}\right) \Psi\left(q^{-1}\hat{x}_{2}\right) |0,0\rangle 
	=
	\sum_{d_1,d_2\geq 0}(-q)^{2 d_1 d_2}\frac{1}{(q^{2};q^{2})_{d_1}}\frac{1}{(q^{2};q^{2})_{d_2}} \, |d_1, d_2\rangle\,,
\ee
where we used  \eqref{eqn:qdilog}.
Thanks to the explicit expansion in the basis $|d_1, d_2\rangle$, it is now obvious that the wavefunction representation defined by
\be
	\langle x_1, x_2| d_1, d_2\rangle = x_1^{d_1}x_2^{d_2}
\ee
gives the motivic generating series \eqref{eq:symA2_motivic_gen_series}:
\begin{equation}\label{eq:evaluation_of_Psis_A2_case}
    \langle x_1, x_2| Q\rangle = P_Q(x_1,x_2,q)\,.
\end{equation}

Inspired by operators in \eqref{eq:evaluation_of_Psis_A2_case}, we can define an embedding -- the 3d-4d homomorphism -- of the quantum torus algebra of rank $m=2$ spanned by $\{X_{\alpha_1},X_{\alpha_2}\}$, satisfying the commutation relation \eqref{eq:commutation_relation_quantum_torus_algebra}  governed by the pairing $\langle\alpha_i,\alpha_j \rangle$, into the quantum torus algebra of rank $4$ defined above:
\begin{equation}\label{eq:embedding_A2}
X_{\alpha_{1}}\overset{\epsilon}{\longmapsto}-q^{-1}\hat{x}_{1}\hat{y}_{2}\,,\qquad X_{\alpha_{2}}\overset{\epsilon}{\longmapsto}-q^{-1}\hat{x}_{2}\,.
\end{equation}
By combining \eqref{eq:embedding_A2} with \eqref{eq:commutation_A2_rank_4} and \eqref{eq:commutation_relation_quantum_torus_algebra}, we see that the pairing is given by
\begin{equation}
    \begin{aligned}
    \langle\alpha_1,\alpha_1 \rangle &= 0\,, & \langle\alpha_1,\alpha_2 \rangle &= 1\,, \\
    \langle\alpha_2,\alpha_1 \rangle &= -1\,, & \langle\alpha_2,\alpha_2 \rangle &= 0\,.
\end{aligned}
\end{equation}
We can interpret  $\langle\alpha_{i},\alpha_{j}\rangle$  as the intersection matrix of $A_2$ quiver: 
\[
Q_{4\text{d}} \ =\ \  \stackrel{\alpha_1}{\bullet}\ \xrightarrow{\qquad}\ \stackrel{\alpha_2}{\bullet}\,.
\]
Comparing this with \eqref{eq:sym_A2_quiver}, we confirm that $Q$ indeed coincides with its symmetrization.

\subsection{\texorpdfstring{$A_m$}{Am} quiver}

In this section, we generalize the previous one to any $A_m$ quiver and its symmetrization. 
In order to avoid repetition, we choose the linear orientation of arrows (in contrast to alternating orientation in Figure~\ref{fig:A2n-quiver}; note that the construction from Section \ref{sec:generalization} works for any orientation) and reverse the direction of the reasoning from Sections \ref{sec:generalization} and \ref{sec:A2_example}: we will begin with the antisymmetric quiver and proceed toward its symmetrization (which is in line with the direction in Sections \ref{sec:line_operators} and \ref{sec:embedding}). 
\newline

Following the plan stated above, our starting point is the $A_m$ quiver in linear orientation: 
\[
\stackrel{\alpha_1}{\bullet}\ \xrightarrow{\qquad}\ \stackrel{\alpha_2}{\bullet}\ \xrightarrow{\qquad}\ \cdots \ \xrightarrow{\qquad}\ \stackrel{\alpha_m}{\bullet}\;.
\]
In the next step, we consider a quantum torus algebra of rank $m$ spanned by $\{X_{\alpha_1},X_{\alpha_2},\dots,X_{\alpha_m}\}$, satisfying the commutation relation \eqref{eq:commutation_relation_quantum_torus_algebra}, governed by the pairing $\langle\alpha_i,\alpha_j \rangle$ given by the intersection matrix of~$A_m$: 
\begin{equation}
\left[\begin{array}{ccccc}
0 & 1 & 0 & \dots &0\\
-1 & 0 & 1 & \dots & 0\\
0 & -1 & \ddots & \ddots & \vdots\\
\vdots & \vdots & \ddots & 0 & 1\\
0 & 0 & \dots & -1 & 0
\end{array}\right]\,.\label{eq:antisymmetric_quiver_Am}
\end{equation}
Based on \eqref{eq:embedding_epsilon} with $Q_{ij}=|\langle\alpha_i,\alpha_j \rangle|$,\footnote{This implicitly imposes the choice of the number of loops: $Q_{ii}=0$, but one could consider other choices as well.} we define an embedding (3d-4d homomorphism) into the quantum torus algebra of rank $2m$  generated by operators $\{\hat{x}_{i}^{\pm 1},\hat{y}_{i}^{\pm 1}\}_{i=1,2,\dots,m}$ that satisfy \eqref{eq:xi-hat-yi-hat-def}:
\begin{equation}
X_{\alpha_{i}}\overset{\epsilon}{\longmapsto}
\begin{cases}
    -q^{-1}\hat{x}_{i}\hat{y}_{i+1}, &i=1,2,\dots,m-1, \\
    -q^{-1}\hat{x}_{i}, &i=m.
\end{cases}
\end{equation}
If we consider the module given by \eqref{eq:right_module} and the wavefunction representation \eqref{eq:wavefunction-rep}, we immediately obtain the motivic generating series of a quiver from Figure \ref{fig:A2n-sym-quiver}, which is the symmetrization of the initial~$A_m$ quiver: 
\begin{equation}
\begin{split}
    \langle x_1,\dots , x_m| \Psi\left(q^{-1}\hat{x}_{1}\hat{y}_{2}\right)&\Psi\left(q^{-1}\hat{x}_{2}\hat{y}_{3}\right)\dots \Psi\left(q^{-1}\hat{x}_{m-1}\hat{y}_{m}\right)\Psi\left(q^{-1}\hat{x}_{m}\right) |0,0,\dots,0\rangle = \\
   = \sum_{d_1,d_2,\dots, d_m\geq 0}&(-q)^{2 d_1 d_2 + 2 d_2 d_3 + \dots + 2d_{m-1} d_m}\frac{x_1^{d_1}}{(q^{2};q^{2})_{d_1}}\frac{x_2^{d_2}}{(q^{2};q^{2})_{d_2}}\dots \frac{x_m^{d_m}}{(q^{2};q^{2})_{d_m}}\,.
\end{split}
\end{equation}

%%%%%%%%%%%%%%%%%%%%%%%%%%%%%%
\section{The symmetrization map} \label{sec-wall-crossing}

In this section, we introduce our proposal for the symmetrization map (\ref{Sigma-maps}) which takes a 4d BPS quiver $Q_{4\text{d}}$ together with certain stability data $u$, and returns a 3d symmetric quiver $Q$:
\be
	\Qsym{Q_{4\text{d}}}{u} = Q\,.
\ee
In Section \ref{sec:3d4d}, we constructed this map in the special case where $Q_{4\text{d}}$ is a inward-outward linear quiver of an Argyres-Douglas theory of type $A_m$, and $u$ belongs to the minimal chamber. 
In that case, the symmetrization map is very simple: every vertex of $Q_{4\text{d}}$ maps to a vertex of $Q$ and every arrow is just doubled by a partner with opposite orientation. 

However, this simplicity hinges on the very specific choice of triangulations for $C$ and $M_0$, which is directly tied to the choice of a minimal chamber of the Coulomb branch of $T[C]$. 
Another limitation of the correspondence established so far is the implicit choice of a \emph{half} of the spectrum: while the BPS spectrum comes in CPT-conjugate pairs, the construction of $M_0$ involves assigning a tetrahedron only to half of the stable BPS states. 
In this section, we provide a general definition of the symmetrization map that extends far beyond the cases discussed so far. 
A key step towards this goal will be to establish a~precise relation between wall-crossing in 4d $\mathcal{N}=2$ theories and unlinking of 3d quivers, already hinted at in \cite{Ekholm:2019lmb}. 
%As the main result in this section, we will prove this correspondence for Argyres-Douglas theories of type $A_m$, and further elaborate on its structural properties.

%%%%%%%%%%%%%%%%%%%%%%%%%
\subsection{4d and 3d quivers across walls of the 1st and 2nd kind}\label{sec:mutations and wall-crossing to other chambers}

%%%%%%%%%%%%%%%%%%%%%%%%%%%%
%\subsubsection{Quiver mutations}

The description of 4d $\mathcal{N}=2$ BPS states via quivers involves two kinds of walls \cite{KS0811}. 
Walls of the ``first kind'' correspond to walls of marginal stability, i.e., real-codimension-one loci in the moduli space of a~theory (e.g., its Coulomb branch) across which the BPS spectrum jumps. 
Walls of the ``second kind'', instead, correspond to loci across which the imaginary part of the central charge of some BPS state changes sign: as this happens, the definition of particles vs. anti-particles changes, and the quiver description gets modified by a mutation \cite{Alim:2011ae}. 
While only walls of the first kind seem to play a~relevant role in connection to unlinking operations on 3d symmetric quivers, for completeness we discuss both, focusing on Argyres-Douglas theories of type $A_m$. 

\paragraph{Walls of the first kind.}
Moving to a chamber with a different number of BPS states does not change the 4d quiver (in general) but does change the triangulation $\tau_{M_0}$, thereby inducing a change in the symmetric quiver $Q$. 

In $A_{m}$ Argyres-Douglas theories, all wall-crossings are generated by the pentagon relation, which acts on $\tau_{M_0}$ by a 2-3 Pachner move. 
As shown in \cite{Dimofte:2011ju}, the Pachner move is directly related to the SQED/XYZ duality of 3d $\CN=2$ QFT. 
In \cite{Ekholm:2018eee, Ekholm:2019lmb}, it was further observed that this duality is also at play in the unlinking process of symmetric quivers. 
We therefore conclude that any change in the triangulation induced by 4d wall-crossing changes the 3d QFT description $T[M_0]$ in a way that is captured by the unlinking duality for the class of theories $T[Q]$. 

We can start in the minimal chamber and apply a number of pentagon relations to get to the maximal chamber.
%At every stage, we can also consider mutations which do not create a 3-cycle.\footnote{This technical condition also assures that the mutated 4d quiver would not have a superpotential -- it would be interesting to generalize our results in this Section by relaxing this condition.}
At the level of the quantum torus algebra, the structure of such wall-crossing is well-understood and is, in fact, captured by the poset structure of the associated root system \cite{reineke2011cohomology,kontsevich2014wall}.
Namely, the charge lattice for the ${A}_m$ quiver is isomorphic to the root lattice of the $A_m$ root system. 
On the other hand, such root systems possess a natural structure of partially ordered sets. 
We denote $\Phi^0(A_m)$, $\Phi^+(A_m)$ as the sets of simple and positive roots in the root system of the Dynkin quiver of type $A_m$. 
Simple roots $\alpha_i$, which correspond to charges $\alpha_i$, label the vertices of the Dynkin diagram. 
Elements of $\Phi^+(A_m)\setminus\Phi^0(A_m)$ are of the form $\alpha_{i}+\alpha_{i+1}+\dots+\alpha_{i+l},\;l\in \mathbb{Z}_+$ and correspond to quiver segments (i.e., a sequence of nodes connected by arrows) 
\begin{equation}\label{eq:A_m segments}
\begin{tikzcd}[sep=small]
	{\alpha_i} & {\alpha_{i+1}} & \cdots & {\alpha_{i+l}}
	\arrow[no head, from=1-1, to=1-2]
	\arrow[no head, from=1-2, to=1-3]
	\arrow[no head, from=1-3, to=1-4]
\end{tikzcd}
\end{equation}
with appropriate orientation of arrows. 
In order to make our language simpler and more intuitive, instead of calling them positive non-simple roots, we will refer to them as composite roots. 

\begin{dfn}\label{dfn:partial ordering of A_m roots}
For any set of positive roots associated with the quiver ${A}_m$, we define a~partial ordering $\preceq$ based on the arrows of ${A}_m$ as follows. 
For positive roots $\beta_i=\alpha_{i}+\alpha_{i+1}+\dots 
 +\alpha_{i+l_i}$ and $\beta_j=\alpha_{j}+\alpha_{j+1}+\dots 
 +\alpha_{j+l_j}$, we write $\beta_i\preceq \beta_j$ if and only if, for some $k_i\in\{0,1,\dots,l_i\}$ and $k_j\in\{0,1,\dots,l_j\}$ there is an arrow $\alpha_{i+k_i}\to \alpha_{j+k_j}$  in the quiver ${A}_m$. 
\end{dfn}

\begin{dfn}\label{dfn:initial operator}
    We call 
    $$\boldsymbol{\Psi}_{{A}_m}=\Psi(-X_{\alpha_{i_1}})\dots \Psi(-X_{\alpha_{i_m}})$$
    an initial operator that follows the orientation of arrows in ${A}_m$ if $\alpha_{i_1},\dots \alpha_{i_m}\in \Phi^0(A_m)$ and $(\alpha_{i_k}\preceq \alpha_{i_l} \Rightarrow k<l)$, i.e., if there is an arrow from $\alpha_{i_k}$ to $\alpha_{i_l}$, then $\Psi(-X_{\alpha_{i_k}})$ appears before $\Psi(-X_{\alpha_{i_l}})$ in $\boldsymbol{\Psi}_{{A}_m}$. 
    
    We say that a sequence of pentagon relations follows the orientation of arrows in ${A}_m$ if the pentagon relation
    \[
    \boldsymbol{\Psi} = \dots \Psi(-X_{\beta_i}) \Psi(-X_{\beta_j}) \dots = \dots \Psi(-X_{\beta_j}) \Psi(-X_{\beta_i+\beta_j}) \Psi(-X_{\beta_i})  \dots
    \]
    requires $\beta_i\preceq \beta_{j}, \beta_i\nsucceq \beta_{j}$. 

    We call a sequence of pentagon relations that follow the orientation of arrows in ${A}_m$ \emph{maximal} if no further pentagon relations are allowed by the ordering $\preceq$, i.e., the final form of $\boldsymbol{\Psi}$ does not contain $\dots\Psi(-X_{\beta_i}) \Psi(-X_{\beta_j})\dots$ for which $\beta_i\preceq \beta_{j}$ and $\beta_i\nsucceq \beta_{j}$. 
\end{dfn}

Having said the above, the wall-crossing formula can be written as the following identity in $A_m$~quantum torus algebra (which follows directly from the pentagon relation): 
\begin{equation}\label{eq:Reineke_WCF}
    \prod^\curvearrowleft_{\alpha\in\Phi^0} \Psi(-X_{\alpha})\, = \prod^\curvearrowright_{\beta\in \Phi^+} \Psi(-X_{\beta}),
\end{equation}
where the products are taken over simple and positive roots $\alpha_i$ and $\beta_j$ of $A_m$ root system, respectively. 
The partial ordering of $\alpha$'s (``$\curvearrowleft$'', i.e., ordered from left to right by decreasing index) and $\beta$'s (``$\curvearrowright$'', i.e., ordered from left to right by increasing index) is fixed by Definition \ref{dfn:partial ordering of A_m roots}. 
\begin{rmk}
    Note that in $A_m$ Argyres-Douglas theories, (\ref{eq:Reineke_WCF}) relates the minimal and maximal BPS chambers. 
    All intermediate chambers, which are described by an orientation of $A_m$ quiver, are obtained by applying some number of pentagon relations to the right-hand side of (\ref{eq:Reineke_WCF}), where the ordering of $\alpha$'s on the right-hand side is only partially reversed. 
    We will discuss this in more detail when we present specific examples. 
\end{rmk}
\begin{rmk}
More general definition of the partial ordering from Definition \ref{dfn:partial ordering of A_m roots}, which is used, for example in \cite{reineke2011cohomology}, can be described in terms of quiver representation spaces associated to $Q_{4\text{d}}$: 
\begin{equation}
\begin{aligned}
   \alpha_j\rightarrow \alpha_i\ \qquad \quad \quad \iff & \quad
   \alpha_i \preceq \alpha_j \\
    \mathrm{Hom}(V(\beta_j),V(\beta_i)) = 0 \quad \iff & \quad \beta_i \preceq \beta_j\,.
\end{aligned}    
\end{equation}
Here, $\mathrm{Hom}(V(\beta_j),V(\beta_i))$ is the vector space of all morphisms from the indecomposable representation $V(\beta_j)$ to $V(\beta_i)$ of the corresponding Dynkin quiver. 
For example, for the ${A}_2$ quiver $\stackrel{\alpha_1}{\bullet}\ \xrightarrow{\qquad}\ \stackrel{\alpha_2}{\bullet}$, we have: 
\[
V(\alpha_1)=\mathbb{C} \longrightarrow 0,\quad 
V(\alpha_2)=0 \longrightarrow \mathbb{C},\quad 
V(\alpha_1+\alpha_2)=\mathbb{C} \longrightarrow \mathbb{C}\;.
\]
We can represent morphisms in $\mathrm{Hom}(V(\alpha_1),V(\alpha_1+\alpha_2))$ and $\mathrm{Hom}(V(\alpha_1+\alpha_2),V(\alpha_2))$ with the following commutative diagrams: 
\begin{equation}
    \begin{tikzcd}
	{\mathbb{C}} & 0 & {\mathbb{C}} & {\mathbb{C}} \\
	{\mathbb{C}} & {\mathbb{C}} & 0 & {\mathbb{C}}
	\arrow[from=1-1, to=1-2]
	\arrow[from=1-1, to=2-1]
	\arrow[from=1-2, to=2-2]
	\arrow[from=1-3, to=1-4]
	\arrow[from=1-3, to=2-3]
	\arrow[from=1-4, to=2-4]
	\arrow[from=2-1, to=2-2]
	\arrow[from=2-3, to=2-4]
\end{tikzcd}.
\end{equation}
Commutativity implies that the vertical arrows are zero, and therefore $\alpha_2 \preceq \alpha_1+\alpha_2 \preceq \alpha_1$. 
In turn, this dictates how the factors on the right-hand side of (\ref{eq:Reineke_WCF}) are ordered. 
As a result, the wall-crossing formula relating the minimal and maximal chambers of ${A}_2$ quiver is given by the pentagon relation:
\begin{equation}
    \Psi(-X_{\alpha_1})\Psi(-X_{\alpha_2}) = \Psi(-X_{\alpha_2})\Psi(-X_{\alpha_1+\alpha_2})\Psi(-X_{\alpha_1}). 
\end{equation}
%%%%
\end{rmk}

The main question we pose in this section is: what happens on the 3d side of (\ref{eq:Reineke_WCF})?\footnote{The exact same question is studied for wild Kronecker quivers in \cite{Bryan:2025mwi}.} 
To answer it, we will map quantum torus algebra operators $\Psi(-X_*)$ to their analogues for symmetric quivers using the 3d-4d homomorphism from \eqref{eq:embedding_epsilon}.

\paragraph{Walls of the second kind.}
The distinction between 4d $\mathcal{N}=2$ BPS particles and anti-particles can be formulated as a choice of a half-plane in the complex plane of BPS central charges. 
Suppose that we remain in the strong coupling region but slightly perturb the configuration of central charges away from the condition \eqref{eq:Z-condition} in a generic way. 
After resolving in this way, we may consider tilting the choice of half-plane until the edge crosses the BPS ray corresponding to the charge with highest phase -- say, the ray corresponding to $Z_{\alpha_1}$. 
The effect of this tilting is well known: it induces a mutation in the quiver~$Q_{4\text{d}}$
\be
	Q_{4\text{d}} \ \ \mapsto\ \  \mu_{\alpha_1} \circ Q_{4\text{d}}\,,
\ee
while on the triangulation $\tau_C$, it induces a flip of the edge corresponding to $\alpha_1$. 

To see how the 3d quiver changes, we need to recall that the 3-manifold $M_0$ is built from the spectrum of BPS particles.
Tilting the choice of half-plane changes the BPS spectrum: $\alpha_1$ exits the half-plane, but its CPT conjugate $-\alpha_1$, whose central charge is $Z_{-\alpha_1} = -Z_{\alpha_1}$, enters as the state with the lowest phase. 
This means that $M_0$ changes by removing a tetrahedron corresponding to $\alpha_1$ at the top, and adding a~tetrahedron corresponding to $-\alpha_1$ at the bottom. 
We define in this way the action of a mutation $\mu_{\alpha_1}$ on~$M_0$: 
\be
	M_0 \ \ \mapsto\ \  \mu_{\alpha_1} \circ M_0\,.
\ee

We also wish to ask what happens at the level of $Q$, the 3d quiver. 
The answer can be deduced by repeating the analysis that we performed in the construction of the theory $T[M_0]$, now applied to $T[\mu_{\alpha_1} \circ M_0]$.  
The resulting 3d theory is described by a new symmetric quiver $Q'$, which coincides with the symmetrization of $\mu_{\alpha_1} \circ M_0$.
In particular, the 3d quiver can again be read off from the general dictionary in \eqref{eq:geometric-quivers}, with vertices in one-to-one correspondence with tetrahedra and arrows given by internal triangles shared by $X_i$-edges of the polarization $\Pi'$. 

We emphasize here that we only consider mutations $\mu_{\alpha_i}$ which do not create a 3-cycle in the 4d quiver (in other words, they do not introduce a superpotential). 
Clearly, combining wall-crossing with mutations of the half-plane can lead to interesting generalizations. 
In particular, mutations of 4d BPS quivers in chambers other than the strong coupling chamber can produce quivers with loops that include superpotentials. 
It is conceivable that the resulting 3d quiver $Q$ would then also differ in structure, compared to the $A_{m}$ quiver that we obtained in the strong coupling chamber. 
We leave a systematic exploration of these questions to future work.

\subsection{From pentagon relation to unlinking via the 3d-4d homomorphism}\label{sec:pentagon to unlinking}

%We take our first step towards connecting wall-crossing and unlinking by analysing the pentagon relation.
In this section we take our first step towards a generalization of the symmetrization map $\mathfrak{S}$ outside the minimal chamber by analysing the pentagon relation.
We show how, for antisymmetric quivers in which different nodes are connected by at most one arrow (i.e., $\langle\alpha_{i},\alpha_{j}\rangle\leq1$),
the 3d-4d homomorphism defined in \eqref{eq:embedding_epsilon} maps the pentagon relation to the unlinking operation of \cite{Ekholm:2019lmb} on the same pair of nodes in the symmetrized quiver.\footnote{This includes quivers beyond the class of $A_m$ Argyres-Douglas theories -- in particular, one can consider other 4d quivers of finite mutation type coming from a triangulated surface. 
We also recall that for the $A_{2n}$ quivers $\epsilon=\hat{\iota}_+$ defined in \eqref{eq:embedding_iota}. 
}
\newline

Let us consider a quantum torus algebra of rank $m$ with commutation relation \eqref{eq:commutation_relation_quantum_torus_algebra}. If $\langle\alpha_{i},\alpha_{j}\rangle=0$, then
\begin{equation}
\Psi(-X_{\alpha_{i}})\Psi(-X_{\alpha_{j}})=\Psi(-X_{\alpha_{j}})\Psi(-X_{\alpha_{i}}),\label{eq:commutation}
\end{equation}
whereas for $\langle\alpha_{i},\alpha_{j}\rangle=1$, we have the pentagon relation:\footnote{Following the notation from (\ref{eq:general_quiver}), we assume that
$i<j$, which means that $\epsilon\left(\langle\alpha_{i},\alpha_{j}\rangle\right)=Q_{ij}$} 
\begin{equation}
\Psi(-X_{\alpha_{i}})\Psi(-X_{\alpha_{j}})=\Psi(-X_{\alpha_{j}})\Psi(-X_{\alpha_{i}+\alpha_{j}})\Psi(-X_{\alpha_{i}}).\label{eq:2-3_Pachner}
\end{equation}
The image of (\ref{eq:commutation}) under $\epsilon$ is trivial, but for (\ref{eq:2-3_Pachner}), we have:
\begin{align}\label{eq:LHS}
\epsilon\Big(\Psi(-X_{\alpha_{i}}) & \Psi(-X_{\alpha_{j}})\Big)=
 \Psi\left(\epsilon(-X_{\alpha_{i}})\right)\Psi\left(\epsilon(-X_{\alpha_{j}})\right) \\
= & \Psi\left(q^{-1}\hat{x}_{i}(-q\hat{y}_{i})^{Q_{ii}}\hat{y}_{i+1}^{Q_{i,i+1}}\hat{y}_{i+2}^{Q_{i,i+2}}\dots\hat{y}_{m}^{Q_{i,m}}\right)\Psi\left(q^{-1}\hat{x}_{j}(-q\hat{y}_{j})^{Q_{jj}}\hat{y}_{j+1}^{Q_{j,j+1}}\hat{y}_{j+2}^{Q_{j,j+2}}\dots\hat{y}_{m}^{Q_{j,m}}\right)\nonumber
\end{align}
and
\begin{align}
\epsilon\Big(\Psi & (-X_{\alpha_{j}})\Psi(-X_{\alpha_{i}+\alpha_{j}})\Psi(-X_{\alpha_{i}})\Big)
= \Psi\left(\epsilon(-X_{\alpha_{j}})\right)\Psi\left(\epsilon\left(-q^{\langle\alpha_{i},\alpha_{j}\rangle}X_{\alpha_{i}}X_{\alpha_{j}}\right)\right)\Psi\left(\epsilon(-X_{\alpha_{i}})\right) \nonumber\\
= & \Psi\left(\epsilon(-X_{\alpha_{j}})\right)\Psi\left(-q^{Q_{ij}}\epsilon\left(-X_{\alpha_{i}}\right)\epsilon\left(-X_{\alpha_{j}}\right)\right)\Psi\left(\epsilon(-X_{\alpha_{i}})\right) \label{eq:RHS}\\
= & \Psi\left(q^{-1}\hat{x}_{j}(-q\hat{y}_{j})^{Q_{jj}}\hat{y}_{j+1}^{Q_{j,j+1}}\hat{y}_{j+2}^{Q_{j,j+2}}\dots\hat{y}_{m}^{Q_{j,m}}\right) \nonumber\\
\times & \Psi\left(-q^{Q_{ij}-2}\hat{x}_{i}\hat{x}_{j}(-q\hat{y}_{i})^{Q_{ii}}\hat{y}_{j}^{Q_{i,j}}(-q\hat{y}_{j})^{Q_{jj}}\hat{y}_{i+1}^{Q_{i,i+1}}\dots\hat{y}_{j-1}^{Q_{i,j-1}}\hat{y}_{j+1}^{Q_{i,j+1}+Q_{j,j+1}}\hat{y}_{j+2}^{Q_{i,j+2}+Q_{j,j+2}}\dots\hat{y}_{m}^{Q_{i,m}+Q_{j,m}}\right) \nonumber\\
\times & \Psi\left(q^{-1}\hat{x}_{i}(-q\hat{y}_{i})^{Q_{ii}}\hat{y}_{i+1}^{Q_{i,i+1}}\hat{y}_{i+2}^{Q_{i,i+2}}\dots\hat{y}_{m}^{Q_{i,m}}\right). \nonumber
\end{align}
If we consider the following new variables:
\begin{equation}
\hat{x}'_{n}=q^{-1}\hat{x}_{i}\hat{x}_{j},\quad\hat{y}_{i}=\hat{y}'_{i}\hat{y}'_{n},\quad\hat{y}_{j}=\hat{y}'_{j}\hat{y}'_{n},\qquad\hat{x}_{k}=\hat{x}'_{k},\quad\hat{y}_{k}=\hat{y}'_{k}\quad\forall k\neq i,j,n
\end{equation}
and use the fact that $Q_{ij}=\langle\alpha_{i},\alpha_{j}\rangle=1$, then
\begin{equation*}
\Psi\left(\epsilon\left(-X_{\alpha_{i}+\alpha_{j}}\right)\right)=\Psi\left(q^{-1}\hat{x}'_{n}(-q\hat{y}'_{n})^{Q_{ii}+Q_{jj}+1}\hat{y}{'}_{i}^{Q_{ii}}\dots\hat{y}{'}_{j-1}^{Q_{i,j-1}}\hat{y}{'}_{j}^{1+Q_{jj}}\hat{y}{'}_{j+1}^{Q_{i,j+1}+Q_{j,j+1}}\dots\hat{y}{'}_{m}^{Q_{i,m}+Q_{j,m}}\right),
\end{equation*}
which looks like an operator generating a new node $n$. 
As a consequence, the equality between (\ref{eq:LHS}) and (\ref{eq:RHS}), considered in the presence of other unchanged operators inside \eqref{eq:general_quiver}, 
%$\langle\tilde{0},\dots,\tilde{0}|$ and $|0,\dots,0\rangle$  (see \eqref{eq:general_quiver}) 
corresponds to the invariance of motivic generating series \eqref{eq:motivic_generating_series} under an operation called unlinking:
\begin{equation}
P_{Q}(\boldsymbol{x},q)=P_{U(ij)Q}(\boldsymbol{x},q^{-1}x_{i}x_{j},q)\,.
\end{equation}
Unlinking $U(ij)$ removes one pair of arrows between nodes $i$ and $j$ and increases the size of the quiver by one node. 
More precisely, for an arbitrary quiver\footnote{Since matrices in (\ref{eq:arbitrary quiver}-\ref{eq:unlinking_definition}) are symmetric and pretty big, for compactness and clarity of the presentation we write only the upper-triangular part.}
\begin{equation}
    \begin{split}\label{eq:arbitrary quiver}
    Q & =\left[\begin{array}{ccccccc}
 Q_{11}  &  \cdots  &  Q_{1i}  &  \cdots  &  Q_{1j}  &  \cdots  &  Q_{1m}\\
   &  \ddots\  &  \vdots &   &  \vdots  &   &  \vdots\\
 &  &  Q_{ii}  &  \cdots &  Q_{ij}  &  \cdots &  Q_{im}\\
 &  &  &  \ddots\  &  \vdots  &   &  \vdots\\
 &  &  &  &  Q_{jj}  &  \cdots  &  Q_{jm}\\
 &  &  &  &  &  \ddots &  \vdots\\
 &  &  &  &  &  & Q_{mm}
\end{array}\right]\,,
\end{split}
\end{equation}
we have:
\begin{equation}\label{eq:unlinking_definition}
    \begin{split}U(ij)Q & =\left[\begin{array}{cccccccc}
 Q_{11}  &  \cdots &  Q_{1i}  &  \cdots &  Q_{1j}  &  \cdots &  Q_{1m}  &  Q_{1i}+Q_{1j}\\
 &  \ddots\  &  \vdots &   &  \vdots &   &  \vdots &  \vdots\\
 &  &  Q_{ii}  &  \cdots &  Q_{ij}-1  &  \cdots &  Q_{im}  &  Q_{ii}+Q_{ij}-1\\
 &  &  &  \ddots\  &  \vdots &   &  \vdots &  \vdots\\
 &  &  &  &  Q_{jj}  &  \cdots &  Q_{jm}  &  Q_{ij}+Q_{jj}-1\\
 &  &  &  &  &  \ddots\  &  \vdots &  \vdots\\
 &  &  &  &  &  &  Q_{mm}  &  Q_{im}+Q_{jm}\\
 &  &  &  &  &  &    &  Q_{ii}+Q_{jj}+2Q_{ij}-1 
\end{array}\right]\,,
\end{split}
\end{equation}
and the generating parameter of the new node equals $q^{-1}x_{i}x_{j}$. Unlinking (as well as its partner operation, linking) was introduced in \cite{Ekholm:2019lmb}, together with an~interesting interpretation in terms of multi-cover skein relations and 3d $\mathcal{N}=2$ theories discussed earlier in \cite{Ekholm:2018eee, EkholmShende}. 

\begin{rmk}
We briefly comment on the generalization to symmetric quivers featuring pairs of nodes connected by more than one (pair of) arrows. 
The relation between the pentagon relation and unlinking can be generalized
to any symmetric quiver, but in that version it is much less natural
and direct. 
First of all, we have to alter the embedding $\epsilon\Big(\Psi(-X_{\alpha_{j}})\Big)$ in a way that depends on $\epsilon\Big(\Psi(-X_{\alpha_{i}})\Big)$: 
\begin{equation}
\begin{aligned}\epsilon\Big( & \Psi(-X_{\alpha_{i}})  \Psi(-X_{\alpha_{j}})\Big)=\\
= & \Psi\left(q^{-1}\hat{x}_{i}(-q\hat{y}_{i})^{Q_{ii}}\hat{y}_{i+1}^{Q_{i,i+1}}\hat{y}_{i+2}^{Q_{i,i+2}}\dots\hat{y}_{m}^{Q_{i,m}}\right)\Psi\left(q^{-1}\hat{x}_{j}(-q\hat{y}_{j})^{Q_{jj}}\hat{y}_{i}^{Q_{ij}-1}\hat{y}_{j+1}^{Q_{j,j+1}}\hat{y}_{j+2}^{Q_{j,j+2}}\dots\hat{y}_{m}^{Q_{j,m}}\right).
\end{aligned}
\end{equation}
Note that since there is no $\hat{x}_{i}$ appearing after $\epsilon\Big(\Psi(-X_{\alpha_{i}})\Big)$ in (\ref{eq:general_quiver}), the addition of $\hat{y}_{i}^{Q_{ij}-1}$ does not change the resulting generating series. 
Another alteration dependent on $Q_{ij}$ is required in the definition of new variables: 
\begin{equation}
\hat{x}'_{n}=q^{-Q_{ij}}\hat{x}_{i}\hat{x}_{j}\,,
\end{equation}
introducing a mismatch with the expression for generating parameters:
$x'_{n}=q^{-1}x_{i}x_{j}$. 
If we perform those modifications, we can match the pentagon relation with unlinking of nodes $i,j$ connected by arbitrary number of arrows $Q_{ij}$. 
However, if we want to unlink once more, we have to define a new embedding~$\epsilon$ that is consistent with the new order of quantum dilogarithms and new number of arrows connecting nodes $i$ and $j$, i.e., $Q_{ij}-1$. 
Note that for $Q_{ij}=1$, all those alterations are trivial, and further unlinking is not possible.\footnote{We can continue unlinking if we introduce negative arrows; see \cite{JKLNS2212}, but even in this generalized case, it is more natural to unlink positive arrows and link negative arrows, ending with fully unlinked nodes.}
\end{rmk}

\subsection{Connectors and their properties}\label{sec:connectors}

Aiming for the proper definition of the symmetrization map $\mathfrak{S}$ for $A_m$ quivers, we need to generalize the relation between wall-crossing in 4d and unlinking in 3d discussed in the previous section. 
In order to achieve this, we have to understand the conditions under which the 3d-4d homomorphism is well-behaved under wall-crossing and unlinking on the respective sides of the embedding. 
In other words, we have to relate the structure of unlinking to pentagon relations in the quantum torus algebra of rank $m$, for the $A_m$ quivers. 
It is a non-trivial correspondence, since any given symmetric quiver admits as many unlinking operations as the number of pairings of its nodes (which can then be iterated any number of times), while pentagon relations which realize mutations of a 4d quiver are very restricted. 
In consequence, we need some tools to help select proper sequences of unlinkings. 

\paragraph{Unlinking equivalences.}
To start with, we recall the relations satisfied by unlinking operators \cite{KLNS} and define an equivalence relation among such operators. 
For any symmetric quiver $Q$ containing distinct nodes $i,j,k,l$ we have the following commutation relation:
\begin{equation}
    U(kl)  U(ij)Q = U(ij)  U(kl)Q.
\end{equation}
At the operator level, we can write this as an equivalence relation, which we call a \emph{square}:
\begin{equation}\label{eq:square}
    U(kl)  U(ij) \sim U(ij)  U(kl).
\end{equation}
If two unlinkings share an index, we can swap their order using either of the following relations:
\begin{equation}\label{eq:unlinking hexagons}
    \begin{aligned}
        U(jk)  U((ij)k)  U(ij) Q &= U(ij)  U(i(jk))  U(jk) Q, \\
        U((ij)k)  U(jk)  U(ij) Q &= U(i(jk))  U(ij)  U(jk) Q. 
    \end{aligned}
\end{equation}
Here, $Q'=U(jk)  U((ij)k)  U(ij) Q$ and $Q''=U((ij)k)  U(jk)  U(ij) Q$ differ by one transposition of a pair of arrows, i.e., they can be unlinked once to the same quiver. 
Likewise, we can write
\begin{subequations}
    \begin{align}
        U(jk)  U((ij)k)  U(ij) &\sim U(ij)  U(i(jk))  U(jk) \label{eq:first hexagon},\\
        U((ij)k)  U(jk)  U(ij) & \sim U(i(jk))  U(ij)  U(jk) \label{eq:second hexagon}.
    \end{align}
\end{subequations}
We call these relations \emph{hexagons }(a) and (b), respectively. 
More generally, we declare
\begin{equation}\label{eq:equivalence of unlinking operators}
    \bfU \sim \bfU' \quad \iff \quad \bfU Q = \bfU' Q
\end{equation}
for any symmetric quiver $Q$ that contains arrows that are deleted by $\bfU$ and $\bfU'$. (We will call such quiver \emph{compatible} with $\bfU$ and $\bfU'$.)
Note that according to the Completeness Theorem \cite{KLNS}, 
$\bfU$ differs from $\bfU'$ by a finite number of transpositions of unlinkings, realized by applications of the square move (\ref{eq:square}), as well as associative replacement $U((ij)k)\leftrightarrow U(i(jk))$ following (\ref{eq:first hexagon}) or (\ref{eq:second hexagon}). 
We will also use the following notation for an equivalence class of operators:
$\llbracket \bfU \rrbracket := \{ \bfU' | \bfU \sim \bfU' \}$. 

\paragraph{Connector.}
One of the main ingredients in our construction is the \emph{connector} of two sequences of unlinking, introduced in \cite{KLNS}:
\begin{dfn}
Given two sequences of unlinking
\begin{equation}
\bfU(\boldsymbol{i}\boldsymbol{j})=  U(i_{n}j_{n})\dots U(i_{2}j_{2})U(i_{1}j_{1}), \quad 
\bfU(\boldsymbol{k}\boldsymbol{l})= U(k_{m}l_{m})\dots U(k_{2}l_{2})U(k_{1}l_{1}),
\end{equation}
a~pair
\begin{equation}
\bfU(\boldsymbol{i'}\boldsymbol{j'})=U(i'_{n'}j'_{n'})\dots U(i'_{2}j'_{2})U(i'_{1}j'_{1}),\quad \bfU(\boldsymbol{k'}\boldsymbol{l'})=U(k'_{m'}l'{}_{m'})\dots U(k'_{2}l'_{2})U(k'_{1}l'_{1})
\end{equation}
is called the connector of $\bfU(\boldsymbol{i}\boldsymbol{j})$ and $\bfU(\boldsymbol{k}\boldsymbol{l})$ if for any compatible quiver $Q$
\begin{equation}\label{eq:connector}
\bfU(\boldsymbol{i'}\boldsymbol{j'})\bfU(\boldsymbol{i}\boldsymbol{j})Q=\bfU(\boldsymbol{k'}\boldsymbol{l'})\bfU(\boldsymbol{k}\boldsymbol{l})Q.
\end{equation}
\end{dfn}

Note that there are generally infinitely many pairs $(\bfU(\boldsymbol{i'}\boldsymbol{j'}),\bfU(\boldsymbol{k'}\boldsymbol{l'}))$ that satisfy (\ref{eq:connector}) for a given $(\bfU(\boldsymbol{i}\boldsymbol{j}),\bfU(\boldsymbol{k}\boldsymbol{l}))$. 
However, the simplest such pairs contain the initial unlinkings, as well as all their associativity unlinkings only -- we call such pairs \emph{optimal}\footnote{To see this, look at the Connector Algorithm in \cite{KLNS}.}. 
Note also that for each repeated index $(^{ij}_{jk})$, we have a choice between hexagons (a) and (b). 
We can store the information about these choices in a set $H$, whose elements are of the form: $h=((^{ij}_{jk}),\rm{a})$ corresponding to (\ref{eq:first hexagon}), and $h=((^{ij}_{jk}),\rm{b})$ corresponding to (\ref{eq:second hexagon}). 
Combining these remarks allows us to define an equivalence class of optimal pairs with respect to (\ref{eq:equivalence of unlinking operators}): 
\begin{dfn}\label{dfn: Lambda class}
For two given sequences of unlinkings $\bfU(\boldsymbol{i}\boldsymbol{j}),\bfU(\boldsymbol{k}\boldsymbol{l})$ and a set $H$ (which specifies the type of the hexagon for each repeated index in $\bfU(\boldsymbol{i}\boldsymbol{j})$ and $\bfU(\boldsymbol{k}\boldsymbol{l})$), 
the \emph{connector class} $\mathbbm{\Lambda}(\bfU(\boldsymbol{i}\boldsymbol{j}),\bfU(\boldsymbol{k}\boldsymbol{l});H)$ is an equivalence class of optimal pairs constructed via the following steps:
\begin{enumerate}
\item Consider an optimal pair $(\bfU(\boldsymbol{i'}\boldsymbol{j'}),\bfU(\boldsymbol{k'}\boldsymbol{l'}))$ satisfying (\ref{eq:connector}), so that $\bfU(\boldsymbol{i'}\boldsymbol{j'})\bfU(\boldsymbol{i}\boldsymbol{j})$ is related to $\bfU(\boldsymbol{k'}\boldsymbol{l'})\bfU(\boldsymbol{k}\boldsymbol{l})$ by a sequence of square moves, as well as hexagon moves specified by $H$. 
\item Compute the equivalence class $\llbracket \bfU(\boldsymbol{i'}\boldsymbol{j'})\bfU(\boldsymbol{i}\boldsymbol{j})\rrbracket = \llbracket \bfU(\boldsymbol{k'}\boldsymbol{l'})\bfU(\boldsymbol{k}\boldsymbol{l})\rrbracket=\mathbbm{\Lambda}(\bfU(\boldsymbol{i}\boldsymbol{j}),\bfU(\boldsymbol{k}\boldsymbol{l});H)$.\footnote{The independence of $\mathbbm{\Lambda}(\bfU(\boldsymbol{i}\boldsymbol{j}),\bfU(\boldsymbol{k}\boldsymbol{l});H)$ from the choice of the optimal pair is guaranteed by the Completeness Theorem \cite{KLNS}.}
\end{enumerate}
\end{dfn}
A concrete realization of \eqref{eq:connector} and step \emph{1.} in the above definition is provided by the Connector Algorithm \cite{KLNS}. 
%(That is, before actually deciding what kind of connector we want to construct, it is advised to make a list of all repeated indices and sort out the admissible choices.)
We stress that $Q' = \mathbbm{\Lambda}(\bfU,\bfU';H)Q$ is defined unambiguously, since it produces the same $Q'$ for any element of $\mathbbm{\Lambda}(\bfU,\bfU';H)$ -- in what follows, we will abuse the notation and denote the operator and its equivalence class by the same symbol. 
For example, 
\begin{equation}
    \mathbbm{\Lambda}(U(12),U(34);\emptyset) := \llbracket U(34)U(12) \rrbracket =  \{U(34)U(12),U(12)U(34)\}.
\end{equation}
Applying this to a particular quiver gives
\begin{equation}
Q' = \mathbbm{\Lambda}(U(12),U(34))Q = U(12)U(34)Q \equiv U(34)U(12)Q. 
\end{equation}
In another example we have
\begin{align}
    \mathbbm{\Lambda}(U(12),U(23);\{((^{12}_{23}),\rm{a})\}) = &\ \llbracket U((12)3)U(23)U(12)\rrbracket \\ =  &\ \{U((12)3)U(23)U(12),U(1(23))U(12)U(23)\}. \nonumber
\end{align}

We will also use the following shorthand notations: $\mathbbm{\Lambda}(\bfU,\bfU')$ for $\mathbbm{\Lambda}(\bfU,\bfU';H)$, where $H$ is considered arbitrary; $\mathbbm{\Lambda}(\bfU,\bfU';\rm{a})$ or $\mathbbm{\Lambda}(\bfU,\bfU';\rm{b})$ for $\mathbbm{\Lambda}(\bfU,\bfU';H)$, where $H$ consists of only type (a) or type (b) hexagons, respectively. 

\begin{prp}\label{prp:Properties of connector}
    For any sequences of unlinking $\bfU,\bfU',\bfU''$, the connector class satisfies the following identities:
    \begin{equation}
        \mathbbm{\Lambda}(\bfU,\bfU') = \mathbbm{\Lambda}(\bfU',\bfU),\qquad
        \mathbbm{\Lambda}(\mathbbm{\Lambda}(\bfU,\bfU'),\bfU'') = \mathbbm{\Lambda}(\bfU,\mathbbm{\Lambda}(\bfU',\bfU'')).
    \end{equation}
\end{prp}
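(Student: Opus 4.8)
The plan is to prove both identities by reducing everything to the defining property \eqref{eq:connector} of a connector together with the Completeness Theorem of \cite{KLNS}, which guarantees that the connector class $\mathbbm{\Lambda}$ is well-defined. The key observation is that $\mathbbm{\Lambda}(\bfU,\bfU')$ is characterized by a universal property: for any compatible symmetric quiver $Q$, the operator $\mathbbm{\Lambda}(\bfU,\bfU')$ applied to $Q$ yields the unique quiver $Q'$ reached from $Q$ by applying $\bfU$ and then completing with a minimal (optimal) tail of associativity unlinkings, in a way that agrees with applying $\bfU'$ and completing analogously. So the strategy is to show that the two sides of each identity, when applied to an arbitrary compatible quiver $Q$, produce the \emph{same} resulting quiver, and then invoke uniqueness (guaranteed by the Completeness Theorem) to conclude equality of the connector classes.

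For the symmetry identity $\mathbbm{\Lambda}(\bfU,\bfU') = \mathbbm{\Lambda}(\bfU',\bfU)$, I would argue directly from Definition~\ref{dfn: Lambda class}. If $(\bfU(\boldsymbol{i'}\boldsymbol{j'}),\bfU(\boldsymbol{k'}\boldsymbol{l'}))$ is an optimal pair realizing the connector of $(\bfU,\bfU')$, so that $\bfU(\boldsymbol{i'}\boldsymbol{j'})\bfU \, Q = \bfU(\boldsymbol{k'}\boldsymbol{l'})\bfU' \, Q$ for all compatible $Q$, then the swapped pair $(\bfU(\boldsymbol{k'}\boldsymbol{l'}),\bfU(\boldsymbol{i'}\boldsymbol{j'}))$ is an optimal pair realizing the connector of $(\bfU',\bfU)$, since equation \eqref{eq:connector} is manifestly symmetric under exchanging the roles of the two sequences. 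Hence the equivalence classes $\llbracket \bfU(\boldsymbol{i'}\boldsymbol{j'})\bfU \rrbracket$ and $\llbracket \bfU(\boldsymbol{k'}\boldsymbol{l'})\bfU' \rrbracket$ coincide by construction, and they are precisely $\mathbbm{\Lambda}(\bfU,\bfU')$ and $\mathbbm{\Lambda}(\bfU',\bfU)$ respectively. The only subtlety is to check that the set $H$ specifying hexagon types is also just relabelled consistently under the swap, which follows because a repeated-index pattern $\binom{ij}{jk}$ occurring in the comparison between the two tails is symmetric data.

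For associativity $\mathbbm{\Lambda}(\mathbbm{\Lambda}(\bfU,\bfU'),\bfU'') = \mathbbm{\Lambda}(\bfU,\mathbbm{\Lambda}(\bfU',\bfU''))$, I would proceed as follows. First, note that $\mathbbm{\Lambda}(\bfU,\bfU')$ is itself (represented by) a sequence of unlinkings -- namely any element of the class $\llbracket \bfU(\boldsymbol{i'}\boldsymbol{j'})\bfU \rrbracket$ -- so the nested expression is well-formed. Fix an arbitrary compatible quiver $Q$ and trace through the action of both sides. The left-hand side applied to $Q$ gives the quiver obtained by first running $\bfU$, then completing to match $\bfU'$ (that is the inner $\mathbbm{\Lambda}(\bfU,\bfU')Q$), and then completing again to match $\bfU''$. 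The right-hand side gives the quiver obtained by running $\bfU$ and then completing to match ``$\bfU'$ followed by completion to $\bfU''$''. Both procedures are ways of moving $Q$ to a maximally-unlinked-in-the-relevant-directions configuration using square and hexagon moves; I would show they land on the same quiver by arguing that the full composite of the three sequences $\bfU,\bfU',\bfU''$, up to square and hexagon equivalence, is independent of the bracketing. This is exactly the kind of statement covered by the Completeness Theorem of \cite{KLNS}: any two sequences of unlinkings that unlink the same collection of arrows of a compatible quiver differ by a finite sequence of square and hexagon moves, hence produce the same quiver. Therefore $\mathbbm{\Lambda}(\mathbbm{\Lambda}(\bfU,\bfU'),\bfU'')Q$ and $\mathbbm{\Lambda}(\bfU,\mathbbm{\Lambda}(\bfU',\bfU''))Q$ coincide for every compatible $Q$, and by uniqueness of the connector class, the two operator classes are equal.

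The main obstacle I anticipate is the bookkeeping of the hexagon-type data $H$ under nesting: when one forms $\mathbbm{\Lambda}(\bfU,\bfU')$ and then connects with $\bfU''$, new repeated-index patterns can appear, and one must check that the choices of hexagon type in the two bracketings can be matched up consistently (or that $\mathbbm{\Lambda}$ with ``arbitrary $H$'' -- the shorthand used just above the proposition -- is genuinely bracketing-independent). I expect this to be handled cleanly by appealing to the Connector Algorithm and Completeness Theorem of \cite{KLNS}, which already establish that the resulting quiver is independent of all such intermediate choices; the proof here is then mostly a matter of organizing that citation into the two required identities rather than a new combinatorial computation. If a fully explicit argument is desired, the fallback is to verify both identities directly on the generators (a single unlinking, a pair sharing an index, a pair of disjoint unlinkings), which reduces associativity to a finite check among the square and hexagon relations \eqref{eq:square}--\eqref{eq:second hexagon}.
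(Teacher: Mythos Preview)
Your argument for commutativity is essentially the same as the paper's: the paper phrases it as ``mirror reflecting the corresponding unlinking diagram,'' which is exactly your observation that swapping the roles of the two sequences in the optimal pair \eqref{eq:connector} gives an optimal pair for the reversed order.

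For associativity your route diverges from the paper's. You compare the action of both sides on an arbitrary compatible $Q$ and invoke the Completeness Theorem of \cite{KLNS} to conclude they land on the same quiver. The paper instead gives a short set-theoretic argument: it observes the inclusion $\mathbbm{\Lambda}(\bfU',\bfU'')\subset \mathbbm{\Lambda}(\mathbbm{\Lambda}(\bfU,\bfU'),\bfU'')$ (because any representative of $\mathbbm{\Lambda}(\bfU,\bfU')$ already contains $\bfU'$, so the associativity unlinkings between $\bfU'$ and $\bfU''$ are among those appearing in the larger class), which lets one replace the second argument and write $\mathbbm{\Lambda}(\mathbbm{\Lambda}(\bfU,\bfU'),\bfU'') = \mathbbm{\Lambda}(\mathbbm{\Lambda}(\bfU,\bfU'),\mathbbm{\Lambda}(\bfU',\bfU''))$; the symmetric inclusion on the other side gives the same expression for $\mathbbm{\Lambda}(\bfU,\mathbbm{\Lambda}(\bfU',\bfU''))$. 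This is cleaner because it sidesteps the bookkeeping you flag as your main obstacle: you never have to verify that the two bracketings unlink ``the same collection of arrows,'' you just show both equal a common symmetric expression. Your approach is not wrong, but the step where you assert that both nested connectors unlink the same multiset of arrows is doing real work that you have not spelled out; to close that, you would effectively end up proving the same set-inclusions the paper uses.
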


\begin{proof}
    Commutativity follows from ``mirror reflecting'' the corresponding unlinking diagram:
\tikzset{every picture/.style={line width=0.75pt}} %set default line width to 0.75pt        
\begin{equation*}
\begin{tikzpicture}[x=0.75pt,y=0.75pt,yscale=-0.75,xscale=0.75]
%uncomment if require: \path (0,1821); %set diagram left start at 0, and has height of 1821

%Straight Lines [id:da17717390204223304] 
\draw    (195.8,1443.8) -- (245.8,1493.8) ;
%Straight Lines [id:da6394862002676446] 
\draw    (245.8,1493.8) -- (295.8,1443.8) ;
%Straight Lines [id:da4753837178037863] 
\draw    (395.8,1443.8) -- (445.8,1493.8) ;
%Straight Lines [id:da01949151643468483] 
\draw    (445.8,1493.8) -- (495.8,1443.8) ;
%Straight Lines [id:da5636868673038333] 
\draw    (322.8,1468.8) -- (368.8,1468.8) ;
\draw [shift={(370.8,1468.8)}, rotate = 180] [color={rgb, 255:red, 0; green, 0; blue, 0 }  ][line width=0.75]    (10.93,-3.29) .. controls (6.95,-1.4) and (3.31,-0.3) .. (0,0) .. controls (3.31,0.3) and (6.95,1.4) .. (10.93,3.29)   ;
\draw [shift={(320.8,1468.8)}, rotate = 0] [color={rgb, 255:red, 0; green, 0; blue, 0 }  ][line width=0.75]    (10.93,-3.29) .. controls (6.95,-1.4) and (3.31,-0.3) .. (0,0) .. controls (3.31,0.3) and (6.95,1.4) .. (10.93,3.29)   ;

% Text Node
\draw (220.8,1468.8) node [anchor=north east] [inner sep=0.75pt]    {$\bfU$};
% Text Node
\draw (270.8,1468.8) node [anchor=north west][inner sep=0.75pt]    {$\bfU '$};
% Text Node
\draw (420.8,1468.8) node [anchor=north east] [inner sep=0.75pt]    {$\bfU'$};
% Text Node
\draw (470.8,1468.8) node [anchor=north west][inner sep=0.75pt]    {$\bfU$};
\end{tikzpicture}.
\end{equation*}
%(resp. $\mathbbm{\Lambda}(\bfU',\bfU'')$) 
To confirm associativity, note that
$\mathbbm{\Lambda}(\boldsymbol{U'},\bfU'')\subset \mathbbm{\Lambda}(\mathbbm{\Lambda}(\bfU,\bfU'),\bfU'')$ %(resp. $\mathbbm{\Lambda}(\bfU,\mathbbm{\Lambda}(\bfU',\bfU''))$)
as a set. This allows us to replace the second argument $\bfU''$ and write $\mathbbm{\Lambda}(\mathbbm{\Lambda}(\bfU,\bfU'),\bfU'') = \mathbbm{\Lambda}(\mathbbm{\Lambda}(\bfU,\bfU'),\mathbbm{\Lambda}(\bfU',\bfU''))$. Likewise, $\mathbbm{\Lambda}(\bfU,\bfU')\subset \mathbbm{\Lambda}(\bfU,\mathbbm{\Lambda}(\bfU',\bfU''))$ gives $\mathbbm{\Lambda}(\bfU,\mathbbm{\Lambda}(\bfU',\bfU'')) = \mathbbm{\Lambda}(\mathbbm{\Lambda}(\bfU,\bfU'),\mathbbm{\Lambda}(\bfU',\bfU''))$.
\end{proof}
Associativity can also be confirmed diagrammatically:
\begin{figure}[h!]
    \centering
    \includegraphics[width=0.75\linewidth]{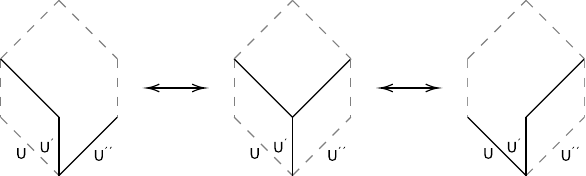}
\end{figure}

\paragraph{Joint connector.}
We now introduce a slightly more general version of a connector for an ordered set of sequences of unlinkings. 
\begin{dfn}
Given an $l$-tuple of sequences of unlinking
\begin{equation}\label{eq:tuple of unlinkseq}
(\bfU(\boldsymbol{i_1}\boldsymbol{j_1}),\dots,
\bfU(\boldsymbol{i_l}\boldsymbol{j_l}))
\end{equation}
and sets $H_k$ encoding the information about the choice of hexagon, 
an $l$-tuple of sequences of unlinking
$
(\bfU(\boldsymbol{i'_1}\boldsymbol{j'_1}),\dots,
\bfU(\boldsymbol{i'_l}\boldsymbol{j'_l}))
$
is called a joint connector for \eqref{eq:tuple of unlinkseq} and $H=\bigcup_{k} H_k$ if
\begin{align}
\bfU(\boldsymbol{i'_1}\boldsymbol{j'_1}) \bfU(\boldsymbol{i_1}\boldsymbol{j_1}) &= \dots = \bfU(\boldsymbol{i'_l}\boldsymbol{j'_l}) \bfU(\boldsymbol{i_l}\boldsymbol{j_l}) = 
 \\
&= \mathbbm{\Lambda}(\dots 
\mathbbm{\Lambda}(\,\mathbbm{\Lambda}
(\bfU(\boldsymbol{i_1}\boldsymbol{j_1}),\bfU(\boldsymbol{i_2}\boldsymbol{j_2}); H_{(12)}),\bfU(\boldsymbol{i_3}\boldsymbol{j_3}); H_{((12)3)}),\dots,\bfU(\boldsymbol{i_l}\boldsymbol{j_l}); H_{((12)\cdots) n)}). \nonumber
\end{align}
We call the equivalence class given by the above equality the joint connector class and denote it by
\begin{equation}\label{eq:joint connector def}
    \mathbbm{\Lambda}(\bfU(\boldsymbol{i_1}\boldsymbol{j_1}),\dots,
\bfU(\boldsymbol{i_l}\boldsymbol{j_l}); H).
\end{equation}
\end{dfn}
%That is, the joint connector iterates the connector function over the elements of a tuple of unlinking sequences. 
Note that associativity implies that $H_{k} = H_{k'}$ if $k'$ differs from $k$ by some change in the bracket order. 
Below, we prove the invariance under permutations. 
%(The reason for this is that $\mathbbm{\Lambda}(U_2,U_3)$ is contained as a subsequence in $\mathbbm{\Lambda}(U_1,U_2,U_3)=\mathbbm{\Lambda}(\mathbbm{\Lambda}(U_1,U_2),U_3)$.)
\begin{lma} For any permutation $\sigma\in S_l$,
\begin{equation}
    \mathbbm{\Lambda}(\,\bfU(\boldsymbol{i_1}\boldsymbol{j_1}),\dots,
\bfU(\boldsymbol{i_l}\boldsymbol{j_l})\,) = 
\mathbbm{\Lambda}(\sigma[\bfU(\boldsymbol{i_1}\boldsymbol{j_1}),\dots,
\bfU(\boldsymbol{i_l}\boldsymbol{j_l})]).
\end{equation}
\end{lma}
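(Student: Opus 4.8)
The plan is to reduce the claim to invariance under a single transposition of adjacent entries, since any $\sigma \in S_l$ is a product of adjacent transpositions, and the joint connector class is built by an iterated (left-bracketed) application of the binary connector $\mathbbm{\Lambda}(-,-;H)$. So it suffices to show
\[
\mathbbm{\Lambda}(\dots, \bfU(\boldsymbol{i_k}\boldsymbol{j_k}), \bfU(\boldsymbol{i_{k+1}}\boldsymbol{j_{k+1}}), \dots) = \mathbbm{\Lambda}(\dots, \bfU(\boldsymbol{i_{k+1}}\boldsymbol{j_{k+1}}), \bfU(\boldsymbol{i_k}\boldsymbol{j_k}), \dots),
\]
with all other entries held in place. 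First I would invoke the associativity identity from Proposition \ref{prp:Properties of connector} to re-bracket the iterated connector so that the two entries $\bfU(\boldsymbol{i_k}\boldsymbol{j_k})$ and $\bfU(\boldsymbol{i_{k+1}}\boldsymbol{j_{k+1}})$ sit inside a single innermost binary connector $\mathbbm{\Lambda}(\bfU(\boldsymbol{i_k}\boldsymbol{j_k}), \bfU(\boldsymbol{i_{k+1}}\boldsymbol{j_{k+1}}); H_{(k,k+1)})$; the remaining entries are then connected to this common block, and associativity guarantees this does not change the overall class.

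Next I would apply commutativity of the binary connector, $\mathbbm{\Lambda}(\bfU,\bfU') = \mathbbm{\Lambda}(\bfU',\bfU)$, again from Proposition \ref{prp:Properties of connector}, to swap the two adjacent entries inside that innermost block. Re-bracketing back via associativity then yields the left-bracketed form with $k$ and $k+1$ interchanged, which is exactly the right-hand side of the displayed equation. A small point to address carefully here is the bookkeeping of the hexagon-choice sets $H_k$: the remark immediately preceding the lemma notes that associativity forces $H_k = H_{k'}$ whenever $k'$ differs from $k$ only by a change of bracket order, so the set $H = \bigcup_k H_k$ is insensitive to these re-bracketings, and commutativity of the binary connector does not introduce any new repeated indices beyond those already recorded. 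I would state this explicitly so that the ``$;H$'' decoration on both sides genuinely matches.

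Since adjacent transpositions generate $S_l$, iterating the above shows $\mathbbm{\Lambda}$ is invariant under an arbitrary permutation $\sigma$, completing the proof. I would also remark that the statement is really just the combinatorial shadow of the fact that the underlying operation — applying a commutative, associative ``merge'' of unlinking sequences — is symmetric, exactly as for any commutative associative binary operation extended to finite tuples.

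The main obstacle I anticipate is not the algebraic manipulation itself but making the re-bracketing rigorous: the associativity identity in Proposition \ref{prp:Properties of connector} is stated for three sequences, and one must check that its repeated use genuinely lets one isolate an \emph{arbitrary} adjacent pair inside an $l$-fold iterated connector while keeping all hexagon data consistent. This is a standard ``associativity $\Rightarrow$ all bracketings agree'' argument, but because the connector class is a set of operators (not a single operator) and carries the auxiliary data $H$, one should verify that the set-theoretic inclusions used in the proof of associativity (e.g.\ $\mathbbm{\Lambda}(\bfU,\bfU') \subset \mathbbm{\Lambda}(\bfU,\mathbbm{\Lambda}(\bfU',\bfU''))$) propagate correctly through the $l$-fold tower. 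I expect this to go through cleanly but it is the step that needs the most care to write down without hand-waving.
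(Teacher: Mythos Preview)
Your proposal is correct and uses essentially the same ingredients as the paper's proof: both rely solely on the commutativity and associativity of the binary connector from Proposition~\ref{prp:Properties of connector} to deduce permutation invariance of the $l$-fold joint connector. The only organizational difference is that you reduce to adjacent transpositions generating $S_l$, while the paper runs an induction on $l$ (permuting the first $n$ entries by the inductive hypothesis, then using associativity followed by commutativity to cycle the last entry); these are two standard and equivalent packagings of the same argument.
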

\begin{proof}
    (By induction.) The case $l=2$ is the commutativity relation; see Proposition \ref{prp:Properties of connector}. 
    Assume that the invariance holds for some $n>2$, and append one sequence to the argument:
    \begin{equation}\label{eq:Conn permutation induction step}
        \mathbbm{\Lambda}(\,\bfU(\boldsymbol{i_1}\boldsymbol{j_1}),\dots,
\bfU(\boldsymbol{i_n}\boldsymbol{j_n}),\bfU(\boldsymbol{i_{n+1}}\boldsymbol{j_{n+1}})\,).
    \end{equation}
    Using the invariance for $n$ arguments, we can write: 
\begin{equation}
        \begin{aligned}\label{eq:Conn permutation invariance}
        &\ \mathbbm{\Lambda}(\,\bfU(\boldsymbol{i_1}\boldsymbol{j_1}),\dots,
\bfU(\boldsymbol{i_n}\boldsymbol{j_n}),\bfU(\boldsymbol{i_{n+1}}\boldsymbol{j_{n+1}})\,) \\
    \stackrel{def.}{=} &\ \mathbbm{\Lambda}(\mathbbm{\Lambda}(\,\bfU(\boldsymbol{i_1}\boldsymbol{j_1}),\dots,
\bfU(\boldsymbol{i_n}\boldsymbol{j_n})),\bfU(\boldsymbol{i_{n+1}}\boldsymbol{j_{n+1}})\,) \\
%        = &\ \mathbbm{\Lambda}(\sigma[\,\bfU(\boldsymbol{i_1}\boldsymbol{j_1}),\dots,
%\bfU(\boldsymbol{i_n}\boldsymbol{j_n})],\bfU(\boldsymbol{i_{n+1}}\boldsymbol{j_{n+1}})\,) \\
\stackrel{ind.}{=} &\ \mathbbm{\Lambda}(\mathbbm{\Lambda}(\,\bfU(\boldsymbol{i_{\sigma_1}}\boldsymbol{j_{\sigma_1}}),\dots,
\bfU(\boldsymbol{i_{\sigma_n}}\boldsymbol{j_{\sigma_n}})),\bfU(\boldsymbol{i_{n+1}}\boldsymbol{j_{n+1}})\,) \\
\stackrel{a.}{=} &\ \mathbbm{\Lambda}(\,\bfU(\boldsymbol{i_{\sigma_1}}\boldsymbol{j_{\sigma_1}}),\mathbbm{\Lambda}(\bfU(\boldsymbol{i_{\sigma_2}}\boldsymbol{j_{\sigma_2}}),\dots,
\bfU(\boldsymbol{i_{\sigma_n}}\boldsymbol{j_{\sigma_n}}),\bfU(\boldsymbol{i_{n+1}}\boldsymbol{j_{n+1}})\,) \\
\stackrel{c.}{=} &\ \mathbbm{\Lambda}(\,\mathbbm{\Lambda}(\bfU(\boldsymbol{i_{\sigma_2}}\boldsymbol{j_{\sigma_2}}),\dots,
\bfU(\boldsymbol{i_{\sigma_n}}\boldsymbol{j_{\sigma_n}}),\bfU(\boldsymbol{i_{n+1}}\boldsymbol{j_{n+1}})),\bfU(\boldsymbol{i_{\sigma_1}}\boldsymbol{j_{\sigma_1}})\,).
 \end{aligned}
\end{equation}
One can then repeat all steps in (\ref{eq:Conn permutation invariance}) for some other permutation $\sigma'$. 
Iterating over all permutations, we confirm the invariance for $n+1$, completing the proof. 
\end{proof}
\begin{exmp*}
    Permutation invariance for $\mathbbm{\Lambda}(U(12),U(23),U(34))$. 
    Note that it suffices to show that
    \begin{equation}
        \mathbbm{\Lambda}(\mathbbm{\Lambda}(U(12),U(23)),U(34)) = \mathbbm{\Lambda}(\mathbbm{\Lambda}(U(12),U(34)),U(23)) =  \mathbbm{\Lambda}(\mathbbm{\Lambda}(U(23),U(34)),U(12)),
    \end{equation}
    as the remaining three permutations follow from applying commutativity.
    Let us compare the first iteration with $\sigma=(12)$ with the second iteration with $\sigma = \operatorname{Id}$: 
    \begin{equation}
        \begin{aligned}
        &\mathbbm{\Lambda}(U(12),U(23),U(34)) & & \qquad \mathbbm{\Lambda}(U(12),U(23),U(34))\\
    &\stackrel{def.}{=} \mathbbm{\Lambda}(\mathbbm{\Lambda}(U(12),U(23)),U(34)) & & \qquad\stackrel{def.}{=} \mathbbm{\Lambda}(\mathbbm{\Lambda}(U(12),U(23)),U(34))\\
    &\stackrel{ind.}{=}  \mathbbm{\Lambda}(\mathbbm{\Lambda}(U(23),U(12)),U(34)) & & \qquad \stackrel{ind.}{=}  \mathbbm{\Lambda}(\mathbbm{\Lambda}(U(12),U(23)),U(34)) \\
    & \stackrel{a.}{=} \mathbbm{\Lambda}(U(23),\mathbbm{\Lambda}(U(12),U(34))) & & \qquad \stackrel{a.}{=}  \mathbbm{\Lambda}(U(12),\mathbbm{\Lambda}(U(23),U(34))) \\
    & \stackrel{c.}{=} \mathbbm{\Lambda}(\mathbbm{\Lambda}(U(12),U(34)),U(23)). & & \qquad \stackrel{c.}{=}  \mathbbm{\Lambda}(\mathbbm{\Lambda}(U(23),U(34)),U(12)). \\
    \end{aligned}
\end{equation}

When all hexagons are of type (b), the resulting connector is shown in Figure \ref{fig:A4_connector_linear}. 

\begin{figure}[h!]
    \centering
    \includegraphics[width=0.35\linewidth]{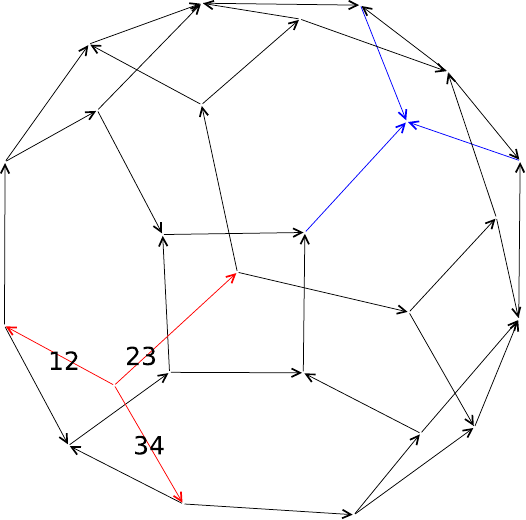}
    \caption{Joint connector $\mathbbm{\Lambda}(U(12),U(23),U(34);\, \rm{b})$ is an equivalence class of unlinking operators (represented by arrows) modulo square and hexagon moves. 
    The composition of two operators is represented by a pair of consecutive arrows. 
    Each such operator starts from a source node, whose outgoing edges ($U(12)$, $U(23)$, $U(34)$) are shown in \textcolor{red}{red}, and follows a path until it reaches the sink, whose outgoing edges are shown in \textcolor{blue}{blue}, here in six steps.}
    \label{fig:A4_connector_linear}
\end{figure}
\end{exmp*}

\subsection{Wall-crossing and connectors}\label{sec:wcf-connector theorem}

After introducing connectors (formed by unlinking operators acting on symmetric quivers corresponding to 3d theories) and showing their crucial properties, we are ready to establish their relation to the structure of wall-crossing for any $A_m$ Argyres-Douglas 4d theories.

\paragraph{Connectors and root systems.}
We start with a key observation which relates a certain type of connectors to root systems. 

\begin{prp}\label{prp:A_m connectors}
Each unlinking in any sequence $\bfU\in\mathbbm{\Lambda}[U(12),\dots,U(m-1,m)]$ (with any choice of hexagons) corresponds bijectively to an element in $\Phi^+(A_m)\setminus\Phi^0(A_m)$, i.e., to some composite root.
\end{prp}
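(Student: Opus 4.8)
The plan is to count both sides and then exhibit an explicit bijection via a careful bookkeeping of which pairs of nodes get unlinked. First I would note that the cardinality count already matches: the connector class $\mathbbm{\Lambda}(U(12),\dots,U(m-1,m))$ is built by successively forming connectors of the $m-1$ simple unlinkings $U(i,i+1)$, and by the Connector Algorithm of \cite{KLNS} each time two sequences sharing an index are connected, a hexagon is invoked which introduces exactly one associativity unlinking of the form $U((i,i+1),\dots)$. A quick induction on $m$ shows that any $\bfU \in \mathbbm{\Lambda}[U(12),\dots,U(m-1,m)]$ has length $\binom{m}{2}$, which is precisely $|\Phi^+(A_m)| - |\Phi^0(A_m)| = \binom{m}{2} - (m-1)$... wait — in fact $|\Phi^+(A_m)| = \binom{m+1}{2}$ and $|\Phi^0(A_m)| = m$, so $|\Phi^+(A_m)\setminus\Phi^0(A_m)| = \binom{m+1}{2} - m = \binom{m}{2}$. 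So the counts agree, and it remains to pin down the correspondence.

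The key step is to track, for each unlinking appearing in $\bfU$, the \emph{pair of original simple-root indices} it effectively merges. The simple unlinkings $U(i,i+1)$ correspond to the composite roots $\alpha_i + \alpha_{i+1}$. Each associativity unlinking such as $U((i,i+1),j)$ or $U(i(jk))$ merges a node that already represents a consecutive block $\alpha_a + \cdots + \alpha_b$ with an adjacent node representing $\alpha_b + \cdots + \alpha_c$ (the hexagon relations, both (a) and (b) in \eqref{eq:unlinking hexagons}, are precisely set up so the two blocks overlap in one simple root and their union is again a consecutive interval), producing a node labelled by $\alpha_a + \cdots + \alpha_c$. So I would prove by induction on the position in the sequence that the node created by the $k$-th unlinking carries a well-defined label that is a consecutive sum $\alpha_a + \cdots + \alpha_b$ with $b > a$, i.e. an element of $\Phi^+(A_m)\setminus\Phi^0(A_m)$; this uses that the starting quiver is the $A_m$ Dynkin quiver in which adjacency is exactly ``index differs by one'', and that every unlinking in the connector class acts on a pair of nodes that are adjacent in the current (still $A_m$-type, after appropriate relabelling) quiver.

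For injectivity of the map $\bfU \to \{\text{labels of created nodes}\}$ onto $\Phi^+(A_m)\setminus\Phi^0(A_m)$: I would argue that each interval $[a,b]$ with $a<b$ is realized \emph{exactly once}. A convenient way is to use that the product $\prod^\curvearrowright_{\beta\in\Phi^+}\Psi(-X_\beta)$ in the wall-crossing formula \eqref{eq:Reineke_WCF} has each composite root occurring once, and that the pentagon-to-unlinking dictionary of Section \ref{sec:pentagon to unlinking} sends the pentagon move creating $X_{\beta_i+\beta_j}$ to the unlinking creating a node with generating parameter $q^{-1}x_i x_j$, i.e. with label $\beta_i + \beta_j$; since a maximal sequence of pentagon relations following the orientation (Definition \ref{dfn:initial operator}) produces each composite root exactly once, and since the connector class $\mathbbm{\Lambda}[U(12),\dots,U(m-1,m)]$ is the image of precisely such a maximal sequence under the 3d-4d homomorphism, the corresponding unlinkings are in bijection with $\Phi^+(A_m)\setminus\Phi^0(A_m)$. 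Independence from the choice of hexagons follows from the Completeness Theorem \cite{KLNS}: different hexagon choices give equivalent operator sequences, hence the same multiset of created-node labels.

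\textbf{The main obstacle} I anticipate is making the ``consecutive interval'' bookkeeping fully rigorous across hexagon choices, i.e. verifying that in every element of the connector class the nodes at each stage are labelled by genuine intervals $[a,b]$ and that the adjacency structure of the intermediate quivers stays of type $A$ (so that only interval-merging unlinkings ever occur). Both hexagon relations in \eqref{eq:unlinking hexagons} must be checked to preserve this interval structure — in particular that the ``extra'' node created in a hexagon is the merged interval and not some non-consecutive combination — and one must confirm that the square move \eqref{eq:square} never destroys it. Once this structural invariant is in place, the bijection with composite roots is immediate from the interval $\leftrightarrow$ positive-non-simple-root correspondence $[a,b] \leftrightarrow \alpha_a + \cdots + \alpha_b$.
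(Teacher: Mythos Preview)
Your interval-tracking idea in paragraph 2 is exactly the paper's approach: define $\varphi$ by sending $U(i,i+1) \mapsto \alpha_i + \alpha_{i+1}$ and then extending over associativity unlinkings to consecutive sums $\alpha_i + \cdots + \alpha_{i+l}$. The observation that every unlinking in the connector class has the form $U[i,i+1,\dots,i+l]$ with \emph{some} bracketing is the heart of the matter, and the paper argues this just as you outline. (One small correction: the two blocks being merged do not ``overlap in one simple root'' --- they are \emph{adjacent} intervals $[a,b]$ and $[b+1,c]$; what they share is the node index $j$ in the hexagon $U(ij),U(jk)$, not a root.)

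The real gap is in paragraph 3. Your injectivity argument invokes the pentagon-to-unlinking dictionary and effectively the statement that the connector class is the image of a maximal sequence of pentagon relations --- but that is precisely Theorem~\ref{thm:unlinking and A_m wall-crossing}, whose proof \emph{uses} this Proposition. So the argument is circular. The paper avoids this by staying purely on the unlinking side: it cites \cite[Lemma 3.10]{KLNS} applied to the definition of $\mathbbm{\Lambda}[U(12),\dots,U(m-1,m)]$ to conclude that for each interval $[i,i+l]$ \emph{exactly one} bracketing of $U[i,\dots,i+l]$ appears in $\bfU$ (this is the key input you are missing, and it resolves the ``main obstacle'' you correctly identified). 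Injectivity then follows immediately, and surjectivity comes from the fact that the joint connector class is built from optimal pairs, which by definition contain all initial unlinkings together with all their associativity descendants. No reference to wall-crossing or the 3d--4d homomorphism is needed.

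Your cardinality check is fine as a sanity test but is not actually used in the paper's proof; once $\varphi$ is shown to be both injective and surjective, the count $\binom{m}{2}$ comes for free.
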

\begin{proof}
We fix $\bfU\in \mathbbm{\Lambda}[U(12),\dots,U(m-1,m)]$ and define the \emph{canonical bijection} $\varphi:\;\bfU\to \Phi^+(A_m)\setminus\Phi^0(A_m)$ in the following steps. 
First, we map the initial unlinkings to their corresponding composite roots:
\begin{equation}
    U(i,i+1) \overset{\varphi}{\longmapsto}\alpha_{i}+\alpha_{i+1},\qquad i=1,\dots, m-1. 
\end{equation}
The other unlinkings may only arise from hexagons. 
Since for every element $U((ij)k)$ in the joint connector there must also be $U(i(jk))$ by associativity, unlinkings of the form $U((ij)k)$ or $U(i(jk))$ occur only for $j=i+1$ and $k=j+1$, the requirement for $U(ij)$ and $U(kl)$ to be initial. 
Moreover, the application of \cite[Lemma 3.10]{KLNS} to the definition of $\mathbbm{\Lambda}[U(12),\dots,U(m-1,m)]$ guarantees that, for any $i\in \{1,\dots, m-2\}$, exactly one bracketing -- either $U((i,i+1),i+2)$ or $U(i,(i+1,i+2))$ -- appears in $\bfU$. 
We define $\varphi$ to map those unlinkings to positive roots $\alpha_{i}+\alpha_{i+1}+\alpha_{i+2}$ for $i=1,\dots,m-2$. 

A simple generalization of this argument to unlinkings of the nodes that are produced from previous unlinkings implies that every unlinking in $\bfU$ is of the form $U[i,i+1,\dots ,i+l]$ -- with appropriate bracketing -- for some positive integers $i,l$ such that $i+l\leq m$. 
Moreover, the application of \cite[Lemma 3.10]{KLNS} guarantees that, for such $i$ and $l$, there is exactly one bracketing of $U[i,i+1,\dots ,i+l]$ that appears in~$\bfU$. 
We define $\varphi$ to map those unlinkings to positive roots  $\alpha_{i}+\alpha_{i+1}+\dots+\alpha_{i+l}$, and the aforementioned argument guarantees its injectivity. 

Since the definition of $\mathbbm{\Lambda}[U(12),\dots,U(m-1,m)]$ is based on optimal pairs, every sequence of unlinkings in this class contains initial unlinkings $U(12),\dots,U(m-1,m)$ and all their associativity unlinkings. 
This means that $\varphi$ is surjective: for every positive root $\alpha_{i}+\alpha_{i+1}+\dots+\alpha_{i+l}$ in $\Phi^+(A_m)\setminus\Phi^0(A_m)$, there exists an unlinking $U[i,i+1,\dots ,i+l]$ (with appropriate bracketing) that appears in $\bfU$. 

The injectivity and surjectivity discussed above imply that $\varphi$ is a bijection.\footnote{The set $H$ was always suppressed in the statement and the proof because the proposition is true for any choice of hexagons.}
\end{proof}

\paragraph{Encoding orientation of arrows in the connector.}
In order to formulate the relation between unlinking and wall-crossing, we first define what it means to follow the orientation of arrows of a given Dynkin quiver from the perspective of hexagons and quantum torus algebra. 

\begin{dfn}\label{dfn:WCF for a symmetric quiver}
For any Dynkin quiver $A_m$
let $\beta_i=\alpha_{i}+\alpha_{i+1}+\dots 
 +\alpha_{i+l_i}$, $\beta_j=\alpha_{j}+\alpha_{j+1}+\dots 
 +\alpha_{j+l_j}$, and $\beta_k=\alpha_{k}+\alpha_{k+1}+\dots 
 +\alpha_{k+l_k}$ be a triple of composite roots such that the corresponding segments (\ref{eq:A_m segments}) in $A_m$ are connected:\footnote{Note that all hexagons in $\mathbbm{\Lambda}$ arise from such a configuration, which follows from Proposition \ref{prp:A_m connectors}.}
\[
\begin{tikzcd}[sep=small]
	{\cdots \alpha_{i}} & \cdots & {\alpha_{i+l_i}} & {\alpha_{j}} & \cdots & {\alpha_{j+l_j}} & {\alpha_{k}} & \cdots & {\alpha_{k+l_k}\cdots}
	\arrow[no head, from=1-1, to=1-2]
	\arrow[no head, from=1-3, to=1-2]
	\arrow[no head, from=1-3, to=1-4]
	\arrow[no head, from=1-4, to=1-5]
	\arrow[no head, from=1-5, to=1-6]
	\arrow[no head, from=1-7, to=1-6]
	\arrow[no head, from=1-7, to=1-8]
	\arrow[no head, from=1-9, to=1-8]
\end{tikzcd}
\]
where connectivity requires $j=i+l_i+1$ and $k=j+l_j+1$, with the appropriate orientation of arrows. 
We say that the set \emph{$H$ follows the orientation of $A_m$} if for the repeated index $(^{ij}_{jk})$ corresponding to $\beta_i,\beta_j,\beta_k$, we assign hexagon \emph{(a)} if the arrow from $\alpha_{i+l_i}$ to $\alpha_{j}$ and the arrow from $\alpha_{j+l_j}$ to $\alpha_{k}$ are co-oriented; otherwise, we assign hexagon \emph{(b)}. 
We denote such a set by $H_{A_m}$. 
\end{dfn}

\begin{rmk}\label{rmk:locality of hexagons}
    Note that by specifying all hexagons in the joint connector in this way, we can say that for any operator $\bfU\in\mathbbm{\Lambda}$, any three unlinkings $U(ij),U(jk),U((ij)k)\in \bfU$ (even if they are separated by some other unlinkings) will follow the pattern of arrows of $A_m$:
    \begin{equation}\label{eq:separated_unlinkings_hexagon_a}
        \bfU = \ldots U(jk) \ldots U((ij)k) \ldots U(ij) \ldots 
    \end{equation}
    whenever the corresponding arrows are co-oriented, and
    \begin{equation}\label{eq:separated_unlinkings_hexagon_b}
        \bfU = \ldots U((ij)k)) \ldots U(jk) \ldots U(ij) \ldots 
    \end{equation}
    otherwise. 
    It follows from the fact that by applying some sequence of square and hexagon moves which do not alter the order of the three unlinkings above, \eqref{eq:separated_unlinkings_hexagon_a} can be brought to a form corresponding to (\ref{eq:first hexagon}):
    \begin{equation}
        \bfU = \ldots U(jk) U((ij)k) U(ij) \ldots \,,
    \end{equation}
    whereas \eqref{eq:separated_unlinkings_hexagon_b} can be transformed into
    \begin{equation}
        \bfU = \ldots U((ij)k)) U(jk) U(ij) \ldots \,,
    \end{equation}
    which corresponds to (\ref{eq:second hexagon}). 
\end{rmk}

\paragraph{Connector computes wall-crossing.}
Our main result is that the joint connector computes the wall-crossing formulae (\ref{eq:Reineke_WCF}) of the quantum torus algebra. 
This can be seen as a direct, yet non-trivial generalization of the relation between the pentagon relation and unlinking from Section \ref{sec:pentagon to unlinking}. 
Namely, by leveraging the 3d-4d homomorphism (\ref{eq:embedding_epsilon}), all combinatorics associated with wall-crossing (i.e., the structure of all possible compositions of pentagon relations) can be expressed at the level of the symmetric quiver using the language of unlinking operators, as stated below. 
\begin{thm}\label{thm:unlinking and A_m wall-crossing}
    For any Dynkin quiver $A_m$, the joint connector class $\mathbbm{\Lambda}(U(12),\dots,U(m-1,m);H_{A_m})$ is in bijection with the set of all maximal sequences of pentagon relations applied to the initial operator $\boldsymbol{\Psi}_{A_m}$ that follow the orientation of arrows in $A_m$. 
\end{thm}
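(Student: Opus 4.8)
The plan is to set up the bijection by matching, on one side, the combinatorial data of a maximal sequence of pentagon relations on $\boldsymbol{\Psi}_{A_m}$ following the orientation of arrows, and on the other side, an element of the joint connector class $\mathbbm{\Lambda}(U(12),\dots,U(m-1,m);H_{A_m})$. The central idea is that the 3d-4d homomorphism $\epsilon$ from \eqref{eq:embedding_epsilon} translates each pentagon move
\[
\dots\Psi(-X_{\beta_i})\Psi(-X_{\beta_j})\dots \ \longmapsto\ \dots\Psi(-X_{\beta_j})\Psi(-X_{\beta_i+\beta_j})\Psi(-X_{\beta_i})\dots
\]
into an unlinking operation $U(\text{(node for }\beta_i\text{)},\,\text{(node for }\beta_j\text{)})$ on the symmetrized quiver, as was shown explicitly for one step in Section \ref{sec:pentagon to unlinking}. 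So the first step is to argue that this local translation is consistent: whenever a pentagon move is performed on composite roots $\beta_i \preceq \beta_j$ (with $\beta_i\nsucceq\beta_j$), the corresponding pair of nodes in the current symmetric quiver is connected by exactly one pair of arrows, so that $U$ is well-defined and removes that pair. This uses the structure of the $A_m$ segments \eqref{eq:A_m segments} and the Dirac pairing, analogously to the remark following \eqref{eq:2-3_Pachner}.

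\textbf{Key steps.} First, I would set up a dictionary: to each composite root $\beta = \alpha_i + \dots + \alpha_{i+l}$ appearing during the wall-crossing process, associate the node $[i,i+1,\dots,i+l]$ (with appropriate bracketing recording the order of its creation) of the symmetric quiver produced by the corresponding chain of unlinkings. By Proposition \ref{prp:A_m connectors}, the nodes produced by any $\bfU \in \mathbbm{\Lambda}(U(12),\dots,U(m-1,m);H)$ are in canonical bijection $\varphi$ with $\Phi^+(A_m)\setminus\Phi^0(A_m)$, i.e., precisely with the composite roots — this is the backbone of the correspondence. Second, I would show that a \emph{maximal} sequence of pentagon relations following the orientation produces \emph{every} composite root exactly once (each pentagon creates exactly one new composite root $\beta_i + \beta_j$, and maximality forces all of $\Phi^+\setminus\Phi^0$ to appear, by the same counting as in the wall-crossing formula \eqref{eq:Reineke_WCF}), hence its length is $|\Phi^+(A_m)\setminus\Phi^0(A_m)| = \binom{m}{2} - (m-1) = \binom{m-1}{2}$, which matches the number of unlinkings in any element of $\mathbbm{\Lambda}(U(12),\dots,U(m-1,m);H_{A_m})$ by Proposition \ref{prp:A_m connectors}. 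Third — and this is the heart — I would check that the ordering constraints match: a pentagon move on $\beta_i, \beta_j$ requires $\beta_i \preceq \beta_j$, $\beta_i \nsucceq \beta_j$, and the \emph{order} in which two unlinkings sharing an index may be performed is governed precisely by the hexagon type, via Remark \ref{rmk:locality of hexagons}. Concretely, I would show that the choice of hexagon $H_{A_m}$ in Definition \ref{dfn:WCF for a symmetric quiver} (type (a) iff the two relevant arrows are co-oriented) is exactly the condition dictating the admissible relative order of the three pentagon moves producing $\beta_i$, $\beta_j = \alpha_{i+l_i+1}+\dots$, and $\beta_i + \beta_j = \alpha_i + \dots$; the co-oriented case \eqref{eq:separated_unlinkings_hexagon_a} matches one forced order of pentagons, the opposite case \eqref{eq:separated_unlinkings_hexagon_b} the other. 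Fourth, I would invoke the Completeness Theorem of \cite{KLNS} (already cited in Section \ref{sec:connectors}) to conclude that two maximal pentagon sequences give the same symmetric quiver if and only if they differ by square and hexagon moves, i.e., they define the same element of the equivalence class — establishing that the map descends to a well-defined bijection between equivalence classes of maximal pentagon sequences and $\mathbbm{\Lambda}(U(12),\dots,U(m-1,m);H_{A_m})$.

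\textbf{Main obstacle.} The delicate point will be the third step: verifying that the partial order $\preceq$ on composite roots (Definition \ref{dfn:partial ordering of A_m roots}), which governs which pentagon moves are allowed and in which order, is \emph{exactly} encoded by the square relation \eqref{eq:square} together with the orientation-following hexagon assignment $H_{A_m}$. One has to check that two pentagon moves on roots whose segments are disjoint and non-adjacent commute (giving a square), while moves on roots whose segments share a boundary are forced into a specific order compatible with exactly one hexagon — and that no orientation of $A_m$ produces a configuration requiring a "mixed" or inconsistent hexagon choice. This is a finite but intricate case analysis on the relative positions and orientations of the three segments in Definition \ref{dfn:WCF for a symmetric quiver}, and it must be done carefully to confirm that the canonical bijection $\varphi$ is compatible with \emph{both} the pentagon-order structure and the unlinking-order structure simultaneously. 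A secondary subtlety is bookkeeping the $q$-powers and signs under $\epsilon$ through a full maximal sequence (not just one step as in Section \ref{sec:pentagon to unlinking}), ensuring that the generating parameter $q^{-1}x_i x_j$ of each newly unlinked node is assigned consistently; but this should follow inductively from the single-step computation already carried out in \eqref{eq:LHS}--\eqref{eq:RHS}, provided the embedding is updated after each unlinking as described in the Remark at the end of Section \ref{sec:pentagon to unlinking}.
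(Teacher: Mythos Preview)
Your overall strategy is the same as the paper's: build the map via the canonical bijection $\varphi^{-1}$ of Proposition~\ref{prp:A_m connectors}, then do the case analysis on co-oriented vs.\ oppositely oriented arrow pairs to verify that the hexagon type prescribed by $H_{A_m}$ matches the forced order of the three pentagon moves. That part is sound, and the paper carries it out in exactly this way.

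There is a minor arithmetic slip in your second step: $|\Phi^+(A_m)\setminus\Phi^0(A_m)| = \binom{m+1}{2}-m = \binom{m}{2}$, not $\binom{m-1}{2}$. (Check $m=3$: three composite roots, three unlinkings.)

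The substantive problem is in your fourth step. The theorem asserts a bijection between \emph{individual} maximal pentagon sequences and \emph{individual} elements of $\mathbbm{\Lambda}$ (which is itself a set of unlinking sequences). You instead conclude a bijection between ``equivalence classes of maximal pentagon sequences'' and $\mathbbm{\Lambda}$; but all maximal pentagon sequences lie in a single class under square/hexagon rearrangement (they all compute the same Kontsevich--Soibelman invariant), so your statement collapses to matching a singleton with a set that is not a singleton. The Completeness Theorem is not the right tool to finish: what you need is that the step-by-step map $\varphi^{-1}$ is itself a bijection of sequences. The paper does this by showing that every square (resp.\ $H_{A_m}$-hexagon) move on an element of $\mathbbm{\Lambda}$ corresponds to exactly one elementary rearrangement of commuting (resp.\ adjacent) pentagon moves, and conversely. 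Starting from one explicitly constructed $\bfU$ and propagating by these moves then gives surjectivity; injectivity is immediate because the order of unlinkings records the order of pentagons via $\varphi^{-1}$.

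Finally, your ``secondary subtlety'' about tracking $q$-powers and signs under $\epsilon$ is a red herring here. The theorem is a purely combinatorial bijection of sequences; the analytic identity of generating series was already established for a single step in Section~\ref{sec:pentagon to unlinking} and is not needed again.
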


\begin{proof}
    We will construct the bijection in two stages: the first will be a definition for one maximal sequence of pentagon relations, and the second will propagate it across the whole set. 
    After that, we will show the surjectivity and injectivity of the constructed map.
    
    Let us start by fixing the Dynkin quiver $A_m$ and the corresponding initial operator $\boldsymbol{\Psi}_{A_m}$. 
    Then, we perform an iterative construction in which we pick a pair of neighboring\footnote{
    In this reasoning we assume that all necessary transpositions of commuting operators are done without specific mention. 
    In the first iteration $\beta_i$ and $\beta_j$ are simple roots, in the next they may be composite.
    } operators $\Psi(-X_{\beta_i}),\Psi(-X_{\beta_j})$ labeled by roots satisfying $\beta_i\preceq \beta_{j}, \beta_i\nsucceq \beta_{j}$ and apply the pentagon relation, which produces a descendant operator $\Psi(-X_{\beta_i+\beta_j})$ labeled by the root $\beta_i+\beta_j\in\Phi^+(A_m)\setminus\Phi^0(A_m)$. 
    The inverse of the bijection $\varphi$ from Proposition \ref{prp:A_m connectors} maps this root to the unlinking operator $U(ij)$.\footnote{Indices $i$ and $j$ may be composite and in that case the bracketing in $U(ij)$ directly follows from the order of previous unlinkings.} 
    We repeat these steps until we obtain a maximal sequence of pentagon relations labeled by all positive non-simple roots. 
    We combine the assignments following $\varphi^{-1}$ into a map between maximal sequences of pentagon relations and sequences of unlinkings: 
    \begin{align}\label{eq:pentagon relations and unlinking bijection}
        \big( \dots \Psi(-X_{\beta_{i_1}})\Psi(-X_{\beta_{j_1}})\ldots & \rightarrow  \dots \Psi(-X_{\beta_{j_1}})\Psi(-X_{\beta_{i_1}+\beta_{j_1}})\Psi(-X_{\beta_{i_1}})\dots \big) & \overset{\varphi^{-1}}{\longmapsto}\qquad & U(i_1 j_1) \nonumber \\
        \big( \dots \Psi(-X_{\beta_{i_2}})\Psi(-X_{\beta_{j_2}})\ldots & \rightarrow  \dots \Psi(-X_{\beta_{j_2}})\Psi(-X_{\beta_{i_2}+\beta_{j_2}})\Psi(-X_{\beta_{i_2}})\dots \big) & \overset{\varphi^{-1}}{\longmapsto}\qquad & U(i_2 j_2) \nonumber \\
         &\;\; \vdots &\vdots \; \; \; \qquad & \quad\; \vdots \\
         \big( \dots \Psi(-X_{\beta_{i_s}})\Psi(-X_{\beta_{j_s}})\ldots & \rightarrow  \dots \Psi(-X_{\beta_{j_s}})\Psi(-X_{\beta_{i_s}+\beta_{j_s}})\Psi(-X_{\beta_{i_s}})\dots \big) & \overset{\varphi^{-1}}{\longmapsto}\qquad & U(i_s j_s). \nonumber
    \end{align}
    We still need to show that the target sequence of unlinkings $U(i_s j_s)\dots U(i_2 j_2)U(i_1 j_1)\equiv\bfU$ belongs to the joint connector $\mathbbm{\Lambda}[U(12),\dots,U(m-1,m);H_{A_m}]$. 
    Since every maximal sequence of pentagon relations creates operators $\Psi(\beta_k)$ labeled by all $\beta_k\in\Phi^+(A_m)\setminus\Phi^0(A_m)$, Proposition \ref{prp:A_m connectors} guarantees that $U(i_1 j_1), U(i_2 j_2),\dots,U(i_s j_s)$ do appear in any element of $\mathbbm{\Lambda}[U(12),\dots,U(m-1,m)]$ -- we only need to make sure that the arrangement of unlinkings in $\bfU$ is compatible with the structure of hexagons $H_{A_m}$ following the orientation of $A_m$.\footnote{
    We also have to take the bracketing of $U(i_1 j_1), U(i_2 j_2),\dots,U(i_s j_s)$ into account, but it automatically follows from their arrangement in $\bfU$. 
    } 
    In order to show this, let us pick from $\bfU$ any pair of unlinkings that share an index and the corresponding associativity unlinking and denote them $U(ij)$, $ U(jk)$, $U((ij)k)$ -- this choice implicitly assumes that $U(ij)$ appears before $ U(jk)$ in $\bfU$, we can change the labels if this is not the case. 
    Depending on the orientation of arrows in $A_m$, we are in one of the following two situations: 

\begin{enumerate}
    \item If the arrows from $i$ to $j$ and from $j$ to $k$ are co-oriented, then the pentagon relation corresponding to $ U(ij)$ produces an operator containing one of the following ordered products: 
    \begin{equation}\label{eq:pentagon stage after U(ij)}
    \begin{split}
        \ldots \Psi(-X_{\beta_j})&\Psi(-X_{\beta_i+\beta_j})\Psi(-X_{\beta_i})\ldots\Psi(-X_{\beta_k})\ldots\\
        \text{or}\qquad \ldots \Psi(-X_{\beta_k})\ldots\Psi(-X_{\beta_i})&\Psi(-X_{\beta_i+\beta_j})\Psi(-X_{\beta_j})\ldots\;.
    \end{split}
    \end{equation}
    Note that $\Psi(-X_{\beta_i})$ and $\Psi(-X_{\beta_k})$ commute -- the existence of $j$ between $i$ and $k$ means that they cannot be connected by an arrow in $A_m$. 
    Since $U((ij)k)$ is also in $\bfU$, at some stage after applying other pentagon relations to (\ref{eq:pentagon stage after U(ij)}), we must encounter 
    $\Psi(-X_{\beta_i})$ right next to $\Psi(-X_{\beta_k})$:
    \begin{equation}
    \begin{split}
        \ldots \Psi(-X_{\beta_j})&\Psi(-X_{\beta_i+\beta_j})\Psi(-X_{\beta_i})\Psi(-X_{\beta_k})\ldots \\
        \text{or}\qquad \ldots \Psi(-X_{\beta_k})\Psi(-X_{\beta_i})&\Psi(-X_{\beta_i+\beta_j})\Psi(-X_{\beta_j})\ldots\;.
    \end{split}
    \end{equation}
     After permuting $\Psi(-X_{\beta_i})$ with $\Psi(-X_{\beta_k})$, we see that the latter will be adjacent to $\Psi(-X_{\beta_i+\beta_j})$ before $\Psi(-X_{\beta_j})$. 
     It means that
    \begin{equation}
        \bfU = \ldots U(jk)\ldots U((ij)k)\ldots U(ij) \ldots\;,
    \end{equation}
  which is equivalent (modulo hexagon and square moves used to make the above unlinking adjacent -- see also Remark \ref{rmk:locality of hexagons}) to one side of hexagon \eqref{eq:first hexagon} that is assigned to co-oriented arrows in Definition \ref{dfn:WCF for a symmetric quiver}. 
  \item If the arrows from $i$ to $j$ and from $j$ to $k$ have opposite orientations, then the pentagon relation corresponding to $U(ij)$ produces an operator containing one of the following ordered products: 
    \begin{equation}
    \begin{split}
        \ldots \Psi(-X_{\beta_k})\ldots\Psi(-X_{\beta_j})&\Psi(-X_{\beta_i+\beta_j})\Psi(-X_{\beta_i})\ldots\\
        \text{or}\qquad \ldots \Psi(-X_{\beta_i})&\Psi(-X_{\beta_i+\beta_j})\Psi(-X_{\beta_j})\ldots\Psi(-X_{\beta_k})\ldots\;.
    \end{split}    
    \end{equation}
    Repeating the above reasoning, we see that this time $\Psi(-X_{\beta_k})$ will be adjacent to $\Psi(-X_{\beta_j})$ before $\Psi(-X_{\beta_i+\beta_j})$. 
    It means that
    \begin{equation}
        \bfU = \ldots U((ij)k)\ldots U(jk)\ldots U(ij) \ldots
    \end{equation}
    which is equivalent to one side of hexagon \eqref{eq:second hexagon} that is assigned to oppositely oriented arrows in Definition \ref{dfn:WCF for a symmetric quiver}. 
\end{enumerate}
Note that these steps apply as well in the case of composite nodes, where we consider the arrows that connect the outermost initial nodes, according to Definition \ref{dfn:WCF for a symmetric quiver}. 
In each situation, the arrangement of unlinkings in $\bfU$ is compatible with the structure of hexagons $H_{A_m}$, so $\bfU\in \mathbbm{\Lambda}[U(12),\dots,U(m-1,m);H_{A_m}]$ indeed. 

The joint connector class $\mathbbm{\Lambda}[U(12),\dots,U(m-1,m);H_{A_m}]$, by definition, is closed under square and hexagon moves following the orientation of $A_m$ encoded in $H_{A_m}$. 
Therefore, knowing one sequence $\bfU$ as above, we can generate all other elements in $\mathbbm{\Lambda}$ by simply iterating all such moves starting from $\bfU$. 
On the other hand, all maximal sequences of pentagon relations arise from the maximal sequence from \eqref{eq:pentagon relations and unlinking bijection} by changing the arrangement of the pentagon relations. 
Every square move exchanges two unlinkings that do not share an index. 
Flipping their order can also be understood as creating a new maximal sequence of pentagon relations which is mapped to the sequence of unlinkings created by the square move, following $\varphi^{-1}$. 
In other words, the following diagram commutes:
\[
\begin{tikzcd}[row sep=0.1,scale cd=1]
	{\cdots\Psi(-X_{\beta_i})\Psi(-X_{\beta_j})\Psi(-X_{\beta_k})\Psi(-X_{\beta_l})\cdots} \\
	{\cdots\Psi(-X_{\beta_i})\Psi(-X_{\beta_j})\Psi(-X_{\beta_l})\Psi(-X_{\beta_k+\beta_l})\Psi(-X_{\beta_k})\cdots } & \rightarrow & {\cdots U(ij)U(kl)\cdots} \\
	{\cdots\Psi(-X_{\beta_j})\Psi(-X_{\beta_i+\beta_j})\Psi(-X_{\beta_i})\Psi(-X_{\beta_l})\Psi(-X_{\beta_k+\beta_l})\Psi(-X_{\beta_k})\cdots} \\
	\downarrow && \downarrow \\
	{\cdots\Psi(-X_{\beta_i})\Psi(-X_{\beta_j})\Psi(-X_{\beta_k})\Psi(-X_{\beta_l})\cdots} \\
	{\cdots\Psi(-X_{\beta_j})\Psi(-X_{\beta_i+\beta_j})\Psi(-X_{\beta_i})\Psi(-X_{\beta_k})\Psi(-X_{\beta_l})\cdots } & \rightarrow & {\cdots U(kl)U(ij)\cdots} \\
	{\cdots\Psi(-X_{\beta_j})\Psi(-X_{\beta_i+\beta_j})\Psi(-X_{\beta_i})\Psi(-X_{\beta_l})\Psi(-X_{\beta_k+\beta_l})\Psi(-X_{\beta_k})\cdots}
\end{tikzcd}
\]
   Similarly, every hexagon move changes the arrangement of three unlinkings -- two that share an index, and one associativity unlinking -- following the orientation of $A_m$ encoded in $H_{A_m}$. 
   Changing the arrangement of pentagon relations between three operators labeled by roots that are ordered by $\preceq$ follows the orientation of $A_m$ in a completely analogous way (as discussed in the previous part of the proof), creating a new maximal sequence which is mapped to the sequence of unlinkings created by the hexagon move, following $\varphi^{-1}$. 
   
   Since for all square and hexagon moves that create all elements of $\mathbbm{\Lambda}[U(12),\dots,U(m-1,m);H_{A_m}]$ from $\bfU$ there exist corresponding rearrangements of pentagon relations which lead to maximal sequences that are assigned to the respective sequences of unlinkings, we know that such a map is surjective. 
   On the other hand, the arrangement of pentagon relations matches exactly the arrangement of unlinkings, following $\varphi^{-1}$. 
   In other words, any two distinct maximal sequences of pentagons will correspond to distinct sequences of unlinkings in $\mathbbm{\Lambda}$ -- the map is also injective, which completes the proof. 
\end{proof}

%In the remainder of the Section we will consider various case studies and applications of the main Theorem \ref{thm:unlinking and A_m wall-crossing}.

%\begin{rmk}
%    It would be interesting to understand a possible generalization of Theorem \ref{thm:unlinking and A_m wall-crossing} to other types of quivers coming from triangulated surfaces. However, we expect that in cases other than $A_m$, the joint connector would not capture the maximal sequences of pentagon relations (instead, it would capture some partial sequences). The simplest example where this can happen is $D_4$ (where the corresponding surface is a once-punctured 4-gon) -- among its 12 positive roots, one is of the form $2\alpha_1+\alpha_2+\alpha_3+\alpha_4$, where $\alpha_1$ corresponds to the central node in the Dynkin diagram. Since $\alpha_1$ appears with a factor of 2, it means that some unlinking must enter at least twice in the joint connector, which is not possible by its definition.
%\end{rmk}

\subsection{Connectors as path polytopes}\label{sec:connectors and path polytopes}

In this section, we provide an alternative language to describe combinatorics of unlinkings that encode wall-crossing and make Theorem \ref{thm:unlinking and A_m wall-crossing} more intuitive.
\newline

It is well known that for 4d BPS quivers, the wall-crossing of the first kind is encoded in the combinatorial object called \emph{oriented exchange graph} \cite{keller2013quiver, garver2017maximal, garver2019lattice}, denoted $\overrightarrow{EG}(Q_{4\text{d}})$. 
Its two-dimensional faces are squares and pentagons, representing the commutation and pentagon relations for the quantum dilogarithm, respectively. 
If $Q_{4\text{d}}$ is a finite mutation type, $\overrightarrow{EG}(Q_{4\text{d}})$ is an oriented combinatorial polytope of finite volume.
In particular, $\overrightarrow{EG}(A_m)=K_{m+2}$, the $m$-dimensional associahedron \cite{keller2011cluster,Gaiotto:2009hg,padrol2023associahedra}. 
Note that different orientations of $A_m$ quiver yield $K_{m+2}$ oriented in a different, non-equivalent way. 

We can now use this to reformulate Theorem \ref{thm:unlinking and A_m wall-crossing} in terms of such graphs and make the connector theory more intuitive.
To this end, we need the notion of a path polytope. 
This point of view is especially attractive due to its fully combinatorial nature and broader generality -- in fact, it suggests a far-reaching generalization of Theorem \ref{thm:unlinking and A_m wall-crossing} for any 4d quiver, provided its oriented exchange graph is known. 

Recall that each edge of the diagram in Figure \ref{fig:A4_connector_linear} represents a pentagon relation, i.e., its is dual to a pentagon face in the associahedron $K_6$. 
It is not a coincidence that our diagrams appear like a~dual representation of those, but they apply to 3d quivers instead of 4d ones (with pentagon relations corresponding to unlinkings). 
In the simplest case of $A_2$ and the pentagon relation considered in Section~\ref{sec:pentagon to unlinking}, we can represent the unlinking $U(12)$ as a homotopy between two paths in the associahedron $K_4$: 
\[
\begin{tikzcd}[column sep=0.1]
& {a((bc)d)} && {a(b(cd))} \\
	{(a(bc))d} & {} && {} & {(ab)(cd)} \\
	&& {((ab)c)d}
	\arrow[from=1-2, to=1-4]
	%\arrow[no head, dashed, from=1-4, to=3-3]
	\arrow[from=2-1, to=1-2]
	\arrow["{U(12)}"', Rightarrow, from=2-4, to=2-2]
	\arrow[from=2-5, to=1-4]
	\arrow[from=3-3, to=2-1]
	\arrow[from=3-3, to=2-5]
\end{tikzcd}.
\]
Analogous interpretations also hold for higher-dimensional polytopes arising from the joint connectors. 
We now provide a simpler description of the joint connector using the notion of a path polytope.\footnote{
The notion we define here is a particular instance of a monotonic path polytope, defined in \cite{billera1992fiber}.
%That is, we can consider a suitable geometric realization of an unoriented exchange graph together with and a linear functional (e.g. height function with a unique minimum and maximum) whose effect is to provide a coherent orientation of our polytope.}
}

%\begin{dfn}
%    Let $G_{\rm ex}(Q_{4\text{d}})$ be an oriented exchange graph of a 4d quiver $Q_{4\text{d}}$. Define the path polytope $\bar{\Lambda}(Q_{4\text{d}})$ as a polytope whose vertices correspond to complete paths in $Q_{4\text{d}}$ and edges to homotopies between these paths.
%\end{dfn}

\begin{dfn}
    Let $P$ be a polytope equipped with an orientation of its edges, such that there is a unique source and a sink vertex. 
    Define the path polytope $\bar{\Lambda}(P)$ of $P$ as a polytope whose vertices correspond to paths in $P$ from the source to the sink, and whose edges correspond to simple homotopies\footnote{Here, by simple homotopies, we mean those which act along one face of $P$.} between these paths. 
\end{dfn}

\begin{exmp*}
    Let $P$ be an $m$-dimensional oriented hypercube, with sink and source vertices separated by the longest diagonal. 
    Then, $\bar{\Lambda}(P) = \Pi_{m-1}$, an $m$-dimensional permutohedron. 
\end{exmp*}

\begin{dfn}\label{dfn:Lambda polytope}
    Let $\overrightarrow{EG}(Q_{4\text{d}})$ be the oriented exchange graph of $Q_{4\text{d}}$. 
    Define $\Lambda(\overrightarrow{EG}(Q_{4\text{d}}))$ as the polytope obtained from $\bar{\Lambda}(\overrightarrow{EG}(Q_{4\text{d}}))$ by identifying those vertices which represent paths related by a square move on the exchange graph. 
\end{dfn}

Now Theorem \ref{thm:unlinking and A_m wall-crossing} implies the following. 
\begin{cor}\label{cor:path polytope}
    For $A_m$ quivers, the joint connector class $\mathbbm{\Lambda}[U(12),\dots,U(m-1,m);H_{A_m}]$ is identified with the path polytope $\Lambda(\overrightarrow{EG}(A_m))$.
\end{cor}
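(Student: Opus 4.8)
\textbf{Proof proposal for Corollary \ref{cor:path polytope}.}

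The plan is to unpack the definition of the path polytope $\Lambda(\overrightarrow{EG}(A_m))$ and match it, step by step, with the joint connector class $\mathbbm{\Lambda}[U(12),\dots,U(m-1,m);H_{A_m}]$ via the bijection already established in Theorem \ref{thm:unlinking and A_m wall-crossing}. Recall that $\overrightarrow{EG}(A_m) = K_{m+2}$, with a unique source vertex (the oriented exchange graph associated to the initial triangulation, i.e., the initial operator $\boldsymbol{\Psi}_{A_m}$) and a unique sink (corresponding to the maximal sequence of pentagon moves). A directed path in $\overrightarrow{EG}(A_m)$ from source to sink is exactly a maximal sequence of pentagon relations applied to $\boldsymbol{\Psi}_{A_m}$ that follows the orientation of arrows in $A_m$: each edge of the path is one pentagon move, and maximality of the sequence is equivalent to the path terminating at the sink. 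First I would make this identification precise — vertices of $\bar\Lambda(\overrightarrow{EG}(A_m))$ are source-to-sink paths, hence maximal pentagon sequences; the 2-dimensional faces of $K_{m+2}$ are squares and pentagons, so simple homotopies across a face are either commutation (square) moves or pentagon-face homotopies, and the edges of $\bar\Lambda$ record precisely these.

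The second step is to account for the quotient that defines $\Lambda$ from $\bar\Lambda$: one identifies vertices representing paths that differ by a square move. Under the bijection of Theorem \ref{thm:unlinking and A_m wall-crossing}, a square move on $\overrightarrow{EG}(A_m)$ (exchanging two commuting pentagon relations) corresponds exactly to a square move \eqref{eq:square} on the associated sequence of unlinkings, and the connector class $\mathbbm{\Lambda}$ is by construction an equivalence class modulo square moves (and hexagon moves fixed by $H_{A_m}$). Therefore the vertices of $\Lambda(\overrightarrow{EG}(A_m))$ — paths modulo square moves — are in bijection with the elements of $\mathbbm{\Lambda}[U(12),\dots,U(m-1,m);H_{A_m}]$ modulo square moves, which is to say with the elements of the connector class under the equivalence relation \eqref{eq:equivalence of unlinking operators} restricted to square moves. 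The remaining homotopies in $\bar\Lambda$ that survive the quotient — those across pentagon faces of $K_{m+2}$ — correspond under Theorem \ref{thm:unlinking and A_m wall-crossing} to hexagon moves \eqref{eq:first hexagon}--\eqref{eq:second hexagon} on the unlinking side (a pentagon face in the 4d associahedron is dual to a hexagon relation among unlinkings, as already noted in Section \ref{sec:connectors and path polytopes} and illustrated for $A_2$ and $A_3$). Matching the orientation data is exactly the content of $H_{A_m}$ following the orientation of $A_m$ in Definition \ref{dfn:WCF for a symmetric quiver}, so the hexagon type appearing in each pentagon-face homotopy is precisely the one prescribed by $H_{A_m}$. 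This gives a bijection of vertex sets and of edge sets compatible with incidence, hence an identification of the two combinatorial polytopes.

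To close the argument I would check that the identification respects higher-dimensional faces as well, which amounts to observing that both $\Lambda(\overrightarrow{EG}(A_m))$ and $\mathbbm{\Lambda}[U(12),\dots,U(m-1,m);H_{A_m}]$ are determined by their $1$-skeleta together with the square/hexagon relations: on the 4d side this is the standard fact that the path polytope (a monotone path polytope in the sense of \cite{billera1992fiber}) is the fiber polytope of the associahedron under the central-charge projection, and on the 3d side it follows from the Completeness Theorem of \cite{KLNS}, which guarantees that any two representatives of a connector class are related by a finite sequence of square and hexagon moves and that no further relations are needed. Permutation invariance of the joint connector (the Lemma preceding Figure \ref{fig:A4_connector_linear}) ensures the class is well-defined independently of the bracketing order, matching the fact that the path polytope does not depend on an ordering of the pentagon faces.

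The main obstacle I anticipate is the last step: verifying that the combinatorial data of the $1$-skeleton plus the $2$-faces genuinely reconstructs the full polytope on both sides, and in particular that the face lattice of the path polytope $\Lambda(\overrightarrow{EG}(A_m))$ coincides face-by-face with the poset of connector sub-classes $\mathbbm{\Lambda}$ of lower-dimensional tuples of unlinkings. This requires either invoking the fiber-polytope machinery of \cite{billera1992fiber} in enough detail to know that monotone paths modulo the relations from $2$-faces recover the polytope, or giving a direct inductive argument using the nesting $A_m \subset A_{m+1}$ (each $A_m$ is a subquiver of a suitably oriented $A_{m+1}$, so the path polytopes are nested, as remarked in the introduction and visible in Figure \ref{fig:A4_polytope_example}). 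I expect the cleanest route is the inductive one, bootstrapping from the base cases $A_2$ (path polytope a single edge $*\xrightarrow{U(12)}*$) and $A_3$ (path polytope the unlinking hexagon of \cite{KLNS}), using Theorem \ref{thm:unlinking and A_m wall-crossing} at each stage to transport the combinatorics from the associahedron side to the connector side.
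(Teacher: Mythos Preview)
Your proposal contains a level-shift error that runs through the whole argument. You write that ``a directed path in $\overrightarrow{EG}(A_m)$ from source to sink is exactly a maximal sequence of pentagon relations\ldots\ each edge of the path is one pentagon move.'' This is not correct: edges of the oriented exchange graph $K_{m+2}$ are \emph{flips} (single mutations, equivalently single factors $\Psi(-X_\alpha)$ in the Kontsevich--Soibelman operator), not pentagon relations. A source-to-sink path in $K_{m+2}$ therefore encodes a specific ordered-product form of the KS operator --- a chamber --- rather than a sequence of pentagon moves. Pentagon relations are the pentagonal $2$-faces of $K_{m+2}$; a simple homotopy across one such face is a single application of the pentagon identity, and under the map $\varphi^{-1}$ in the proof of Theorem~\ref{thm:unlinking and A_m wall-crossing} this corresponds to a \emph{single} unlinking, not to a hexagon move. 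Likewise, the square faces of $K_{m+2}$ (commuting flips, i.e., commuting $\Psi$-factors) are not the same as the square move~\eqref{eq:square} among unlinkings (commuting \emph{pentagon relations}); the latter lives one level up, as a $2$-cell of the path polytope itself.

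Once the dictionary is corrected, the corollary is essentially immediate from Theorem~\ref{thm:unlinking and A_m wall-crossing}. Vertices of $\Lambda(\overrightarrow{EG}(A_m))$ are paths in $K_{m+2}$ modulo square-face homotopies, i.e., KS-operator forms modulo commutation of mutually local factors; these are exactly the intermediate symmetric quivers appearing as vertices of the connector diagram (Figure~\ref{fig:A4_connector_linear}). Edges of $\Lambda(\overrightarrow{EG}(A_m))$ are pentagon-face homotopies, hence single unlinkings. The elements $\bfU\in\mathbbm{\Lambda}$ are then the source-to-sink paths in this graph, and Theorem~\ref{thm:unlinking and A_m wall-crossing} is precisely the statement that these paths biject with maximal pentagon sequences. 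The paper accordingly states the corollary without further proof; your proposed fiber-polytope and inductive apparatus for higher faces is not needed here.
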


In other words, given a 4d BPS quiver $Q_{4\text{d}}$, we can use its exchange graph to define the path polytope which, according to Theorem \ref{thm:unlinking and A_m wall-crossing}, is equivalent to the joint connector and thus should contain all information about the symmetrization map (which we will define in full generality below). 
For example, the case of $A_3$ is shown in Figure \ref{fig:Associahedra}. 
Here, the exchange graph is the associahedron $K_5$; we can view the connector (hexagon) given by either of the relations \eqref{eq:second hexagon} and \eqref{eq:first hexagon} as the path polytope $\Lambda(K_5)$ of the three-dimensional associahedron $K_5$. 
\begin{figure}[h!]
    \centering
    \includegraphics[width=1\linewidth]{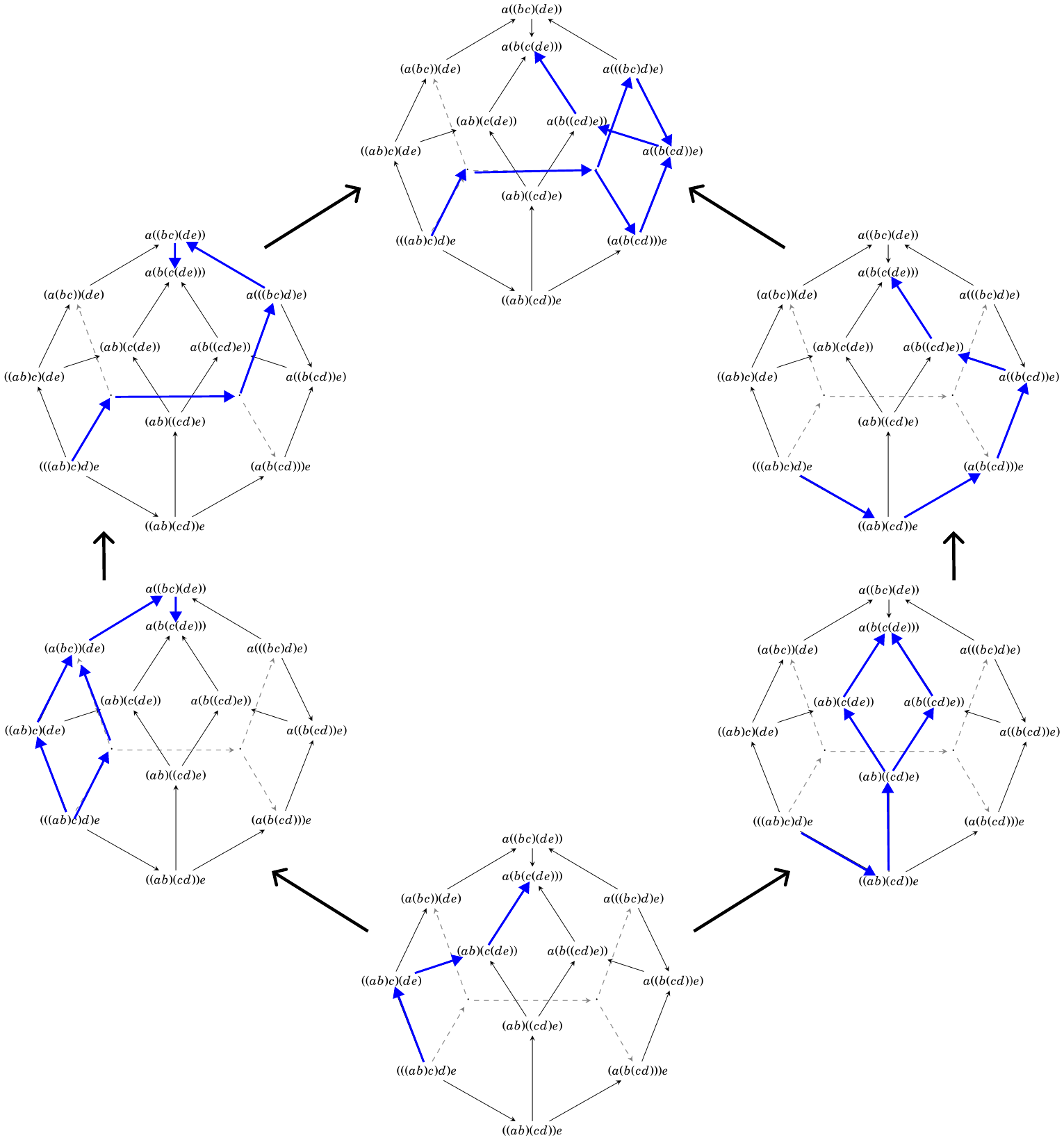}
    \caption{
    The path polytope $\Lambda(K_5)$ is a hexagon. 
    Its vertices correspond to paths (shown in \textcolor{blue}{blue}) on the oriented exchange graph $K_5$ for the $Q_{4\text{d}} = A_3$ quiver. 
    Its edges correspond to simple homotopies of a path in $K_5$ along some pentagonal face, while all homotopies along quadrilateral faces are modded out. 
    As a result, we have a hexagon equivalent to either of the unlinking hexagons $\mathbbm{\Lambda}(U(12),U(23);H)$ \eqref{eq:second hexagon}--\eqref{eq:first hexagon}.
    Oriented arrows of the path polytope connect chambers with different numbers of stable BPS states in 4d: at the bottom is the minimal chamber with 3 BPS states (a unique sequence with 3 blue arrows), while at the top is the maximal chamber with 6 BPS states (sequences with 6 blue arrows). 
    }
    \label{fig:Associahedra}
\end{figure}
It would be interesting to investigate these polytopes further, and, if possible, relate them to higher-categorical analogues of associahedra \cite{backman2024higher}.

\subsection{Definition of symmetrization map}\label{sec:sym-map-def}

Having formulated the bijection between the wall-crossing of the first kind and unlinking, we are now ready to provide a proper definition of the symmetrization map $\mathfrak{S}$. 
\newline

The domain and codomain of the symmetrization map are given by 4d quivers with stability data and 3d symmetric quivers, respectively:
\[
(Q_{4\text{d}},\text{stab.\,data}) \stackrel{\mathfrak{S}}{\longmapsto}Q\,.
\]
The idea for the definition is to start from the minimal chamber discussed in the previous sections and then use Theorem \ref{thm:unlinking and A_m wall-crossing} to define symmetric quivers corresponding to any choice of stability data for a fixed 4d quiver.

\begin{dfn}\label{dfn:symmetrization map}
    The symmetrization map for a Dynkin quiver $A_m$ and stability data $u$ is given by
    \begin{equation}\label{eq:sym-map-eq}
        \Qsym{A_m}{u} := \boldsymbol{U}\vert_{u} Q \,,
    \end{equation}
    where $Q$ is the symmetrization of $A_m$ and $\boldsymbol{U}\vert_u$ is a sequence of unlinkings in joint connector class $\mathbbm{\Lambda}[U(12),\dots,U(m-1,m);H_{A_m}]$ assigned by canonical bijection to the sequence of pentagon relations which transform the minimal chamber of $A_m$ into the chamber described by $u$.
\end{dfn}

For example, if $u$ corresponds to the minimal chamber with $m$ BPS states, then the joint connector consists of the identity operator: $\mathbbm{\Lambda}=\{\operatorname{Id}\}$ and we get:
    \begin{equation}
        \Qsym{A_m}{\text{min}} = Q\,,
    \end{equation}
which agrees with the previous considerations in Section \ref{sec:3d4d}.

\begin{cor}\label{cor:symmetrization map isomorphism}
    Theorem \ref{thm:unlinking and A_m wall-crossing} ensures that the structure of wall-crossing of the first kind for the Dynkin quiver $A_m$ (excluding the chambers described by quivers with superpotential) is isomorphic to the structure of unlinkings encoded in the joint connector class $\mathbbm{\Lambda}[U(12),\dots,U(m-1,m);H_{A_m}]$. It implies that for a~fixed choice of stability data different unlinking operators produce symmetric quivers related by a trivial relabelling of the nodes, which means that the symmetrization map $\mathfrak{S}$ is well defined.
\end{cor}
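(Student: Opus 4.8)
The first assertion of the corollary --- that wall-crossing of the first kind for $A_m$ is isomorphic to the unlinking structure carried by $\mathbbm{\Lambda}[U(12),\dots,U(m-1,m);H_{A_m}]$ --- is nothing but a restatement of Theorem \ref{thm:unlinking and A_m wall-crossing}, so nothing new is needed there. The substance is the second assertion: for a fixed stability datum $u$, the quiver $\boldsymbol{U}\vert_u Q$ produced by \eqref{eq:sym-map-eq} does not depend on the choices hidden in Definition \ref{dfn:symmetrization map}. Two choices enter: a sequence of pentagon relations taking the minimal chamber of $A_m$ to $u$, and a representative $\boldsymbol{U}\vert_u$ of the joint connector class attached to that sequence by the bijection of Theorem \ref{thm:unlinking and A_m wall-crossing}. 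The second choice is harmless by construction: $\mathbbm{\Lambda}[U(12),\dots,U(m-1,m);H_{A_m}]$ is an equivalence class for the relation $\sim$ of \eqref{eq:equivalence of unlinking operators}, and $\boldsymbol{U}\sim\boldsymbol{U}'$ means $\boldsymbol{U}R=\boldsymbol{U}'R$ for every compatible quiver $R$; taking $R=Q$ shows that all representatives give literally the same symmetric quiver. So everything reduces to the first choice.

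The plan is to show that any two sequences of pentagon relations $\mathcal{P},\mathcal{P}'$ from the minimal chamber to $u$ map, under the bijection of Theorem \ref{thm:unlinking and A_m wall-crossing}, to $\sim$-equivalent sequences of unlinkings. First I would note that the chamber $u$ determines the BPS spectrum present there, and since pentagon relations only ever add operators, it determines the set of composite roots in $\Phi^+(A_m)\setminus\Phi^0(A_m)$ created in passing from the minimal chamber to $u$; by the construction of the bijection (which runs through $\varphi^{-1}$ from Proposition \ref{prp:A_m connectors}) this is precisely the set of unlinkings occurring in $\boldsymbol{U}\vert_u$, each with bracketing dictated by the order of its predecessors. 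Thus $\mathcal{P}$ and $\mathcal{P}'$ apply the same family of pentagon moves and differ only in their order, which is constrained by the partial order $\preceq$ of Definition \ref{dfn:partial ordering of A_m roots} and by $H_{A_m}$.

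Next I would pass to the polytope picture of Section \ref{sec:connectors and path polytopes}: $\mathcal{P}$ and $\mathcal{P}'$ are two directed paths from the source to the vertex $u$ of the oriented exchange graph $\overrightarrow{EG}(A_m)=K_{m+2}$, all of whose $2$-faces are squares or pentagons. Any two such paths are connected by a finite chain of elementary moves across these $2$-faces: transpositions of pentagon relations on $\preceq$-incomparable roots (quadrilateral faces) and rearrangements of the three pentagon relations attached to a connected triple of composite roots (pentagonal faces). By the proof of Theorem \ref{thm:unlinking and A_m wall-crossing}, the former correspond to square moves \eqref{eq:square} and the latter to the hexagon moves \eqref{eq:first hexagon}--\eqref{eq:second hexagon} prescribed by $H_{A_m}$, and the bijection intertwines the two sides (this is the commuting square appearing in that proof). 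Hence $\boldsymbol{U}\vert_u\sim\boldsymbol{U}'\vert_u$, so $\boldsymbol{U}\vert_u Q=\boldsymbol{U}'\vert_u Q$. The only residual discrepancy is the conventional name attached to a node created by an associativity unlinking $U((ij)k)$ versus $U(i(jk))$; the hexagon identities glue these name-swaps into one global permutation of the node set, which is the ``trivial relabelling'' of the statement, and $\mathfrak{S}(A_m,u)$ is thereby well defined.

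I expect the crux to be the claim that any two directed paths in $\overrightarrow{EG}(A_m)$ from the source to a common vertex $u$ are related purely by moves across square and pentagon faces, together with the compatibility of the bijection of Theorem \ref{thm:unlinking and A_m wall-crossing} --- stated for \emph{maximal} sequences --- with truncation at an arbitrary chamber. Making this airtight calls on the associahedral structure of $K_{m+2}$ and on the Completeness Theorem and Connector Algorithm of \cite{KLNS}, in particular on ruling out any ``higher'' relation among pentagon relations for $A_m$. By contrast, the transport along the bijection and the final identification of quivers are routine, the latter reducing to a check that the node-matchings forced by the various hexagon moves are mutually consistent.
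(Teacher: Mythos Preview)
The paper does not supply a separate proof of this corollary; it is stated immediately after Definition \ref{dfn:symmetrization map} as a direct consequence of Theorem \ref{thm:unlinking and A_m wall-crossing} and then the exposition moves on. Your proposal is therefore not to be compared against a parallel argument, but rather serves as an unpacking of what the paper leaves implicit.

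Your unpacking is essentially sound and hits the right points. You correctly isolate the two choices in Definition \ref{dfn:symmetrization map}, dispose of the representative-of-equivalence-class issue by the very definition of $\sim$, and reduce the real content to comparing two sequences of pentagon relations reaching the same chamber $u$. Your use of the polytope picture --- that two directed paths from the source to $u$ in $\overrightarrow{EG}(A_m)=K_{m+2}$ differ by moves across square and pentagon $2$-faces, which the proof of Theorem \ref{thm:unlinking and A_m wall-crossing} transports to square and hexagon moves on the unlinking side --- is exactly the mechanism the paper has in mind, and is made explicit in Corollary \ref{cor:path polytope} and Figure \ref{fig:Associahedra}.

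The one honest subtlety you flag --- that Theorem \ref{thm:unlinking and A_m wall-crossing} is phrased for \emph{maximal} sequences while $\mathfrak{S}$ is defined at an arbitrary chamber --- is real but benign: the bijection in that proof is built step by step through $\varphi^{-1}$, and the commuting-square argument there is local (one pentagon relation at a time), so it restricts to prefixes without change. Your remark that the residual ambiguity is a node-relabelling coming from the bracketing $U((ij)k)$ versus $U(i(jk))$ is exactly what the corollary's phrase ``trivial relabelling of the nodes'' is pointing at.
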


Generalization of Theorem \ref{thm:unlinking and A_m wall-crossing} (and following isomorphism between wall-crossing and unlinking) to arbitrary 4d quivers $Q_{4\text{d}}$ of finite mutation type, including $ADE$ quivers as well as quivers coming from general triangulated surfaces, is an interesting direction for future research. Without it, one can still use the canonical bijection to assign sequences of unlinkings to sequences of wall-crossing relations, but in case of different paths leading to the same chamber the question whether the resulting symmetric quivers are the same remains open. In case of the negative answer, one can still consider the following extension of the Definition \ref{dfn:symmetrization map}:
    \begin{equation}\label{eq:sym-map-gen}
        \Qsym{Q_{4\text{d}}}{u} := \{ \boldsymbol{U}\vert_{u} Q \}_{\boldsymbol{U}\in \mathbbm{\Lambda}[U(12),\dots,U(m-1,m);H_{Q_{4\text{d}}}]},
    \end{equation}
where we assume that $|Q_{4\text{d}}|=m$.
    
Note that the symmetrization map $\mathfrak{S}$ (as well as its extension) can also be understood in terms of (a~suitable composition of) homotopies between paths in the oriented exchange graph of $Q_{4\text{d}}$ -- see Corollary~\ref{cor:path polytope}. 
In other words, we can view the joint connector as a path polytope $\Lambda(Q_{4\text{d}})$, and the symmetrization map can be defined via paths on $\Lambda(Q_{4\text{d}})$. 
For $A_m$ quivers, where oriented exchange graphs are well-known associahedra, this interpretation is especially effective -- see Figure \ref{fig:Associahedra}. 
\newline

In the remainder of this section, we investigate several examples that illustrate the isomorphism between wall-crossing and unlinking given by the map $\mathfrak{S}$.

\subsection{\texorpdfstring{${A}_2$}{A2} quiver}

Let us start with the simplest non-trivial example which is $Q_{4\text{d}}=A_2=\bullet\rightarrow\bullet$. 
There are two choices of stability data $L_1$ and $L_2$, corresponding to the minimal and maximal BPS chambers of $A_2$ Argyres-Douglas theory, respectively (Table \ref{tab:a2chambers}). 
\begin{table}[h!]
\center
\begin{tabular}{||c|c|c||}
\hline
Chamber & Stability data& \# of BPS States \\
\hline
\hline
$L_{1}$&$\hat{\alpha}_{2} >\hat{\alpha}_{1}$ &  2 \\
\hline
$L_{2}$&$\hat{\alpha}_{1} >\hat{\alpha}_{12} > \hat{\alpha}_{2}$ &  3 \\
\hline
\end{tabular}
\caption{The BPS chambers of the $A_2$ Argyres-Douglas theory, characterized by the partial ordering of the phases of BPS states. 
We use the notation $\hat{\alpha}_i = \arg \alpha_i$ and $\hat{\alpha}_{ij} = \arg(\alpha_i+\alpha_j)$. 
}
\label{tab:a2chambers}
\end{table}
Here, the joint connector $\mathbbm{\Lambda}(U(12);H_{A_2}(u)=\emptyset)$ is a one-element set $\{U(12)\}$, so only one unlinking is required. 
By definition, symmetrization in the minimal chamber agrees with the one considered earlier in Section \ref{sec:3d4d}, and we obtain: 
\begin{equation}
\begin{aligned}
        \Qsym{\bullet \rightarrow \bullet}{L_1} = &\ \operatorname{Id} Q & = \quad &
        \begin{tikzcd}
	\bullet & \bullet
	\arrow[curve={height=-6pt}, from=1-1, to=1-2]
	\arrow[curve={height=-6pt}, from=1-2, to=1-1]
\end{tikzcd}
        \\
        \Qsym{\bullet \rightarrow \bullet}{L_2} = &\ U(12) Q & = \quad & 
\begin{tikzcd}
	\bullet & \bullet & \bullet
	\arrow[from=1-2, to=1-2, loop, in=55, out=125, distance=10mm]
\end{tikzcd}
\end{aligned}.
\end{equation}
As we shall see later, in more complicated cases the structure of symmetric quivers in higher chambers is more involved and is given by a rather intricate pattern of unlinkings. 

\subsection{\texorpdfstring{${A}_3$}{A3} quiver}\label{sec:A3}

Our next example is $Q_{4\text{d}}=A_3$. 
There are three different orientations of the $A_3$ quiver -- linear, inward and outward -- which we denote as $A_3^L$, $A_3^I$, and $A_3^O$, respectively: 
\begin{equation}\label{eq:A_3 orientations}
    \begin{tikzcd}[sep=small,row sep=0.25ex]
	{A_3^L:} & {\bullet} & {\bullet} & {\bullet} \\
	{A_3^I:} & {\bullet} & {\bullet} & {\bullet} \\
	{A_3^O:} & {\bullet} & {\bullet} & {\bullet}
	\arrow[from=1-2, to=1-3]
	\arrow[from=1-3, to=1-4]
	\arrow[from=2-2, to=2-3]
	\arrow[from=2-4, to=2-3]
	\arrow[from=3-3, to=3-2]
	\arrow[from=3-3, to=3-4]
\end{tikzcd}
\end{equation}
These three quivers can be associated with the minimal chambers of the $A_3$ Argyres-Douglas theory, denoted as $L_1$, $I_1$, and $O_1$, respectively. 
The general structure of BPS chambers and their stability data is shown in Table \ref{tab:a3chambers} (see \cite[Section 2.3.2]{Alim:2011kw} for the detailed discussion). 

\begin{table}
\center
\begin{tabular}{||c|c|c|c||}
\hline
Chamber & Stability data & \# of BPS States \\
\hline
\hline
$L_{1}$&$\hat{\alpha}_{3}>\hat{\alpha}_{2} >\hat{\alpha}_{1}$ &  3 \\
\hline
$L_{2}$&$\hat{\alpha}_{2}$ smallest, and $\hat{\alpha}_{1}, \hat{\alpha}_{3} >\hat{\alpha}_{12}$ &  4 \\
\hline
$L_{3}$&$\hat{\alpha}_{2}$ largest, and $ \hat{\alpha}_{23} >\hat{\alpha}_{1}, \hat{\alpha}_{3}$  &  4 \\
\hline
$L_{4}$&$\hat{\alpha}_{1}>\hat{\alpha}_{12} >\hat{\alpha}_{3}>\hat{\alpha}_{2}$ &  5 \\
\hline
$L_{5}$&$\hat{\alpha}_{2}>\hat{\alpha}_{1} >\hat{\alpha}_{23}>\hat{\alpha}_{3}$ &  5 \\
\hline
$L_{6}$&$\hat{\alpha}_{1}>\hat{\alpha}_{2} >\hat{\alpha}_{3}$ &  6 \\
\hline
$O_{1}$&$\hat{\alpha}_{2}$ smallest & 3\\
\hline
$O_{2}$&$\hat{\alpha}_{2}$ intermediate & 4 \\
\hline
$O_{3}$&$\hat{\alpha}_{2}$ largest, and $\hat{\alpha}_{12}<\hat{\alpha}_{3}$ or $\hat{\alpha}_{23}<\hat{\alpha}_{1}$& 5 \\
\hline
$O_{4}$&$\hat{\alpha}_{2}$ largest, and $\hat{\alpha}_{12}>\hat{\alpha}_{3}$ and $\hat{\alpha}_{23}>\hat{\alpha}_{1}$ & 6 \\
\hline
$I_{1}$&$\hat{\alpha}_{2}$ largest & 3\\
\hline
$I_{2}$&$\hat{\alpha}_{2}$ intermediate & 4 \\
\hline
$I_{3}$&$\hat{\alpha}_{2}$ smallest, and $\hat{\alpha}_{3}<\hat{\alpha}_{12}$ or $\hat{\alpha}_{1}<\hat{\alpha}_{23}$& 5 \\
\hline
$I_{4}$&$\hat{\alpha}_{2}$ smallest, and $\hat{\alpha}_{3}>\hat{\alpha}_{12}$ and $\hat{\alpha}_{1}>\hat{\alpha}_{23}$ &  6 \\
\hline
$C_{1}$&not cyclically ordered e.g. $\hat{\alpha}_{2}>\hat{\alpha}_{1} >\hat{\alpha}_{3}$ &  4 \\
\hline
$C_{2}$&cyclically ordered e.g.  $\hat{\alpha}_{1}>\hat{\alpha}_{2} >\hat{\alpha}_{3}$ &  5 \\
\hline
\end{tabular}
\caption{The BPS chambers of the $A_3$ Argyres-Douglas theory, characterised by the partial ordering of phases of BPS states.
}
\label{tab:a3chambers}
\end{table}

Note that the wall-crossing formula (\ref{eq:Reineke_WCF}) relates a minimal 4d chamber with 3 BPS states labeled by simple roots $\alpha_1,\alpha_2,\alpha_3$, to the corresponding maximal chamber with 6 BPS states labeled by all positive roots 
\[
\alpha_1,\,
\alpha_2,\,
\alpha_3,\,
\alpha_{1}+\alpha_2,\,
\alpha_2+\alpha_3,\,
\alpha_1+\alpha_2+\alpha_3\,.
\]
For all three orientations, we get the same symmetric quiver via the symmetrization map (\ref{eq:sym-map-eq}), consistent with Section \ref{sec:3d4d}:
\be
	\Qsym{A_3^L}{L_1} = \Qsym{A_3^I}{I_1} = \Qsym{A_3^O}{O_1} = \begin{tikzcd}
	\bullet & \bullet & \bullet
	\arrow[curve={height=-6pt}, from=1-1, to=1-2]
	\arrow[curve={height=-6pt}, from=1-2, to=1-1]
	\arrow[curve={height=-6pt}, from=1-2, to=1-3]
	\arrow[curve={height=-6pt}, from=1-3, to=1-2]
\end{tikzcd}.
\ee
Let $L_i,I_i,O_i$ be quivers with stability data given in 
Table \ref{tab:a3chambers}, which are obtained by applying certain pentagon relations on the initial operators $L_1, I_1, O_1$. 
Note that in our consideration we exclude chambers~$C_i$, since they correspond to a 4d quiver with a superpotential:
\[
\begin{tikzcd}
	\bullet & \bullet & \bullet
	\arrow[from=1-1, to=1-2]
	\arrow[from=1-2, to=1-3]
	\arrow[curve={height=-12pt}, from=1-3, to=1-1]
\end{tikzcd},
\]
which is not an $A_3$ quiver, but is related to it by a mutation operation.
As a result, the corresponding wall-crossing operators cannot be obtained from the initial operators by applying pentagon relations, and our analysis simply does not apply.\footnote{It would be interesting to understand what symmetric quiver corresponds to a 4d quiver with a superpotential.}

In order to relate $L_i, I_i, O_i$ to 3d symmetric quivers and unlinking, we need to apply Theorem \ref{thm:unlinking and A_m wall-crossing} to every choice of orientation. 
That is, we need to compute the corresponding joint connector which follows the orientation of arrows in the 4d quiver. 
In each case, we need only to connect $U(12)$ with $U(23)$, hence to apply one of the hexagon moves (\ref{eq:unlinking hexagons}) -- most importantly, the type of hexagon will depend on the orientation of the 4d quiver. 

Let us start with $L$, where we can fix the labelling by $\alpha_1\rightarrow\alpha_2\rightarrow\alpha_3$. 
Following Definition \ref{dfn:WCF for a symmetric quiver}, we see that the hexagon type is defined by arrows $\alpha_1\to \alpha_2$ and $\alpha_2\to \alpha_3$. 
Since they are co-oriented, we set $H_{A_3}=\{(^{12}_{23}),\text{a}\}$, i.e., we use hexagon (a). 
In the other cases, the respective arrows have opposite orientation, so for both $I$ and $O$ we assign $H_{A_3} = \{(^{12}_{23}),\text{b}\}$. 
Now the joint connector class in all cases is the simply application of the corresponding hexagon move, following the choice of $H_{A_3}$. 
The combined picture of the joint connector classes together with 4d BPS chambers for all orientations is shown in Figure \ref{fig:A_3 all joint connectors}. 

\begin{figure}[h!]
    \centering
    \[\begin{tikzcd}[column sep={-0.8cm},row sep={1cm}]
	& \begin{array}{c} \substack{\mathlarger{\Qsym{A_3^I}{I_4}} \\ \mathlarger{=\Qsym{A_3^O}{O_4}}} \\
    \includegraphics[scale=0.35]{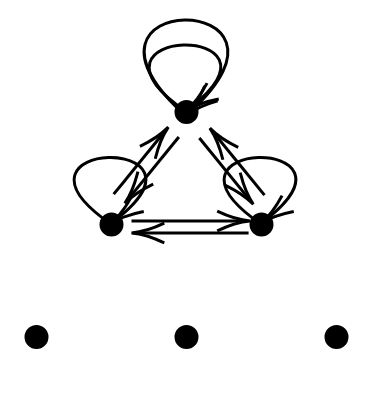}\end{array} &&&& \begin{array}{c}
    {\substack{\mathlarger{\Qsym{A_3^L}{L_6}}}} \\
    \includegraphics[scale=0.35]{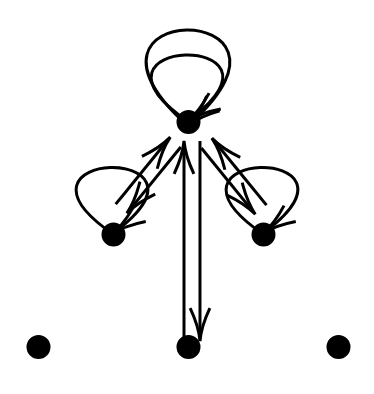}
    \end{array}
    \\
	\begin{array}{c} \substack{\mathlarger{\Qsym{A_3^I}{I_3}}\\ \mathlarger{=\Qsym{A_3^O}{O_3}}} \\
    \includegraphics[scale=0.35]{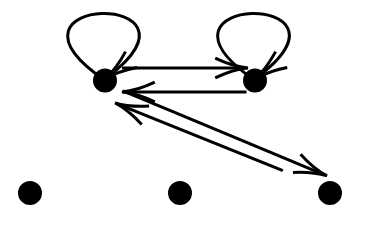}
    \end{array} && \begin{array}{c} \substack{\mathlarger{\Qsym{A_3^I}{I_3}}\\ \mathlarger{=\Qsym{A_3^O}{O_3}}} \\
    \includegraphics[scale=0.35]{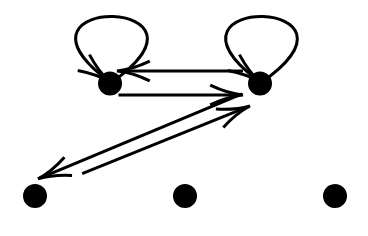}\end{array} && \begin{array}{c}{\substack{\mathlarger{\Qsym{A_3^L}{L_4}}}} \\
    \includegraphics[scale=0.35]{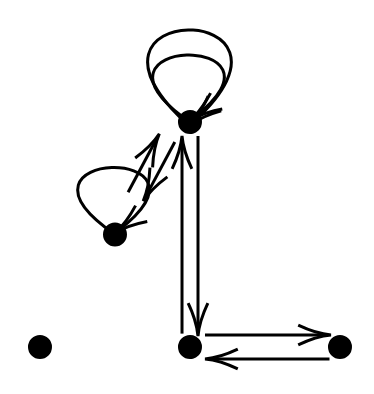}
    \end{array}
    &&
    \begin{array}{c}
    \substack{\mathlarger{\Qsym{A_3^L}{L_5}}} \\
    \includegraphics[scale=0.35]{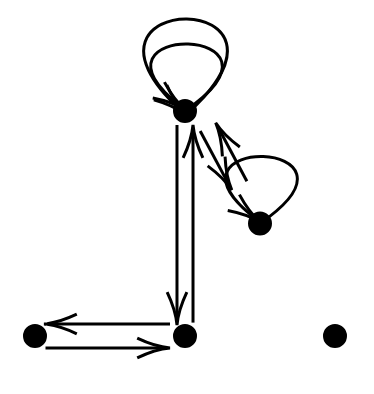}
    \end{array} \\
	& \begin{array}{c} \substack{\mathlarger{\Qsym{A_3^L}{L_2}}\\ \mathlarger{=\Qsym{A_3^I}{I_2}}\\ \mathlarger{=\Qsym{A_3^O}{O_2}}} \\
    \includegraphics[scale=0.35]{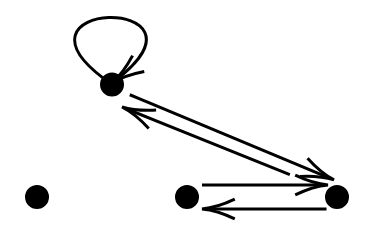}
    \end{array} &&&& \begin{array}{c} \substack{\mathlarger{\Qsym{A_3^L}{L_3}}\\ \mathlarger{=\Qsym{A_3^I}{I_2}}\\ \mathlarger{=\Qsym{A_3^O}{O_2}}} \\
    \includegraphics[scale=0.35]{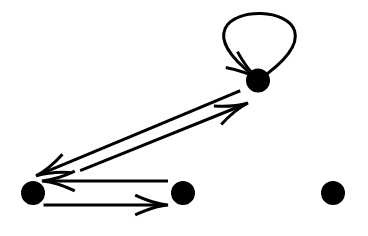}\end{array} \\
	&&& \begin{array}{c} \substack{\mathlarger{\Qsym{A_3^L}{L_1}}\\ \mathlarger{=\Qsym{A_3^I}{I_1}}\\ \mathlarger{=\Qsym{A_3^O}{O_1}}} \\
    \includegraphics[scale=0.35]{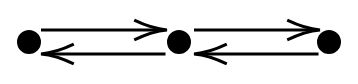}
    \end{array}
    \arrow["{(12)3}"{description}, from=2-1, to=1-2]
	\arrow["{1(23)}"{description}, from=2-3, to=1-2]
	\arrow["23"{description}, from=2-5, to=1-6]
	\arrow["12"{description}, from=2-7, to=1-6]
	\arrow["23"{description}, from=3-2, to=2-1]
	\arrow["{(12)3}"{description}, from=3-2, to=2-5]
	\arrow["12"{description}, from=3-6, to=2-3]
	\arrow["{1(23)}"{description}, from=3-6, to=2-7]
	\arrow["12"{description}, from=4-4, to=3-2]
	\arrow["23"{description}, from=4-4, to=3-6]
    \end{tikzcd}\]
    \caption{The complete structure of the symmetrization map $\mathfrak{S}$ for different orientations of ${A}_3$ quiver and their stability data; note how different stability data are sometimes mapped to the same 3d quiver. 
    Unlinking $U(ij)$ is represented by an oriented edge with subscript $ij$. 
    Each of the two hexagons on this diagram is equivalent to the path polytope $\Lambda(K_5)$ shown in Figure \ref{fig:Associahedra}. 
    That is, a choice of orientation of $A_3$ fixes the orientation (and thus the path structure) of $\overrightarrow{EG}(A_3)=K_5$, whereas unlinkings $U(ij)$ correspond to homotopies between those paths in $K_5$. 
    %Therefore, the two descriptions -- one as a joint connector, and the other one as a path polytope, are equivalent.
    }
    \label{fig:A_3 all joint connectors}
\end{figure}

Let us illustrate in greater detail one path in Figure \ref{fig:A_3 all joint connectors} which connects the stability chambers corresponding to $A_3^L$: 
\begin{equation}
    \Qsym{A_3^L}{L_1} \xrightarrow{12} \Qsym{A_3^L}{L_2} \xrightarrow{(12)3} \Qsym{A_3^L}{L_4} \xrightarrow{23} \Qsym{A_3^L}{L_6}.
\end{equation}
We will trace the unlinking and pentagon relations simultaneously for the operators corresponding to the 4d quiver. 
We start with the minimal chamber $L_1$, where the corresponding symmetric quiver $Q=\Qsym{A_3^L}{L_1}$ is simply the symmetrization of $A_3$ (note that this also corresponds to the shortest path in $K_5$, Figure \ref{fig:Associahedra}):
\[
\begin{tikzcd}[column sep={0.35cm},row sep={1cm}]
	{\boldsymbol{\Psi}_{L_1}=\Psi(-X_{\alpha_{1}})\Psi(-X_{\alpha_{2}})\Psi(-X_{\alpha_{3}})} & \begin{array}{c} \substack{\bullet \\ 1} \end{array} && \begin{array}{c} \substack{\bullet \\ 2} \end{array} && \begin{array}{c} \substack{\bullet \\ 3} \end{array}
	\arrow[shift left, from=1-2, to=1-4]
	\arrow[from=1-4, to=1-2]
	\arrow[shift left, from=1-4, to=1-6]
	\arrow[from=1-6, to=1-4]
\end{tikzcd}.
\]
Applying unlinking $U(12)$ to $Q$ and the pentagon relation to $\Psi(-X_{\alpha_{1}})\Psi(-X_{\alpha_{2}})$ gives 
\[
\begin{tikzcd}[column sep={0.35cm},row sep={0.5cm}]
	{{\boldsymbol{\Psi}_{L_2}=\Psi(-X_{\alpha_{2}})\Psi(-X_{\alpha_{1}+\alpha_2})\Psi(-X_{\alpha_{1}})\Psi(-X_{\alpha_{3}})}} && \begin{array}{c} \substack{\bullet \\ 12} \end{array} \\
	{{\hphantom{\boldsymbol{\Psi}_{L_2}}\equiv\Psi(-X_{\alpha_{2}})\Psi(-X_{\alpha_{1}+\alpha_2})\Psi(-X_{\alpha_{3}})\Psi(-X_{\alpha_{1}})}} &  \begin{array}{c} \substack{\bullet \\ 1} \end{array} && \begin{array}{c} \substack{\bullet \\ 2} \end{array} && \begin{array}{c} \substack{\bullet \\ 3} \end{array}
	\arrow[from=1-3, to=1-3, loop, in=55, out=125, distance=10mm]
	\arrow[from=1-3, to=2-6]
	\arrow[shift left, from=2-4, to=2-6]
	\arrow[shift right, from=2-6, to=1-3]
	\arrow[from=2-6, to=2-4]
\end{tikzcd}.
\]
(Here we could swap $\Psi(-X_{\alpha_1})$ with $\Psi(-X_{\alpha_3})$ since they commute -- the nodes 1 and 3 are disconnected in the corresponding symmetric quiver.) 
Comparing with Figure \ref{fig:Associahedra}, this corresponds to two paths in $K_5$ of length four which differ along a quadrilateral face.
To reach the chamber with five BPS states, we need to apply $U((12)3)$ to $U(12)Q$ and the pentagon relation to $\Psi(-X_{\alpha_{1}+\alpha_2})\Psi(-X_{\alpha_{3}})$:
\[
\begin{tikzcd}[column sep={0.25cm},row sep={0.25cm}]
	&&& \begin{array}{c} \substack{\bullet \\ (12)3} \end{array} \\
	{{\boldsymbol{\Psi}_{L_4}=\Psi(-X_{\alpha_{2}})\Psi(-X_{\alpha_{3}})\Psi(-X_{\alpha_{1}+\alpha_2+\alpha_3})\Psi(-X_{\alpha_{1}+\alpha_2})\Psi(-X_{\alpha_{1}})}} && \begin{array}{c} \substack{\bullet \\ 12} \end{array} \\
	& \begin{array}{c} \substack{\bullet \\ 1} \end{array} && \begin{array}{c} \substack{\bullet \\ 2} \end{array} && \begin{array}{c} \substack{\bullet \\ 3} \end{array}
	\arrow[from=1-4, to=1-4, loop, in=55, out=125, distance=10mm]
	\arrow[from=1-4, to=1-4, loop, in=50, out=130, distance=15mm]
	\arrow[shift right, from=1-4, to=2-3]
	\arrow[from=1-4, to=3-4]
	\arrow[from=2-3, to=1-4]
	\arrow[from=2-3, to=2-3, loop, in=55, out=125, distance=10mm]
	\arrow[shift right, from=3-4, to=1-4]
	\arrow[shift left, from=3-4, to=3-6]
	\arrow[from=3-6, to=3-4]
\end{tikzcd}.
\]
Lastly, we apply $U(23)$ to $U((12)3)U(12)Q$ and the pentagon relation $\Psi(-X_{\alpha_{2}})\Psi(-X_{\alpha_{3}})$ to reach the maximal chamber with six BPS states, which also corresponds to the longest path in $K_5$:
\[
\begin{tikzcd}[column sep={0.25cm},row sep={0.25cm}]
	&&& \begin{array}{c} \substack{\bullet \\ (12)3} \end{array} \\
	\begin{array}{c} \begin{array}{c} \boldsymbol{\Psi}_{L_6}=\Psi(-X_{\alpha_{3}})\Psi(-X_{\alpha_{1}+\alpha_2})\Psi(-X_{\alpha_{2}})\cdot\\\hphantom{\boldsymbol{\Psi}_{L_6}=\qquad }\Psi(-X_{\alpha_{1}+\alpha_2+\alpha_3})\Psi(-X_{\alpha_{1}+\alpha_2})\Psi(-X_{\alpha_{1}}) \end{array} \end{array} && \begin{array}{c} \substack{\bullet \\ 12} \end{array} && \begin{array}{c} \substack{\bullet \\ 23} \end{array} \\
	& \begin{array}{c} \substack{\bullet \\ 1} \end{array} && \begin{array}{c} \substack{\bullet \\ 2} \end{array} && \begin{array}{c} \substack{\bullet \\ 3} \end{array}
	\arrow[from=1-4, to=1-4, loop, in=55, out=125, distance=10mm]
	\arrow[from=1-4, to=1-4, loop, in=50, out=130, distance=15mm]
	\arrow[shift right, from=1-4, to=2-3]
	\arrow[shift left, from=1-4, to=2-5]
	\arrow[from=1-4, to=3-4]
	\arrow[from=2-3, to=1-4]
	\arrow[from=2-3, to=2-3, loop, in=55, out=125, distance=10mm]
	\arrow[from=2-5, to=1-4]
	\arrow[from=2-5, to=2-5, loop, in=55, out=125, distance=10mm]
	\arrow[shift right, from=3-4, to=1-4]
\end{tikzcd}. 
\]
Note that symmetric quivers corresponding to higher chambers have a non-trivial structure due to unlinking. 
Their quiver generating series are encoded by quantum dilogarithm operators via the 3d-4d homomorphism and the application of the normal ordering. 
The two symmetric quivers corresponding to maximal chambers -- one to $I_4$ and $O_4$, the other to $L_6$ -- can be unlinked once to a single universal quiver, which is guaranteed by the Connector Theorem from \cite{KLNS}.
It would be interesting to see if it has a~4d~interpretation. 
Note also that quiver mutation is understood in 3d as a structural change in the joint connector class $\mathbbm{\Lambda}$. 
For example, mutating the first node for $L$ gives the $O$ orientation, and this alters the choice of hexagon and reorders the unlinking operations correspondingly. 
Some mutations, however, do not change the unlinking operator -- for example, the mutation between the $I$ and $O$ chambers. 

%Another interesting direction would be to extend the results of this section to quivers of infinite mutation type, for example, to Kronecker quiver $\alpha_1 \rightrightarrows \alpha_2$.

\subsection{\texorpdfstring{${A}_4$}{A4} quiver}

Here we consider one more example of the symmetrization map \eqref{eq:sym-map-eq}, which corresponds to $Q_{4\text{d}}=A_4$. 
We have the following possibilities for the orientation, which we denote as linear, linear-inward, outward-linear, and inward-outward (all other orientations can be obtained by a trivial automorphism, given by relabelling of the nodes): 
\begin{equation}\label{eq:A_4 orientations}
\begin{aligned}
    LL:\qquad &\ \alpha_1 \longrightarrow \alpha_2 \longrightarrow \alpha_3 \longrightarrow \alpha_4 \\
    LI:\qquad &\ \alpha_1 \longrightarrow \alpha_2 \longrightarrow \alpha_3 \longleftarrow \alpha_4 \\
    OL:\qquad &\ \alpha_1 \longleftarrow \alpha_2 \longrightarrow \alpha_3 \longrightarrow \alpha_4 \\
    IO:\qquad &\ \alpha_1 \longrightarrow \alpha_2 \longleftarrow \alpha_3 \longrightarrow \alpha_4
\end{aligned}
\end{equation}
For each of them, the wall-crossing formula (\ref{eq:Reineke_WCF}) relates a minimal 4d chamber with 4 BPS states labeled by simple roots $\alpha_1,\alpha_2,\alpha_3,\alpha_4$ to the corresponding maximal chamber with 10 BPS states labeled by all positive roots. 
Denote the minimal chambers with 4 BPS states as $LL_1,LI_1,OL_1,IO_1$.
The symmetrization map applied to all of them produces the same symmetric quiver, as expected:
\be
	\Qsym{A_4^{LL}}{LL_1} = \Qsym{A_4^{LI}}{LI_1} = \Qsym{A_4^{OL}}{OL_1} = \Qsym{A_4^{IO}}{IO_1} = \begin{tikzcd}
	\bullet & \bullet & \bullet & \bullet
	\arrow[curve={height=-6pt}, from=1-1, to=1-2]
	\arrow[curve={height=-6pt}, from=1-2, to=1-1]
	\arrow[curve={height=-6pt}, from=1-2, to=1-3]
	\arrow[curve={height=-6pt}, from=1-3, to=1-2]
    \arrow[curve={height=-6pt}, from=1-3, to=1-4]
    \arrow[curve={height=-6pt}, from=1-4, to=1-3]
\end{tikzcd}.
\ee

Theorem \ref{thm:unlinking and A_m wall-crossing} assigns the joint connector class to every orientation of arrows in the 4d quiver. 
In each case, we need to connect $U(12)$ with $U(23)$ as well as $U(34)$, so the number of required operations (as well as the number of hexagons) is now larger than in previous examples. 
The type of each hexagon depends on the orientation of the corresponding pair of arrows in the 4d quiver (\ref{eq:A_4 orientations}). 
Let us illustrate this with the $LL$ orientation. 
Following Definition \ref{dfn:WCF for a symmetric quiver}, we see that the hexagons are defined by pairs of arrows: $\alpha_1\to \alpha_2$ and $\alpha_2\to \alpha_3$, $\alpha_2\to \alpha_3$ and $\alpha_3\to \alpha_4$, etc. 
Since all pairs are co-oriented, we use only hexagon (a). 
For other orientations, mixed types of hexagons can appear. 
The resulting joint connectors $\mathbbm{\Lambda}(U(12),U(23),U(34);\, H_{A_4})$ are shown in Figure \ref{fig:A_4 joint connector_linear}.\footnote{Unlike in Figure \ref{fig:A_3 all joint connectors}, we decided that before combining them into a single picture we present them separately for better visual clarity.}

\begin{figure}[h!]
    \centering
    \includegraphics[width=0.7\linewidth]{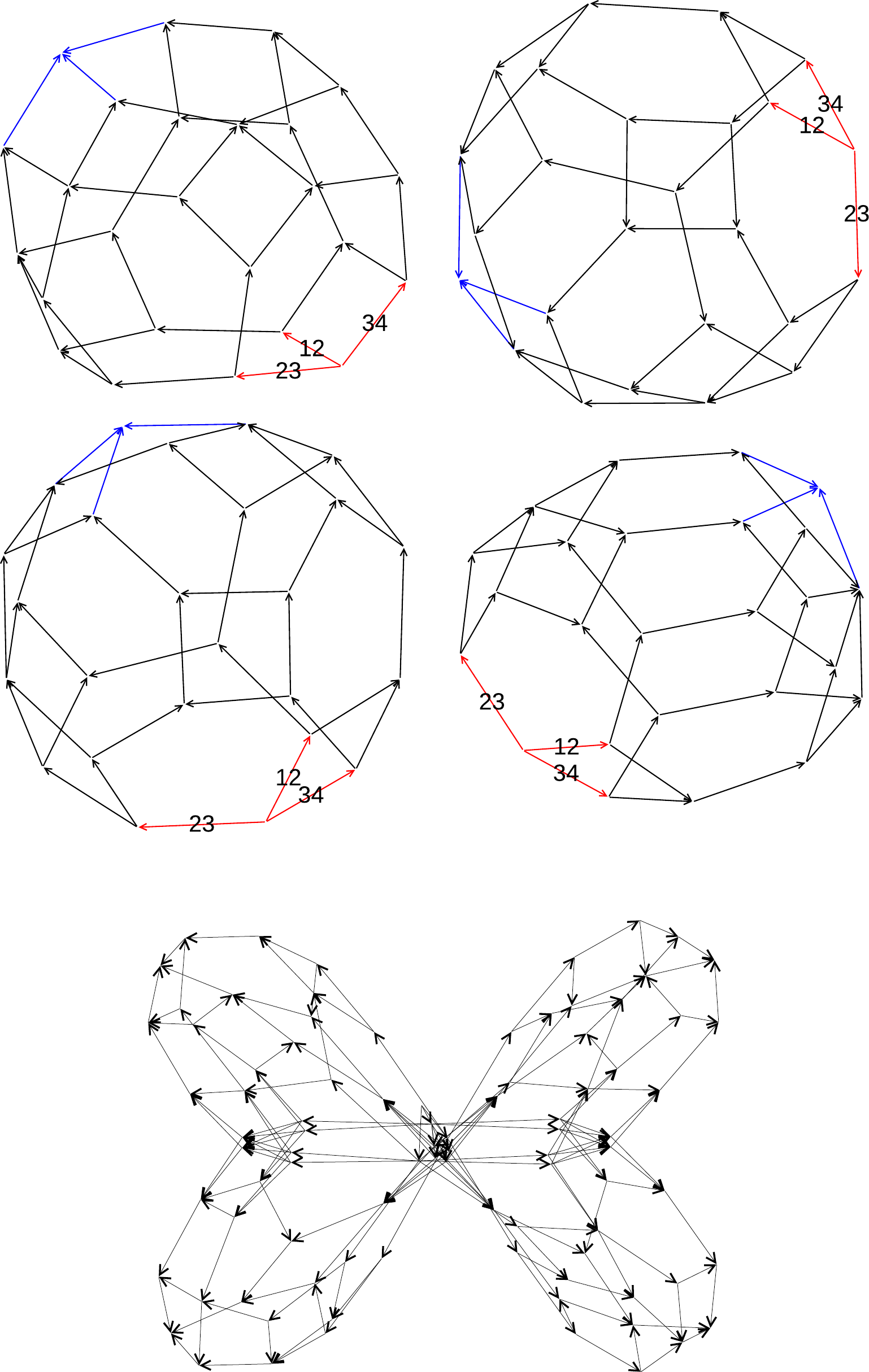}
    \caption{Top: the four joint connectors $\mathbbm{\Lambda}(U(12),U(23),U(34);\, H_{A_4})$ for different orientations of ${A}_4$ quiver:
    $\begin{pmatrix}
    LL & LI \\
    OL & IO
    \end{pmatrix}
    $. 
    Initial symmetric quivers/minimal chambers are represented by a source with three initial unlinkings shown in \textcolor{red}{red}, while the final quivers/maximal chambers are given by a sink with three final unlinkings shown in \textcolor{blue}{blue}. 
    %It takes 6 operations to reach the sink from the source, which simultaneously corresponds to the wall-crossings in 4d and unlinking in 3d.
    Bottom: the bouquet of the four above polytopes, glued along the common vertex (initial symmetric quiver), as well as along $U(12), U(23), U(34)$, and some other edges.
    This graph gives the complete structure of the symmetrization map for $A_4$.}
    \label{fig:A_4 joint connector_linear}
\end{figure}

Note that in this case, the joint connector forms a 3d polytope, which partially resembles the permutohedron.
We stress again that these polytopes capture the entire structure of stability chambers of $A_4$ Argyres-Douglas theories, excluding the chambers described by a quiver with a superpotential. 
Following Section \ref{sec:connectors and path polytopes}, it is suggestive that such a polytope is a particular example of a three-dimensional 2-associahedron \cite{backman2024higher}. 
The cases $LL, LI, OL$ yield the same polytope, but oriented in different, non-equivalent ways. 
The case $IO$ gives another polytope. 
One can then say that quiver mutations in 4d act on the set of symmetric quivers by changing the orientations of the joint connector in 3d, affecting the shape of the corresponding connector polytope. 
Moreover, one can also combine all the above pictures, similar to the case of a pair of hexagons glued together in the $A_3$ case (Figure \ref{fig:A_3 all joint connectors}). 
The resulting structure is shown at the bottom of Figure \ref{fig:A_4 joint connector_linear}, and it captures all symmetric quivers for all chambers of $A_4$ Argyres-Douglas theories, apart from the chambers with superpotential or, isomorphically, joint connector classes for all choices of hexagons in $H_{A_4}$. Similarly to the $A_3$ case, symmetric quivers corresponding to the maximal chambers in all orientations can be unlinked to the same universal quiver (it is guaranteed by the Connector Theorem from \cite{KLNS}), but this time more than one unlinking is required.

%******************
%******************
%******************

\subsection{\texorpdfstring{$D_4$}{D4} quiver}

We finish our case studies by going beyond the $A_m$ case and considering the most symmetric Dynkin diagram, which is $D_4$. 
Its positive roots are
\begin{gather}
\alpha_1,\ \dots,\ \alpha_4, \alpha_1+\alpha_2,\ \alpha_2+\alpha_3,\ \alpha_2+\alpha_4, \alpha_1+\alpha_2+\alpha_3,\ \alpha_1+\alpha_2+\alpha_4,\ \\ \alpha_2+\alpha_3+\alpha_4, \alpha_1+\alpha_2+\alpha_3+\alpha_4,\ \alpha_1+2\alpha_2+\alpha_3+\alpha_4. \nonumber
\end{gather}
We fix the orientation and labelling as follows:
\begin{equation}\label{eq:D4_quiver}
\begin{tikzcd}
	& {\stackrel{\alpha_4}{\bullet}} \\
	{\stackrel{\alpha_1}{\bullet}} & {\stackrel{\alpha_2}{\bullet}} & {\stackrel{\alpha_3}{\bullet}.}
	\arrow[from=2-2, to=1-2]
	\arrow[from=2-2, to=2-1]
	\arrow[from=2-2, to=2-3]
\end{tikzcd}
\end{equation}
Although Theorem \ref{thm:unlinking and A_m wall-crossing} does not apply in this case, we can still construct the joint connector whose hexagons follow the pattern of arrows in the above 4d quiver. 
Namely, we apply the extension of Definition~\ref{dfn:WCF for a symmetric quiver} to every $A_3$-type subquiver given by triples $(\alpha_1,\alpha_2,\alpha_3)$, $(\alpha_4,\alpha_2,\alpha_3)$ and $(\alpha_1,\alpha_2,\alpha_4)$. 
In particular, one path in this connector is given by the following sequence of unlinkings: 
\begin{equation}\label{eq:D4_partial_unlink}
U(((12)3)4)U((12)4)U((12)3)U((23)4)U(24)U(23)U(12).
\end{equation}
Note, however, that the joint connector in this case computes only the partial wall-crossing formula -- one more step is required to reach the maximal chamber with 12 BPS states. 
This requires appending one more unlinking to (\ref{eq:D4_partial_unlink}), which is $U(12,(23)4)$. 
This unlinking introduces the BPS bound state corresponding to the last remaining positive root $\alpha_1+2\alpha_2+\alpha_3+\alpha_4$. 
This manual insertion was not needed in the case of $A_m$, because their roots do not come with multiplicities. 
(It would be very interesting to find a general description that avoids manual additions to joint connector class.)
After including this extra step, we get the unlinking graph which encodes the complete wall-crossing formula (Figure \ref{fig:D4_tile}). 
We can then apply the extension of Definition~\ref{dfn:symmetrization map} to derive all symmetric quivers corresponding to various chambers of $D_4$ Argyres-Douglas theory, by applying unlinking operators represented by paths on this diagram to $Q$ -- the initial symmetrization of $Q_{4\text{d}}$. 

% extra step: ,\ (n_{1,2},n_{4,n_{2,3}})

\begin{figure}[h!]
    \centering
    \includegraphics[scale=0.75]{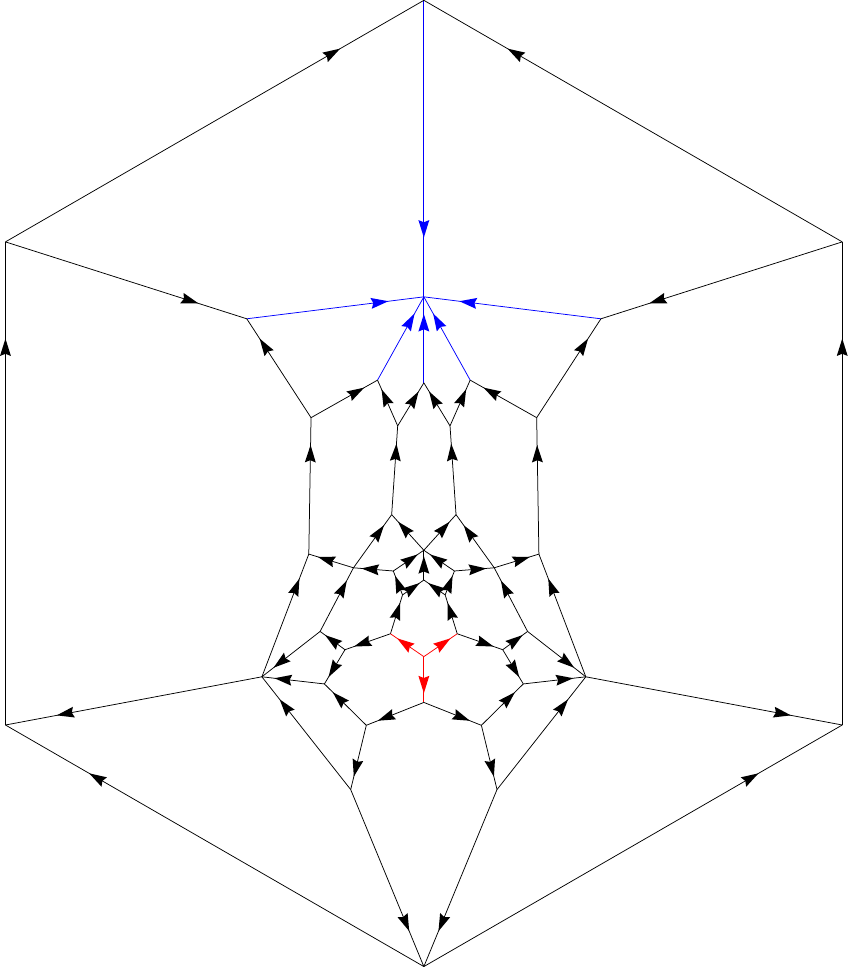}
    \caption{The unlinking graph which encodes the wall-crossing for ${D}_4$ quiver. 
    Initial symmetric quiver/minimal chambers are represented by a source with three initial unlinkings shown in \textcolor{red}{red}, while the final quiver/maximal chambers are given by a sink with six final unlinkings shown in \textcolor{blue}{blue}. 
    It takes 8 operations to reach the sink from the source, which simultaneously corresponds to the wall-crossings in 4d and unlinking in 3d. 
    Interestingly, much like in $A_4$ case, the resulting graph is planar.}
    \label{fig:D4_tile} 
\end{figure}

\section{Schur index and CPT-doubled symmetric quivers} \label{sec-schur}

In previous sections, we considered the relations between Coulomb branches of 4d $\mathcal{N}=2$ theories and 3d $\mathcal{N}=2$ theories, as well as the relations between corresponding quivers. 
We focused on the spectrum of BPS states of the 4d theory in a given chamber and discussed the corresponding geometry and manifestation of wall-crossing in 3d. 
In this section, we focus on a related quantity, which is chamber-independent and thus can also be associated to the superconformal point in the moduli space of 4d theory under consideration. 
This quantity is the trace of the Kontsevich-Soibelman operator $\mathcal{O}(q)$ from \eqref{Schur-intro}, with contributions from all particle and anti-particle states.
\newline

In \cite{Cordova:2015nma}, it was observed in many examples that, up to a simple factor, such a trace is equal to the Schur index \cite{Kinney:2005ej,Gadde:2011ik}, and a proof for class $S$ theories of type $A_1$ was given in \cite{Fluder:2019dpf}. 
The contributions to $\mathcal{O}(q)$ in question are quantum dilogarithms assigned to stable BPS states in a given chamber, analogously to (\ref{KS-intro}). 
However, it is now crucial to include all particle and anti-particle contributions in this trace, in order to avoid the ambiguity in their definition (related to the choice of the half-plane in which their central charges take values). 
More precisely, the claim in \cite{Cordova:2015nma} is that the Schur index is equal to
\begin{equation}
\mathcal{I}=(q^2;q^2)_{\infty}^{2r} \, \textrm{Tr}[\mathcal{O}(q)], \qquad \mathcal{O}(q) =  \prod^\curvearrowleft_{\alpha:\ \textrm{BPS \& anti-BPS states}} \Psi(-X_{\alpha}),  \label{Schur}
\end{equation}
where $(q,q)_{\infty}^2$ is a contribution from a $U(1)$ vector multiplets, and $r$ is the rank of the Coulomb branch. 

In this section, we show that the trace of the Kontsevich-Soibelman operator, and thus also the Schur index (\ref{Schur}), can be written as a motivic generating series (\ref{eq:motivic_generating_series}) for a certain symmetric quiver $Q$. 
More precisely, as a consequence of introducing anti-BPS states into the game, it is a CPT-doubled symmetric quiver arising from a 4d BPS quiver. 
The operation that produces such a quiver is the CPT-doubled symmetrization map (\ref{Sigma-maps})
\begin{equation}
(Q_{4\text{d}},\text{stab.\,data}) \stackrel{\mathfrak{S}^{CPT}}{\longmapsto}Q^{CPT}\, .
\end{equation}
The stability data that we consider in this context are those that produce the minimal number of stable BPS states in the 4d theory, i.e., only those that correspond to the nodes of the 4d BPS quiver. 
In this case, the CPT-doubled quivers that we find have the following properties: 
\begin{itemize}
\item They contain a subquiver which is a symmetrization of the 4d BPS quiver, i.e., for each arrow in the BPS quiver, there is an arrow in the opposite direction, analogously to the quivers we found in other sections. 
\item They contain a second set of nodes, corresponding to anti-BPS states. 
\item They contain extra arrows and loops, which arise from the trace operation in (\ref{Schur}). 
\end{itemize}
%
%\subsection{General features of CPT-doubling on symmetric quivers}
% We consider Kontsevich-Soibelman operator in a chamber of 4d theory in which BPS states correspond to the nodes of the BPS quiver, so that the resulting symmetric quiver contains corresponding nodes as a subquiver. This specific subquiver turns out to be a symmetrization of the BPS quiver, i.e. for each arrow in the BPS quiver it includes an arrow in the opposite direction, analogously to the quivers we found in other sections. Furthermore, the symmetric quiver also contains a second set of nodes, corresponding to anti-particles, so altogether it is twice larger than the BPS quiver. 
%
%It also contains more involved pattern of arrows and loops, which among others arise from the trace operation in (\ref{Schur}).
The parameters of the quiver generating series of CPT-doubled quivers are identified with specific powers of $q$ and combinations of fugacities in a given theory. 
We identify such symmetric quivers for a larger class of theories: various ADE-type Argyres-Douglas theories, the $SU(2)$ theory with and without matter, and others. 

In the rest of this section, we show that the Schur indices (\ref{Schur}) can be written in the form of the motivic generating series of symmetric quivers (\ref{eq:motivic_generating_series}). 
%\begin{equation}
%P_C(x) = \sum_{d_1,\ldots,d_m \geq 0} (-q^{1/2})^{\sum_{i,j} Q_{ij} d_i d_j} \frac{x_1^{d_1}\cdots x_m^{d_m}}{(q,q)_{d_1}\cdots (q,q)_{d_m}}.    \label{PC}
%\end{equation}
While this could be shown by applying in (\ref{Schur}) the homomorphism between the quantum torus algebra of $X_{\alpha}$ and that of $(\hat{x}_i,\hat{y}_j)$, and then computing an appropriately defined trace, we instead take advantage of the results derived already in \cite{Cordova:2015nma}. 
Using then the following identities for the $q$-Pochhammer
\begin{align}
(q^2;q^2)_{\infty} &= \sum_{k=0}^{\infty} \frac{(-1)^k q^{k(k+1)}}{(q^2;q^2)_k}, \label{qP-sum} \\
\frac{1}{(q^2;q^2)_n} & =  \frac{(q^{2(n+1)},q^2)_{\infty}}{(q^2;q^2)_{\infty}} = \frac{1}{(q^2;q^2)_{\infty}} \sum_{k=0}^{\infty} \frac{(-1)^k q^{k(k-1)}}{(q^2;q^2)_k} q^{2(n+1)k}, 
\label{inv-qP}
\end{align}
and some basic manipulations of $q$-series, 
we identify quivers encoding Schur indices of various theories considered in \cite{Cordova:2015nma}, as presented in the rest of this section. 
It is clear from our examples that a~few other indices determined in \cite{Cordova:2015nma} can also be presented in the form of quiver generating series.
Similar relations between the Schur index and the 3d half-index (which is related to symmetric quiver partition functions through the work of \cite{Ekholm:2018eee}) have been independently observed in recent works \cite{Prochazka, Gaiotto:2024ioj}. 

%******************************
\subsection{Basic examples}

As a warm-up, let us find quivers encoding the indices in the simplest examples of a single vector multiplet, a free hypermultiplet, and QED. 
The quivers we find are shown in Figure \ref{fig-quiver-examples}.

\paragraph{$U(1)$ vector multiplet.}

The contribution to the index of a single $U(1)$ vector multiplet is just $(q^2;q^2)^2_{\infty}$. 
Using (\ref{qP-sum}), we thus get
\begin{equation}
\mathcal{I} = (q^2;q^2)_{\infty}^2 = \sum_{k,l=0}^{\infty} \frac{(-q)^{k^2+l^2} q^{k+l} }{(q^2;q^2)_k(q^2;q^2)_l}.
\end{equation}
This has the form of a quiver generating series for a quiver with two nodes, each with one loop and no other arrows: 
\begin{equation}
Q^{CPT}=\begin{bmatrix}
1 & 0 \\
0 & 1
\end{bmatrix}.
\end{equation}

%******************************

\paragraph{Free hypermultiplet.}

The index for a free hypermultiplet with fugacity $z$ is \cite{Cordova:2015nma}:
\begin{equation}
\mathcal{I} 
= (q z;q^2)_\infty^{-1}(q z^{-1};q^2)_\infty^{-1}
%E_q(-z^{-1})E_q(-z) = 
= \sum_{k,l=0}^{\infty} \frac{ q^{k+l} z^{-k+l}}{(q^2;q^2)_k(q^2;q^2)_l}.
\end{equation}
This has the form of a quiver generating series for a quiver with two nodes and no arrows:
\begin{equation}
Q^{CPT}=\begin{bmatrix}
0 & 0 \\
0 & 0
\end{bmatrix}.
\end{equation}

%******************************

\paragraph{QED.}

The index for QED reads \cite{Cordova:2015nma}:
\begin{equation}
\mathcal{I} = (q^2;q^2)_{\infty}^2 \sum_{l=0}^{\infty} \frac{q^{2l}}{(q^2;q^2)_l^2} = (q^2;q^2)_{\infty} \sum_{k,l=0}^{\infty} \frac{ (-q)^{k^2+2kl} q^{k+2l} }{(q^2;q^2)_k(q^2;q^2)_l}.
\end{equation}
Apart from $(q^2;q^2)_{\infty}$ prefactor, this has the form of a quiver generating series for a quiver with two nodes, one with a loop, and one pair of arrows between the two nodes:
\begin{equation}
Q^{CPT}=\begin{bmatrix}
0 & 1 \\
1 & 1
\end{bmatrix}.
\end{equation}
In fact, using (\ref{qP-sum}), the prefactor $(q^2;q^2)_{\infty}$ can be reinterpreted as an extra (third) node in the quiver (with one loop and disconnected with other nodes).

\begin{figure}[h!]
\begin{center}
\includegraphics[width=0.8\textwidth]{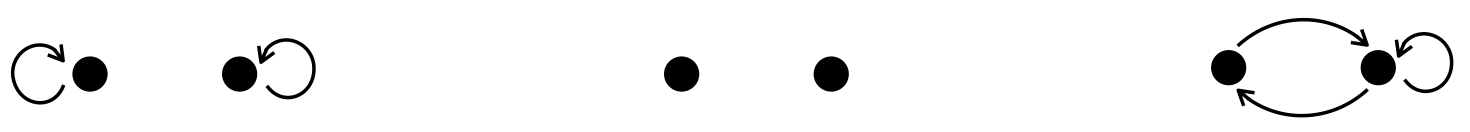}
\caption{Symmetric quivers encoding the Schur indices for a $U(1)$ vector multiplet (left), a free hypermultiplet (middle), and QED (right).}
\label{fig-quiver-examples}
\end{center}
\end{figure}

%******************************
\subsection{\texorpdfstring{$SU(2)$}{SU2} theory with and without matter}

In turn, let us discuss the $SU(2)$ theory with various numbers of flavours and corresponding fugacities denoted by $z_i$, also taking advantage of the results derived in \cite{Cordova:2015nma}. 

\paragraph{Pure $SU(2)$.}

The Schur index for the pure $SU(2)$ theory takes the form
\begin{align}
\begin{split}
\mathcal{I} &= (q^2;q^2)_{\infty}^2 \sum_{l_1,l_2=0}^{\infty} \frac{q^{2l_1+2l_2+4l_1l_2}}{(q^2;q^2)_{l_1}^2 (q^2;q^2)_{l_2}^2}  = \\
& = \sum_{l_1,l_2,k_1,k_2=0}^{\infty}  \frac{ (-q)^{k_1^2+k_2^2+4l_1l_2+2l_1k_1+2l_2k_2}  }{(q^2;q^2)_{l_1}(q^2;q^2)_{l_2} (q^2;q^2)_{k_1} (q^2;q^2)_{k_2}} q^{2l_1+2l_2+k_1+k_2},
\end{split}
\end{align}
where we used (\ref{inv-qP}) twice and introduced summations over $k_1$ and $k_2$. 
The result has the form of a quiver generating series for a quiver with four nodes, which contains a subquiver (shown in \textcolor{red}{red}, with nodes corresponding to summations over $l_1$ and $l_2$) that is a symmetrized version of 4d quiver. 
Labelling the rows/columns respectively by $l_1,l_2,k_1,k_2$, we get
\begin{equation}
Q^{CPT}=\begin{bmatrix}
\textcolor{red}{0} & \textcolor{red}{2}  & 1 & 0 \\
\textcolor{red}{2}  & \textcolor{red}{0}  & 0 & 1 \\
1 & 0 & 1 & 0 \\
0 & 1 & 0 & 1 
\end{bmatrix}.
\end{equation}
The BPS quiver for this theory and corresponding symmetric quiver are shown in Figure \ref{fig-quiver-SU2}. 

\begin{figure}[h!]
\begin{center}
\includegraphics[width=0.6\textwidth]{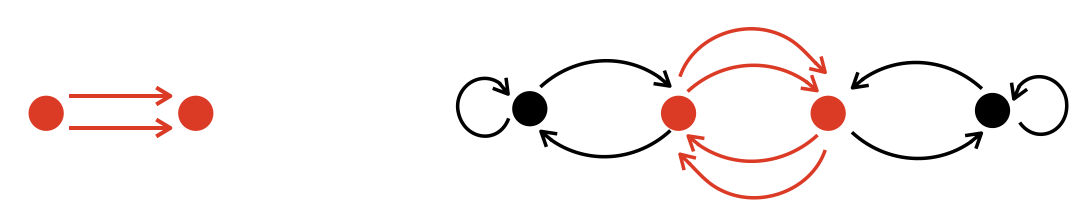}
\caption{BPS quiver for the pure $SU(2)$ theory (left), and a symmetric quiver encoding the Schur index of this theory (right). 
A subquiver of the symmetric quiver that involves a symmetrization of the BPS quiver is shown in red.}
\label{fig-quiver-SU2}
\end{center}
\end{figure}

%******************************

\paragraph{$SU(2)$ with $N_f=1$.}

For the $SU(2)$ theory with $N_f=1$ flavour, the index reads
\begin{align}
\begin{split}
\mathcal{I} &= (q^2;q^2)_{\infty}^2 \sum_{l_1,l_2,l_3,k_1,k_2,k_3=0}^{\infty} \frac{ (-1)^{\sum_{i=1}^3 (k_i+l_i)}
q^{\sum_{i=1}^3 (k_i+l_i)  -(k_3-l_3)^2 + 2(l_1+l_3)k_2 + 2l_1k_3}}{\prod_{i=1}^3(q^2;q^2)_{k_i} (q^2;q^2)_{l_i}}  \times \\
& \qquad \times z^{2(k_3-l_3)} \delta_{k_1+k_3,l_1+l_3}  \cdot \delta_{k_2+k_3,l_2+l_3}.
\end{split}
\end{align}
From delta function constraints, we fix
\begin{equation}
l_1 = k_1+k_3-l_3,\qquad l_2=k_2+k_3-l_3. 
\end{equation}
We then rewrite terms $(q^2;q^2)_{l_1}^{-1} = (q^2;q^2)_{k_1+k_3-l_3}^{-1}$ and $(q^2;q^2)_{l_2}^{-1} = (q^2;q^2)_{k_2+k_3-l_3}^{-1}$ using (\ref{inv-qP}), introducing new summations respectively over $m$ and $n$, which leads to
\begin{align}
\begin{split}
\mathcal{I} & = \sum_{k_1,k_2,k_3,l_3,m,n=0}^{\infty}  \frac{ (-q)^{f(k_i,l_3,m,n)}  }{(q^2;q^2)_{l_3}(q^2;q^2)_{m} (q^2;q^2)_{n} \prod_{i=1}^3 (q^2;q^2)_{k_i}} q^{4k_1+4k_2+6k_3-2l_3+m+n} z^{2(k_3-l_3)}
\end{split}
\end{align}
where
\begin{equation}
f(k_i,l_3,m,n) = k_3^2 - l_3^2 + m^2 + n^2 + 2k_1 k_2 + 2k_1k_3 + 2 k_2 k_3 + 2m(k_1+k_3-l_3) + 2n(k_2+k_3-l_3).
\end{equation}
Thus, we get a quiver generating series for a quiver with six nodes, which contains a subquiver (shown in \textcolor{red}{red}, with nodes corresponding to summations over $k_i$) that is a symmetrized version of 4d quiver (ignoring an extra loop at node $k_3$). 
Labelling the rows/columns respectively by $k_1,k_2,k_3,l_3,m,n$, we get
\begin{equation}
Q^{CPT}=\begin{bmatrix}
\textcolor{red}{0} & \textcolor{red}{1}  & \textcolor{red}{1} & 0 & 1 & 0 \\
\textcolor{red}{1}  & \textcolor{red}{0}  & \textcolor{red}{1} & 0 & 0 & 1 \\
\textcolor{red}{1}  & \textcolor{red}{1}  & 1 & 0 & 1 & 1 \\
0 & 0 & 0 & -1 & -1 & -1 \\
1 & 0 & 1 & -1 & 1 & 0 \\
0 & 1 & 1 & -1 & 0 & 1 
\end{bmatrix}. \label{C-SU2f1}
\end{equation}
The BPS quiver for this theory and corresponding symmetric quiver are shown in Figure \ref{fig-quiver-SU2f1}.

\begin{figure}[h!]
\begin{center}
\includegraphics[width=0.5\textwidth]{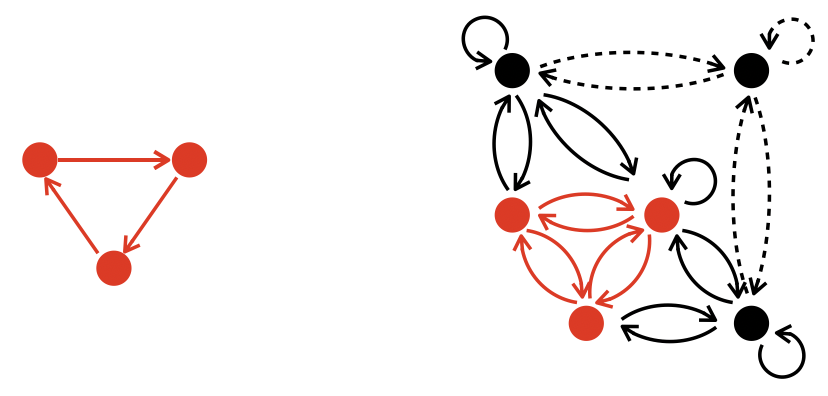}
\caption{BPS quiver for $SU(2)$ theory with $N_f=1$ flavour (left), and symmetric quiver encoding the Schur index of this theory (right). 
A subquiver of the symmetric quiver that involves a symmetrization of the BPS quiver is shown in red. 
Dashed arrows represent entries in the quiver matrix (\ref{C-SU2f1}) with negative values $-1$.}
\label{fig-quiver-SU2f1}
\end{center}
\end{figure}

%******************************

\paragraph{$SU(2)$ with $N_f=2$.}

For the $SU(2)$ theory with $N_f=2$ flavours, the index reads
\begin{align}
\begin{split}
\mathcal{I} &= (q^2;q^2)_{\infty}^2 \sum_{l_1,\ldots,l_4,k_1,\ldots,k_4=0}^{\infty} \frac{
q^{2(l_1+\ldots+l_4) + 2(l_1+l_2)(l_3+l_4)}}{\prod_{i=1}^4(q^2;q^2)_{k_i} (q^2;q^2)_{l_i}}  \times \\
& \qquad \times z_1^{2(l_1-k_1)} z_2^{2(l_3-k_3)} \delta_{k_1+k_2,l_1+l_2} \cdot \delta_{k_3+k_4,l_3+l_4}.
\end{split}
\end{align}
From delta function constraints, we fix
\begin{equation}
l_1 = k_1+k_2-l_2,\qquad l_3=k_3+k_4-l_4.
\end{equation}
We then rewrite terms $(q^2;q^2)_{l_1}^{-1} = (q^2;q^2)_{k_1+k_2-l_2}^{-1}$ and $(q^2;q^2)_{l_3}^{-1} = (q^2;q^2)_{k_3+k_4-l_4}^{-1}$ using (\ref{inv-qP}), introducing new summations respectively over $m$ and $n$, which leads to
\begin{align}
\begin{split}
\mathcal{I} & = \sum_{k_1,\ldots,k_4,l_2,l_4,m,n=0}^{\infty}  \frac{ (-q)^{f(k_i,l_2,l_4,m,n)}  q^{2(k_1+\ldots +k_4)+m+n} }{(q^2;q^2)_{l_2} (q^2;q^2)_{l_4}(q^2;q^2)_{m} (q^2;q^2)_{n} \prod_{i=1}^4 (q^2;q^2)_{k_i}}  z_1^{2(k_2-l_2)} z_2^{2(k_4-l_4)} 
\end{split}
\end{align}
where
\begin{equation}
f(k_i,l_2,l_4,m,n)= m^2 + n^2 + 2(k_1 k_3 + k_1k_4 + k_2 k_3 + l_2k_4) + 2m(k_1+k_2-l_2) + 2n(k_3+k_4-l_4)
\end{equation}
Thus we get a quiver generating series for a quiver with eight nodes, which contains a subquiver (shown in \textcolor{red}{red}, with nodes corresponding to summations over $k_i$) that is a symmetrized version of 4d quiver. 
Labelling the rows/columns respectively by $k_1,k_2,k_3,k_4,l_2,l_4,m,n$ we get
\begin{equation}
Q^{CPT}=\begin{bmatrix}
\textcolor{red}{0} & \textcolor{red}{0}  & \textcolor{red}{1} & \textcolor{red}{1} & 0 & 0 & 1 & 0 \\
\textcolor{red}{0}  & \textcolor{red}{0}  & \textcolor{red}{1} & \textcolor{red}{1} & 0 & 0 & 1 & 0 \\
\textcolor{red}{1}  & \textcolor{red}{1}  & \textcolor{red}{0} & \textcolor{red}{0}  & 0 & 0 & 0 & 1 \\
\textcolor{red}{1}  & \textcolor{red}{1}  & \textcolor{red}{0} & \textcolor{red}{0}  & 0 & 0 & 0 & 1 \\
0 & 0 & 0 & 0 & 0 & 0 & -1 & 0 \\
0 & 0 & 0 & 0 & 0 & 0 & 0 & -1  \\
1  & 1 & 0 & 0 & -1 & 0 & 1 & 0 \\
0 & 0 & 1 & 1 & 0 & -1 & 0 & 1
\end{bmatrix}
\end{equation}

%******************************

\paragraph{$SU(2)$ with $N_f=3$.}

For the $SU(2)$ theory with $N_f=3$ flavours, the index takes form
\begin{align}
\begin{split}
\mathcal{I} &= (q^2;q^2)_{\infty}^2 \sum_{l_1,\ldots,l_5,k_1,\ldots,k_5=0}^{\infty} \frac{
q^{2(l_1+\ldots+l_5) + 2l_1(l_2+\ldots+l_5)}}{\prod_{i=1}^5(q^2;q^2)_{k_i} (q^2;q^2)_{l_i}}  \times \\
& \qquad \times z_1^{(k_2-l_2)+(l_5-k_5)} z_2^{(l_2-k_2)+(k_3-l_3)} z_3^{(l_3-k_3)+(k_4-l_4)}  \delta_{k_1,l_1} \cdot \delta_{k_2+\ldots+k_5,l_2+\ldots+l_5}.
\end{split}
\end{align}
From delta function constraints, we fix
\begin{equation}
k_1 = l_1,\qquad k_2=(l_2+\ldots+l_5)-(k_3+k_4+k_5).
\end{equation}
We then rewrite terms $(q^2;q^2)_{k_1}^{-1} = (q^2;q^2)_{l_1}^{-1}$ and $(q^2;q^2)_{k_2}^{-1} = (q^2;q^2)_{(l_2+\ldots+l_5)-(k_3+k_4+k_5)}^{-1}$ using (\ref{inv-qP}), introducing new summations respectively over $m$ and $n$, which leads to
\begin{align}
\begin{split}
\mathcal{I} & = \sum_{l_1,\ldots,l_5,k_3,k_4,k_5,m,n=0}^{\infty}  \frac{ (-q)^{f(l_i,k_3,k_4,k_5,m,n)}  q^{2(l_1+\ldots +l_5)+m+n} }{(q^2;q^2)_{k_3} (q^2;q^2)_{k_4} (q^2;q^2)_{k_5} (q^2;q^2)_{m} (q^2;q^2)_{n} \prod_{i=1}^5 (q^2;q^2)_{l_i}} \times \\ 
&\qquad \times z_1^{l_3+l_4+2l_5-k_3-k_4-2k_5} z_2^{-2l_3-l_4-l_5+2k_3+k_4+k_5} z_3^{(l_3-k_3)+(k_4-l_4)}
\end{split}
\end{align}
where
\begin{equation}
f(l_i,k_3,k_4,k_5,m,n)= m^2 + n^2 + 2l_1(l_2 +\ldots + l_5) + 2ml_1 + 2n(l_2+\ldots+l_5-k_3-k_4-k_5)
\end{equation}
Thus we get a quiver generating series for a quiver with ten nodes, which contains a subquiver (shown in \textcolor{red}{red}, with nodes corresponding to summations over $l_i$) that is a symmetrized version of 4d quiver. 
Labelling the rows/columns respectively by $l_1,\ldots,l_5,k_3,k_4,k_5,m,n$, we get
\begin{equation}
Q^{CPT}=\begin{bmatrix}
\textcolor{red}{0} & \textcolor{red}{1}  & \textcolor{red}{1} & \textcolor{red}{1} & \textcolor{red}{1} & 0 & 0 & 0 & 1 & 0 \\
\textcolor{red}{1} & \textcolor{red}{0} & \textcolor{red}{0} & \textcolor{red}{0} & \textcolor{red}{0} & 0 & 0 & 0 & 0 & 1 \\
\textcolor{red}{1} & \textcolor{red}{0} & \textcolor{red}{0} & \textcolor{red}{0} & \textcolor{red}{0} & 0 & 0 & 0 & 0 & 1 \\
\textcolor{red}{1} & \textcolor{red}{0} & \textcolor{red}{0} & \textcolor{red}{0} & \textcolor{red}{0} & 0 & 0 & 0 & 0 & 1 \\
\textcolor{red}{1} & \textcolor{red}{0} & \textcolor{red}{0} & \textcolor{red}{0} & \textcolor{red}{0} & 0 & 0 & 0 & 0 & 1 \\
0 & 0 & 0 & 0 & 0 & 0 & 0 & 0 & 0 & -1  \\
0 & 0 & 0 & 0 & 0 & 0 & 0 & 0 & 0 & -1  \\
0 & 0 & 0 & 0 & 0 & 0 & 0 & 0 & 0 & -1  \\
1 & 0 & 0 & 0 & 0 & 0 & 0 & 0 & 1 & 0 \\  
0 & 1 & 1 & 1 & 1 & -1 & -1 & -1 & 0 & 1  
\end{bmatrix}
\end{equation}

%******************************
\subsection{Type \texorpdfstring{$A$}{A} theories}

As another class of examples, we consider the $A_m$ Argyres-Douglas theories with even $m$. 
Examples with odd $m$ can be worked out analogously.

\paragraph{$A_2$ theory.}

Following analogous manipulations as above, the index for $A_2$ theory can be rewritten as
\begin{align}
\begin{split}
\mathcal{I} &= (q^2;q^2)_{\infty}^2 \sum_{l_1,l_2=0}^{\infty} \frac{q^{2l_1+2l_2+2l_1l_2}}{(q^2;q^2)_{l_1}^2 (q^2;q^2)_{l_2}^2}  = \\
& = \sum_{l_1,l_2,m,n=0}^{\infty}  \frac{ (-q)^{m^2+n^2+2ml_1 + 2nl_2 + 2l_1l_2}  }{(q^2;q^2)_{l_1}(q^2;q^2)_{l_2} (q^2;q^2)_{m} (q^2;q^2)_{n}} q^{2l_1+2l_2+m+n}
\end{split}
\end{align}
where we used (\ref{inv-qP}) twice and introduced summations over $m$ and $n$. 
The result has the form of a quiver generating series for a quiver with four nodes, which contains a subquiver (shown in \textcolor{red}{red}, with nodes corresponding to summations over $l_1$ and $l_2$) that is a symmetrized version of 4d quiver. 
Labelling the rows/columns respectively by $l_1,l_2,m,n$, we get
\begin{equation}
Q^{CPT}=\begin{bmatrix}
\textcolor{red}{0} & \textcolor{red}{1}  & 1 & 0 \\
\textcolor{red}{1}  & \textcolor{red}{0}  & 0 & 1 \\
1 & 0 & 1 & 0 \\
0 & 1 & 0 & 1 
\end{bmatrix}
\end{equation}
The BPS quiver for this theory and corresponding symmetric quiver are shown in Figure \ref{fig-quiver-A2}.

\begin{figure}[h!]
\begin{center}
\includegraphics[width=0.6\textwidth]{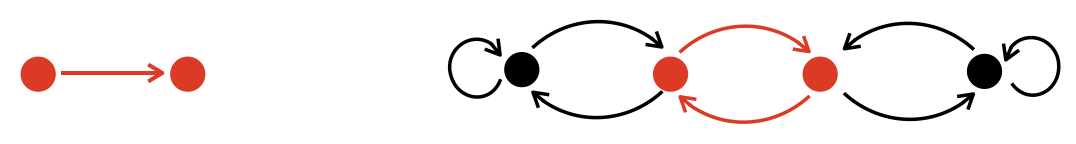}
\caption{BPS quiver for $A_2$ Argyres-Douglas theory (left), and symmetric quiver encoding the Schur index of this theory (right). 
A subquiver of the symmetric quiver that involves a symmetrization of the BPS quiver is shown in red.}
\label{fig-quiver-A2}
\end{center}
\end{figure}

%******************************

\paragraph{$A_4$ theory.}

For $A_4$ Argyres-Douglas theory, we obtain
\begin{align}
\begin{split}
\mathcal{I} &= (q^2;q^2)_{\infty}^4 \sum_{l_1,l_2,l_3,l_4=0}^{\infty} \frac{q^{2(l_1+l_2+l_3+l_4)+2l_1l_2+2l_2l_3+2l_3l_4}}{(q^2;q^2)_{l_1}^2 (q^2;q^2)_{l_2}^2 (q^2;q^2)_{l_3}^2 (q^2;q^2)_{l_4}^2}  = \\
& = \sum_{l_1,l_2,l_3,l_4m,n,r,s=0}^{\infty}  \frac{ (-q)^{f(l_i,m,n,r,s)}  }{(q^2;q^2)_{m} (q^2;q^2)_{n} (q^2;q^2)_{r} (q^2;q^2)_{s} \prod_{i=1}^4 (q^2;q^2)_{l_i}} q^{2(l_1+l_2+l_3+l_4)+m+n+r+s}
\end{split}
\end{align}
where
\begin{equation}
f(l_i,m,n,r,s) = m^2+n^2+r^2+s^2+2l_1l_2 + 2l_2l_3 + 2l_3l_4 + 2ml_1 + 2nl_2 + 2rl_3+2sl_4.
\end{equation}
The result has the form of a quiver generating series for a quiver with eight nodes, which contains a subquiver (shown in \textcolor{red}{red}, with nodes corresponding to summations over $l_1,\ldots,l_4$) that is a symmetrized version of 4d quiver. 
Labelling the rows/columns respectively by $l_1,l_2,l_3,l_4,m,n,r,s$ we get
\begin{equation}
Q^{CPT}=\begin{bmatrix}
\textcolor{red}{0} & \textcolor{red}{1}  & \textcolor{red}{0}  & \textcolor{red}{0}  & 1 & 0 & 0 & 0 \\
\textcolor{red}{1}  & \textcolor{red}{0}  & \textcolor{red}{1}  & \textcolor{red}{0} & 0 & 1 & 0 & 0 \\
\textcolor{red}{0} & \textcolor{red}{1}  & \textcolor{red}{0}  & \textcolor{red}{1}  & 0 & 0 & 1 & 0 \\
\textcolor{red}{0} & \textcolor{red}{0}  & \textcolor{red}{1}  & \textcolor{red}{0}  & 0 & 0 & 0 & 1 \\
1 & 0 & 0 & 0 & 1 & 0 & 0 & 0 \\ 
0 & 1 & 0 & 0 & 0 & 1 & 0 & 0 \\ 
0 & 0 & 1 & 0 & 0 & 0 & 1 & 0 \\ 
0 & 0 & 0 & 1 & 0 & 0 & 0 & 1  
\end{bmatrix}   \label{C4-quiver}
\end{equation}
The BPS quiver for this theory and corresponding symmetric quiver are shown in Figure \ref{fig-quiver-A4}.

\begin{figure}[h!]
\begin{center}
\includegraphics[width=0.8\textwidth]{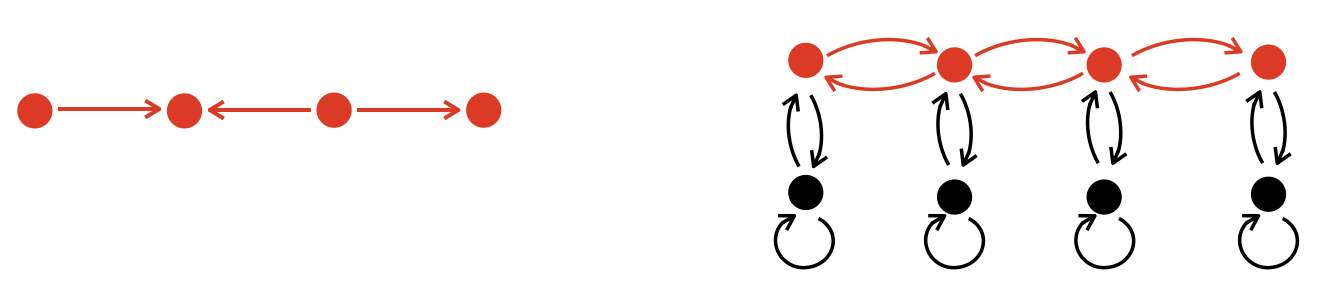}
\caption{BPS quiver for $A_4$ Argyres-Douglas theory (left), and symmetric quiver encoding the Schur index of this theory (right). 
A subquiver of the symmetric quiver that involves a symmetrization of the BPS quiver is shown in red.}
\label{fig-quiver-A4}
\end{center}
\end{figure}

%******************************

\paragraph{$A_{2n}$ theory.}

The index for $A_{2n}$ with arbitrary $n$ takes the form
\begin{align}
\begin{split}
\mathcal{I} = (q^2;q^2)_{\infty}^{2n} \sum_{l_1,\ldots,l_{2n}=0}^{\infty} \frac{q^{2\sum_{i=1}^{2n-1}l_i l_{i+1} + 2\sum_{i=1}^{2n} l_i}}{\prod_{i=1}^{2n}(q^2;q^2)_{l_i}^2 }  .    \label{I-A2n}
\end{split}
\end{align}
Applying (\ref{inv-qP}) for each $l_i$, i.e., 
\begin{equation}
\frac{1}{(q^2;q^2)_{l_i}}  = \frac{1}{(q^2;q^2)_{\infty}} \sum_{k_i=0}^{\infty} \frac{(-1)^{k_i} q^{k_i(k_i-1)}}{(q^2;q^2)_{k_i}} q^{2(l_i+1)k_i},  \label{inv-qP-i}
\end{equation}
we get
\begin{equation}
\mathcal{I} = \sum_{l_1,\ldots,l_{2n},k_1,\ldots,k_{2n}=0}^{\infty}  \frac{ (-q)^{\sum_{i=1}^{2n} (k_i^2+2l_ik_i) + \sum_{i=1}^{2n-1}l_il_{i+1}}  }{ \prod_{i=1}^{2n} (q^2;q^2)_{l_i} (q^2;q^2)_{k_i}} q^{\sum_{i=1}^{2n}(2l_i + k_i)}.
\end{equation}
The result has the form of a quiver generating series for a quiver with $4n$ nodes, whose matrix takes form of a straightforward generalization of (\ref{C4-quiver}), with rows/columns labeled respectively by $l_1,\ldots,l_{2n},k_1,\ldots,k_{2n}$. 
It contains a subquiver (with nodes corresponding to summations over $l_1,\ldots,l_{2n}$) that is a symmetrized version of 4d quiver.

%******************************

\subsection{Type \texorpdfstring{$E$}{E} theories}

Finally, we consider two examples of Argyres-Douglas theories of type $E$.

\paragraph{$E_6$ theory.}

The index of $E_6$ theory takes the form
\begin{align}
\begin{split}
\mathcal{I} = (q^2;q^2)_{\infty}^{6} \sum_{l_1,\ldots,l_{6}=0}^{\infty} \frac{q^{\sum_{i,j=1}^{6 }b_{ij} l_i l_{j} + 2\sum_{i=1}^{6} l_i}}{\prod_{i=1}^{6}(q^2;q^2)_{l_i}^2 }  ,
\end{split}
\end{align}
where $b_{ij}=-C^{E_6}_{ij}+2\delta_{ij}$ and $C^{E_6}$ is $E_6$ Cartan matrix. 
This is analogous to $A_{2n}$ case in (\ref{I-A2n}), so that applying (\ref{inv-qP-i}) to each $l_i$, we get
\begin{align}
\begin{split}
\mathcal{I} = \sum_{l_1,\ldots,l_{6},k_1,\ldots,k_{6}=0}^{\infty}  \frac{ (-q)^{\sum_{i=1}^{6} (k_i^2+2l_ik_i) + \sum_{i,j=1}^{6}b_{ij}l_il_{j}}  }{ \prod_{i=1}^{6} (q^2;q^2)_{l_i} (q^2;q^2)_{k_i}} q^{\sum_{i=1}^{6}(2l_i + k_i)}.
\end{split}
\end{align}
The result has the form of a quiver generating series for a quiver with $12$ nodes, which contains a subquiver (with nodes corresponding to summations over $l_1,\ldots,l_{6}$) that is a symmetrized version of $E_6$ BPS quiver.

%******************************

\paragraph{$E_8$ theory.}

The index of $E_8$ theory takes the form
\begin{align}
\begin{split}
\mathcal{I} = (q^2;q^2)_{\infty}^{8} \sum_{l_1,\ldots,l_{8}=0}^{\infty} \frac{q^{\sum_{i,j=1}^{8}b_{ij} l_i l_{j} + 2\sum_{i=1}^{8} l_i}}{\prod_{i=1}^{8}(q^2;q^2)_{l_i}^2 }  ,
\end{split}
\end{align}
where $b_{ij}=-C^{E_8}_{ij}+2\delta_{ij}$ and $C^{E_8}$ is $E_8$ Cartan matrix. 
This is completely analogous to $E_6$ case and applying (\ref{inv-qP-i}) to each $l_i$, we get
\begin{align}
\begin{split}
\mathcal{I} = \sum_{l_1,\ldots,l_{8},k_1,\ldots,k_{8}=0}^{\infty}  \frac{ (-q)^{\sum_{i=1}^{8} (k_i^2+2l_ik_i) + \sum_{i,j=1}^{8}b_{ij}l_il_{j}}  }{ \prod_{i=1}^{8} (q^2;q^2)_{l_i} (q^2;q^2)_{k_i}} q^{\sum_{i=1}^{8}(2l_i + k_i)}.
\end{split}
\end{align}
The result has the form of a quiver generating series for a quiver with $16$ nodes, which contains a subquiver (with nodes corresponding to summations over $l_1,\ldots,l_{8}$) that is a symmetrized version of $E_8$ BPS quiver.

\section{Directions for future work}\label{sec:future-work}

Our work reveals an intriguing unifying role of symmetric quivers and the 3d $\mathcal{N}=2$ theories $T[Q]$ they encode. 
We showed that symmetric quivers capture wall-crossing phenomena and various observables of 4d $\mathcal{N}=2$ theories. 
While most of our analysis focused on (some classes of) Argyres-Douglas theories, we believe that this relation is more general and opens an interesting direction of research. 

First, it is important to generalize the topological and physical interpretation of the symmetrization map between 3d and 4d quivers, that we provide for $A_m$ Argyres-Douglas theories, to other classes of such theories, for which we have stated these relations only at the algebraic or combinatorial level. 

In particular, we expect our construction of 3d $\mathcal{N}=2$ QFTs from 4d BPS states to directly generalize to all class $S$ theories with a finite BPS chamber. That is, we expect that the resulting 3d QFT admits a symmetric quiver description $T[Q]$, although in general the relation between the 3d quiver $Q$ and the originating 4d BPS $Q_{4d}$ may be more involved.

Another interesting generalization concerns higher rank class $S$ theories, inparticular the $K$-lifts of $A_1$ theories defined in \cite{Gaiotto:2012db}. For these the corresponding 3d QFTs and 3-manifolds should be provided by the work of \cite{Dimofte:2013iv}, see also \cite{Gabella:2017hpz, Gang:2017ojg} for the related BPS quivers and mutations.
Similarly, the isomorphism between the structure of wall-crossing and the structure of unlinkings could be generalized to other classes of theories -- in particular to 4d quivers with superpotentials and to a study of mutations in this context. Of particular interest are also wild wall-crossing phenomena. Specifically, their description for Kronecker quivers and corresponding symmetric quivers has been recently found in \cite{Bryan:2025mwi}; an interesting challenge is to understand other classes of wild wall-crossing phenomena and to generalize geometric constructions presented in this work also in this context.

On the other hand, it is known that the Schur indices can be also identified as characters of 2d conformal field theories. 
This implies a connection between such 2d models and 3d $T[Q]$ theories, which also deserves better understanding, potential generalization to a~broader family of conformal theories, and possibly making contact with other recently found connections between 3d and 4d theories and vertex operator algebras \cite{Cheng:2022rqr,Dedushenko:2023cvd}. 

Our results may have impact on other lines of research too, both in mathematis and physics. Among others, they add up to the developments in the knots-quivers correspondence, which relates symmetric quivers to observables in Chern-Simons theory via the geometric transition \cite{KRSS1707short,KRSS1707long,Ekholm:2018eee}, and to analogous connections of symmetric quivers to topological strings \cite{Panfil:2018faz,Kimura:2020qns,Cheng:2021nex}, and  $\widehat{Z}$ and $F_K$ invariants of knot complements via 3d-3d correspondence \cite{Kucharski:2020rsp,Ekholm:2021irc,Cheng:2022rqr,Chung:2023qth}. 

Another relation that may emerge from our results connects the above mentioned systems and a vast reasearch area of topological recursion. It was shown in \cite{Larraguivel:2020sxk} that asymptotic expansion of a quiver generating series can be reproduced by the topological recursion. Our results should then imply that topological recursion captures analogous expansions of observables in 4d $\mathcal{N}=2$ theories, characters of 2d conformal field theories, and other systems that we relate to 3d theories encoded in symmetric quivers.

From yet another perspective, our work may lead to new results or viewpoints on holography. Typically, holographic considerations involve theories of higher rank, or with some other dependence on an appropriate (large) parameter $N$. In the context of Argyres-Douglas theories, the quiver form of Schur indices that we analyzed in section \ref{sec-schur} can be found also for related theories considered \cite{Cordova:2015nma}, in particular for Argyres-Douglas theories of type $(A_N,A_k)$. On the other hand, holographic duals of such theories have been found e.g. in \cite{Bah:2021mzw}. The structure of quivers that arise for such Argyres-Douglas theories and properties of associated integral Donaldson-Thomas invariants thus might encode certain features of their holographic dual geometries. Similarly, it turns out \cite{PurSul} that a quiver structure arises also for Schur indices of gauge theories whose holographic duals admit giant graviton expansion \cite{McGreevy:2000cw,Murthy:2022ien}. Therefore symmetric quivers and associated 3d $\mathcal{N}=2$ theories may play an interesting role in the realm of holography and AdS/CFT correspondence too.

We believe there should be a deeper reason for the role of symmetric quivers in all the contexts mentioned above, which deserves further investigation.

%******************
%******************
%******************

\section*{Acknowledgements}

We thank H\'{e}lder Larragu\'{i}vel for collaboration in initial stages of this project.
The work of P.K. is supported by the Polish National Science Centre through SONATA grant (2022/47/D/ST2/02058). 
The work of P.L. is supported by the Knut and Alice Wallenberg Foundation grant KAW 2020.0307 and by the Swedish Research Council, VR 2022-06593, Centre of Excellence in Geometry and Physics at Uppsala University.
S.P. gratefully acknowledges support from Simons Foundation through Simons Collaboration on Global Categorical Symmetries. 
The work of P.S. has been supported by the OPUS grant no. 2022/47/B/ST2/03313 ``Quantum geometry and BPS states'' funded by the National Science Centre, Poland. 

\appendix
\section{Notations and conventions}\label{sec:notations}

\begin{small} 
    \begin{list}{}{ \itemsep -1pt \labelwidth 23ex \leftmargin 13ex } 

      \item[$C$\;\;--] a Riemann surface

      \item[$\Sigma_\tau$\;\;--] the branched double cover of $C$ associated with ideal triangulation $\tau$

      \item[$I$\;\;--] the unit interval $[0,1]$

      \item[$\Sk^{\mathfrak{gl}_1}_q(Y)$\;\;--] the \emph{$\mathfrak{gl}_1$-skein module} of a 3-manifold $Y$ decorated with a branch locus (Definition \ref{defn:gl1_skein})

      \item[$\widehat{\Sk}^{\mathfrak{gl}_1}_q(Y)$\;\;--] action completion of the $\mathfrak{gl}_1$-skein module

      \item[$\SkAlg^{\mathfrak{gl}_1}_q(\Sigma)$\;\;--] the skein module $\Sk^{\mathfrak{gl}_1}_q(\Sigma \times I)$, with the natural algebra structure

      \item[$\hat{x},\hat{y}$\;\;--] the elements of the $\mathfrak{gl}_1$-skein algebra of the torus with four branch points \eqref{eq:xhat-yhat-A1}. They generate quantum torus and satisfy $\hat{y}\hat{x} = q^2\hat{x}\hat{y}$

      \item[$(a;q)_n$\;\;:=] $\prod_{r=0}^{n-1}(1-aq^r)$, $(a;q)_\infty:=\prod_{r\ge0}(1-aq^r)$, $q$-Pochhammer symbols

      \item[$\Psi(x)$\;\;:=] $(qx;q^2)^{-1}$, the quantum dilogarithm

      \item[$\widehat{\mathbb{C}(q)}$\;\;=] $ \left\{ \sum_{n\geq N}a_nq^n\ :\ a_n\in\mathbb{C},N\in\mathbb{Z} \right\}$, the field of formal Laurent series with coefficients in $\mathbb{C}$

      \item[$Z$\;\;--] the $\mathfrak{gl}_1$-skein-valued partition function constructed via sequence of flips; the distinguished element of $\Sk^{\mathfrak{gl}_1}_q(Y)$.

      \item[$Q$\;\;--] \emph{symmetric quiver} encoded by a symmetric integer matrix $Q=(Q_{ij})$; $Q_{ij}$ counts arrows between $i\neq j$ and $Q_{ii}$ counts loops

      \item[$Y_{in},Y_{out}$\;\;--] the cup/cap bordisms. $Y_{in}:\emptyset\rightarrow \Sigma_r$, $Y_{out}:\Sigma_{r'}\rightarrow\emptyset$. Here $Y:\Sigma_r\rightarrow \Sigma_{r'}$ is viewed as a bordism as well

      \item[$\ket{d_1, \cdots, d_g}$\;\;:=] $ L_{\nu_1}^{d_1} \cdots L_{\nu_g}^{d_g}[\emptyset]\in \Sk_{q}^{\mathfrak{gl}_1}(Y_{in}),\quad d_1, \cdots, d_g \in \mathbb{Z}$ the basis of states for $\Sk_{q}^{\mathfrak{gl}_1}(Y_{in})$ viewed as a vector space (similarly, $\bra{d_1, \cdots, d_g}$ for $\Sk_{q}^{\mathfrak{gl}_1}(Y_{out})$)

      \item[{$T[M]$}\;\;--] 3d $\mathcal{N}=2$ theory associated to a 3-manifold $M$ via the 3d-3d correspondence. In our work we consider splitting $M=M_+ \cup M_0 \cup M_-$

      \item[{$T[C]$}\;\;--] 4d $\mathcal{N}=2$ theory of class $S$ associated to a Riemann surface $C$. We take $\partial M_-=C$, $\partial M_+ = \overline{C}$, $\partial M_0=C\cup \overline{C}$ where $\overline{C}$ is orientation reversal of $C$.

      \item[{$T[Q]$}\;\;--] 3d $\mathcal{N}=2$ theory associated to a symmetric quiver $Q$

      \item[$\mathcal{P}_S$\;\;:=] $\mathcal{M}_{\rm flat}(S,SL(2,\mathbb{C}))$ the character variety of flat $SL(2,\mathbb{C})$ connections on $S=\partial M$

      \item[$\Xi_E$\;\;--] $\in \IC / 2\pi i \IZ$ sheer coordinate corresponding to an edge of triangulation $E$ in $\tau_S$, triangulated boundary of $M$

      \item[$\{X_i,Y_i\}$\;\;--] the Darboux basis corresponding to a choice of polarisation of $\mathcal{P}_S$, i.e. a set of coordinates with canonical Poisson brackets \eqref{eq:Darboux}

      \item[$\mathcal{T}_\Delta$\;\;--] the tetrahedron theory \eqref{eq:tetrahedron-theory}. For a triangulated 3-manifold, we write the product theory as $\otimes_i \mathcal{T}_{\Delta_i}$  

      \item[$\mathcal{W}$\;\;--] superpotential of $T[M]$

      \item[$\alpha_1,\dots,\alpha_m$\;\;--] IR electromagnetic charges of $T[C]=A_m$ theory with Dirac pairing $\langle \alpha_{2i},\alpha_{2i+1}\rangle = \langle \alpha_{2i},\alpha_{2i-1}\rangle = 1$

      \item[$Q_{\rm 4d}$\;\;--] 4d BPS quiver corresponding to the minimal chamber of $T[C]$

      \item[$\operatorname{lk}(\cdot,\cdot)$\;\;--] Gauss linking number. In case of the double covering manifold $L_0$ with $\partial L_0=\Sigma_+\cup \Sigma_-$, this linking number between (the boundaries of) holomorphic discs $\partial D_i$ coincides with $Q_{ij}$ 

      \item[$\hat{x}_i,\hat{y}_i$\;\;--] generators of quantum tori associated to a symmetric quiver $Q$, see the discussion around \eqref{eq:xi-hat-yi-hat-def} 

      \item[$P_Q(\boldsymbol{x},q)$\;\;--] motivic generating series of a symmetric quiver $Q$ \eqref{eq:Efimov-PQ}. Here we denote $\boldsymbol{x}=(x_1,\dots,x_{|Q_0|})$ where $Q_0$ is the set of nodes of $Q$.

      \item[$\epsilon$\;\;--] the \emph{3d-4d homomorphism} which embeds the quantum torus of rank $m$ generated by $\{X_{\alpha_i}\}$ into the quantum torus algebra of rank $2m$ generated by $\{\hat{x}_i^{\pm},\hat{y}_i^{\pm}\}$ \eqref{eq:embedding_epsilon}

      \item[$\Phi^0,\Phi^+$\;\;--] the sets of simple and positive roots of $A_m$ root system. Here we use the same notation for roots $\alpha_i$ as for BPS charges 

      \item[$\mu_{\alpha_i}$\;\;--] mutation operator at node $\alpha_i$ of $Q_{4\text{d}}$ 

      \item[$U(ij)$\;\;--] unlinking operator between nodes $i$ and $j$ ($i\neq j$). It acts on symmetric quivers: $U(ij)Q$, \eqref{eq:unlinking_definition}. We also denote composition of unlinkings $\boldsymbol{U}=U(i_nj_n)\dots U(i_1j_1)$ 

      \item[$\mathbbm{\Lambda}(\boldsymbol{U},\boldsymbol{U}';H)$\;\;--] connector class of $\boldsymbol{U},\boldsymbol{U}'$ with hexagon specification $H$, Definition \ref{dfn: Lambda class}

      \item[$\Qsym{Q_{4\text{d}}}{\Theta}$\;\;=] $Q$ the symmetrization map applied to $Q_{4\text{d}}$ with $T[C]$ BPS chamber $\Theta$, Definition \ref{dfn:symmetrization map}. By definition, different choice of a chamber produces different symmetric quivers for the same $Q_{4\text{d}}$

      \item[$\overrightarrow{EG}(Q_{4\text{d}})$\;\;--] the oriented exchange graph of $Q_{4\text{d}}$

      \item[$\Lambda(\overrightarrow{G})$\;\;--] the path polytope of oriented graph $\overrightarrow{G}$. In particular, $\Lambda(\overrightarrow{EG}(Q_{4\text{d}}))$ gives a combinatorial description of the symmetrization map (Definition \ref{dfn:Lambda polytope})

      \item[$\mathcal{I}$\;\;--] Schur index of a 4d $\mathcal{N}=2$ theory
      
    \end{list}
\end{small}

\bibliography{ref}
\bibliographystyle{hep}

\end{document}